\newcommand\dualfunction{\phi}
\newcommand\dualfunctionperlayer{\hat{\phi}}
\newcommand\eigenfunction{\hat{\rho}}
\newcommand\densityfunction{\rho}
\newcommand\Malthus{\lambda_{c}}
\newcommand\MalthusPerLayer{\lambda}
\newcommand\MalthusIndex{c}
\newcounter{thm}
\newtheorem{remarks}[thm]{Remark}
\newtheorem{hypothesis}[thm]{Hypothesis}
\newtheorem{hypdef}[thm]{Hypothesis/Definition}
\newtheorem{definition}[thm]{Definition}
\newtheorem{corollary}[thm]{Corollary}
\newtheorem{theorem}[thm]{Theorem}
\newtheorem{lemma}[thm]{Lemma}
\title{Analysis and calibration of a linear model for structured cell populations with unidirectional motion : application to the morphogenesis of ovarian follicles} 
\author{
	Fr\'{e}d\'{e}rique Cl\'{e}ment\footnote{Project team MYCENAE, Centre INRIA de Paris, France. (frederique.clement@inria.fr)},
	Fr\'{e}d\'{e}rique Robin\footnote{Project team MYCENAE, Centre INRIA de Paris, France. (frederique.robin@inria.fr)}, and 
	Romain Yvinec \footnote{PRC, INRA, CNRS, IFCE, Université de Tours, 37380 Nouzilly, France.    (romain.yvinec@inra.fr)}.
}
\begin{document}
	\maketitle
	
	\begin{abstract}
		  We analyze a multi-type age dependent model for cell populations subject to unidirectional motion, in both a stochastic and deterministic framework. Cells are distributed into successive layers; they may divide and move irreversibly from one layer to the next. We adapt results on the large-time convergence of PDE systems and branching processes to our context, where the Perron-Frobenius or Krein-Rutman theorem can not be applied. 
		  We derive explicit analytical formulas for the asymptotic cell number moments, and the stable age distribution. We illustrate these results numerically and we apply them to the study of the morphodynamics of ovarian follicles. We prove the structural parameter identifiability of our model in the case of age independent division rates. Using a set of experimental biological data, we estimate the model parameters to fit the changes in the cell numbers in each layer during the early stages of follicle development. 
	\end{abstract}

\section{Introduction}
	We study a multi-type age dependent model in both a deterministic and stochastic framework to represent the dynamics of a population of cells distributed into successive layers. The model is a two dimensional structured model: cells are described by a continuous age variable and a discrete layer index variable.  Cells may divide and move irreversibly from one layer to the next. The cell division rate is age and layer dependent, and is assumed to be bounded below and above. After division, the age is reset and the daughter cells either remain within the same layer or move to the next one. In its stochastic formulation, our model is a multi-type Bellman-Harris branching process and in its deterministic formulation, it is a multi-type McKendrick-VonFoerster system.
	
	The model enters the general class of linear models leading to Malthusian exponential growth of the population. In the PDE case, state-of-the-art-methods call to renewal equations system \cite{metz_dynamics_1986} or, to an eigenvalue problem and general relative entropy techniques \cite{michel_general_2005,perthame_transport_2007} to show the existence of an attractive stable age distribution. Yet, in our case, the unidirectional motion prevents us from applying the Krein-Rutman theorem to solve the eigenvalue problem. As a consequence, we follow a constructive approach and explicitly solve the eigenvalue problem. On the other hand, we adapt entropy methods using weak convergences in $\mathbf{L}^1$ to obtain the large-time behavior and lower bound estimates of the speed of convergence towards the stable age distribution. 
	In the probabilistic case, classical methods rely on renewal equations \cite{harris_theory_1963} and martingale convergences \cite{jagers_population-size-dependent_2000}. Using the same eigenvalue problem as in the deterministic study, we derive a martingale convergence giving insight into the large-time fluctuations around the stable state. Again, due to the lack of reversibility in our model, we cannot apply the Perron-Frobenius theorem to study the asymptotic of the renewal equations. Nevertheless, we manage to derive explicitly the stationary solution of the renewal equations for the cell number moments in each layer as in \cite{harris_theory_1963}. We recover the deterministic stable age distribution as the solution of the renewal equation for the mean age distribution.
	
	The theoretical analysis of our model highlights the role of one particular layer: the leading layer characterized by a maximal intrinsic growth rate which turns out to be the Malthus parameter of the total population. The notion of a leading layer is a tool to understand qualitatively the asymptotic cell dynamics, which appears to operate in a multi-scale regime. All the layers upstream the leading one may extinct or grow with a rate strictly inferior to the Malthus parameter, while the remaining, downstream ones are driven by the leading layer. 
	
	We then check and illustrate numerically our theoretical results. In the stochastic case, we use a standard implementation of an exact Stochastic Simulation Algorithm. In the deterministic case, we design and implement a dedicated finite volume scheme adapted to the non-conservative form and dealing with proper boundary conditions. We verify that both the deterministic and stochastic simulated distributions agree with the analytical stable age distribution. Moreover, the availability of analytical formulas helps us to study the influence of the parameters on the asymptotic proportion of cells, Malthus parameter and stable age distribution.
	 
	Finally, we consider the specific application of ovarian follicle development inspired by the model introduced in \cite{clement_coupled_2013} and representing the proliferation of somatic cells and their organization in concentric layers around the germ cell. While the original model is formulated with a nonlinear individual-based stochastic formalism, we design a linear version based on branching processes and endowed with a straightforward deterministic counterpart. We prove the structural parameter identifiability in the case of age independent division rates. Using a set of experimental biological data, we estimate the model parameters to fit the changes in the cell numbers in each layer during the early stages of follicle development. The main interest of our approach is to benefit from the explicit formulas derived in this paper to get insight on the regime followed by the observed cell population growth.
	
	Beyond the ovarian follicle development, linear models for structured cell populations with unidirectional motion may have several applications in life science modeling, as many processes of cellular differentiation and/or developmental biology are associated with a spatially oriented development (e.g. neurogenesis on the cortex, intestinal crypt) or commitment to a cell lineage or fate (e.g. hematopoiesis, acquisition of resistance in bacterial strains).

	The paper is organized as follows. In section $2$, we describe the stochastic and deterministic model formulations and enunciate the main results. In section $3$, we give the main proofs accompanied by numerical illustrations. Section $4$ is dedicated to the application to the development of ovarian follicles. We conclude in section $5$. Technical details and classical results are provided in Supplementary materials.

\section{Model description and main results}
	\subsection{Model description}
		We consider a population of cells structured by age $a \in \mathbb{R}_+$ and distributed into layers indexed from $j=1$ to $j=J \in \mathbb{N}^*$. The cells undergo mitosis after a layer-dependent stochastic random time $\tau = \tau^{j}$, ruled by an age-and-layer-dependent instantaneous division rate $b = b_j(a)$ : $\mathbb{P}[\tau^j > t] = e^{- \int_{0}^{t}b_j(a)da}$. Each cell division time is independent from the other ones. At division, the age is reset and the two daughter cells may pass to the next layer according to layer-dependent probabilities. We note $p^{(j)}_{2,0}$ the probability that both daughter cells remain on the same layer, $p^{(j)}_{1,1}$ and $p^{(j)}_{0,2}$, the probability that a single or both daughter cell(s) move(s) from layer $j$ to layer $j+1$, with $p_{2,0}^{(j)} + p_{1,1}^{(j)} + p_{0,2}^{(j)} = 1  $. Note that the last layer is absorbing: $p_{2,0}^{(J)} = 1 $. The dynamics of the model is summarized in Figure \ref{fig:ModelDescription}.
		\begin{figure}[h]
			\centering
			\includegraphics[height=2.75cm]{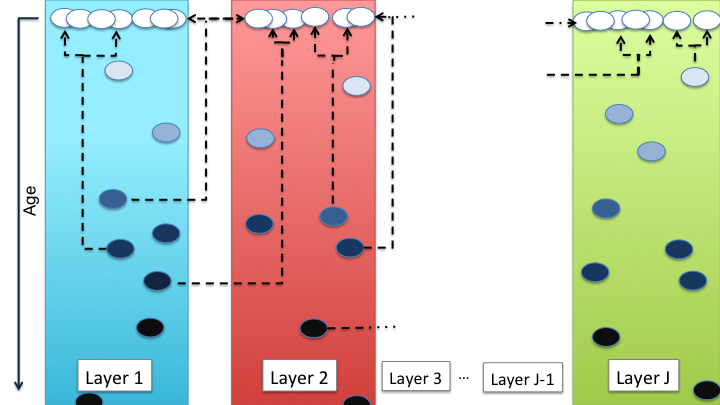}
			\caption{\textbf{Model description.} Each cell ages until an age-dependent random division time $\tau^j$. At division time, the age is reset and the two daughter cells may move only in an unidirectional way. When $j = J$, the daughter cells stay on the last layer.}
			\label{fig:ModelDescription}
		\end{figure}
		\paragraph{Stochastic model}
		Each cell in layer $j$ of age $a$ is represented by a Dirac mass $\delta_{j, a} $ where $(j, a) \in \mathcal{E} = \llbracket 1, J \rrbracket \times \mathbb{R}^+ $. Let $\mathcal{M}_P$ be the set of point measures on $\mathcal{E}$:
		 $$ \displaystyle \mathcal{M}_P := \left\{  \sum_{k=1}^{N} \delta_{j_k, a_k}, N \in \mathbb{N}^*, \, \forall k \in \llbracket 1, N \rrbracket, (j_k,a_k)\in \mathcal{E}  \right\} \, .$$
		 The cell population is represented for each time  $t \geq 0$ by a measure $Z_t \in \mathcal{M}_P$:
			\begin{equation}\label{Z_t_point_measure}
			Z_t = \sum_{k = 1}^{N_t}\delta_{I_t^{(k)}, \, A_t^{(k)}}, \quad  N_t := \, \ll Z_t, \mathds{1} \gg \, = \,  \displaystyle\sum_{j = 1}^{J}\int_{0}^{+ \infty} Z_t(dj,da) \, .
			\end{equation}
			$N_t$ is the total number of cells at time $t$. On the probability space $(\Omega, \mathcal{F}, \mathbb{P}) $, we define $Q$ as a Poisson point measure of intensity $ds \otimes \# dk \otimes d \theta$, where $ds$ and $ d \theta$ are Lebesgue measures on $\mathbb{R}_+$ and $\# dk$ is a counting measure on $\llbracket 1, J \rrbracket$.  The dynamics of $Z = (Z_t)_{t \geq 0}$ is given by the following stochastic differential equation:
			\begin{equation} \label{Z_t_Equation}
				\begin{array}{l}
				Z_{t} = \displaystyle\sum_{k = 1}^{N_0}\delta_{I^{(k)}_0, \, A_0^{(k)} + t}+ \int_{[0,t] \times \mathcal{E}} \mathds{1}_{k \leq N_{s^-}}R(k,s,Z,\theta) Q(\text{d}s,\text{d}k,\text{d}\theta) \\[0.3cm] 
				\text{	where 	}	R(k,s,Z,\theta) =  (2\delta_{I_{s-}^{(k)}, \, t - s} - \delta_{I_{s-}^{(k)}, \, A_{s-}^{(k)} + t - s}) \, \mathds{1}_{0 \leq \theta \leq m_1(s,k,Z)  } \\[0.2cm] 
			+ (\delta_{I_{s-}^{(k)}, \, t - s} + \delta_{I_{s-}^{(k)}+1, \, t - s} - \delta_{I_{s-}^{(k)}, \, A_{s-}^{(k)} + t - s}) \, \mathds{1}_{m_1(s,k,Z)  \leq \theta \leq m_2(s,k,Z)}   \\
			+(2\delta_{I_{s-}^{(k)}+1, \, t - s} - \delta_{I_{s-}^{(k)}, \, A_{s-}^{(k)} + t - s}) \, \mathds{1}_{m_2(s,k,Z)\leq \theta \leq m_3(s,k,Z) }  \\[0.2cm] 
			 \text{	and 	}	m_1(s,k,Z) = b_{I_{s-}^{(k)}}(A_{s-}^{(k)})p^{(I_{s-}^{(k)})}_{2,0},  \\
			m_2(s,k,Z)  = b_{I_{s-}^{(k)}}(A_{s-}^{(k)})(p^{(I_{s-}^{(k)})}_{2,0} + p^{(I_{s-}^{(k)})}_{1,1} ), \quad
			m_3(s,k,Z)  =b_{I_{s-}^{(k)}}(A_{s-}^{(k)}) \, .
				\end{array}	
			\end{equation}
		
			\paragraph{Deterministic model} The cell population is represented by a population density function $\densityfunction := \big( \densityfunction^{(j)}(t,a) \big)_{j \in \llbracket 1, J \rrbracket } \in \mathbf{L}^1(\mathbb{R}_+)^J$ where $\rho^{(j)}(t,a) $  is the cell age density in layer $j$ at time $t$. The population evolves according to the following system of partial differential equations: 
			\begin{equation}\label{EDP_equation}
				\resizebox{0.9\textwidth}{!} {$		\left\{
						\begin{array}{l}
						\displaystyle \partial_{t}\densityfunction^{(j)}(t,a) + \partial_{a}\densityfunction^{(j)}(t,a) = -b_{j}(a)\densityfunction^{(j)}(t,a)  \\[0.2cm] 
						\displaystyle 	\densityfunction^{(j)}(t,0) = 2p^{(j-1)}_L\int_{0}^{\infty} b_{j-1}(a)\densityfunction^{(j-1)}(t,a)da + 2p^{(j)}_S \int_{0}^{\infty} b_{j}(a)\densityfunction^{(j)}(t,a)da \\[0.2cm] 
						\densityfunction(0,a) =\densityfunction_0(a)
					\end{array}
					\right. $}
				\end{equation}
			where $ \forall j \in \llbracket 1,J-1 \rrbracket, \,
			p^{(j)}_S  = \frac{1}{2}p^{(j)}_{1,1} + p^{(j)}_{2,0}, \, p_{L}^{(j)} :=  \frac{1}{2}p^{(j)}_{1,1} + p^{(j)}_{0,2}, \, p^{(0)}_L = 0  \, \text{and} \, p^{(J)}_S = 1 \, .$
			Here, $p^{(j)}_S $ is the probability that a cell taken randomly among both daughter cells, remains on the same layer and $p^{(j)}_L  = 1 - p^{(j)}_S$ is the probability that the cell moves.  
	\subsection{Hypotheses}\label{PreliminaryHypothesis} 
	\begin{hypothesis}\label{Hypothesis_Probability}
			$\forall j \in \llbracket 1, J-1 \rrbracket$, $p^{(j)}_S,p^{(j)}_L \in (0,1)$
	\end{hypothesis}
	
	\begin{hypothesis}\label{Hypothesis_DivisionRate}
		For each layer $j$, $b_j$ is continuous bounded below and above: 
		\begin{equation*}
		\forall j \in \llbracket 1,J \rrbracket, \quad \forall a \in \mathbb{R}_{+}, \quad 0<\underline{b}_j \leq b_j(a) \leq \overline{b}_j < \infty \, .
		\end{equation*}
	\end{hypothesis}

\begin{definition}\label{Definition_B_j_and_dB_j}
			$\mathcal{B}_j$ is the distribution function of $\tau^j$ ($\mathcal{B}_j(x) = 1 - e^{- \int_{0}^{x}b_j(a)da} $) and $d\mathcal{B}_j $ its density function ($d\mathcal{B}_j(x) = b_j(x) e^{- \int_{0}^{x}b_j(a)da} $).
	\end{definition}
	\begin{hypdef}{(Intrinsic growth rate)}\label{Malthus_parameter_per_layer_def}
		The intrinsic growth rate $\MalthusPerLayer_j$ of layer $j$ is the solution of 
		\begin{equation*}
			d\mathcal{B}_j^*(\MalthusPerLayer_j):= \int_{0}^{\infty}e^{-\MalthusPerLayer_j s } d\mathcal{B}_j(s)ds= \frac{1}{2p_S^{(j)}} \, .
		\end{equation*}
	\end{hypdef}
	\begin{remarks}\label{remarks_dB*}
			\textit{$d\mathcal{B}_j^*$ is the Laplace transform of $d\mathcal{B}_j$. It is a strictly decreasing function and $ \quad  ]-\underline{b}_j, \infty[  \subset \text{ Supp} (d\mathcal{B}_j^*)  \subset  ]-\overline{b}_j, \infty[$. Hence, $\MalthusPerLayer_j > - \overline{b}_j$. Moreover, note that $d\mathcal{B}_j^*(0) =  \int_{0}^{\infty}d\mathcal{B}_j(x)dx = 1$.
			Thus, $ \MalthusPerLayer_j < 0$ when $p_S^{(j)}< \frac{1}{2} $; $ \MalthusPerLayer_j > 0$ when $p_S^{(j)}> \frac{1}{2} $ and $\MalthusPerLayer_j = 0 $  when $ p_S^{(j)} = \frac{1}{2}$. In particular, $\MalthusPerLayer_J > 0 $ as $p_S^{(J)} = 1$.}
	\end{remarks}
	\begin{remarks}\textit{
		In the classical McKendrick-VonFoerster model (one layer), the population grows exponentially with rate $\lambda_1$ (\cite{webb_theory_1985}, Chap. IV). The same result is shown for the Bellman-Harris process in \cite{harris_theory_1963} (Chap. VI).}
	\end{remarks}
	\begin{hypdef}[Malthus parameter]\label{Malthus_parameter_def}
		\textit{The Malthus parameter $\Malthus$ is defined as the unique maximal element taken among the intrinsic growth rates ($\MalthusPerLayer_j$, $j \in \llbracket 1, J \rrbracket )$ defined in (\ref{Malthus_parameter_per_layer_def}). The layer such that the index $ j = \MalthusIndex$ is the leading layer.}
	\end{hypdef}
	According to remark \ref{remarks_dB*}, $\Malthus$ is positive. We will need auxiliary hypotheses on $\MalthusPerLayer_j$ parameters in some theorems.
	\begin{hypothesis}\label{Hypothesis_MalthusPerLayerDistinct}
			All the intrinsic growth rate parameters are distinct. 
	\end{hypothesis}
	\begin{hypothesis}\label{Hypothesis_lambdaj}
		$\forall j \in \llbracket 1, J \rrbracket$,  $ \MalthusPerLayer_j > - \underset{a \rightarrow + \infty}{\lim \inf \,} b_j(a) $.
	\end{hypothesis}
		Hypothesis \ref{Hypothesis_lambdaj}  implies additional regularity for $ t \mapsto e^{-\MalthusPerLayer_j t}d\mathcal{B}_j(t)$ (see proof in \ref{Supplemental_Deterministic}):
		
	\begin{corollary}\label{Moment_Distribution_dB}
		Under hypotheses \ref{Hypothesis_DivisionRate}, \ref{Malthus_parameter_per_layer_def} and \ref{Hypothesis_lambdaj},
		$\forall j \in \llbracket 1, J \rrbracket$, $\forall k \in \mathbb{N}$, \\  $\int_{0}^{\infty} t^k e^{-\MalthusPerLayer_j t}d\mathcal{B}_j(t)dt < \infty \, .$
	\end{corollary}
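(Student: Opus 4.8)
The plan is to show that the integrand $t^k e^{-\MalthusPerLayer_j t} d\mathcal{B}_j(t)$ decays exponentially as $t \to \infty$, so that integrability follows immediately. Recall that by Definition \ref{Definition_B_j_and_dB_j}, $d\mathcal{B}_j(t) = b_j(t)\exp\!\big(-\int_0^t b_j(a)\,da\big)$. Under Hypothesis \ref{Hypothesis_DivisionRate}, $b_j$ is bounded above by $\overline{b}_j$, so the prefactor $b_j(t)$ contributes at most a constant. The crucial quantity is therefore the exponent $-\MalthusPerLayer_j t - \int_0^t b_j(a)\,da$, and I want to show it is bounded above by $-\varepsilon t + C$ for some $\varepsilon > 0$ and all large $t$; then $t^k e^{-\MalthusPerLayer_j t} d\mathcal{B}_j(t) \le \overline{b}_j\, t^k e^{-\varepsilon t + C}$, which is integrable on $\mathbb{R}_+$ (also continuous on any compact, hence locally integrable).

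The key step is to exploit Hypothesis \ref{Hypothesis_lambdaj}, namely $\MalthusPerLayer_j > -\liminf_{a \to \infty} b_j(a)$. Write $\ell_j := \liminf_{a\to\infty} b_j(a)$; note $\ell_j \ge \underline{b}_j > 0$ by Hypothesis \ref{Hypothesis_DivisionRate}. Pick $\delta > 0$ small enough that $\MalthusPerLayer_j + \ell_j - \delta > 0$, which is possible precisely because of the strict inequality in Hypothesis \ref{Hypothesis_lambdaj}. By definition of the $\liminf$, there exists $A_0 > 0$ such that $b_j(a) \ge \ell_j - \delta$ for all $a \ge A_0$. Splitting the integral,
\begin{equation*}
\int_0^t b_j(a)\,da \;=\; \int_0^{A_0} b_j(a)\,da + \int_{A_0}^t b_j(a)\,da \;\ge\; \underline{b}_j A_0 \;+\; (\ell_j - \delta)(t - A_0)
\end{equation*}
for $t \ge A_0$. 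Hence $-\MalthusPerLayer_j t - \int_0^t b_j(a)\,da \le -\big(\MalthusPerLayer_j + \ell_j - \delta\big)t + C_0$ with $C_0 := (\ell_j - \delta)A_0 - \underline{b}_j A_0$ a constant independent of $t$, and setting $\varepsilon := \MalthusPerLayer_j + \ell_j - \delta > 0$ gives the desired exponential bound.

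Finally, I would conclude that $\int_0^\infty t^k e^{-\MalthusPerLayer_j t} d\mathcal{B}_j(t)\,dt \le \int_0^{A_0} \overline{b}_j\, t^k\, dt + \int_{A_0}^\infty \overline{b}_j\, t^k e^{-\varepsilon t + C_0}\,dt < \infty$, since the first term is a finite integral of a continuous function over a compact interval and the second is a Gamma-type integral that converges for every $k \in \mathbb{N}$ whenever $\varepsilon > 0$. The only subtlety — the ``main obstacle'', though it is a mild one — is that Hypothesis \ref{Hypothesis_lambdaj} controls only the $\liminf$ of $b_j$, not $b_j$ itself, so one genuinely needs the $\delta$-slack and the choice of $A_0$ above rather than a naive pointwise lower bound; the rest is routine. (Note also that if $\liminf_{a\to\infty} b_j(a) = +\infty$ the bound is only easier, and if $b_j$ has a finite $\liminf$ the argument above applies verbatim.)
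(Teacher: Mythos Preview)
Your proof is correct and follows essentially the same approach as the paper's: both use Hypothesis~\ref{Hypothesis_lambdaj} to extract $A_0>0$ and $\varepsilon>0$ with $b_j(a)+\MalthusPerLayer_j\ge\varepsilon$ for $a\ge A_0$, split the integral at $A_0$, bound the tail by a Gamma-type integral against $t^k e^{-\varepsilon t}$, and handle $[0,A_0]$ by continuity. One cosmetic point: your displayed bound $\int_0^{A_0}\overline{b}_j\,t^k\,dt$ for the compact piece tacitly assumes $e^{-\MalthusPerLayer_j t}\le 1$, which fails when $\MalthusPerLayer_j<0$ (possible by Remark~\ref{remarks_dB*}); but your verbal justification (``finite integral of a continuous function over a compact interval'') is the right one and suffices.
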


	\paragraph{Stochastic initial condition}
	We suppose that the initial measure $Z_0 \in \mathcal{M}_P$ is deterministic. $(\mathcal{F}_t)_{t \in \mathbb{R}_+} $ is the natural filtration associated with $(Z_t)_{t \in \mathbb{R}_+} $ and $Q$. 
	\paragraph{Deterministic initial condition}
	We suppose that the initial population density $ \densityfunction_0$ belongs to $\mathbf{L}^{1}(\mathbb{R}_{+})^J $.
	
	\subsection{Notation}\label{Part_Notations}
	Let $f,g \in  \mathbf{L}^1(\mathbb{R}_+)^J$, we use for the scalar product:
		\begin{itemize}
			\item on $\mathbb{R}_+^J$,  $f^T(a) g(a) = \sum_{j = 1}^{J} f^{(j)}(a)g^{(j)}(a)$,
			\item on $\mathbf{L}^{1}(\mathbb{R}_{+})$, $ 	\displaystyle	\langle f^{(j)},g^{(j)} \rangle = \int_{0}^{\infty}f^{(j)}(a)g^{(j)}(a)da$, for $j \in \llbracket 1, J \rrbracket$,
			\item on $ \mathbf{L}^{1}(\mathbb{R}_{+})^{J}$, $ \ll f,g \gg = \sum_{j = 1}^{J}\int_{0}^{\infty}f^{(j)}(a)g^{(j)}(a)da$.
		\end{itemize}  
		For a martingale $ M = (M_t)_{t \geq 0}$, we note $\left \langle M,M \right \rangle_t$ its quadratic variation. 
		We also introduce \begin{equation*}
		\resizebox{0.9\textwidth}{!} {$		B(a) = diag(b_1(a), ..., b_J(a)),
			\quad  [K(a)]_{i,j} = \left\{ \begin{array}{lll}
				2p_S^{(j)}b_j(a), &i = j, &  j \in \llbracket 1, J \rrbracket \\
				2p_L^{(j-1)}b_{j-1}(a), & i = j-1, & j \in \llbracket 2, J \rrbracket 
			\end{array}\right. $}
		\end{equation*}
		We define the primal problem $\eqref{StationnaireEquation}$ as 
		\begin{equation}
			\label{StationnaireEquation} \tag{P}
		\left\{ 
		\begin{array}{l}
		\mathcal{L}^P \eigenfunction(a) = \lambda\eigenfunction(a), \, a \geq 0\\
		\displaystyle \eigenfunction(0) = \int_{0}^{\infty} K(a) \eigenfunction(a)da \\
		\ll \eigenfunction, \mathds{1}\gg  = 1 \text{ and } \eigenfunction \geq 0
		\end{array}
		\right., \quad \mathcal{L}^P \eigenfunction(a) = \partial_a \eigenfunction(a) - B(a)\eigenfunction(a),
		\end{equation}
		and the dual problem $\eqref{ProblemeAdjoint}$ is given by
		\begin{equation}\label{ProblemeAdjoint}
		\tag{D}
		\left\{ 
		\begin{array}{lll}
		\mathcal{L}^D  \dualfunction(a) = \lambda \dualfunction(a), \, a \in \mathbb{R}^*_+ \\
		\ll\eigenfunction, \dualfunction \gg  = 1 \text{ and }  \dualfunction \geq 0 \\
		\end{array}
		\right., \quad \mathcal{L}^D \dualfunction(a) =  \partial_{a} \dualfunction(a) - B(a) \dualfunction + K(a)^T \dualfunction(0).
		\end{equation}
	\subsection{Main results}
	\subsubsection{	Eigenproblem approach}
	
	\begin{theorem}[Eigenproblem] \label{Eigenproblem}
		Under hypotheses \ref{Hypothesis_Probability}, \ref{Hypothesis_DivisionRate}, \ref{Malthus_parameter_per_layer_def}, \ref{Malthus_parameter_def} and \ref{Hypothesis_lambdaj}, there exists a first eigen\-element triple $( \lambda ,\eigenfunction, \dualfunction)$ solution to equations (\ref{StationnaireEquation}) and $(\ref{ProblemeAdjoint})$ where $\eigenfunction \in \mathbf{L}^1(\mathbb{R}_+)^J$ and $\dualfunction \in \mathcal{C}_b(\mathbb{R}_+)^J$. In particular, $\lambda$ is the Malthus parameter $\Malthus$ given in Definition \ref{Malthus_parameter_def}, and $\eigenfunction$ and $\dualfunction$ are unique. 
	\end{theorem}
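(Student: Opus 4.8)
The plan is to exploit that $B(a)$ is diagonal, so the transport equations in \eqref{StationnaireEquation} and \eqref{ProblemeAdjoint} decouple layer by layer and integrate explicitly, the only coupling being through the renewal-type boundary conditions, which are \emph{triangular} because motion is unidirectional. Integrating the $j$-th line of \eqref{StationnaireEquation} gives $\eigenfunction^{(j)}(a)=\eigenfunction^{(j)}(0)\,e^{-\MalthusPerLayer a}\bigl(1-\mathcal{B}_j(a)\bigr)$; using $b_j(a)\bigl(1-\mathcal{B}_j(a)\bigr)=d\mathcal{B}_j(a)$ and recognising the Laplace transform $d\mathcal{B}_j^*$, the boundary condition becomes, with $u_j:=\eigenfunction^{(j)}(0)$ and $u_0:=0$,
\begin{equation*}
\bigl(1-2p_S^{(j)}d\mathcal{B}_j^*(\MalthusPerLayer)\bigr)\,u_j \;=\; 2p_L^{(j-1)}\,d\mathcal{B}_{j-1}^*(\MalthusPerLayer)\,u_{j-1},\qquad 1\le j\le J .
\end{equation*}
Similarly, the requirement $\dualfunction\in\mathcal{C}_b$ selects, in the functional ODE of \eqref{ProblemeAdjoint}, the solution with no growing mode, giving, with $c_j:=2p_S^{(j)}\dualfunction^{(j)}(0)+2p_L^{(j)}\dualfunction^{(j+1)}(0)$, the profile $\dualfunction^{(j)}(a)=c_j\,e^{\MalthusPerLayer a}\int_a^\infty e^{-\MalthusPerLayer x}\,d\mathcal{B}_j(x)\,dx\,\big/\,\bigl(1-\mathcal{B}_j(a)\bigr)$ and in particular $\dualfunction^{(j)}(0)=c_j\,d\mathcal{B}_j^*(\MalthusPerLayer)$; hence, with $v_j:=\dualfunction^{(j)}(0)$ and $v_{J+1}:=0$, the \emph{reversed} triangular recursion
\begin{equation*}
\bigl(1-2p_S^{(j)}d\mathcal{B}_j^*(\MalthusPerLayer)\bigr)\,v_j \;=\; 2p_L^{(j)}\,d\mathcal{B}_j^*(\MalthusPerLayer)\,v_{j+1},\qquad 1\le j\le J .
\end{equation*}
The decisive elementary fact, from the strict decrease of $d\mathcal{B}_j^*$ (Remark~\ref{remarks_dB*}) and the definition of $\MalthusPerLayer_j$ (Hypothesis/Definition~\ref{Malthus_parameter_per_layer_def}), is the sign dichotomy: $1-2p_S^{(j)}d\mathcal{B}_j^*(\MalthusPerLayer)$ is $>0$, $=0$, or $<0$ according as $\MalthusPerLayer>\MalthusPerLayer_j$, $\MalthusPerLayer=\MalthusPerLayer_j$, or $\MalthusPerLayer<\MalthusPerLayer_j$.

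For existence I would take $\MalthusPerLayer=\Malthus=\max_j\MalthusPerLayer_j>0$ and let $\MalthusIndex$ be the leading layer (unique under Hypothesis~\ref{Hypothesis_MalthusPerLayerDistinct}, so $\Malthus>\MalthusPerLayer_j$ for $j\neq\MalthusIndex$). Put $u_j=0$ for $j<\MalthusIndex$: the $j=\MalthusIndex$ equation then holds trivially, $u_{\MalthusIndex}>0$ is free, and for $j>\MalthusIndex$ the recursion yields $u_j>0$, since $\Malthus>\MalthusPerLayer_j$ and $p_L^{(j-1)},\,d\mathcal{B}_{j-1}^*(\Malthus)>0$ (Hypothesis~\ref{Hypothesis_Probability}). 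Then $\eigenfunction^{(j)}(a)=u_j e^{-\Malthus a}(1-\mathcal{B}_j(a))\ge 0$ lies in $\mathbf{L}^1(\mathbb{R}_+)^J$ (as $\Malthus>0$ and $1-\mathcal{B}_j(a)\le e^{-\underline{b}_j a}$), and $\ll\eigenfunction,\mathds{1}\gg=1$ fixes $u_{\MalthusIndex}$. Symmetrically $v_j=0$ for $j>\MalthusIndex$, $v_{\MalthusIndex}>0$ free, and the backward recursion gives $v_j>0$ for $j<\MalthusIndex$; the profile for $\dualfunction$ then produces nonnegative, continuous, bounded components (boundedness from $\underline{b}_j\le b_j\le\overline{b}_j$ and $\Malthus>0$), with $\dualfunction^{(j)}\equiv 0$ for $j>\MalthusIndex$, and $\ll\eigenfunction,\dualfunction\gg=1$ — which reduces to $\langle\eigenfunction^{(\MalthusIndex)},\dualfunction^{(\MalthusIndex)}\rangle>0$ because the two supports overlap only at layer $\MalthusIndex$, and is a positive multiple of $v_{\MalthusIndex}$ — fixes $v_{\MalthusIndex}$. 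Hypothesis~\ref{Hypothesis_lambdaj} and Corollary~\ref{Moment_Distribution_dB} take care of the residual integrability and regularity bookkeeping.

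For the identification $\MalthusPerLayer=\Malthus$ and uniqueness, let $(\MalthusPerLayer,\eigenfunction,\dualfunction)$ be admissible and set $j_0:=\min\{j:\eigenfunction^{(j)}(0)>0\}$, $j_1:=\max\{j:\dualfunction^{(j)}(0)>0\}$ (well defined, the normalisations preventing $\eigenfunction\equiv 0$ or $\dualfunction\equiv 0$). Since $u_{j_0-1}=0$, the $j_0$-th primal equation forces $\MalthusPerLayer=\MalthusPerLayer_{j_0}$; propagating upward and using $\eigenfunction\ge 0$ with the dichotomy, all $u_j$ with $j\ge j_0$ are $>0$ and $\MalthusPerLayer_{j_0}>\MalthusPerLayer_j$ for $j>j_0$, whence $j_0\ge\MalthusIndex$ (else $\MalthusPerLayer_{j_0}>\MalthusPerLayer_{\MalthusIndex}=\max_j\MalthusPerLayer_j$). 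Running the dual recursion downward gives symmetrically $\MalthusPerLayer=\MalthusPerLayer_{j_1}$, all $v_j$ with $j\le j_1$ are $>0$, and $j_1\le\MalthusIndex$. Finally $\ll\eigenfunction,\dualfunction\gg=1>0$ forces the supports $\{j_0,\dots,J\}$ and $\{1,\dots,j_1\}$ to intersect, i.e. $j_0\le j_1$. Hence $j_0\le j_1\le\MalthusIndex\le j_0$, so $j_0=j_1=\MalthusIndex$ and $\MalthusPerLayer=\MalthusPerLayer_{\MalthusIndex}=\Malthus$. Once $\MalthusPerLayer=\Malthus$ is known, the triangular recursions determine $(u_j)_j$ and $(v_j)_j$ up to the single scalars $u_{\MalthusIndex}$ and $v_{\MalthusIndex}$, which the two normalisations pin down; hence $\eigenfunction$ and $\dualfunction$ are unique.

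The step I expect to be the real obstacle is the identification $\MalthusPerLayer=\Malthus$. Contrary to the one-layer or Krein--Rutman situation, the primal problem \emph{alone} does not single out $\Malthus$: because its recursion propagates downstream, it admits a nonnegative $\mathbf{L}^1$ eigenfunction at \emph{every} $\MalthusPerLayer_{j_0}$ whose intrinsic rate dominates those of all strictly downstream layers. It is only by coupling it with the dual problem — whose recursion propagates upstream (forcing domination over all strictly upstream layers) and whose pairing $\ll\eigenfunction,\dualfunction\gg=1$ forces the two supports to meet — that one pins $\MalthusPerLayer$ down to the Malthus parameter. Carrying the support-propagation bookkeeping through both triangular systems at once, and handling the boundary cases $\MalthusIndex=1$ and $\MalthusIndex=J$ (where the $u_0$ and $v_{J+1}$ conventions kick in), is where the care lies.
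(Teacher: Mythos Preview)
Your proof is correct and, for the existence part, takes the same route as the paper: integrate the decoupled transport equations explicitly, reduce the boundary conditions to the triangular recursions for $u_j=\eigenfunction^{(j)}(0)$ and $v_j=\dualfunction^{(j)}(0)$, and construct the solution at $\lambda=\Malthus$ with $\eigenfunction$ supported on $\{c,\dots,J\}$ and $\dualfunction$ on $\{1,\dots,c\}$.

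Where you diverge is on the identification $\lambda=\Malthus$. The paper only argues that the primal forces $\lambda\in\Lambda=\{\MalthusPerLayer_j\}$ and then simply \emph{chooses} $\lambda=\Malthus$; uniqueness of $\eigenfunction$ and $\dualfunction$ is established at that fixed $\lambda$. You prove more: with $j_0=\min\{j:u_j>0\}$ and $j_1=\max\{j:v_j>0\}$, the sign dichotomy propagates the primal downstream to give $\lambda=\MalthusPerLayer_{j_0}$ and $j_0\ge c$, the dual upstream to give $\lambda=\MalthusPerLayer_{j_1}$ and $j_1\le c$, and the normalisation $\ll\eigenfunction,\dualfunction\gg=1$ forces the supports to meet, $j_0\le j_1$, hence $j_0=j_1=c$. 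Your closing diagnosis---that the primal alone admits nonnegative eigenfunctions at several $\MalthusPerLayer_j$ and only the primal--dual coupling singles out $\Malthus$---is exactly right, and this is a point the paper's proof does not make explicit.

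Two minor remarks. First, the uniqueness of the leading index $c$ is built into Hypothesis/Definition~\ref{Malthus_parameter_def} (the maximum is assumed unique), not Hypothesis~\ref{Hypothesis_MalthusPerLayerDistinct}, which is \emph{not} among the theorem's assumptions; your argument nowhere needs all the $\MalthusPerLayer_j$ distinct, so just correct the citation. Second, in your uniqueness direction you derive the $v_j$-recursion from $\dualfunction\in\mathcal{C}_b$ at a yet-unknown $\lambda$: this relies on the homogeneous mode $e^{\int_0^a(\lambda+b_j)}$ blowing up, which is clear when $\lambda>-\underline{b}_j$ (in particular for $\lambda=\Malthus>0$) but deserves a word when $\lambda$ is still an arbitrary element of $\Lambda$. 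One clean way to bypass this is to run the primal argument first (yielding $j_0\ge c$), then use the pairing to locate a layer $j\ge c$ with $\dualfunction^{(j)}\not\equiv 0$ and argue only on layers $j\ge c$, where $\lambda=\MalthusPerLayer_{j_0}\le\Malthus$ is no obstacle since the relevant dichotomy already gives the needed signs.
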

	Beside the dual test function $\dualfunction$, we introduce other test functions to prove large-time convergence. Let $\hat{\phi}^{(j)} $, $j \in \llbracket 1, J \rrbracket$  be a solution of
		\begin{equation}\label{phihat_equ}
		\partial_{a}\dualfunctionperlayer^{(j)}(a) - (\MalthusPerLayer_j + b_{j}(a))\dualfunctionperlayer^{(j)}(a) =  -2p_{S}^{(j)}b_{j}(a)\dualfunctionperlayer^{(j)}(0), \quad \dualfunctionperlayer^{(j)}(0)  \in \mathbb{R}^*_+ \, .
		\end{equation}
	
	\begin{theorem}\label{Exponential_decay}
			Under hypotheses \ref{Hypothesis_Probability}, \ref{Hypothesis_DivisionRate}, \ref{Malthus_parameter_per_layer_def}, \ref{Malthus_parameter_def} and \ref{Hypothesis_lambdaj}, there exist polynomials  $(\beta^{(j)}_{k}) _{ 1 \leq k \leq j \leq J }$ of degree at most $j-k $ such that
		\begin{equation}\label{InegaliDecroisExp}
			\left \langle \big|e^{-\Malthus t}\densityfunction^{(j)}(t,\cdot) - \eta\hat{\rho}^{(j)}\big|,\dualfunctionperlayer^{(j)} \right \rangle  \quad \leq \quad \sum_{k = 1}^{j} e^{-\mu_j t} \beta^{(j)}_{k}(t) \left \langle \big|\rho^{(k)}_0- \eta \eigenfunction^{(k)}\big|,\dualfunctionperlayer^{(k)} \right \rangle, 
		\end{equation}
		where $ \eta := \, \ll \densityfunction_0,\phi\gg $, $\mu_j := \Malthus - \MalthusPerLayer_j > 0$ when $j \in \llbracket 1,J \rrbracket \setminus \{\MalthusIndex\} $ and $\mu_\MalthusIndex := \underline{b}_\MalthusIndex$. In particular, there exist a polynomial $\beta$ of degree at most $J-1 $ and constant $\mu$ such that
		\begin{equation*}
		\displaystyle \ll  \big|e^{-\Malthus t}\densityfunction(t,\cdot) - \eta\hat{\rho} \big| , \dualfunctionperlayer  \gg \, \leq  \, \beta(t) e^{-\mu t } 	\ll  \big|\densityfunction_0 - \eta\hat{\rho} \big| , \dualfunctionperlayer   \gg \, .
		\end{equation*}
	\end{theorem}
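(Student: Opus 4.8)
The plan is to rescale by the Malthusian exponential, turn the system into a lower‑triangular cascade of one‑layer problems, and then run an induction on the layer index $j$ that couples a weighted‑$\mathbf{L}^1$ (general relative entropy) estimate on each layer with a Duhamel/Grönwall argument for the inter‑layer feeding term.

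\textbf{Reduction.} First I would set $u^{(j)}(t,a):=e^{-\Malthus t}\densityfunction^{(j)}(t,a)$ and $h^{(j)}:=u^{(j)}-\eta\eigenfunction^{(j)}$. Since $(\Malthus,\eigenfunction,\dualfunction)$ is the first eigenelement of \eqref{StationnaireEquation}--\eqref{ProblemeAdjoint} (Theorem~\ref{Eigenproblem}), $\eta\eigenfunction$ is a steady state of the $e^{-\Malthus t}$‑rescaled system, so each $h^{(j)}$ satisfies $\partial_t h^{(j)}+\partial_a h^{(j)}+(\Malthus+b_j(a))h^{(j)}=0$ with $h^{(j)}(t,0)=2p_L^{(j-1)}\langle b_{j-1},h^{(j-1)}(t,\cdot)\rangle+2p_S^{(j)}\langle b_j,h^{(j)}(t,\cdot)\rangle$ (the first term absent when $j=1$, as $p_L^{(0)}=0$), while $\ll h(t,\cdot),\dualfunction\gg\equiv\ll\densityfunction_0,\dualfunction\gg-\eta\ll\eigenfunction,\dualfunction\gg=0$ because $\dualfunction$ solves \eqref{ProblemeAdjoint} and $\ll\eigenfunction,\dualfunction\gg=1$. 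Thus layer $1$ is autonomous and layer $j$ is forced only by layer $j-1$ through $S^{(j)}(t):=2p_L^{(j-1)}\langle b_{j-1},h^{(j-1)}(t,\cdot)\rangle$. I would also record, by solving \eqref{phihat_equ} explicitly and using Hypothesis~\ref{Hypothesis_lambdaj} through Corollary~\ref{Moment_Distribution_dB}, that the $\dualfunctionperlayer^{(j)}$ can be taken continuous with $0<\underline\phi_j\le\dualfunctionperlayer^{(j)}(a)\le\overline\phi_j<\infty$, so that $\langle|f|,\dualfunctionperlayer^{(j)}\rangle$ is equivalent to $\|f\|_{\mathbf{L}^1}$.

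\textbf{Per‑layer inequality and induction.} For fixed $j$, multiply the equation for $h^{(j)}$ by $\mathrm{sgn}(h^{(j)})\dualfunctionperlayer^{(j)}$, use $\partial_t|h^{(j)}|+\partial_a|h^{(j)}|+(\Malthus+b_j)|h^{(j)}|\le 0$, integrate in $a$, integrate by parts, and substitute \eqref{phihat_equ}: for $j\ne\MalthusIndex$ it reads $\partial_a\dualfunctionperlayer^{(j)}-(\Malthus+b_j)\dualfunctionperlayer^{(j)}=-2p_S^{(j)}b_j\dualfunctionperlayer^{(j)}(0)-\mu_j\dualfunctionperlayer^{(j)}$ with $\mu_j=\Malthus-\MalthusPerLayer_j>0$, whereas for $j=\MalthusIndex$ that $\mu_j$‑term is \emph{absent} because $\dualfunctionperlayer^{(\MalthusIndex)}$ solves \eqref{phihat_equ} with the genuine intrinsic rate $\MalthusPerLayer_\MalthusIndex=\Malthus$. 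Bounding $|h^{(j)}(t,0)|\dualfunctionperlayer^{(j)}(0)$ through the boundary condition and the triangle inequality makes the $2p_S^{(j)}$‑terms cancel, leaving, with $E_j(t):=\langle|h^{(j)}(t,\cdot)|,\dualfunctionperlayer^{(j)}\rangle$,
\[
\frac{d}{dt}E_j(t)\ \le\ -\mu_j E_j(t)+2p_L^{(j-1)}\dualfunctionperlayer^{(j)}(0)\,\big|\langle b_{j-1},h^{(j-1)}(t,\cdot)\rangle\big|\ \le\ -\mu_j E_j(t)+c_j\,E_{j-1}(t),
\]
using $|\langle b_{j-1},h^{(j-1)}\rangle|\le\overline b_{j-1}\|h^{(j-1)}\|_{\mathbf{L}^1}\le(\overline b_{j-1}/\underline\phi_{j-1})E_{j-1}$; for $j=1$ the source term is dropped, and if moreover $1=\MalthusIndex$ the $-\mu_1 E_1$ term is dropped too. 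For the base case $j=1$ with $1\ne\MalthusIndex$ one has $\eigenfunction^{(1)}\equiv 0$ (because $2p_S^{(1)}d\mathcal{B}_1^*(\Malthus)<2p_S^{(1)}d\mathcal{B}_1^*(\MalthusPerLayer_1)=1$ by Remark~\ref{remarks_dB*}, since $\Malthus>\MalthusPerLayer_1$), and the same computation with $\MalthusPerLayer_1$ in place of $\Malthus$ shows $\langle|e^{-\MalthusPerLayer_1 t}\densityfunction^{(1)}(t,\cdot)|,\dualfunctionperlayer^{(1)}\rangle$ is nonincreasing, whence $E_1(t)=e^{-\mu_1 t}\langle|e^{-\MalthusPerLayer_1 t}\densityfunction^{(1)}(t,\cdot)|,\dualfunctionperlayer^{(1)}\rangle\le e^{-\mu_1 t}\langle|\densityfunction_0^{(1)}|,\dualfunctionperlayer^{(1)}\rangle$, i.e.\ \eqref{InegaliDecroisExp} with $\beta^{(1)}_1$ constant. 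For the inductive step, from \eqref{InegaliDecroisExp} at layers $1,\dots,j-1$ one has $E_{j-1}(t)\le P(t)e^{-\nu t}\mathcal I_{j-1}$ with $P$ a polynomial and $\mathcal I_{j-1}:=\sum_{k\le j-1}\langle|\densityfunction_0^{(k)}-\eta\eigenfunction^{(k)}|,\dualfunctionperlayer^{(k)}\rangle$; integrating the inequality above gives $E_j(t)\le e^{-\mu_j t}E_j(0)+c_j\int_0^t e^{-\mu_j(t-s)}P(s)e^{-\nu s}\,ds$, and the convolution yields a bounded factor when the rates differ and a polynomial of degree one higher when they resonate --- precisely the mechanism producing the $\beta^{(j)}_k$ of degree $\le j-k$, which bookkeep the resonances among the rates $\mu_1,\dots,\mu_j$ encountered along the cascade.

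\textbf{The leading layer (main obstacle) and conclusion.} The delicate point is $j=\MalthusIndex$, where $\dualfunctionperlayer^{(\MalthusIndex)}$ yields no damping term and the decay must come from the layer‑$\MalthusIndex$ renewal structure itself. Here I would pass to $w:=h^{(\MalthusIndex)}/\eigenfunction^{(\MalthusIndex)}$, legitimate since $\eigenfunction^{(\MalthusIndex)}>0$: it satisfies the source‑free transport $\partial_t w+\partial_a w=0$ with renewal boundary condition $w(t,0)=\int_0^\infty\nu_\MalthusIndex(a)w(t,a)\,da+S^{(\MalthusIndex)}(t)/\eigenfunction^{(\MalthusIndex)}(0)$, where $S^{(\MalthusIndex)}$ decays exponentially by the inductive hypothesis and $\nu_\MalthusIndex(a)=b_\MalthusIndex(a)\eigenfunction^{(\MalthusIndex)}(a)/\langle b_\MalthusIndex,\eigenfunction^{(\MalthusIndex)}\rangle$ is a probability density with exponential tail (using $2p_S^{(\MalthusIndex)}d\mathcal{B}_\MalthusIndex^*(\Malthus)=1$ and Corollary~\ref{Moment_Distribution_dB}). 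The never‑refreshed part of $w$ (ages $a\ge t$) is damped at rate $\Malthus+\underline b_\MalthusIndex$, since $\eigenfunction^{(\MalthusIndex)}(a+t)/\eigenfunction^{(\MalthusIndex)}(a)\le e^{-(\Malthus+\underline b_\MalthusIndex)t}$; solving the scalar renewal equation for $t\mapsto w(t,0)$ --- whose kernel has total mass $1$ and whose source is exponentially small with vanishing integral, the latter forced by $\ll h(t,\cdot),\dualfunction\gg\equiv 0$, which pins the renewal limit to $0$ --- propagates this into exponential decay of $E_\MalthusIndex$ at the rate $\mu_\MalthusIndex=\underline b_\MalthusIndex$ at which division flushes out the initial age profile, closing the step $j=\MalthusIndex$. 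Finally, summing \eqref{InegaliDecroisExp} over $j$, using $\langle|\densityfunction_0^{(k)}-\eta\eigenfunction^{(k)}|,\dualfunctionperlayer^{(k)}\rangle\le\ll|\densityfunction_0-\eta\eigenfunction|,\dualfunctionperlayer\gg$, $e^{-\mu_j t}\le e^{-\mu t}$ with $\mu:=\min\big(\{\underline b_\MalthusIndex\}\cup\{\Malthus-\MalthusPerLayer_j:j\ne\MalthusIndex\}\big)>0$ (positive by Remark~\ref{remarks_dB*} and Hypothesis~\ref{Hypothesis_MalthusPerLayerDistinct}), and $\deg\beta^{(j)}_k\le j-k\le J-1$, and bounding the finitely many polynomials by a single $\beta$ of degree $\le J-1$, yields the global estimate.
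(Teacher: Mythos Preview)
Your reduction and the per-layer differential inequality for $j\neq\MalthusIndex$ coincide with the paper's argument (its Lemmas~\ref{PrincipeDeConservation}--\ref{Exponential_decay_lemma}): multiply by $\dualfunctionperlayer^{(j)}$, integrate, use \eqref{phihat_equ}, and bound the boundary term by a triangle inequality to get $E_j'\le -\mu_j E_j+c_j E_{j-1}$; the induction via Gr\"onwall is then identical.

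The substantive divergence is at the leading layer $j=\MalthusIndex$. The paper does \emph{not} switch to a renewal analysis of $w=h^{(\MalthusIndex)}/\eigenfunction^{(\MalthusIndex)}$. Instead it stays with the same entropy computation and manufactures the missing damping term algebraically: since $\ll h(t,\cdot),\dualfunction\gg=0$, one may write
\[
2p_S^{(\MalthusIndex)}\dualfunctionperlayer^{(\MalthusIndex)}(0)\,\big|\langle h^{(\MalthusIndex)},b_\MalthusIndex\rangle\big|
=\big|2p_S^{(\MalthusIndex)}\dualfunctionperlayer^{(\MalthusIndex)}(0)\langle h^{(\MalthusIndex)},b_\MalthusIndex\rangle-\gamma\ll h,\dualfunction\gg\big|,
\]
and the choice $\gamma=\underline b_\MalthusIndex$ makes $2p_S^{(\MalthusIndex)}\dualfunctionperlayer^{(\MalthusIndex)}(0)b_\MalthusIndex-\gamma\dualfunction^{(\MalthusIndex)}\ge 0$ (by Corollary~\ref{Bornesd}), so after a triangle inequality one gets directly
\[
E_\MalthusIndex'(t)\le -\underline b_\MalthusIndex\,E_\MalthusIndex(t)+\alpha_{\MalthusIndex-1}E_{\MalthusIndex-1}(t)+\underline b_\MalthusIndex\sum_{k=1}^{\MalthusIndex-1}\tfrac{\MalthusPerLayer_k+\overline b_k}{\underline b_k}\,E_k(t),
\]
i.e.\ the rate $\mu_\MalthusIndex=\underline b_\MalthusIndex$ falls out of a one-line pointwise inequality rather than from any renewal theory. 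This is both shorter and gives the explicit constant.

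Your renewal route for $j=\MalthusIndex$ is plausible but, as written, has a gap: the kernel $\nu_\MalthusIndex$ has total mass $1$ (critical case), so the renewal equation for $w(t,0)$ does not automatically decay exponentially; the conservation principle pins the \emph{limit} to $0$ but says nothing about a rate. Extracting the rate $\underline b_\MalthusIndex$ would require an additional spectral-gap or Laplace-transform argument on the critical renewal kernel, which you have not supplied. If you want to keep your approach, you must make that step quantitative; otherwise, the paper's conservation-principle trick closes the case in two lines. A minor side remark: you do not need Hypothesis~\ref{Hypothesis_MalthusPerLayerDistinct} for the positivity of $\mu$, since Hypothesis~\ref{Malthus_parameter_def} already asserts uniqueness of the maximal $\MalthusPerLayer_j$.
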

	Using martingale techniques \cite{jagers_population-size-dependent_2000}, we also prove a result of convergence for the stochastic process $Z$ with the dual test function $\dualfunction$.
	\begin{theorem}\label{Multi_Conver_L2}
		Under hypotheses \ref{Hypothesis_Probability}, \ref{Hypothesis_DivisionRate}, \ref{Malthus_parameter_per_layer_def} and \ref{Malthus_parameter_def}, $W^{\phi}_t = e^{-\Malthus t}\ll\dualfunction ,Z_t\gg  $  is a square integrable martingale that converges almost surely and in $\mathbf{L}^2$ to a non-de\-ge\-nerate random variable $W_\infty^\dualfunction$.
	\end{theorem}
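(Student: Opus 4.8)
The plan is to follow the classical $\mathbf{L}^2$-martingale method for branching processes \cite{jagers_population-size-dependent_2000}: show that $W^\dualfunction$ is a genuine (not merely local) martingale, bound its predictable quadratic variation uniformly in time, invoke Doob's $\mathbf{L}^2$-convergence theorem, and then deduce non-degeneracy from the conservation of the mean. First I would establish the martingale property by applying the infinitesimal generator of $Z$ to the linear functional $Z\mapsto\ll\dualfunction,Z\gg$, where $\dualfunction$ is the dual eigenfunction of Theorem~\ref{Eigenproblem}. From the semimartingale decomposition attached to \eqref{Z_t_Equation}, the transport part contributes $\ll\partial_a\dualfunction,Z_s\gg$ and the compensator of the jump part contributes, per cell $(j,a)$, the term $b_j(a)\big(2p_S^{(j)}\dualfunction^{(j)}(0)+2p_L^{(j)}\dualfunction^{(j+1)}(0)-\dualfunction^{(j)}(a)\big)$; summing over cells, the finite-variation part of $t\mapsto\ll\dualfunction,Z_t\gg$ equals $\int_0^t\ll\mathcal{L}^D\dualfunction,Z_s\gg\,\mathrm{d}s=\Malthus\int_0^t\ll\dualfunction,Z_s\gg\,\mathrm{d}s$ because $\dualfunction$ solves \eqref{ProblemeAdjoint}. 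The integrating factor $e^{-\Malthus t}$ then makes $W^\dualfunction$ a local martingale, and it is a true martingale since no cell dies, so $t\mapsto N_t$ is non-decreasing with $\mathbb{E}[N_t]\le N_0e^{t\max_j\overline b_j}<\infty$, while $\dualfunction\in\mathcal{C}_b(\mathbb{R}_+)^J$ gives $\sup_{s\le t}|W^\dualfunction_s|\le\|\dualfunction\|_\infty N_t\in\mathbf{L}^1$.

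Next I would control the bracket. Writing $\ll\dualfunction,Z_t\gg=\ll\dualfunction,Z_0\gg+\Malthus\int_0^t\ll\dualfunction,Z_s\gg\,\mathrm{d}s+M_t$ with $M$ the purely discontinuous martingale part, integration by parts gives $W^\dualfunction_t=W^\dualfunction_0+\int_0^te^{-\Malthus s}\,\mathrm{d}M_s$, hence $\langle W^\dualfunction,W^\dualfunction\rangle_t=\int_0^te^{-2\Malthus s}\,\mathrm{d}\langle M,M\rangle_s$. A division of a cell $(j,a)$ makes $M$ jump by $2\dualfunction^{(j)}(0)-\dualfunction^{(j)}(a)$, $\dualfunction^{(j)}(0)+\dualfunction^{(j+1)}(0)-\dualfunction^{(j)}(a)$ or $2\dualfunction^{(j+1)}(0)-\dualfunction^{(j)}(a)$ with probabilities $p^{(j)}_{2,0},p^{(j)}_{1,1},p^{(j)}_{0,2}$, each at most $3\|\dualfunction\|_\infty$ in absolute value, so $\mathrm{d}\langle M,M\rangle_s\le C\,N_s\,\mathrm{d}s$ with $C:=9\|\dualfunction\|_\infty^2\max_j\overline b_j$, and therefore $\mathbb{E}\big[\langle W^\dualfunction,W^\dualfunction\rangle_t\big]\le C\int_0^te^{-2\Malthus s}\,\mathbb{E}[N_s]\,\mathrm{d}s$. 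The key input is the Malthusian bound $\mathbb{E}[N_s]\le P(s)e^{\Malthus s}$ for some polynomial $P$ (of degree at most $J-1$): the first-moment measure $\mathbb{E}[Z_s]$ satisfies the deterministic system \eqref{EDP_equation}, whose growth rate is $\Malthus$, the polynomial factor accounting for the one-way coupling $1\to 2\to\cdots\to J$ (it can also be read off the stationary solution of the renewal equations for the first moments). Since $\Malthus>0$, $\int_0^\infty e^{-2\Malthus s}P(s)e^{\Malthus s}\,\mathrm{d}s=\int_0^\infty P(s)e^{-\Malthus s}\,\mathrm{d}s<\infty$, so $\sup_{t\ge 0}\mathbb{E}\big[\langle W^\dualfunction,W^\dualfunction\rangle_t\big]<\infty$, whence $\sup_{t\ge 0}\mathbb{E}\big[(W^\dualfunction_t)^2\big]=(W^\dualfunction_0)^2+\sup_{t\ge 0}\mathbb{E}\big[\langle W^\dualfunction,W^\dualfunction\rangle_t\big]<\infty$; rigorously the last identity is obtained first along the stopping times $T_n:=\inf\{t:N_t\ge n\}$ and then by letting $n\to\infty$ with Fatou's lemma.

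Doob's $\mathbf{L}^2$-martingale convergence theorem then yields $W^\dualfunction_t\to W^\dualfunction_\infty$ almost surely and in $\mathbf{L}^2$. Since $\mathbf{L}^1$-convergence preserves expectations, $\mathbb{E}[W^\dualfunction_\infty]=\mathbb{E}[W^\dualfunction_0]=\ll\dualfunction,Z_0\gg$. Solving the scalar equations in \eqref{ProblemeAdjoint} backwards (using $\Malthus+b_j(a)\ge\Malthus+\underline b_j>0$) shows that $\dualfunction^{(j)}$ is a strictly positive bounded function for $j\le\MalthusIndex$ and vanishes identically for $j>\MalthusIndex$ (the reproductive value is carried only by the leading layer and the layers upstream of it); consequently $\ll\dualfunction,Z_0\gg>0$ as soon as the initial point measure $Z_0$ charges a layer $j\le\MalthusIndex$, and since $W^\dualfunction_\infty\ge 0$ almost surely this forces $\mathbb{P}(W^\dualfunction_\infty>0)>0$, i.e.\ $W^\dualfunction_\infty$ is non-degenerate.

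The step I expect to be the real obstacle is the bound $\mathbb{E}[N_s]\le P(s)e^{\Malthus s}$: one must prevent the layers with large division rate $\overline b_j$ from making the mean population grow faster than the Malthusian rate, which rests on the triangular structure of the inter-layer coupling and amounts to locating the poles of the Laplace transform of the first-moment renewal system. Everything else — the generator identity, the elementary jump-bracket estimate, the localization, and the Doob step — is routine, and the same moment bound is what justifies afterwards that $M$ is square integrable, so that $(M_t)^2-\langle M,M\rangle_t$ is itself a martingale.
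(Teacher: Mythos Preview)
Your approach mirrors the paper's almost exactly: both apply the Dynkin formula to $\ll\dualfunction,Z_t\gg$, use $\mathcal{L}^D\dualfunction=\Malthus\dualfunction$ to kill the drift after rescaling, bound the predictable bracket of the resulting martingale by a constant times $\int_0^t e^{-2\Malthus s}N_s\,ds$, and conclude via a martingale convergence theorem. You are in fact more explicit than the paper on the key point, namely the need for a \emph{uniform-in-$t$} bound on $\mathbb{E}[\langle W^\dualfunction,W^\dualfunction\rangle_t]$; the paper only invokes the finite-horizon estimate \eqref{SDE_N_tandZ_t} together with Burkholder--Davis--Gundy, which as written yields $\mathbb{E}[\sup_{s\le T}(W^\dualfunction_s)^2]<\infty$ for each fixed $T$ rather than the supremum over all $t$.

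Your proposed fix, $\mathbb{E}[N_s]\le P(s)e^{\Malthus s}$, is correct in spirit but is an unnecessary detour: establishing it under only Hypotheses~\ref{Hypothesis_Probability}, \ref{Hypothesis_DivisionRate}, \ref{Malthus_parameter_per_layer_def}, \ref{Malthus_parameter_def} is not immediate (the renewal asymptotics of Theorem~\ref{MoyenneTempsLong_Interm} and the decay estimate of Theorem~\ref{Exponential_decay} both require the extra Hypothesis~\ref{Hypothesis_lambdaj}). A self-contained shortcut uses what you already noted about $\dualfunction$: since $\dualfunction^{(j)}\equiv 0$ for $j>\MalthusIndex$, a division in any layer $j>\MalthusIndex$ produces a zero jump of $\ll\dualfunction,Z\gg$, so the bracket only involves cells in layers $j\le\MalthusIndex$. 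There Corollary~\ref{Bornesd} gives a uniform lower bound $\dualfunction^{(j)}\ge\underline{\dualfunction}>0$, hence $\sum_{j\le\MalthusIndex}N_s^{(j)}\le\underline{\dualfunction}^{\,-1}\ll\dualfunction,Z_s\gg$ and
\[
\mathbb{E}\big[\langle W^\dualfunction,W^\dualfunction\rangle_t\big]\;\le\;C\int_0^t e^{-2\Malthus s}\,\mathbb{E}\big[\ll\dualfunction,Z_s\gg\big]\,ds\;=\;C\,W_0^\dualfunction\int_0^t e^{-\Malthus s}\,ds\;\le\;\frac{C\,W_0^\dualfunction}{\Malthus},
\]
uniformly in $t$, using only the martingale identity $\mathbb{E}[\ll\dualfunction,Z_s\gg]=e^{\Malthus s}W_0^\dualfunction$ that you already derived. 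This dissolves what you flagged as ``the real obstacle'' and keeps the proof within the stated hypotheses.
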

	\subsubsection{Renewal equation approach}Using generating function methods developed for multi-type age dependent branching processes (see \cite{harris_theory_1963}, Chap. VI), we write a system of renewal equations and obtain analytical formulas for the two first moments. We define $Y^{(j,a)}_t:= \, \langle Z_t,\mathds{1}_{j, \leq a} \rangle$ as the number of cells on layer $j$ and of age less or equal than $a$ at time $t$, and $ m^a_{i}(t) $ its mean starting from one mother cell of age $0$ on layer $1$:
			\begin{equation}\label{Definition_Moment_M}
				m^a_{j}(t) := \mathbb{E}[Y^{(j,a)}_t|Z_0 = \delta_{1,0}] \, .
			\end{equation}
			
		\begin{theorem}\label{MoyenneTempsLong}
			Under hypotheses \ref{Hypothesis_Probability}, \ref{Hypothesis_DivisionRate}, \ref{Malthus_parameter_def}, \ref{Hypothesis_MalthusPerLayerDistinct} and \ref{Hypothesis_lambdaj}, for all $a \geq 0$,
			\begin{equation}\label{M_mean_StartOneLayer}
			\quad \forall j \in \llbracket 1, J \rrbracket, \quad m^a_{j}(t)e^{-\Malthus t} \rightarrow \widetilde{m}_{j}(a), \quad t \rightarrow \infty,
			\end{equation}
			\begin{multline*}
			\text{where } \widetilde{m}_{j}(a) = \\ \left\{
			\begin{array}{ll}
			0, &  j \in \llbracket 1, c-1 \rrbracket, \\
			\displaystyle \frac{ \int_{0}^{a} \eigenfunction^{(\MalthusIndex)}(s)ds}{ 2p^{(\MalthusIndex)}_{S}\eigenfunction^{(\MalthusIndex)}(0)\int_{0}^{\infty} s d\mathcal{B}_{\MalthusIndex}(s)e^{-\Malthus s} ds},  & j = \MalthusIndex, \\[0.5cm]
		\displaystyle	\frac{ \int_{0}^{a} \hat{\rho}^{(j)}(s)ds}{ 2p^{(\MalthusIndex)}_{S} \hat{\rho}^{(\MalthusIndex)}(0)\int_{0}^{\infty} s d\mathcal{B}_{\MalthusIndex}(s)e^{-\Malthus s} ds }  \prod_{k=1}^{\MalthusIndex - 1} \frac{2p^{(k)}_{L} d\mathcal{B}^{*}_{k}(\Malthus)}{ 1 - 2p^{(k)}_{S}d\mathcal{B}^{*}_{k}(\Malthus) }, &  j \in \llbracket c + 1, J \rrbracket.
			\end{array} \right.
			\end{multline*}
		\end{theorem}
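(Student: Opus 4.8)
The plan is to work with the first-moment renewal system of the underlying multi-type Bellman--Harris process, in the spirit of \cite{harris_theory_1963}, Chap.~VI. For $i\le j$ let $M^a_{ij}(t)$ be the expected number of cells in layer $j$ of age at most $a$ at time $t$ when the population starts from a single cell of age $0$ in layer $i$; thus $M^a_{ij}\equiv 0$ for $i>j$ and $m^a_j=M^a_{1j}$. Conditioning on the first division of the ancestor and using the branching property (a cell dividing in layer $i$ produces, in mean, $2p^{(i)}_S$ daughters staying in layer $i$ and $2p^{(i)}_L$ daughters moving to layer $i+1$), one gets, with $g^a_j(t):=\mathds{1}_{\{t\le a\}}\bigl(1-\mathcal{B}_j(t)\bigr)$ and $\ast$ convolution on $\mathbb{R}_+$,
\begin{equation*}
M^a_{jj}=g^a_j+2p^{(j)}_S\, d\mathcal{B}_j\ast M^a_{jj},\qquad M^a_{ij}=2p^{(i)}_S\, d\mathcal{B}_i\ast M^a_{ij}+2p^{(i)}_L\, d\mathcal{B}_i\ast M^a_{i+1,j}\quad(i<j).
\end{equation*}
Each equation has a unique locally bounded solution; denoting by $\mathcal{L}\{f\}$ the Laplace transform of $f$ on $\mathbb{R}_+$ and recalling $\mathcal{L}\{d\mathcal{B}_i\}=d\mathcal{B}_i^*$, this triangular system solves into
\begin{equation*}
\mathcal{L}\{m^a_j\}(z)=\Bigl(\prod_{i=1}^{j-1}\frac{2p^{(i)}_L\, d\mathcal{B}_i^*(z)}{1-2p^{(i)}_S\, d\mathcal{B}_i^*(z)}\Bigr)\frac{\mathcal{L}\{g^a_j\}(z)}{1-2p^{(j)}_S\, d\mathcal{B}_j^*(z)},
\end{equation*}
whose singularities are, by Hypothesis/Definition \ref{Malthus_parameter_per_layer_def} and the strict monotonicity of $d\mathcal{B}_i^*$, exactly the rates $\MalthusPerLayer_i$ with $i\le j$; by Hypothesis \ref{Hypothesis_MalthusPerLayerDistinct} they are simple and distinct, $\Malthus=\MalthusPerLayer_\MalthusIndex$ being the largest among all layers (Definition \ref{Malthus_parameter_def}).

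For the upstream layers $j<\MalthusIndex$ one has $\max_{i\le j}\MalthusPerLayer_i<\Malthus$. I would rescale the renewal equation for $M^a_{jj}$ by $e^{-\MalthusPerLayer_j t}$: its kernel $2p^{(j)}_S e^{-\MalthusPerLayer_j s}d\mathcal{B}_j(s)$ becomes a non-lattice probability density (continuous and positive by Hypothesis \ref{Hypothesis_DivisionRate}) of finite mean (Corollary \ref{Moment_Distribution_dB}), and the source $e^{-\MalthusPerLayer_j t}g^a_j$ is bounded with compact support, hence directly Riemann integrable; the key renewal theorem then gives $e^{-\MalthusPerLayer_j t}M^a_{jj}(t)$ a finite limit, so $e^{-\Malthus t}M^a_{jj}(t)\to 0$. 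Propagating this bound up the chain $M^a_{jj}\to M^a_{j-1,j}\to\dots\to M^a_{1j}$ through the convolution kernels $K_i$ with Laplace transform $\tfrac{2p^{(i)}_L d\mathcal{B}_i^*(z)}{1-2p^{(i)}_S d\mathcal{B}_i^*(z)}$ --- each of which, for $i<\MalthusIndex$, has finite $e^{-\Malthus\cdot}$-weighted mass $\int_0^\infty e^{-\Malthus u}K_i(u)\,du=\tfrac{2p^{(i)}_L d\mathcal{B}_i^*(\Malthus)}{1-2p^{(i)}_S d\mathcal{B}_i^*(\Malthus)}$ --- yields $e^{-\Malthus t}m^a_j(t)\to 0=\widetilde m_j(a)$, using that $(K\ast f)(t)\to\ell\int_0^\infty K$ whenever $K\in\mathbf{L}^1(\mathbb{R}_+)$ and $f$ is bounded with $f(t)\to\ell$.

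For the leading layer $j=\MalthusIndex$ the key renewal theorem now applies to $M^a_{\MalthusIndex\MalthusIndex}$ at the correct rate $\MalthusPerLayer_\MalthusIndex=\Malthus$, giving $e^{-\Malthus t}M^a_{\MalthusIndex\MalthusIndex}(t)\to\bigl(\int_0^\infty e^{-\Malthus s}g^a_\MalthusIndex(s)\,ds\bigr)\big/\bigl(2p^{(\MalthusIndex)}_S\int_0^\infty s\,e^{-\Malthus s}d\mathcal{B}_\MalthusIndex(s)\,ds\bigr)$, after which convolution with the $e^{-\Malthus\cdot}$-integrable kernels $K_1,\dots,K_{\MalthusIndex-1}$ identifies $\widetilde m_\MalthusIndex(a)$. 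For the downstream layers $j>\MalthusIndex$ the chain must be split at the leading layer: for $i>\MalthusIndex$ one still gets $e^{-\Malthus t}M^a_{ij}(t)\to 0$, and moreover $M^a_{ij}$ is $e^{-\Malthus\cdot}$-integrable because its growth rate $\max_{\MalthusIndex<i'\le j}\MalthusPerLayer_{i'}$ is strictly below $\Malthus$; the critical step $M^a_{\MalthusIndex j}=K_\MalthusIndex\ast M^a_{\MalthusIndex+1,j}$ is handled by applying the key renewal theorem to $K_\MalthusIndex$ itself (which satisfies $K_\MalthusIndex=2p^{(\MalthusIndex)}_L d\mathcal{B}_\MalthusIndex+2p^{(\MalthusIndex)}_S d\mathcal{B}_\MalthusIndex\ast K_\MalthusIndex$), so that $e^{-\Malthus u}K_\MalthusIndex(u)\to\kappa_\MalthusIndex:=2p^{(\MalthusIndex)}_L d\mathcal{B}_\MalthusIndex^*(\Malthus)\big/\bigl(2p^{(\MalthusIndex)}_S\int_0^\infty s\,e^{-\Malthus s}d\mathcal{B}_\MalthusIndex(s)\,ds\bigr)$ and hence $e^{-\Malthus t}M^a_{\MalthusIndex j}(t)\to\kappa_\MalthusIndex\int_0^\infty e^{-\Malthus s}M^a_{\MalthusIndex+1,j}(s)\,ds$; propagating up through $K_{\MalthusIndex-1},\dots,K_1$ then brings in the factor $\prod_{k=1}^{\MalthusIndex-1}\tfrac{2p^{(k)}_L d\mathcal{B}_k^*(\Malthus)}{1-2p^{(k)}_S d\mathcal{B}_k^*(\Malthus)}$.

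Finally I would rewrite the limits in terms of the eigenelements. Solving the primal problem \eqref{StationnaireEquation} explicitly gives $\eigenfunction^{(j)}(a)=\eigenfunction^{(j)}(0)\,e^{-\Malthus a-\int_0^a b_j(s)\,ds}$, hence $\int_0^a\eigenfunction^{(j)}(s)\,ds=\eigenfunction^{(j)}(0)\int_0^\infty e^{-\Malthus s}g^a_j(s)\,ds$, while its boundary condition forces $\eigenfunction^{(j)}(0)\bigl(1-2p^{(j)}_S d\mathcal{B}_j^*(\Malthus)\bigr)=2p^{(j-1)}_L d\mathcal{B}_{j-1}^*(\Malthus)\,\eigenfunction^{(j-1)}(0)$, so $\eigenfunction^{(j)}(0)=0$ for $j<\MalthusIndex$ and $\eigenfunction^{(j)}(0)/\eigenfunction^{(\MalthusIndex)}(0)=\prod_{i=\MalthusIndex+1}^{j}\tfrac{2p^{(i-1)}_L d\mathcal{B}_{i-1}^*(\Malthus)}{1-2p^{(i)}_S d\mathcal{B}_i^*(\Malthus)}$ for $j>\MalthusIndex$; substituting these, together with $\int_0^\infty s\,e^{-\Malthus s}d\mathcal{B}_\MalthusIndex(s)\,ds=-\bigl(d\mathcal{B}_\MalthusIndex^*\bigr)'(\Malthus)$, turns the limits above into the announced formulas after elementary algebra. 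The main obstacle is precisely this regime change at the leading layer: the $e^{-\Malthus\cdot}$-weighted renewal kernel of layer $\MalthusIndex$ is \emph{not} integrable --- it is asymptotically constant because $\MalthusPerLayer_\MalthusIndex=\Malthus$ --- so the chain from layer $1$ to layer $j$ cannot be processed by a single application of the renewal theorem; one must factor the kernels at $\MalthusIndex$ and exploit that everything strictly downstream of $\MalthusIndex$ grows sub-Malthusianly (hence is $e^{-\Malthus\cdot}$-integrable), which is what makes the convolution $K_\MalthusIndex\ast M^a_{\MalthusIndex+1,j}$ converge at rate $\Malthus$. The rest is routine bookkeeping: checking direct Riemann integrability of the sources, non-latticeness, and finiteness of the renewal means (Corollary \ref{Moment_Distribution_dB}, which uses Hypothesis \ref{Hypothesis_lambdaj}).
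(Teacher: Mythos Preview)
Your plan is correct and follows essentially the same renewal-theoretic route as the paper: set up the triangular system of first-moment renewal equations (the paper's Lemma~\ref{MomentDOrdre1}), apply the key renewal theorem at the critical rate $\Malthus=\MalthusPerLayer_{\MalthusIndex}$, and exploit that every layer other than $\MalthusIndex$ contributes an $e^{-\Malthus\cdot}$-integrable factor. The identification of the limits with the eigenelements via $\eigenfunction^{(j)}(a)=\eigenfunction^{(j)}(0)e^{-\Malthus a}(1-\mathcal{B}_j(a))$ and the boundary recursion \eqref{rho_0_hat_equation} is also the paper's step.

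The organization differs in one respect worth noting. The paper first proves a more general intermediate result (Theorem~\ref{MoyenneTempsLong_Interm}): for \emph{every} starting layer $i$ and every $k\ge 0$, $M^a_{i,i+k}(t)\sim\widetilde{M}_{i,i+k}(a)e^{\MalthusPerLayer_{i,i+k}t}$, by induction on $k$, splitting into the cases $\MalthusPerLayer_{i,i+k}=\MalthusPerLayer_i$ (Harris's Lemma~2, i.e.\ key renewal theorem) and $\MalthusPerLayer_{i,i+k}\neq\MalthusPerLayer_i$ (rescale and use Harris's Lemma~4, i.e.\ the defective renewal theorem). Theorem~\ref{MoyenneTempsLong} is then the specialization $i=1$, and the Laplace integral $\int_0^\infty M^a_{\MalthusIndex+1,j}(t)e^{-\Malthus t}dt$ is computed directly from the renewal system. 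You instead factor $m^a_j=K_1\ast\cdots\ast K_{j-1}\ast M^a_{jj}$ through the explicit resolvent kernels $K_i$ and apply the key renewal theorem once to $K_{\MalthusIndex}$ (equivalently, to the equation for $M^a_{\MalthusIndex j}$ with source $2p^{(\MalthusIndex)}_L\,d\mathcal{B}_{\MalthusIndex}\ast M^a_{\MalthusIndex+1,j}$). This is a cleaner bookkeeping of the same computation and avoids stating the full intermediate theorem; the price is that you do not get the layer-by-layer growth rates $\MalthusPerLayer_{i,i+k}$ for general $i$, which the paper uses elsewhere (e.g.\ for the second moments in Theorem~\ref{MomentDOrdre2Long}). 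Both routes hinge on the same point you correctly isolate: the $e^{-\Malthus\cdot}$-rescaled kernel at layer $\MalthusIndex$ is not integrable but asymptotically constant, so one must split the chain there and use integrability of everything strictly downstream.
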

	
	\subsubsection{Calibration}
		We now consider a particular choice of the division rate:
		\begin{hypothesis}[Age-independent division rate]\label{Hypothesis_ageIndep}
			 $\forall \, (j,a)\in \mathcal{E}$,  $b_j(a) = b_j$.
		\end{hypothesis}
		We also consider a specific initial condition with $N \in \mathbb{N}^*$ cells:
		\begin{hypothesis}[First layer initial condition]\label{Hypothesis_Start}
			$Z_0 = N \delta_{1,0}$.
		\end{hypothesis}
		 Then, integrating the deterministic PDE system (\ref{EDP_equation}) with respect to age or differentiating the renewal equation system (see (\ref{MomentDOrdre1Equ})) on the mean number $M$, we obtain:
		 \begin{equation}\label{ODE_equ}
		 \resizebox{0.9\textwidth}{!} {$
			 \left\{ 
		 	\begin{array}{l}
		 	\frac{d}{dt}M(t)= A M(t) \\
		 	M(0) = (N,0,...,0 ) \in \mathbb{R}^{J} 
		 	\end{array}\right., \quad [A]_{i,j} := \left\{ \begin{array}{lll}
		 	(2p_S^{(j)} - 1)b_j, &  i = j, & j \in \llbracket 1, J \rrbracket, \\
		 	2p_L^{(j-1)}b_{j-1}, &  i = j-1, & j \in \llbracket 2, J \rrbracket.
		 	\end{array}\right. $}
		 \end{equation}
		 We prove the structural identifiability of the parameter set $ \mathbf{P}:= \{N, b_j,p_S^{(j)},  j \in \llbracket 1, J \rrbracket \}$ when we observe the vector $M(t;\mathbf{P})$ at each time $t$.
		\begin{theorem}\label{Iden_Them_All_Layer}
			Under hypotheses \ref{Hypothesis_Probability}, \ref{Hypothesis_ageIndep} and \ref{Hypothesis_Start} and complete observation of system (\ref{ODE_equ}), the parameter set $\mathbf{P}$ is identifiable.
		\end{theorem}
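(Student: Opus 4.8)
The crux is the structure of~\eqref{ODE_equ}: under Hypothesis~\ref{Hypothesis_ageIndep} the matrix $A$ is bidiagonal, hence triangular, so that layer $1$ evolves autonomously and, for $j\ge 2$, $M_j$ is fed only by $M_{j-1}$ through the single off-diagonal coefficient $2p_L^{(j-1)}b_{j-1}$. I would exploit this cascade to reconstruct $\mathbf P$ from the observed map $t\mapsto M(t;\mathbf P)$, which is exactly the content of structural identifiability. It is convenient to work first with the $2J$ ``reduced'' coefficients $N$, $\alpha_j:=(2p_S^{(j)}-1)b_j$ ($1\le j\le J$) and $\gamma_j:=2p_L^{(j)}b_j$ ($1\le j\le J-1$), noting at the outset that $\alpha_j+\gamma_j=b_j$ for $j<J$ while $\alpha_J=b_J$ (since $p_S^{(J)}=1$); together with $p_S^{(j)}=\tfrac12\bigl(1+\alpha_j/b_j\bigr)$ this gives, using $b_j>0$ (Hypothesis~\ref{Hypothesis_DivisionRate}) and $p_L^{(j)}\in(0,1)$ (Hypothesis~\ref{Hypothesis_Probability}), a bijection between $\mathbf P$ and the reduced coefficients; in particular $p_S^{(j)}\in(0,1)$ is automatically recovered. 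So it suffices to identify the reduced coefficients.

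Two elementary facts come from integrating the cascade against the initial datum $M(0)=(N,0,\dots,0)$. First, positivity: $M_1(t)=Ne^{\alpha_1t}>0$ for all $t$, and inductively $M_j(t)=\gamma_{j-1}\int_0^t e^{\alpha_j(t-s)}M_{j-1}(s)\,ds>0$ for $t>0$, since $\gamma_{j-1}>0$. Second, the behaviour near $t=0$: $M_j(t)=c_j\,t^{\,j-1}+o(t^{\,j-1})$ with $c_1=N$ and $c_j=N\gamma_1\cdots\gamma_{j-1}/(j-1)!>0$. The second fact already yields $N$ and all the $\gamma_j$: reading the leading Taylor coefficients $c_j=M_j^{(j-1)}(0)/(j-1)!$ off the observed trajectory, one gets $N=c_1$ and then $\gamma_{j-1}=(j-1)\,c_j/c_{j-1}$ for $j=2,\dots,J$, which is well defined because $c_{j-1}>0$.

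The diagonal coefficients are then recovered recursively. Once $N$ and $\gamma_1,\dots,\gamma_{J-1}$ are known, $\alpha_1$ is the exponential growth rate of $M_1$ (equivalently $\alpha_1=\dot M_1(0)/N$), so $M_1$ is a known function; and for $j=2,\dots,J$, assuming $\alpha_1,\dots,\alpha_{j-1}$ already identified --- hence $M_{j-1}$ known --- rearranging the scalar equation $\dot M_j=\alpha_jM_j+\gamma_{j-1}M_{j-1}$ gives
\[
\alpha_j=\frac{\dot M_j(t_0)-\gamma_{j-1}M_{j-1}(t_0)}{M_j(t_0)}\qquad\text{for any fixed }t_0>0,
\]
which is legitimate since $M_j(t_0)>0$ by the positivity fact. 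This closes the induction, so $N,\alpha_1,\dots,\alpha_J,\gamma_1,\dots,\gamma_{J-1}$ are all determined by the observation, and the algebraic inversion of the first paragraph returns $\mathbf P$. Hence two parameter sets producing the same trajectory coincide, which is the claimed identifiability.

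I do not expect a genuine obstacle: the argument is elementary once the triangular structure is invoked. The points that need a little care are the strict positivity $M_j(t)>0$ for $t>0$ (used to divide by $M_j(t_0)$, and where Hypotheses~\ref{Hypothesis_Probability} and~\ref{Hypothesis_DivisionRate} enter through $p_L^{(j)}>0$ and $b_j>0$), the verification that $\mathbf P\leftrightarrow\{N,\alpha_j,\gamma_j\}$ is truly a bijection, and the bookkeeping of which quantities are available at each stage of the two nested inductions. An essentially equivalent alternative would be to write $M(t)=N\,e^{tA}e_1$ and observe that, for $A$ (lower) bidiagonal, each $M_j(t)$ is an explicit combination of $e^{\alpha_1t},\dots,e^{\alpha_jt}$ (with polynomial coefficients when some $\alpha_k$ coincide), reading off the parameters layer by layer; I would keep the ODE-cascade version because it avoids case distinctions between coinciding and distinct $\alpha_j$.
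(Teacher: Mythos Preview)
Your argument is correct. The approach, however, differs in style from the paper's. The paper proceeds abstractly: it first establishes a lemma stating that the trajectory $\{M(t;\mathbf P):t\ge 0\}$ is not contained in any proper linear subspace of $\mathbb R^J$ (proved by repeatedly differentiating $u^TM(t)=0$ and evaluating at $t=0$, using the same cascade structure you exploit), and then observes that if $M(\cdot;\mathbf P)=M(\cdot;\widetilde{\mathbf P})$, differentiation gives $(A_{\mathbf P}-A_{\widetilde{\mathbf P}})M(t;\mathbf P)=0$ for all $t$, forcing $A_{\mathbf P}=A_{\widetilde{\mathbf P}}$; the algebraic step from equal matrices to equal parameters is then the same as yours. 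Your route is instead constructive: you explicitly read off $N$ and the $\gamma_j$ from the Taylor coefficients at $t=0$, and then peel off the $\alpha_j$ recursively from the scalar ODEs at a positive time. Both arguments rest on the same two ingredients --- the lower-bidiagonal structure of $A$ and the behaviour of $M$ near $t=0$ --- but yours yields an actual reconstruction algorithm, while the paper's packaging as a kernel lemma is shorter and sidesteps the explicit induction. A small cosmetic point: the theorem statement invokes only Hypotheses~\ref{Hypothesis_Probability}, \ref{Hypothesis_ageIndep}, \ref{Hypothesis_Start}; the positivity $b_j>0$ you attribute to Hypothesis~\ref{Hypothesis_DivisionRate} is here simply part of the parameter domain for $\mathbf P$.
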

		We then perform the estimation of the parameter set $\mathbf{P}$ from experimental cell number data retrieved on four layers and sampled at three different time points (see Table \ref{Table_SummarizeOfDataset}). To improve practical identifiability, we embed biological specifications used in \cite{clement_coupled_2013} as a recurrence relation between successive division rates:
		\begin{equation}\label{Rec_b}
			b_j = \frac{b_1}{1 + (j - 1) \times \alpha}, \, j \in \llbracket 1, 4 \rrbracket, \, \alpha \in \mathbb{R}.
		\end{equation}
		We estimate the parameter set $\mathbf{P}_{exp} = \{N, b_1,\alpha, p_S^{(1)}, p_S^{(2)},p_S^{(3)}\}$ using the D2D software \cite{raue_data2dynamics:_2015} with an additive Gaussian noise model (see Figure \ref{fig:ParametersEstim_Fit} and Table \ref{Table_ParametersValues}). An analysis of the profile likelihood estimate shows that all parameters except $p_S^{(2)} $ are practically identifiable (see Figure \ref{fig:pledatamntotalcell}).
		\begin{figure}[!htb]
			\centering
			\includegraphics[height=4cm]{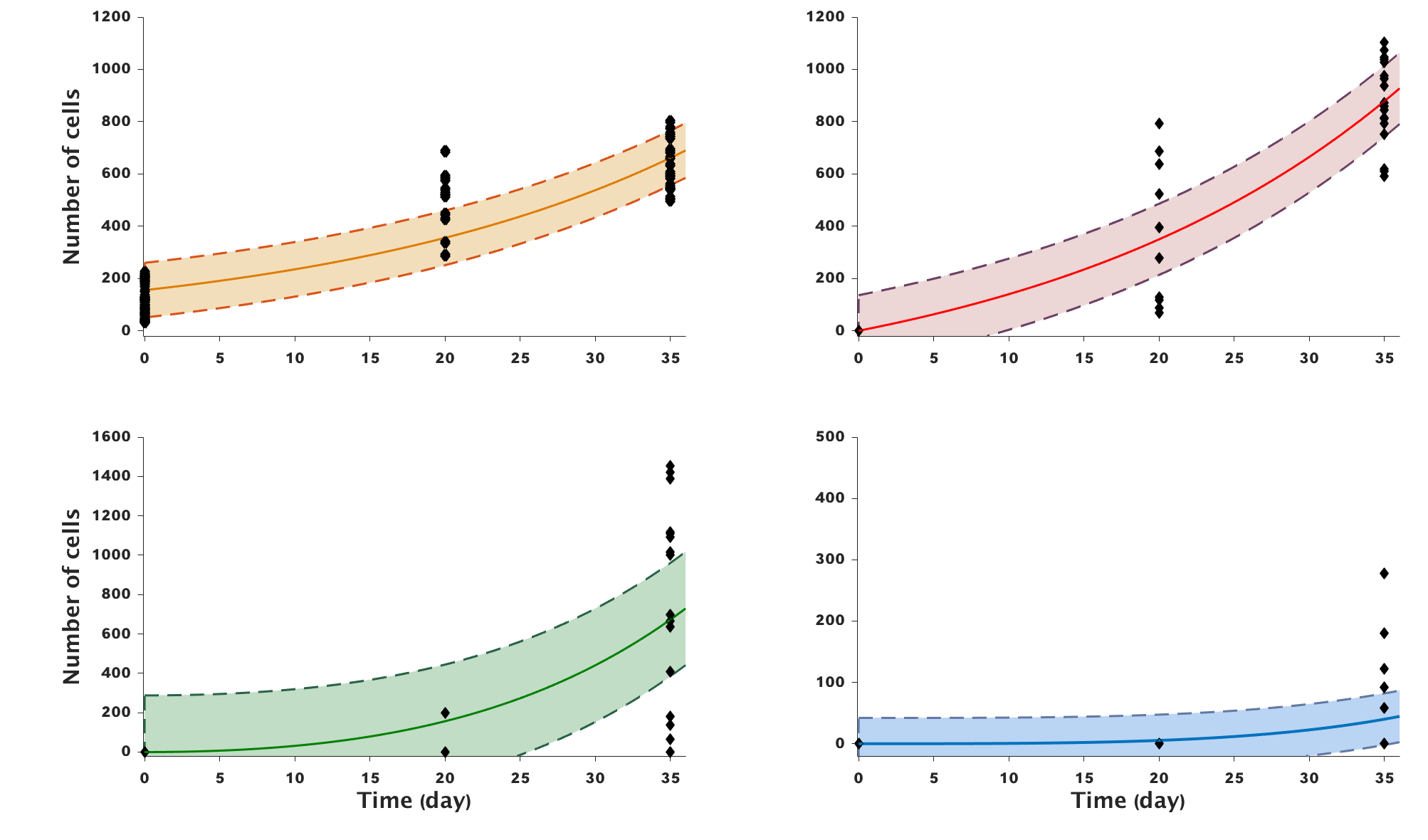}
			\label{fig:ParametersEstim_Fit}
			\caption{\textbf{Data fitting with model (\ref{ODE_equ}).} Each panel illustrates the changes in the cell number in a given layer (top-left: Layer 1, top-right: Layer 2, bottom-left: Layer 3, bottom-right: Layer 4). The black diamonds represent the experimental data, the solid lines are the best fit solutions of (\ref{ODE_equ}) and the dashed lines are drawn from the estimated variance. The parameter values (Table \ref{Table_ParametersValues}) are estimated according to the procedure described in section \ref{SM_ParmeterEstimaProcedure}. }
		\end{figure}

\section{Theoretical proof and illustrations}
			\subsection{Eigenproblem}
				We start by solving explicitly the eigenproblem (\ref{StationnaireEquation})-(\ref{ProblemeAdjoint}) to prove theorem \ref{Eigenproblem}.
				\begin{proof}[Proof of theorem \ref{Eigenproblem}]
				According to definition \ref{Definition_B_j_and_dB_j}, any solution of (\ref{StationnaireEquation}) in $\mathbf{L}^1(\mathbb{R}_+)^J$ is given by,  $\forall j \in \llbracket 1, J \rrbracket$,	
				\begin{equation}\label{Eigenproblem_rhoEqu}
				\eigenfunction^{(j)}(a) =  \eigenfunction^{(j)}(0) e^{- \lambda a} (\mathds{1} - \mathcal{B}_j)(a) \, .
				\end{equation}
				The boundary condition of the problem (\ref{StationnaireEquation}) gives us a system of equations for $\lambda$ and $ \eigenfunction^{(j)}(0)$, $j \in \llbracket 1, J \rrbracket$:
					\begin{equation}
						 \eigenfunction^{(j)}(0)  \times (1 - 2p^{(j)}_{S}d\mathcal{B}_{j}^{*}(\lambda)) = 2p^{(j-1)}_{L}d\mathcal{B}_{j-1}^{*}(\lambda) \times \eigenfunction^{(j-1)}(0) \, . \label{rho_0_hat_equation}
					\end{equation}
				This system is equivalent to
				\begin{align*}
					C(\lambda)\eigenfunction(0) = 0, \quad [C(\lambda)]_{i,j} = \left\{ \begin{array}{ll}
						1 - 2p^{(j)}_{S}d\mathcal{B}_{j}^{*}(\lambda), &i = j , \quad j \in \llbracket 1, J \rrbracket, \\
						2p^{(j-1)}_{L}d\mathcal{B}_{j-1}^{*}(\lambda) , &i = j-1, \quad j \in \llbracket 2, J \rrbracket.
					\end{array}\right. 
				\end{align*}
			 Let $\Lambda := \{\MalthusPerLayer_j, j \in \llbracket 1, J \rrbracket \}$. The eigenvalues of the matrix $C(\lambda)$ are $1 - 2p^{(j)}_{S}d\mathcal{B}_{j}^{*}(\lambda) $, $j \in \llbracket 1, J \rrbracket$. Thus, if $\lambda \notin \Lambda$, according to hypothesis \ref{Malthus_parameter_per_layer_def}, $0$ is not an eigenvalue of $C(\lambda)$ which implies that $\eigenfunction(0) =0 $. As $\eigenfunction$ satisfies both (\ref{Eigenproblem_rhoEqu}) and the normalization $\ll\eigenfunction,\mathds{1}\gg \,  = 1 $, we obtain a contradiction. So, necessary $\lambda \in \Lambda $. \\
			 We choose $\lambda = \Malthus$ the maximum element of $\Lambda$ according to hypothesis \ref{Malthus_parameter_def}. Then, using (\ref{rho_0_hat_equation}) when $j = \MalthusIndex$, we have:
			 \begin{equation*}
			 	\eigenfunction^{(\MalthusIndex)}(0) \times (1 - 2p^{(\MalthusIndex)}_{S}d\mathcal{B}_{\MalthusIndex}^{*}(\Malthus) )= 2p^{(\MalthusIndex-1)}_{L}d\mathcal{B}_{\MalthusIndex-1}^{*}(\Malthus) \times \eigenfunction^{(\MalthusIndex-1)}(0) \, .
			 \end{equation*}
			 Note that $ 1 - 2p^{(\MalthusIndex)}_{S}d\mathcal{B}_{\MalthusIndex}^{*}(\Malthus) = 0 $, so $ \eigenfunction^{(\MalthusIndex-1)}(0) = 0$ and by backward recurrence using (\ref{rho_0_hat_equation}) from $j=\MalthusIndex - 1$ to $1$, it comes that $\eigenfunction^{(j)}(0) = 0$ when $j< \MalthusIndex $.
			By hypothesis \ref{Malthus_parameter_def}, $\max (\Lambda)$ is unique. Thus, when $j > \MalthusIndex$, $\MalthusPerLayer_j \neq \Malthus$ and $1 - 2p^{(j)}_{S}d\mathcal{B}_{j}^{*}(\Malthus) \neq 0$. Solving (\ref{rho_0_hat_equation}) from $j = \MalthusIndex + 1$ to $J$, we obtain:
				\begin{equation*}
				\eigenfunction^{(j)}(0) =  \eigenfunction^{(\MalthusIndex)}(0) \times \prod_{k = \MalthusIndex + 1 }^{j} \frac{2p^{(k-1)}_{L}d\mathcal{B}_{k-1}^{*}(\Malthus)}{1 - 2p^{(k)}_{S}d\mathcal{B}_{k}^{*}(\Malthus)},   \quad \forall j 	 \in \llbracket \MalthusIndex + 1,J\rrbracket \, .
				\end{equation*}
				We deduce $\eigenfunction^{(\MalthusIndex)}(0) $ from the normalization $\ll\eigenfunction, \mathds{1}\gg \,  = 1$. Hence, $\eigenfunction$ is uniquely determined by (\ref{Eigenproblem_rhoEqu}) together with the following boundary value:
				\begin{equation}\label{Eigenproblem_rho0_value}
				\eigenfunction^{(j)}(0) = \left\{
						\begin{array}{ll}
						0, & j \in  \llbracket 1,\MalthusIndex - 1 \rrbracket,  \\[0.35cm]
						\frac{1}{\sum_{j= \MalthusIndex }^{J} \int_{0}^{\infty}\eigenfunction^{(j)}(a)da \prod_{k = \MalthusIndex + 1}^{j} \frac{2p^{(k-1)}_L d\mathcal{B}_{k-1}^{*}(\MalthusPerLayer_\MalthusIndex)}{1 - 2p^{(k)}_Sd\mathcal{B}_{k}^{*}(\Malthus)} } \, , & j = \MalthusIndex, \\[0.7cm]
						\eigenfunction^{(\MalthusIndex)}(0) \prod_{k = \MalthusIndex + 1}^{j} \frac{2p^{(k-1)}_Ld\mathcal{B}_{k-1}^{*}(\MalthusPerLayer_\MalthusIndex)}{1 - 2p^{(k)}_Sd\mathcal{B}_{k}^{*}(\Malthus)} \, , & j \in  \llbracket \MalthusIndex + 1,J \rrbracket.
						\end{array}
						\right. 
				\end{equation}
				For the ODE system (\ref{ProblemeAdjoint}), any solution  is given by, for $j \in \llbracket 1, J \rrbracket$,
				\begin{equation*}
				\displaystyle	\dualfunction^{(j)}(a) =  \left [ \dualfunction^{(j)}(0) - 2 \big(\dualfunction^{(j)}(0)p_S^{(j)} + \dualfunction^{(j+1)}(0)p_L^{(j)} \big)\int_{0}^{a}e^{-\Malthus s}d\mathcal{B}_j(s)ds  \right ]  e^{\int_{0}^{a}\Malthus+ b_{j}(s)ds} \, .
				\end{equation*}
				As $\displaystyle \int_{0}^{a}b_{j}(s)e^{-\int_{0}^{s}\Malthus + b_{j}(u)du}ds $ is equal to $ \displaystyle d\mathcal{B}_{j}^{*}(\Malthus)  - \int_{a}^{\infty}b_{j}(s)e^{-\int_{0}^{s}\Malthus + b_{j}(u)du}ds $, we get
				\begin{multline*}
					\dualfunction^{(j)}(a) 
				\displaystyle	= \left [ \dualfunction^{(j)}(0) \left (1 - 2p_S^{(j)}d\mathcal{B}_{j}^{*}(\Malthus)+ 2p_S^{(j)}\int_{a}^{+ \infty}b_{j}(s)e^{-\int_{0}^{s}\Malthus + b_{j}(u)du}ds \right ) \right.  \\
					\left. 
					 - \dualfunction^{(j+1)}(0) \left (  2p_L^{(j)}d\mathcal{B}_{j}^{*}(\Malthus)- 2p_L^{(j)}\int_{a}^{+ \infty}b_{j}(s)e^{-\int_{0}^{s}\Malthus + b_{j}(u)du}ds\right ) \right ]  e^{\int_{0}^{a}\Malthus+ b_{j}(s)ds} \, .
				\end{multline*}
				Searching for $\dualfunction \in \mathcal{C}_b(\mathbb{R}_+)^J$, it comes that 
				\begin{equation}\label{Recursive_equation}
				\forall j \in \llbracket1, J \rrbracket , \quad  \dualfunction^{(j)}(0)\left (1 - 2p_S^{(j)}d\mathcal{B}^*_{j}(\Malthus) \right )  - \dualfunction^{(j+1)}(0)2p_L^{(j)}d\mathcal{B}^*_{j}(\Malthus)= 0 \, .
				\end{equation}
				According to definition \ref{Malthus_parameter_per_layer_def}, when $j = \MalthusIndex$ in (\ref{Recursive_equation}) we get $\dualfunction^{(\MalthusIndex+1)}(0) = 0$. Recursively, $ \dualfunction^{(j)}(0) = 0$ when $j > \MalthusIndex $.
				Solving (\ref{Recursive_equation}) from $j=1$ to $\MalthusIndex - 1$, we get 
					\begin{equation}\label{bound_phi}
				\forall j 	 \in \llbracket 1,\MalthusIndex - 1 \rrbracket, \quad	\dualfunction^{(j)}(0)=  \dualfunction^{(\MalthusIndex)}(0) \times \prod_{k = j}^{\MalthusIndex - 1 } \frac{2p^{(k-1)}_{L}d\mathcal{B}_{k-1}^{*}(\Malthus)}{1 - 2p^{(k)}_{S}d\mathcal{B}_{k}^{*}(\Malthus)}  \, .
				\end{equation}
				Again, we deduce $\dualfunction^{(\MalthusIndex)}(0)$ from the normalization $  1 = \, \ll\eigenfunction, \dualfunction\gg \, = \langle \eigenfunction^{(\MalthusIndex)},\dualfunction^{(\MalthusIndex)} \rangle $. Using corollary \ref{Moment_Distribution_dB}, we apply Fubini theorem:
				\begin{equation}\label{bound_phi_2}
				\dualfunction^{(\MalthusIndex)}(0) = \frac{1}{	2\eigenfunction^{(\MalthusIndex)}(0)p_S^{(\MalthusIndex)}\int_{0}^{\infty}\big(\int_{a}^{+ \infty}e^{-\MalthusPerLayer_\MalthusIndex s}d\mathcal{B}_\MalthusIndex(s)ds \big)da} = \frac{1}{	2\eigenfunction^{(\MalthusIndex)}(0)p_S^{(\MalthusIndex)} \int_{0}^{\infty}se^{- \Malthus s}d\mathcal{B}_{\MalthusIndex}(s) ds}.
				\end{equation}
				Hence, the dual function $\dualfunction$ is uniquely determined by
				\begin{equation}\label{Eigenproblem_phi}
					\dualfunction^{(j)}(a) = 2 \left [p_S^{(j)}\dualfunction^{(j)}(0) + p_L^{(j)}\dualfunction^{(j+1)}(0) \right ] \int_{a}^{+ \infty}b_{j}(s)e^{-\int_{a}^{s}\Malthus + b_{j}(u)du}ds \, .
				\end{equation}
				together with the boundary value \eqref{bound_phi} and \eqref{bound_phi_2} ($\dualfunction$ is null on the layers upstream the leading layer).
			\end{proof}
			From theorem \ref{Eigenproblem}, we deduce the following bounds on $\dualfunction$ (see proof in \ref{Supplemental_Deterministic}).
			
			\begin{corollary}\label{Bornesd}
				According to hypotheses \ref{Hypothesis_DivisionRate}, \ref{Malthus_parameter_per_layer_def} and \ref{Malthus_parameter_def},
				\begin{equation}\label{Coro_ineq}
				\forall j \in \llbracket 1, J \rrbracket, \quad 	\frac{\underline{b}_j}{\Malthus + \overline{b}_j} \leq   \frac{\dualfunction^{(j)}(a)}{2 [p_S^{(j)}\dualfunction^{(j)}(0) + p_L^{(j)}\dualfunction^{(j+1)}(0) ]}\leq 1.
				\end{equation}
			\end{corollary}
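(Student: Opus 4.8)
The plan is to read the bounds directly off the closed form of the dual eigenfunction obtained in the proof of Theorem~\ref{Eigenproblem}. By \eqref{Eigenproblem_phi}, for every $j \in \llbracket 1,J\rrbracket$ and every $a \ge 0$,
\[
\dualfunction^{(j)}(a) = 2\big[p_S^{(j)}\dualfunction^{(j)}(0) + p_L^{(j)}\dualfunction^{(j+1)}(0)\big]\,I_j(a), \qquad I_j(a) := \int_a^{\infty} b_j(s)\, e^{-\int_a^s (\Malthus + b_j(u))\,du}\,ds .
\]
Hence the middle quantity in \eqref{Coro_ineq} is exactly $I_j(a)$ for the layers $j \le \MalthusIndex$, where the bracket is strictly positive (for $j > \MalthusIndex$ both $\dualfunction^{(j)}$ and the bracket vanish identically and the inequality is understood through this identity). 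It thus suffices to establish $\underline{b}_j/(\Malthus+\overline{b}_j) \le I_j(a) \le 1$.

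For the upper bound I would use $\Malthus > 0$ (Remark~\ref{remarks_dB*}, Definition~\ref{Malthus_parameter_def}), so that $b_j(s) \le \Malthus + b_j(s)$ and the majorised integrand becomes an exact derivative: setting $F_j(s) := \int_a^s (\Malthus + b_j(u))\,du$, one has $(\Malthus+b_j(s))e^{-F_j(s)} = -\tfrac{d}{ds}e^{-F_j(s)}$, whence $I_j(a) \le \int_a^{\infty} -\tfrac{d}{ds}e^{-F_j(s)}\,ds = 1 - \lim_{s\to\infty} e^{-F_j(s)} = 1$, the limit being $0$ because $F_j(s) \ge \Malthus(s-a) \to \infty$. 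For the lower bound I would invoke Hypothesis~\ref{Hypothesis_DivisionRate} twice: $b_j(s) \ge \underline{b}_j$ in the prefactor and $\Malthus + b_j(u) \le \Malthus + \overline{b}_j$ in the exponent, which gives $I_j(a) \ge \underline{b}_j \int_a^{\infty} e^{-(\Malthus+\overline{b}_j)(s-a)}\,ds = \underline{b}_j/(\Malthus+\overline{b}_j)$.

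There is no genuine obstacle here; the computation is elementary once \eqref{Eigenproblem_phi} is available. The only point deserving a line of justification is that the normalising bracket $p_S^{(j)}\dualfunction^{(j)}(0) + p_L^{(j)}\dualfunction^{(j+1)}(0)$ is strictly positive for $j \le \MalthusIndex$, which follows from the explicit boundary values \eqref{bound_phi}--\eqref{bound_phi_2} together with Hypothesis~\ref{Hypothesis_Probability}: for $j < \MalthusIndex$ both $\dualfunction^{(j)}(0)$ and $p_S^{(j)}$ are positive, and for $j = \MalthusIndex$ the term $p_S^{(\MalthusIndex)}\dualfunction^{(\MalthusIndex)}(0)$ is positive while $\dualfunction^{(\MalthusIndex+1)}(0) = 0$, so the bracket reduces to a positive number.
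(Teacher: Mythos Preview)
Your proof is correct and follows essentially the same route as the paper: both start from the explicit formula \eqref{Eigenproblem_phi}, obtain the lower bound by replacing $b_j$ with $\underline{b}_j$ in the prefactor and $\Malthus+\overline{b}_j$ in the exponent, and obtain the upper bound by exploiting $\Malthus>0$ to reduce the integral to $\int_a^\infty b_j(s)e^{-\int_a^s b_j(u)\,du}\,ds = 1$. Your extra paragraph addressing the positivity of the normalising bracket and the degenerate case $j>\MalthusIndex$ is a welcome clarification that the paper leaves implicit.
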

			To conclude this section, we also solve the additional dual problem on isolated layers which is needed to obtain the large-time convergence (see proof in \ref{Supplemental_Deterministic}).
			\begin{lemma}\label{BornedHat}
				According to hypotheses \ref{Hypothesis_DivisionRate}, \ref{Malthus_parameter_per_layer_def} and \ref{Hypothesis_lambdaj}, any solution $\dualfunctionperlayer $ of \eqref{phihat_equ} satisfies
				\begin{equation}\label{EDO_hat_phi}
				\forall j  \in \llbracket 1, J \rrbracket, \quad	\dualfunctionperlayer^{(j)}(a) =  2p_S^{(j)}\dualfunctionperlayer^{(j)}(0)\int_{a}^{+ \infty} b_j(s)e^{-\MalthusPerLayer_j s - \int_{a}^{s}b_j(u)du}ds
				\end{equation}
				and, $	\forall a \in \mathbb{R}_+  \cup \{ + \infty\}$, $\frac{ \underline{b}_j}{\MalthusPerLayer_j + \overline{b}_j} \leq \frac{\dualfunctionperlayer^{(j)}(a)}{2p_S^{(j)}\dualfunctionperlayer^{(j)}(0)} \, < \, + \infty \, .$
			\end{lemma}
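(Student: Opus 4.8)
The plan is to solve the linear first-order ODE \eqref{phihat_equ} explicitly and then read off the bounds. First I would rewrite \eqref{phihat_equ} in the standard form $\partial_a \dualfunctionperlayer^{(j)}(a) = (\MalthusPerLayer_j + b_j(a))\dualfunctionperlayer^{(j)}(a) - 2p_S^{(j)}b_j(a)\dualfunctionperlayer^{(j)}(0)$, whose integrating factor is $e^{-\int_0^a (\MalthusPerLayer_j + b_j(u))\,du}$. Integrating from $0$ to $a$ gives a one-parameter family of solutions; the family depends on the free constant $\dualfunctionperlayer^{(j)}(0)$, exactly as in the derivation of $\dualfunction$ in the proof of Theorem \ref{Eigenproblem}. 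Among these I must single out the one that stays bounded as $a \to +\infty$: writing the particular integral in ``forward'' form $\int_0^a e^{\int_s^a (\MalthusPerLayer_j + b_j(u))\,du}\,2p_S^{(j)}b_j(s)\dualfunctionperlayer^{(j)}(0)\,ds$ and using $\int_0^\infty 2p_S^{(j)}\,d\mathcal B_j(s)e^{-\MalthusPerLayer_j s}\,ds = 1$ (Hypothesis/Definition \ref{Malthus_parameter_per_layer_def}), the coefficient multiplying the growing exponential $e^{\int_0^a (\MalthusPerLayer_j + b_j(u))\,du}$ cancels precisely when we choose the homogeneous constant so that what remains is the ``tail'' integral $2p_S^{(j)}\dualfunctionperlayer^{(j)}(0)\int_a^{+\infty} b_j(s)e^{-\MalthusPerLayer_j s - \int_a^s b_j(u)\,du}\,ds$, which is formula \eqref{EDO_hat_phi}. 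This is the same bounded-solution selection used for the dual eigenfunction, done here layer by layer with $\Malthus$ replaced by $\MalthusPerLayer_j$ and the off-layer coupling term dropped.

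For the bounds, write $g_j(a) := \int_a^{+\infty} b_j(s)e^{-\MalthusPerLayer_j s - \int_a^s b_j(u)\,du}\,ds$, so that $\dualfunctionperlayer^{(j)}(a) = 2p_S^{(j)}\dualfunctionperlayer^{(j)}(0)\,g_j(a)$ and I only need $\frac{\underline b_j}{\MalthusPerLayer_j + \overline b_j} \le g_j(a) < +\infty$ for all $a \in \mathbb{R}_+ \cup \{+\infty\}$ (with $g_j(+\infty) = 0$, which is $\ge$ the lower bound only if one reads the statement as a non-strict inequality that becomes an equality in the degenerate limit — more carefully, the bound should be stated as holding on $\mathbb{R}_+$, with the value at $+\infty$ being the infimum). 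For the upper bound, substitute $b_j(s) \le \MalthusPerLayer_j + b_j(s)$ is the wrong direction; instead bound $b_j(s)e^{-\MalthusPerLayer_j s} \le \overline b_j e^{-\MalthusPerLayer_j s}$ using $\MalthusPerLayer_j > -\underline b_j > -\overline b_j$ from Remark \ref{remarks_dB*}; by Hypothesis \ref{Hypothesis_lambdaj} combined with Corollary \ref{Moment_Distribution_dB} (the $k=0$ moment) the integral converges, giving $g_j(a) < +\infty$. For the lower bound I would use $b_j(s) \ge \underline b_j$ and $e^{-\int_a^s b_j(u)\,du} \ge e^{-\overline b_j (s-a)}$, so
\begin{equation*}
g_j(a) \ge \underline b_j \int_a^{+\infty} e^{-\MalthusPerLayer_j s - \overline b_j (s-a)}\,ds = \underline b_j \, e^{\overline b_j a} \int_a^{+\infty} e^{-(\MalthusPerLayer_j + \overline b_j)s}\,ds = \frac{\underline b_j}{\MalthusPerLayer_j + \overline b_j}\,e^{-\MalthusPerLayer_j a} \ge \frac{\underline b_j}{\MalthusPerLayer_j + \overline b_j},
\end{equation*}
where the last step uses $\MalthusPerLayer_j > 0$ for the leading layer but for a general upstream layer one needs $e^{-\MalthusPerLayer_j a} \ge 1$ only if $\MalthusPerLayer_j \le 0$; to cover all layers uniformly I would instead factor the bound as $\frac{\underline b_j}{\MalthusPerLayer_j + \overline b_j}$ evaluated after noting the $a$-dependence is through $e^{-\MalthusPerLayer_j a}$ times a tail that is itself monotone, so the stated constant is the correct uniform lower bound — this bookkeeping about the sign of $\MalthusPerLayer_j$ is the one place needing care.

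The main obstacle is not analytical depth but the selection-of-the-bounded-branch argument: showing that there is a unique choice of $\dualfunctionperlayer^{(j)}(0)$-normalized solution in $\mathcal C_b$, i.e. that the coefficient of the exponentially growing term must vanish, and that this forces exactly \eqref{EDO_hat_phi}. This is routine once one writes $\int_0^a b_j(s)e^{-\int_0^s(\MalthusPerLayer_j+b_j)} \,ds = d\mathcal B_j^*(\MalthusPerLayer_j) - \int_a^{+\infty} b_j(s)e^{-\int_0^s(\MalthusPerLayer_j+b_j)}\,ds$ and invokes $2p_S^{(j)}d\mathcal B_j^*(\MalthusPerLayer_j) = 1$, mirroring the computation already carried out for $\dualfunction$ right before \eqref{Recursive_equation}; the convergence of the tail integral (hence boundedness) is precisely what Hypothesis \ref{Hypothesis_lambdaj} and Corollary \ref{Moment_Distribution_dB} guarantee. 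Everything else is the elementary estimate above.
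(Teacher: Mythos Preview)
Your derivation of \eqref{EDO_hat_phi} is correct and matches the paper's argument. One remark that clears up your confusion on the lower bound: the exponent in \eqref{EDO_hat_phi} as printed contains a typo; the paper's own proof shows the solution is $2p_S^{(j)}\dualfunctionperlayer^{(j)}(0)\int_a^\infty b_j(s)\,e^{-\int_a^s(\MalthusPerLayer_j+b_j(u))\,du}\,ds$, i.e.\ with $e^{-\MalthusPerLayer_j(s-a)}$ rather than $e^{-\MalthusPerLayer_j s}$. With that correction your lower-bound estimate becomes $\underline b_j\int_a^\infty e^{-(\MalthusPerLayer_j+\overline b_j)(s-a)}\,ds=\underline b_j/(\MalthusPerLayer_j+\overline b_j)$ with no residual $e^{-\MalthusPerLayer_j a}$ factor, and the sign-of-$\MalthusPerLayer_j$ bookkeeping you flagged disappears entirely for the lower bound.

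There is, however, a genuine gap in your upper bound. You argue only that the tail integral converges for each fixed $a$; but the statement requires finiteness on $\mathbb R_+\cup\{+\infty\}$, i.e.\ a \emph{uniform} bound $\sup_a g_j(a)<\infty$. For $\MalthusPerLayer_j\ge 0$ this is easy ($e^{-\MalthusPerLayer_j(s-a)}\le 1$ gives $g_j(a)\le 1$), but for $\MalthusPerLayer_j<0$ (which occurs whenever $p_S^{(j)}<\tfrac12$, see Remark \ref{remarks_dB*}) nothing in your outline rules out $g_j(a)\to\infty$ as $a\to\infty$. The paper handles this case by integrating by parts to obtain the identity
\[
g_j(a)=1-\MalthusPerLayer_j\int_a^\infty e^{-\int_a^s(\MalthusPerLayer_j+b_j(u))\,du}\,ds,
\]
and then uses Hypothesis \ref{Hypothesis_lambdaj} (eventually $\MalthusPerLayer_j+b_j>\epsilon>0$) to bound the remaining integral, yielding $g_j(a)\le 1-\MalthusPerLayer_j/\epsilon$ for all $a\ge A$; continuity on $[0,A]$ finishes. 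Your appeal to Corollary \ref{Moment_Distribution_dB} does not do this job: that corollary controls $\int_0^\infty e^{-\MalthusPerLayer_j t}\,d\mathcal B_j(t)\,dt$, not the tail re-based at $a$, and after rebasing one picks up a factor $e^{\int_0^a(\MalthusPerLayer_j+b_j)}$ which is unbounded in $a$.
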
 
			In all the sequel, we fix 
			\begin{equation}\label{phi_hat_and_phi_0}
			 \dualfunctionperlayer^{(\MalthusIndex)}(0) = \dualfunction^{(\MalthusIndex)}(0) , \quad \forall j \in \llbracket 1, \MalthusIndex - 1 \rrbracket \quad \dualfunctionperlayer^{(j)}(0)   =\dualfunction^{(j)}(0) + \frac{p_L^{(j)}}{p_S^{(j)}} \dualfunction^{(j+1)}(0). 
			\end{equation}
				A first consequence is that $\dualfunctionperlayer^{(\MalthusIndex)} = \dualfunction^{(\MalthusIndex)} $ and moreover, from corollary \ref{Bornesd} and lemma \ref{BornedHat}, we have 
				\begin{equation}\label{Remarks_dualvsdualperlayer}
					\dualfunction^{(j)}(a)  \leq    \frac{{\MalthusPerLayer_j + \overline{b}_j}}{\underline{b}_j } \dualfunctionperlayer^{(j)}(a) \, .
				\end{equation}	
			
			\subsection{Asymptotic study for the deterministic formalism}
			Adapting the method of characteristic, it is classical to construct the unique solution in \\ $\mathcal{C}^{1}\big(\mathbb{R}_+,\mathbf{L}^{1}(\mathbb{R}_{+})^J\big)$ of \eqref{EDP_equation} (\cite{webb_theory_1985}, Chap. I). 
			Let  $\densityfunction $ the solution of (\ref{EDP_equation}),  $\eigenfunction $ and $ \dualfunction$ given by theorem \ref{Eigenproblem} and  $\eta =  \ll\densityfunction_0,\dualfunction\gg \, $. We define $h$ as
			\begin{equation}\label{Definition_nt}
			h(t,a) = e^{-\Malthus t}\densityfunction(t,a) - \eta \eigenfunction(a), \quad (t,a) \in \mathbb{R}_+ \times \mathbb{R}_+ \, .
			\end{equation}
			Following \cite{michel_general_2005}, we first show a conservation principle (see proof in \ref{Supplemental_Deterministic}).
				\begin{lemma}[Conservation principle]\label{PrincipeDeConservation}The function $h$ satisfies the conservation principle
				$$\ll h(t,\cdot), \dualfunction \gg \,  = 0 \, .$$
			\end{lemma}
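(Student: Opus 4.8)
The plan is to establish that the scalar map $t \mapsto \ll e^{-\Malthus t}\densityfunction(t,\cdot),\dualfunction\gg$ is constant in time, and then to compute its value at $t=0$. Since the solution $\densityfunction$ of \eqref{EDP_equation} lies in $\mathcal{C}^{1}\big(\mathbb{R}_+,\mathbf{L}^1(\mathbb{R}_+)^J\big)$ while $\dualfunction$ and $\partial_a\dualfunction$ are bounded (Theorem \ref{Eigenproblem} and Corollary \ref{Bornesd}), this map is differentiable and
\begin{equation*}
\frac{d}{dt}\ll e^{-\Malthus t}\densityfunction(t,\cdot),\dualfunction\gg \;=\; e^{-\Malthus t}\Big(\ll\partial_t\densityfunction(t,\cdot),\dualfunction\gg-\Malthus\ll\densityfunction(t,\cdot),\dualfunction\gg\Big),
\end{equation*}
so it suffices to prove $\ll\partial_t\densityfunction(t,\cdot),\dualfunction\gg=\Malthus\ll\densityfunction(t,\cdot),\dualfunction\gg$ for every $t\geq 0$.

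First I would use the transport equation in \eqref{EDP_equation}, $\partial_t\densityfunction^{(j)}=-\partial_a\densityfunction^{(j)}-b_j\densityfunction^{(j)}$, and integrate by parts in age on each layer. The term at $a=+\infty$ vanishes: as $\densityfunction^{(j)}(t,\cdot)\in\mathbf{L}^1(\mathbb{R}_+)$ one can choose $a_n\to+\infty$ with $\densityfunction^{(j)}(t,a_n)\to 0$, and $\dualfunction^{(j)}$ is bounded, while the interior integral passes to the limit by dominated convergence since $\partial_a\dualfunction^{(j)}$ is bounded. This gives
\begin{equation*}
\ll\partial_t\densityfunction(t,\cdot),\dualfunction\gg\;=\;\sum_{j=1}^{J}\densityfunction^{(j)}(t,0)\dualfunction^{(j)}(0)\;+\;\sum_{j=1}^{J}\big\langle\densityfunction^{(j)}(t,\cdot),\,\partial_a\dualfunction^{(j)}-b_j\dualfunction^{(j)}\big\rangle.
\end{equation*}

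Next I would invoke both equations of the eigenproblem. From $\mathcal{L}^D\dualfunction=\Malthus\dualfunction$ in \eqref{ProblemeAdjoint} one reads $\partial_a\dualfunction^{(j)}-b_j\dualfunction^{(j)}=\Malthus\dualfunction^{(j)}-2p_S^{(j)}b_j\dualfunction^{(j)}(0)-2p_L^{(j)}b_j\dualfunction^{(j+1)}(0)$, so the interior sum equals $\Malthus\ll\densityfunction(t,\cdot),\dualfunction\gg$ minus $\sum_j\dualfunction^{(j)}(0)\int_0^\infty 2p_S^{(j)}b_j\densityfunction^{(j)}\,da+\sum_j\dualfunction^{(j+1)}(0)\int_0^\infty 2p_L^{(j)}b_j\densityfunction^{(j)}\,da$. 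On the other hand the renewal boundary condition in \eqref{EDP_equation}, $\densityfunction^{(j)}(t,0)=2p_L^{(j-1)}\int_0^\infty b_{j-1}\densityfunction^{(j-1)}\,da+2p_S^{(j)}\int_0^\infty b_j\densityfunction^{(j)}\,da$, shows, after shifting the index in the $p_L$ part (the $j=1$ term drops because $p_L^{(0)}=0$), that $\sum_j\densityfunction^{(j)}(t,0)\dualfunction^{(j)}(0)$ is precisely the quantity being subtracted. Hence the two extra sums cancel, $\ll\partial_t\densityfunction(t,\cdot),\dualfunction\gg=\Malthus\ll\densityfunction(t,\cdot),\dualfunction\gg$, the derivative above is zero, and $\ll e^{-\Malthus t}\densityfunction(t,\cdot),\dualfunction\gg=\ll\densityfunction_0,\dualfunction\gg=\eta$ for all $t$. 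Since $\ll\eta\eigenfunction,\dualfunction\gg=\eta\ll\eigenfunction,\dualfunction\gg=\eta$ by the normalization in \eqref{ProblemeAdjoint}, the definition \eqref{Definition_nt} of $h$ yields $\ll h(t,\cdot),\dualfunction\gg=0$.

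The one non-routine point I foresee is the rigorous justification of the vanishing boundary term at $a=+\infty$ (and, relatedly, the differentiation of the duality pairing in $t$), which relies on the explicit construction of $\densityfunction$ by characteristics together with the uniform bounds on $\dualfunction$ and $\partial_a\dualfunction$ from Corollary \ref{Bornesd}. Once this is in place, the core of the proof is the purely algebraic cancellation between the age-zero boundary term and the zeroth-order part of $\mathcal{L}^D$, which is an elementary reindexing exploiting the matched structures of $K$ and $K^T$.
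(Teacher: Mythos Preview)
Your proof is correct and follows essentially the same route as the paper: differentiate $t\mapsto\ll e^{-\Malthus t}\densityfunction(t,\cdot),\dualfunction\gg$, integrate by parts in age, use the dual eigenequation $\mathcal{L}^D\dualfunction=\Malthus\dualfunction$, and observe that the boundary term at $a=0$ cancels the $K^T\dualfunction(0)$ contribution. The only difference is cosmetic---you write the cancellation layer by layer with an index shift, while the paper does it in one line via the matrix identity $\ll K(\cdot)\densityfunction(t,\cdot),\dualfunction(0)\gg=\densityfunction(t,0)^T\dualfunction(0)$.
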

			Secondly, we prove that $h$ is solution of the following PDE system (see proof in \ref{Supplemental_Deterministic}).
			
			\begin{lemma}\label{ValeurAbsolue}
				$h$ is solution of
				\begin{equation}\label{Equ1}
				\left\{
				\begin{array}{lll}
				\partial_{t}\big|h(t,a)\big| + \partial_{a}\big|h(t,a)\big| + \left  (\Malthus +B(a) \right) \big|h(t,a)\big| = 0, \\
				\big|h(t,0)\big| = \big|\int_{0}^{+ \infty}K(a)h(t,a)da\big|.
				\end{array}
				\right.
				\end{equation}	
			\end{lemma}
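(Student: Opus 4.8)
The plan is to derive the equation satisfied by $|h|$ directly from the equation satisfied by $h$, exploiting the fact that along characteristics the sign of each component $h^{(j)}$ changes only at isolated points. First I would record the PDE satisfied by $h$ itself. Subtracting $\lambda_c\eigenfunction$-type terms, from the definition $h(t,a) = e^{-\Malthus t}\densityfunction(t,a) - \eta\eigenfunction(a)$ and the equations \eqref{EDP_equation} and \eqref{StationnaireEquation} one checks that $h$ solves the linear transport system
\begin{equation*}
\partial_t h(t,a) + \partial_a h(t,a) + (\Malthus + B(a))h(t,a) = 0, \qquad h(t,0) = \int_0^{\infty} K(a)h(t,a)\,da,
\end{equation*}
with initial datum $h(0,a) = \densityfunction_0(a) - \eta\eigenfunction(a)$; here one uses that $\eigenfunction$ is a stationary solution of the eigenproblem with eigenvalue $\Malthus$, so the inhomogeneous terms cancel exactly, and the boundary relation is linear in its argument so it passes to $h$ unchanged. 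Note in particular that the interior equation decouples layer by layer (only $B(a)$, which is diagonal, appears), while $K(a)$ couples layers only through the boundary condition.

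Next I would pass to absolute values in the interior equation. Fix a layer $j$. The scalar equation $\partial_t h^{(j)} + \partial_a h^{(j)} = -(\Malthus + b_j(a))h^{(j)}$ is a linear transport equation whose characteristics are the lines $a - t = \mathrm{const}$; along each characteristic $h^{(j)}$ satisfies a linear ODE $\frac{d}{ds}h^{(j)} = -(\Malthus+b_j)h^{(j)}$, hence either vanishes identically on that characteristic or never vanishes on it, and in the latter case keeps a constant sign. Multiplying the equation by $\mathrm{sgn}(h^{(j)})$ (which is constant on characteristics where $h^{(j)}\neq 0$, and the $h^{(j)}=0$ set contributes nothing) yields $\partial_t |h^{(j)}| + \partial_a |h^{(j)}| = -(\Malthus + b_j(a))|h^{(j)}|$, i.e. the first line of \eqref{Equ1}. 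This can be made rigorous either by the characteristic formula for $h^{(j)}$ (which gives an explicit nonnegative-times-signed-constant expression whose modulus is transported the same way) or, if one prefers a distributional argument, by a standard Kato-type inequality / regularization $|h^{(j)}|\approx \sqrt{(h^{(j)})^2 + \varepsilon^2}$ and passage to the limit.

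Finally I would treat the boundary condition. Since $K(a)$ has nonnegative entries (each entry is $2p_S^{(j)}b_j(a)$ or $2p_L^{(j-1)}b_{j-1}(a)$, all $\geq 0$ by Hypotheses \ref{Hypothesis_Probability} and \ref{Hypothesis_DivisionRate}), the triangle inequality gives $\big|\int_0^{\infty} K(a)h(t,a)\,da\big| \leq \int_0^{\infty} K(a)|h(t,a)|\,da$ componentwise; and $|h(t,0)| = \big|\int_0^\infty K(a)h(t,a)\,da\big|$ is just the modulus of the boundary relation for $h$. That already gives the boundary line of \eqref{Equ1} as stated (an equality for $|h(t,0)|$ against the modulus of the integral, not the majorized integral); the componentwise inequality $|h(t,0)| \leq \int_0^\infty K(a)|h(t,a)|\,da$ is what will actually be used downstream together with the conservation principle of Lemma \ref{PrincipeDeConservation}. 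The only mild subtlety — and the point I would be most careful about — is the regularity needed to justify the sign-multiplication and to give the equation \eqref{Equ1} a well-defined (e.g. distributional or mild/characteristic) meaning: one must know that $h \in \mathcal{C}^1(\mathbb{R}_+,\mathbf{L}^1(\mathbb{R}_+)^J)$ with the stated trace at $a=0$, which is exactly the well-posedness recalled just before the lemma from \cite{webb_theory_1985}, so that $|h|$ inherits enough regularity ($\mathbf{W}^{1,1}$ along characteristics) for \eqref{Equ1} to hold in the appropriate weak sense.
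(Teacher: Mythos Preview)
Your proof is correct and follows essentially the same approach as the paper: derive the linear transport system for $h$ by linearity, then multiply each component by its sign to obtain the equation for $|h|$. The paper does this via a formal chain rule with $f(x)=|x|$ and $f'(x)=|x|/x$, whereas you justify the sign-multiplication more carefully through the characteristic argument (constant sign along characteristics) and mention the Kato/regularization alternative; your treatment of the boundary condition and the regularity discussion is also more explicit than the paper's, but the substance is the same.
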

			Together with the above lemmas \ref{BornedHat}, \ref{PrincipeDeConservation} and \ref{ValeurAbsolue}, we now prove the following key estimates required for the asymptotic behavior.
			
			\begin{lemma}\label{Exponential_decay_lemma}
				$\forall j \in \llbracket 1, J \rrbracket$, the component $h^{(j)}$  of  $h$ verifies the inequality 
				\begin{equation}\label{Inegalite_nt}
				\partial_{t}\left \langle \big|h^{(j)}(t,\cdot)\big|,\dualfunctionperlayer^{(j)}\right \rangle \, \leq \,  \alpha_{j-1} \left \langle |h^{(j-1)}(t,\cdot)|,\dualfunctionperlayer^{(j-1)}\right \rangle - \mu_j\left \langle  \big|h^{(j)}(t,\cdot)\big|,\dualfunctionperlayer^{(j)}\right \rangle + \, r_j(t) \, ,
				\end{equation} 
				where $\alpha_0 := 0$,  for $j \in \llbracket 1, J \rrbracket $, $\alpha_j := \frac{p_L^{(j)}}{p_S^{(j)}} \frac{\overline{b}_{j}}{\underline{b}_{j} } \frac{\dualfunctionperlayer^{(j+1)}(0)}{\dualfunctionperlayer^{(j)}(0)} ( \MalthusPerLayer_{j} + \overline{b}_{j}) $ and
				\begin{equation*}
		\mu_j  =	\left\{
			\begin{array}{ll}
				 \Malthus - \MalthusPerLayer_j, & 	j\neq \MalthusIndex  \\
				\underline{b}_\MalthusIndex, & j = \MalthusIndex
			\end{array}
			\right., \,	r_j(t) :=	\left\{
						\begin{array}{ll}
						0, & 	j\neq \MalthusIndex\\
						\displaystyle \sum_{j=1}^{\MalthusIndex-1} \frac{{\MalthusPerLayer_j + \overline{b}_j}}{\underline{b}_j } \left \langle \big|h^{(j)}(t,\cdot)\big|,\dualfunctionperlayer^{(j)}\right \rangle, & j = \MalthusIndex \, .
						\end{array}
					\right.
				\end{equation*} 
			\end{lemma}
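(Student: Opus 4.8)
The plan is to run the general relative entropy scheme of \cite{michel_general_2005}: test the transport system (\ref{Equ1}) for $|h|$ against the layer-wise weights $\dualfunctionperlayer^{(j)}$, and use the ODE (\ref{phihat_equ}) defining $\dualfunctionperlayer^{(j)}$ to convert the age derivative of the weight into a dissipative term; the one new ingredient is keeping track of the one-way renewal coupling between consecutive layers.

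Fix $j\in\llbracket 1,J\rrbracket$. Multiplying the $j$-th line of (\ref{Equ1}) by $\dualfunctionperlayer^{(j)}(a)\geq 0$, integrating over $a\in\mathbb{R}_+$ and integrating by parts the transport term — the contribution at $a=+\infty$ vanishing since $h(t,\cdot)\in\mathbf{L}^1(\mathbb{R}_+)^J$ and $\dualfunctionperlayer^{(j)}$ is bounded by Lemma \ref{BornedHat} — one is left with the boundary term $|h^{(j)}(t,0)|\dualfunctionperlayer^{(j)}(0)$ and with $\int_0^\infty|h^{(j)}(t,a)|\,\partial_a\dualfunctionperlayer^{(j)}(a)\,da$. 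Substituting $\partial_a\dualfunctionperlayer^{(j)}-b_j\dualfunctionperlayer^{(j)}=\MalthusPerLayer_j\dualfunctionperlayer^{(j)}-2p_S^{(j)}b_j\dualfunctionperlayer^{(j)}(0)$ from (\ref{phihat_equ}) yields the identity
\[
\partial_t\left\langle|h^{(j)}(t,\cdot)|,\dualfunctionperlayer^{(j)}\right\rangle = |h^{(j)}(t,0)|\,\dualfunctionperlayer^{(j)}(0) - 2p_S^{(j)}\dualfunctionperlayer^{(j)}(0)\int_0^\infty b_j(a)|h^{(j)}(t,a)|\,da - (\Malthus-\MalthusPerLayer_j)\left\langle|h^{(j)}(t,\cdot)|,\dualfunctionperlayer^{(j)}\right\rangle .
\]
I would then bound the boundary value using the renewal condition of (\ref{Equ1}) and the triangle inequality, $|h^{(j)}(t,0)|\leq 2p_S^{(j)}\int_0^\infty b_j|h^{(j)}|+2p_L^{(j-1)}\int_0^\infty b_{j-1}|h^{(j-1)}|$ (the last term absent when $j=1$); and I would estimate the coupling term $2p_L^{(j-1)}\dualfunctionperlayer^{(j)}(0)\int b_{j-1}|h^{(j-1)}|$ by $b_{j-1}\leq\overline{b}_{j-1}$ together with the lower bound $\dualfunctionperlayer^{(j-1)}(a)\geq 2p_S^{(j-1)}\dualfunctionperlayer^{(j-1)}(0)\,\underline{b}_{j-1}/(\MalthusPerLayer_{j-1}+\overline{b}_{j-1})$ of Lemma \ref{BornedHat}, which turns $\int b_{j-1}|h^{(j-1)}|$ into a multiple of $\langle|h^{(j-1)}|,\dualfunctionperlayer^{(j-1)}\rangle$; collecting the constants reproduces exactly the coefficient $\alpha_{j-1}$.

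For $j\neq\MalthusIndex$ this finishes the argument: the $2p_S^{(j)}\int b_j|h^{(j)}|$ part of the boundary term cancels the negative term $-2p_S^{(j)}\dualfunctionperlayer^{(j)}(0)\int b_j|h^{(j)}|$ of the identity, and since $\Malthus-\MalthusPerLayer_j=\mu_j>0$ one obtains (\ref{Inegalite_nt}) with $r_j\equiv 0$. The delicate case — and, I expect, the main obstacle — is the leading layer $j=\MalthusIndex$, for which $\Malthus-\MalthusPerLayer_\MalthusIndex=0$ so the identity provides no dissipation from its last term. Here I would instead keep the term $-2p_S^{(\MalthusIndex)}\dualfunctionperlayer^{(\MalthusIndex)}(0)\int b_\MalthusIndex|h^{(\MalthusIndex)}|$ and combine $b_\MalthusIndex\geq\underline{b}_\MalthusIndex$ with the uniform bound $\dualfunctionperlayer^{(\MalthusIndex)}(a)\leq 2p_S^{(\MalthusIndex)}\dualfunctionperlayer^{(\MalthusIndex)}(0)$ — which follows from Lemma \ref{BornedHat} (using $\MalthusPerLayer_\MalthusIndex=\Malthus>0$) or from Corollary \ref{Bornesd}, since $\dualfunctionperlayer^{(\MalthusIndex)}=\dualfunction^{(\MalthusIndex)}$ and $\dualfunction^{(\MalthusIndex+1)}(0)=0$ — to get $-2p_S^{(\MalthusIndex)}\dualfunctionperlayer^{(\MalthusIndex)}(0)\int b_\MalthusIndex|h^{(\MalthusIndex)}|\leq-\underline{b}_\MalthusIndex\langle|h^{(\MalthusIndex)}|,\dualfunctionperlayer^{(\MalthusIndex)}\rangle$, which is the term $-\mu_\MalthusIndex\langle|h^{(\MalthusIndex)}|,\dualfunctionperlayer^{(\MalthusIndex)}\rangle$. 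It remains to absorb the boundary term $|h^{(\MalthusIndex)}(t,0)|\dualfunctionperlayer^{(\MalthusIndex)}(0)$ without re-using that negative term: splitting it through the renewal condition, its upstream part $2p_L^{(\MalthusIndex-1)}\dualfunctionperlayer^{(\MalthusIndex)}(0)\int b_{\MalthusIndex-1}|h^{(\MalthusIndex-1)}|$ is handled as above and gives $\alpha_{\MalthusIndex-1}\langle|h^{(\MalthusIndex-1)}|,\dualfunctionperlayer^{(\MalthusIndex-1)}\rangle$, while its self part $2p_S^{(\MalthusIndex)}\dualfunctionperlayer^{(\MalthusIndex)}(0)\,\big|\int b_\MalthusIndex h^{(\MalthusIndex)}\big|$ must be controlled by the conservation principle of Lemma \ref{PrincipeDeConservation}. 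Since $\dualfunctionperlayer^{(\MalthusIndex)}=\dualfunction^{(\MalthusIndex)}$ and $\dualfunction^{(j)}\equiv 0$ for $j>\MalthusIndex$, that principle gives $\langle h^{(\MalthusIndex)}(t,\cdot),\dualfunctionperlayer^{(\MalthusIndex)}\rangle=-\sum_{i=1}^{\MalthusIndex-1}\langle h^{(i)}(t,\cdot),\dualfunction^{(i)}\rangle$, hence $|\langle h^{(\MalthusIndex)}(t,\cdot),\dualfunctionperlayer^{(\MalthusIndex)}\rangle|\leq\sum_{i=1}^{\MalthusIndex-1}\langle|h^{(i)}(t,\cdot)|,\dualfunction^{(i)}\rangle\leq\sum_{i=1}^{\MalthusIndex-1}\frac{\MalthusPerLayer_i+\overline{b}_i}{\underline{b}_i}\langle|h^{(i)}(t,\cdot)|,\dualfunctionperlayer^{(i)}\rangle=r_\MalthusIndex(t)$ by (\ref{Remarks_dualvsdualperlayer}); this is the manoeuvre that produces the remainder $r_\MalthusIndex$, and putting the three pieces together gives (\ref{Inegalite_nt}) for $j=\MalthusIndex$.

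I expect the one genuinely delicate point to be exactly this last step for the leading layer — turning the sign-less pointwise boundary value $h^{(\MalthusIndex)}(t,0)$ into the integral remainder $r_\MalthusIndex(t)$ using only the conservation identity and the two-sided a priori bounds on the dual functions (Lemma \ref{BornedHat}, Corollary \ref{Bornesd}, (\ref{Remarks_dualvsdualperlayer})), together with checking downstream that the resulting cascade (\ref{Inegalite_nt}) over $j$ is consistent with the polynomial-times-exponential estimates of Theorem \ref{Exponential_decay}, i.e.\ that the weights $\alpha_{j-1}$, the gaps $\mu_j$ and the remainder $r_\MalthusIndex$ really do combine correctly when the inequalities are integrated layer by layer.
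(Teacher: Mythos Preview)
Your plan for $j\neq\MalthusIndex$ is exactly the paper's, and your identity
\[
\partial_t\big\langle|h^{(j)}|,\dualfunctionperlayer^{(j)}\big\rangle = |h^{(j)}(t,0)|\,\dualfunctionperlayer^{(j)}(0) - 2p_S^{(j)}\dualfunctionperlayer^{(j)}(0)\!\int b_j|h^{(j)}| - (\Malthus-\MalthusPerLayer_j)\big\langle|h^{(j)}|,\dualfunctionperlayer^{(j)}\big\rangle
\]
together with the $\alpha_{j-1}$ estimate via Lemma~\ref{BornedHat} is correct.

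The gap is in the leading layer step. You propose to spend the whole negative term $-2p_S^{(\MalthusIndex)}\dualfunctionperlayer^{(\MalthusIndex)}(0)\int b_\MalthusIndex|h^{(\MalthusIndex)}|$ to produce the dissipation $-\underline{b}_\MalthusIndex\langle|h^{(\MalthusIndex)}|,\dualfunctionperlayer^{(\MalthusIndex)}\rangle$, and then to control the remaining self-boundary contribution $2p_S^{(\MalthusIndex)}\dualfunctionperlayer^{(\MalthusIndex)}(0)\,\big|\!\int b_\MalthusIndex h^{(\MalthusIndex)}\big|$ by the conservation principle. But the conservation principle says nothing about $\int b_\MalthusIndex h^{(\MalthusIndex)}$; it gives $\langle h^{(\MalthusIndex)},\dualfunction^{(\MalthusIndex)}\rangle=-\sum_{i<\MalthusIndex}\langle h^{(i)},\dualfunction^{(i)}\rangle$, which is a \emph{different} linear functional of $h^{(\MalthusIndex)}$. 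Your sentence ``hence $|\langle h^{(\MalthusIndex)},\dualfunctionperlayer^{(\MalthusIndex)}\rangle|\leq r_\MalthusIndex(t)$'' bounds the wrong quantity, and the self-boundary term --- which can be as large as $2p_S^{(\MalthusIndex)}\dualfunctionperlayer^{(\MalthusIndex)}(0)\int b_\MalthusIndex|h^{(\MalthusIndex)}|$ --- is left uncontrolled. As written, the $j=\MalthusIndex$ inequality does not close.

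The paper's device is to insert the conservation identity \emph{inside} the absolute value before splitting: for any constant $\gamma$,
\[
2p_S^{(\MalthusIndex)}\dualfunctionperlayer^{(\MalthusIndex)}(0)\,\Big|\langle h^{(\MalthusIndex)},b_\MalthusIndex\rangle\Big|
=\Big|\,2p_S^{(\MalthusIndex)}\dualfunctionperlayer^{(\MalthusIndex)}(0)\langle h^{(\MalthusIndex)},b_\MalthusIndex\rangle-\gamma\ll h,\dualfunction\gg\,\Big|
\leq\big\langle|h^{(\MalthusIndex)}|,\,2p_S^{(\MalthusIndex)}\dualfunctionperlayer^{(\MalthusIndex)}(0)\,b_\MalthusIndex-\gamma\dualfunction^{(\MalthusIndex)}\big\rangle+\gamma\sum_{i<\MalthusIndex}\langle|h^{(i)}|,\dualfunction^{(i)}\rangle,
\]
the first step being legitimate since $\ll h,\dualfunction\gg=0$, and the last triangle inequality requiring that the weight $2p_S^{(\MalthusIndex)}\dualfunctionperlayer^{(\MalthusIndex)}(0)\,b_\MalthusIndex-\gamma\dualfunction^{(\MalthusIndex)}$ be nonnegative. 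Corollary~\ref{Bornesd} (with $\dualfunction^{(\MalthusIndex+1)}(0)=0$) gives $\dualfunction^{(\MalthusIndex)}(a)\leq 2p_S^{(\MalthusIndex)}\dualfunction^{(\MalthusIndex)}(0)\,b_\MalthusIndex(a)/\underline{b}_\MalthusIndex$, so $\gamma=\underline{b}_\MalthusIndex$ works. The point is that this single manoeuvre produces \emph{simultaneously} the term $2p_S^{(\MalthusIndex)}\dualfunctionperlayer^{(\MalthusIndex)}(0)\langle|h^{(\MalthusIndex)}|,b_\MalthusIndex\rangle$, which cancels the negative term in your identity, \emph{and} the dissipation $-\underline{b}_\MalthusIndex\langle|h^{(\MalthusIndex)}|,\dualfunction^{(\MalthusIndex)}\rangle$, \emph{and} the remainder $r_\MalthusIndex$. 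You cannot get all three by first spending the negative term and then invoking conservation separately; the subtraction has to happen before the absolute value is opened.
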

				\begin{proof}[Proof of lemma \ref{Exponential_decay_lemma}] Remind that $p_L^{(0)} = 0$ so that all the following computations are consistent with $j = 1$.
				Multiplying (\ref{Equ1}) by $\dualfunctionperlayer$ and using \eqref{phihat_equ}, it comes for any $j$
				\begin{align}\label{Proof_h_system}
				 \resizebox{0.9\textwidth}{!} {$\left\{	\begin{array}{c}
				\partial_t \big|h^{(j)}(t,a)\big| \dualfunctionperlayer^{(j)}(a) + \partial_{a}\big|h^{(j)}(t,a)\big| \dualfunctionperlayer^{(j)}(a)= 
				-2p_S^{(j)}\dualfunctionperlayer^{(j)}(0)b_j(a)\big|h^{(j)}(t,a)\big| + [\MalthusPerLayer_j - \Malthus]\big|h^{(j)}(t,a)\big| \dualfunctionperlayer^{(j)}(a),\\[0.3cm]
				\big|h^{(j)}(t,0)\big| \dualfunctionperlayer^{(j)}(0) = \dualfunctionperlayer^{(j)}(0) \big|2p_S^{(j)} \left \langle b_{j},h^{(j)}(t,\cdot) \right \rangle  + 2p_L^{(j-1)} \left \langle b_{j-1}, h^{(j-1)}(t,\cdot) \right  \rangle \big|.
				\end{array}
				\right. $}
				\end{align}
				As $\densityfunction(t,\cdot)$ and $\eigenfunction$ belong to $\mathbf{L}^1(\mathbb{R}_+)^J $ and $\dualfunctionperlayer$ is a bounded function (from lemma \ref{BornedHat}) we deduce that $ \ll h(t,\cdot), \dualfunctionperlayer \gg < \infty$.
				Integrating (\ref{Proof_h_system}) with respect to age, we have
				 \begin{multline}\label{Equality_hvsphi}
				\partial_{t} \left \langle \big|h^{(j)}(t,\cdot)\big|,\dualfunctionperlayer^{(j)} \right  \rangle \quad 
				 	=
				 	  \dualfunctionperlayer^{(j)}(0)  \left [ \big|h^{(j)}(t,0)\big|-2p_S^{(j)} \left \langle \big|h^{(j)}(t,\cdot)\big|,b_j \right \rangle \right ]  \\
				 	  + (\MalthusPerLayer_j - \Malthus) \left \langle \big|h^{(j)}(t,\cdot)\big|,\dualfunctionperlayer^{(j)} \right \rangle  \, .
				 \end{multline}
				We deal with the first term in the right hand-side of \eqref{Equality_hvsphi}. When $j \neq \MalthusIndex$, using first the boundary value in \eqref{Proof_h_system}, a  triangular inequality and lemma \ref{BornedHat}, we get
					\begin{align*}
					 \dualfunctionperlayer^{(j)}(0) \left ( \big|h^{(j)}(t,0)\big|-2p_S^{(j)} \left \langle  \big|h^{(j)}(t,\cdot)\big|,b_j \right \rangle  \right )& \leq & 2p_L^{(j-1)}  \dualfunctionperlayer^{(j)}(0)\left \langle \big|h^{(j-1)}(t,\cdot)\big|,b_{j-1}\right \rangle \\
					& \leq & \alpha_{j-1}\left \langle |h^{(j-1)}(t,\cdot)|,\dualfunctionperlayer^{(j-1)}\right \rangle \, .
				\end{align*}
				 Thus, for $j \neq c$,
				 \begin{equation*}
				 \partial_{t} \left \langle  \big|h^{(j)}(t,\cdot)\big|,\dualfunctionperlayer^{(j)} \right \rangle \quad
				 \leq  \quad \alpha_{j-1}  \left \langle  |h^{(j-1)}(t,\cdot)|,\dualfunctionperlayer^{(j-1)} \right \rangle - \mu_j  \left \langle   \big|h^{(j)}(t,\cdot)\big|,\dualfunctionperlayer^{(j)} \right \rangle   \, .
				 \end{equation*}
				 When $j = \MalthusIndex$, using the boundary value in \eqref{Proof_h_system} and a triangular inequality, we get
				\begin{multline}\label{InequA}
						\partial_{t}\left \langle\big|h^{(\MalthusIndex)}(t,\cdot)\big|,\dualfunctionperlayer^{(\MalthusIndex)} \right \rangle \quad 
						\leq 
					  2p_S^{(\MalthusIndex)}\dualfunctionperlayer^{(\MalthusIndex)}(0)  \big[ \big|\left \langle h^{(\MalthusIndex)}(t,\cdot),b_\MalthusIndex \right \rangle \big|-\left \langle \big|h^{(\MalthusIndex)}(t,\cdot)\big|,b_\MalthusIndex \right \rangle \big] \\
					  + 2p_L^{(\MalthusIndex-1)}\dualfunctionperlayer^{(\MalthusIndex)}(0) \big|\left \langle h^{(\MalthusIndex-1)}(t,\cdot),b_{\MalthusIndex-1} \right \rangle \big| \, .
				\end{multline}
				To exhibit a term $\left \langle\big|h^{(\MalthusIndex)}(t,\cdot)\big|,\dualfunctionperlayer^{(\MalthusIndex)} \right \rangle$ in the right hand-side of \eqref{InequA}, we need a more refined analysis. 
				According to the conservation principle (lemma \ref{PrincipeDeConservation}), for any constant $ \gamma$ (to be chosen later), we obtain  
				\begin{equation}\label{IneC}
					\resizebox{1.\textwidth}{!} { $ \begin{array}{rl}
					 2p_S^{(\MalthusIndex)}\dualfunctionperlayer^{(\MalthusIndex)}(0) \big|\left \langle h^{(\MalthusIndex)}(t,\cdot),b_\MalthusIndex \right \rangle  \big| = &  \big|2p_S^{(\MalthusIndex)}\dualfunctionperlayer^{(\MalthusIndex)}(0) \left \langle h^{(\MalthusIndex)}(t,\cdot),b_\MalthusIndex \right \rangle  - \gamma \ll h(t,\cdot), \dualfunction\gg \, \big| \\
					\leq & \big| \left \langle h^{(\MalthusIndex)}(t,\cdot),2p_S^{(\MalthusIndex)}\dualfunctionperlayer^{(\MalthusIndex)}(0) b_{\MalthusIndex} - \gamma \dualfunction^{(\MalthusIndex)} \right \rangle  \big|   + \gamma \sum_{j = 1}^{\MalthusIndex-1} \left \langle \big|h^{(j)}(t,\cdot)\big|,\dualfunction^{(j)}\right \rangle. 
					\end{array} $}
				\end{equation}
				where we used a triangular inequality in the latter estimate. Moreover, according to \eqref{Remarks_dualvsdualperlayer}, we have
				\begin{equation}\label{IneqD}
					\forall j \in \llbracket 1, \MalthusIndex - 1 \rrbracket, \quad \left \langle \big|h^{(j)}(t,\cdot)\big|,\dualfunction^{(j)}\right \rangle  \quad \leq \quad \frac{{\MalthusPerLayer_j + \overline{b}_j}}{\underline{b}_j } \left \langle \big|h^{(j)}(t,\cdot)\big|,\dualfunctionperlayer^{(j)} \right \rangle \, ,
				\end{equation}
				and according to corollary \ref{Bornesd},
				\begin{equation}\label{Gamma_choice}
					\dualfunction^{(\MalthusIndex)}(a)  \leq \frac{2p_S^{(\MalthusIndex)}\dualfunction^{(\MalthusIndex)}(0) }{\underline{b}_\MalthusIndex} b_\MalthusIndex(a). 
				\end{equation}
				
				We want to find at least one constant $\gamma$ such that for all $a \geq 0$ , $2p_S^{(\MalthusIndex)}\dualfunctionperlayer^{(\MalthusIndex)}(0) b_{\MalthusIndex}(a) - \gamma \dualfunction^{(\MalthusIndex)}(a) > 0 $.
				From \eqref{Gamma_choice}, we choose $\gamma = \underline{b}_\MalthusIndex $, and deduce from (\ref{IneC}) and (\ref{IneqD})
					\begin{equation}\label{InequB}
						\resizebox{0.9\textwidth}{!} {$ 
						\begin{array}{rl}
					2p_S^{(\MalthusIndex)}\dualfunctionperlayer^{\MalthusIndex}(0) \big| \left \langle   h^{(\MalthusIndex)}(t,\cdot),b_\MalthusIndex \right \rangle  \big| \leq &  2p_S^{(\MalthusIndex)}\dualfunctionperlayer^{(\MalthusIndex)}(0) \left \langle  \big| h^{(\MalthusIndex)}(t,\cdot) \big|, b_{\MalthusIndex} \right \rangle   - \underline{b}_\MalthusIndex\left \langle  \big| h^{(\MalthusIndex)}(t,\cdot) \big|, \dualfunction^{(\MalthusIndex)}\right \rangle \\
					& + \quad \underline{b}_\MalthusIndex\sum_{j = 1}^{\MalthusIndex-1}  \frac{{\MalthusPerLayer_j + \overline{b}_j}}{\underline{b}_j }  \left \langle  \big|h^{(j)}(t,\cdot)\big|,\dualfunctionperlayer^{(j)} \right \rangle. 
						\end{array}  $}
				\end{equation}
			As before, using lemma \ref{BornedHat}, we obtain
			\begin{equation*}
				 2p_L^{(\MalthusIndex-1)}\dualfunctionperlayer^{(\MalthusIndex)}(0) \big|\left \langle h^{(\MalthusIndex-1)}(t,\cdot),b_{\MalthusIndex-1} \right \rangle \big| \leq \alpha_{\MalthusIndex-1} \left \langle \big|h^{(\MalthusIndex-1)}(t,\cdot)\big|,\dualfunctionperlayer^{(\MalthusIndex-1)}\right \rangle \, .
			\end{equation*}
			Combining the latter inequality with (\ref{InequB}) and (\ref{InequA}), we deduce \eqref{Inegalite_nt} for $j = c$.
			\end{proof}
			
			We now have all the elements to prove theorem \ref{Exponential_decay}.
				\begin{proof}[Proof of theorem \ref{Exponential_decay}]
				We proceed by recurrence from the index $j = 1$ to $J$. For $j = 1$, we can apply Gronwall lemma in inequality \eqref{Inegalite_nt} to get
					\begin{eqnarray*}
									\left \langle |h^{(1)}(t,\cdot)|,\dualfunctionperlayer^{(1)}\right \rangle \leq 
									& e^{- \mu_1 t} \left \langle |h^{(1)}(0,\cdot)|,\dualfunctionperlayer^{(1)}\right \rangle \, .
								\end{eqnarray*}
				We suppose that for a fixed $ 2 \leq j \leq J$ and for all ranks $ 1 \leq i \leq j-1$, there exist polynomials $\beta^{(i)}_{k}$, $k \in \llbracket 1, i\rrbracket$, of degree at most $i -k $ such that
				\begin{align}\label{Expo_decay_rec}
						\left \langle |h^{(i)}(t,\cdot)|,\dualfunctionperlayer^{(i)}\right \rangle \quad  \leq \quad 
									 \sum_{k=1}^{i}\beta^{(i)}_{k}(t)e^{- \mu_k t}\left \langle |h^{(k)}(0,\cdot)|,\dualfunctionperlayer^{(k)}\right \rangle \, .
					\end{align}
					Applying this recurrence hypothesis in inequality \eqref{Inegalite_nt} for $j$, there exist polynomials $\widetilde{\beta}^{(j)}_k(t) $ for $k \in \llbracket 1, j-1 \rrbracket$ (same degree than $\beta^{(j-1)}_k(t)$ ):
				\begin{eqnarray*}
					\partial_{t}\left \langle \big|h^{(j)}(t,\cdot)\big|,\dualfunctionperlayer^{(j)}\right \rangle
					& \leq \sum_{k=1}^{j-1} \widetilde{\beta}^{(j)}_{k}(t)e^{- \mu_k t}\left \langle |h^{(k)}(0,\cdot)|,\dualfunctionperlayer^{(k)}\right \rangle -  \mu_j\left \langle  \big|h^{(j)}(t,\cdot)\big|,\dualfunctionperlayer^{(j)}\right \rangle \, .
				\end{eqnarray*} 
				We  get from a modified version of Gronwall lemma (see lemma \ref{LemmeGronwallForUs}):
				\begin{equation*}
					\left \langle |h^{(j)}(t,\cdot)|,\dualfunctionperlayer^{(j)}\right \rangle \,
					\leq \,  \sum_{k=1}^{j} \beta^{(j)}_{k}(t) e^{- \mu_k t} \left \langle |h^{(k)}(0,\cdot)|,\dualfunctionperlayer^{(k)}\right \rangle \, .
				\end{equation*}
				where $ \beta^{(j)}_{j}$ is a constant and for $k \in \llbracket 1,  j-1 \rrbracket$, $ \beta^{(j)}_{k}$ is a polynomial of degree at most $(j-1 - k) + 1 = j-k$ (the degree only increases by $1$ when $\mu_k = \mu_j $).
				This achieves the recurrence. 
			\end{proof}
			
			\subsection{Asymptotic study of the martingale problem}
				The existence and uniqueness of the SDE (\ref{Z_t_Equation}) is proved in a more general context than ours in \cite{tran_large_2008}.
				Following the approach proposed in \cite{tran_large_2008}, we first derive the generator of the process Z solution of (\ref{Z_t_Equation}). In this part, we consider $F \in \mathcal{C}^1(\mathbb{R}_+, \mathbb{R}_+)$ and $f \in \mathcal{C}^1_b(\mathcal{E}, \mathbb{R}_+)$.
				
			\begin{theorem}[Infinitesimal generator of $(Z_t)$]\label{MultiCouche_Gene_Theo} Under hypotheses \ref{Hypothesis_Probability} and \ref{Hypothesis_DivisionRate}, the process $Z$ defined in (\ref{Z_t_Equation}) and starting from $Z_0$ is a Markovian process in the Skhorod space $\mathbb{D}([0,T],  \mathcal{M}_P(\llbracket 1, J\rrbracket  \times \mathbb{R}_+)) $. Let $T >0$, $Z$ satisfies 
				\begin{equation}\label{SDE_N_tandZ_t}
					\mathbb{E} \big[ \sup_{t \leq T}  N_t \big] < \infty, \quad  \mathbb{E} \big[ \sup_{t \leq T}  \ll a, Z_t \gg  \big] < \infty, 
				\end{equation} 
			and its infinitesimal generator is 
		\begin{equation*}
			\footnotesize
			\begin{aligned}
				&	\mathcal{G}F\big[\ll f,Z\gg \, \big] = \ll F'[\ll Z,f\gg \, ]\partial_af , Z\gg \,   \\
				&	  + \displaystyle \sum_{j = 1}^{J}\int_{0}^{\infty} \big( F\big[ \ll f,2 \delta_{j,0} - \delta_{j,a} +Z\gg \,  \big] - F\big[\ll f,Z\gg \, \big]  \big) p^{(j)}_{2,0}b_j(a) Z(dj,da)  \\
				&	 + \displaystyle \sum_{j = 1}^{J}\int_{0}^{\infty} \big( F\big[   \ll f ,\delta_{j,0} + \delta_{j+1,0} - \delta_{j,a} +Z\gg \, \big] - F\big[\ll f,Z\gg \, \big]  \big) p^{(j)}_{1,1}b_j(a) Z(dj,da)\\
				&	+ \displaystyle \sum_{j = 1}^{J}\int_{0}^{\infty} \big(F\big[ \ll f ,2\delta_{j+1,0} - \delta_{j,a} +Z\gg \, \big] - F\big[\ll f,Z\gg \, \big] \big) p^{(j)}_{0,2}b_j(a) Z(dj,da) \, .
			\end{aligned}
			\end{equation*}
			\end{theorem}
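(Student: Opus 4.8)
The plan is to follow the classical programme for measure-valued branching dynamics driven by a Poisson point measure, carried out in a more general setting in \cite{tran_large_2008}, in three steps: the a priori moment estimates \eqref{SDE_N_tandZ_t}, the c\`adl\`ag and Markov properties, and the identification of the generator through an It\^o formula for finite-variation-plus-jump processes, with a localisation argument throughout.

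First I would prove \eqref{SDE_N_tandZ_t}. The key structural fact is that the model has no death: reading $R(k,s,Z,\theta)$ in \eqref{Z_t_Equation}, each division changes the population size by exactly $+1$, so $t \mapsto N_t$ is nondecreasing and $\sup_{t \le T} N_t = N_T$. To control $\mathbb{E}[N_T]$, set $T_n := \inf\{t \ge 0 : N_t \ge n\}$: on $[0, T \wedge T_n]$ every sum in \eqref{Z_t_Equation} is finite, and testing the equation against $\mathds{1}$, taking expectations and bounding $b_j \le \overline{b} := \max_{j} \overline{b}_j$ (Hypothesis \ref{Hypothesis_DivisionRate}) gives $\mathbb{E}[N_{t \wedge T_n}] \le N_0 + \overline{b}\int_0^t \mathbb{E}[N_{s \wedge T_n}]\,ds$, whence $\mathbb{E}[N_{t \wedge T_n}] \le N_0 e^{\overline{b} t}$ by Gronwall's lemma, uniformly in $n$. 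Then $\mathbb{P}(T_n \le t) \le \mathbb{E}[N_{t \wedge T_n}]/n \le N_0 e^{\overline{b} t}/n \to 0$, so $T_n \uparrow \infty$ a.s.\ and Fatou's lemma yields $\mathbb{E}[\sup_{t \le T} N_t] = \mathbb{E}[N_T] \le N_0 e^{\overline{b} T} < \infty$. For the age moment, along the ageing flow $\tfrac{d}{dt}\ll a, Z_t\gg \, = N_t$ and each division replaces the mother's age by two null ages, so $\ll a, Z_t\gg$ only jumps downwards; hence $\ll a, Z_t\gg \, \le \, \ll a, Z_0\gg + \int_0^t N_s\,ds \le \, \ll a, Z_0\gg + T N_T$ on $[0,T]$, and taking expectations gives the second bound.

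Since $N_T < \infty$ a.s., on each $[0,T]$ the trajectory has finitely many jumps and is otherwise a deterministic ageing flow, so $(Z_t)$ has paths in $\mathbb{D}([0,T], \mathcal{M}_P(\llbracket 1, J\rrbracket \times \mathbb{R}_+))$; and the Markov property follows from pathwise existence and uniqueness of \eqref{Z_t_Equation} (from \cite{tran_large_2008}) together with the independence of the increments of $Q$, the solution started at time $s$ from $Z_s$ and driven by $Q$ restricted to $(s,\infty) \times \llbracket 1, J\rrbracket \times \mathbb{R}_+$ being adapted and, by uniqueness, equal to $(Z_t)_{t \ge s}$.

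Finally, for $F \in \mathcal{C}^1(\mathbb{R}_+,\mathbb{R}_+)$ and $f \in \mathcal{C}^1_b(\mathcal{E},\mathbb{R}_+)$ I would apply the change-of-variables formula for finite-variation-plus-jump processes to $t \mapsto F[\ll f, Z_t\gg]$ on $[0, T \wedge T_n]$, where all terms are bounded. The absolutely continuous part stems from ageing, $\tfrac{d}{dt}\ll f, Z_t\gg \, = \, \ll \partial_a f, Z_t\gg$, and produces the drift $\ll F'[\ll Z,f\gg]\partial_a f, Z\gg$; each atom $(s,k,\theta)$ of $Q$ with $k \le N_{s^-}$ contributes the jump $F[\ll f, Z_{s^-} + R(k,s,Z,\theta)\gg] - F[\ll f, Z_{s^-}\gg]$, and integrating the compensator $ds \otimes \#dk \otimes d\theta$ while splitting the $\theta$-integral at the thresholds $m_1 < m_2 < m_3$ of \eqref{Z_t_Equation} turns the sum over cells into a $Z_s(dj,da)$-integral with weights $p^{(j)}_{2,0}b_j(a)$, $p^{(j)}_{1,1}b_j(a)$, $p^{(j)}_{0,2}b_j(a)$, the three values of $R$ supplying the jump operators $2\delta_{j,0}-\delta_{j,a}$, $\delta_{j,0}+\delta_{j+1,0}-\delta_{j,a}$, $2\delta_{j+1,0}-\delta_{j,a}$; this is exactly the three sums in the stated $\mathcal{G}$. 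One obtains $F[\ll f, Z_{t \wedge T_n}\gg] = F[\ll f, Z_0\gg] + \int_0^{t \wedge T_n} \mathcal{G}F[\ll f, Z_s\gg]\,ds + M^n_t$ with $M^n$ a martingale, so taking expectations removes $M^n$ and, letting $n \to \infty$ with dominated convergence justified by the estimates of the second step (which dominate $\ll \partial_a f, Z_s\gg$ by $\|\partial_a f\|_\infty N_s$ and the total jump contribution by a multiple of $\|f\|_\infty \overline{b}\, N_s$), yields the announced Dynkin formula and hence the generator. The hard part is precisely this integrability and passage to the limit: since $F$ is merely $\mathcal{C}^1$ and $\ll f, Z_s\gg$ is comparable to the unbounded quantity $N_s$, the localisation at $T_n$ is essential, and the role of $\mathbb{E}[\sup_{t \le T} N_t] < \infty$ is exactly to dominate the compensator; the remainder is routine bookkeeping of the Poisson intensities.
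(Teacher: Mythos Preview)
Your proposal is correct and follows essentially the same programme as the paper: localisation by stopping times, Gronwall for the moment bounds, an It\^o/change-of-variables formula for $F[\ll f,Z_t\gg]$ (the paper isolates this as Lemma~\ref{MultiCouche_lemme}), and dominated convergence to remove the localisation. Your treatment of the moment estimates is in fact slightly cleaner than the paper's: exploiting the monotonicity of $N_t$ to reduce $\sup_{t\le T}N_t$ to $N_T$, and the pathwise inequality $\ll a,Z_t\gg \le \ll a,Z_0\gg + T N_T$ for the age moment, avoids a second Gronwall argument that the paper only sketches.
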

			From this theorem, we derive the following Dynkin formula : 
			\begin{lemma}[Dynkin formula]\label{Multi_Dynkin}
				Let $T > 0$. Under hypotheses \ref{Hypothesis_Probability} and \ref{Hypothesis_DivisionRate}, $\forall t \in [0,T]$, 
				\begin{equation*}
					\displaystyle F[\ll f,Z_t\gg \, ] =  F[\ll f,Z_0\gg \, ] + \int_{0}^{t} \mathcal{G}F[\ll f,Z_s\gg \, ]ds + M_t^{F,f}
				\end{equation*}
				where $M^{F,f} $ is a martingale. Moreover,
				\begin{equation}\label{PbmDeMartingale_Equ}
					\ll f,Z_t\gg \,  =  \, \ll f,Z_0\gg \,  + \int_{0}^{t} \ll \mathcal{L}^Df,Z_s\gg \, ds + M_t^{f}
				\end{equation}
				where $\mathcal{L}^D$ the dual operator in (\ref{ProblemeAdjoint}) and $M^{f} $ is a $\mathbf{L}^2-$martingale defined by
				\begin{equation}\label{Martingale_Ecriture}
				\footnotesize
				\begin{aligned}
				&M_t^{f} =  \int_{0}^{t} \ll B(\cdot)f(\cdot) - K(\cdot)^Tf(0),Z_s\gg \,  ds  \\
				&	 + \int \int_{[0,t] \times \mathcal{E}} \mathds{1}_{k \leq N_{s^-}} \ll f,2 \delta_{I^{(k)}_{s^-},0} -\delta_{I^{(k)}_{s^-},A^{(k)}_{s^-}}\gg \,   \mathds{1}_{0 \leq \theta \leq m_1(s,k,Z)}Q(ds,dk,d\theta) \\
				&	 + \int \int_{[0,t] \times \mathcal{E}} \mathds{1}_{k \leq N_{s^-}} \ll f,\delta_{I^{(k)}_{s^-},0} + \delta_{I^{(k)}_{s^-} + 1,0} -\delta_{I^{(k)}_{s^-},A^{(k)}_{s^-}}\gg \,   \mathds{1}_{m_1(s,k,Z)  \leq \theta \leq m_2(s,k,Z)}Q(ds,dk,d\theta)  \\
				&	 + \int \int_{[0,t] \times \mathcal{E}} \mathds{1}_{k \leq N_{s^-}}\ll f,2\delta_{I^{(k)}_{s^-} + 1,0}  -\delta_{I^{(k)}_{s^-},A^{(k)}_{s^-}}\gg \, \mathds{1}_{m_2(s,k,Z) \leq \theta \leq m_3(s,k,Z)}Q(ds,dk,d\theta)  
				\end{aligned}
				\phantom{\hspace{6cm}}
				\end{equation}
			and 
			\begin{equation}\label{PbmDeMartingale_Crochet}
				\footnotesize
				\begin{aligned}
				\left \langle M^{f},M^{f}\right \rangle_t = & \displaystyle \int_{0}^{t} \big[ \sum_{j = 1}^{J} \int_{\mathbb{R}_+}[\ll f,2 \delta_{j,0} - \delta_{j,a}\gg \, ]^2 b_{j}(a)p^{(j)}_{2,0}Z_s(dj,da) \\
			& +\sum_{j = 1}^{J} \int_{\mathbb{R}_+} [ \ll f, \delta_{j,0} + \delta_{j+1,0} - \delta_{j,a}\gg \,  ]^2 b_{j}(a)p^{(j)}_{1,1}Z_s(dj,da) \\
			& + \sum_{j = 1}^{J} \int_{\mathbb{R}_+} [ \ll f,2\delta_{j+1,0} - \delta_{j,a}\gg \, ]^2   b_{j}(a)p^{(j)}_{0,2}Z_s(dj,da) \big]ds  \, .
					\end{aligned}
			\end{equation}
			\end{lemma}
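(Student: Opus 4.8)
The plan is to read off \eqref{PbmDeMartingale_Equ}--\eqref{PbmDeMartingale_Crochet} from the Poisson stochastic differential equation \eqref{Z_t_Equation} by a change of variables (It\^o) formula for measure--valued pure jump processes, in the spirit of \cite{tran_large_2008}, and then to compensate the Poisson point measure $Q$. Pairing \eqref{Z_t_Equation} with $f$ and using that each Dirac mass carried by $Z$ has an age coordinate growing linearly in time between jumps gives
\begin{equation*}
\ll f,Z_t\gg \, = \, \ll f,Z_0\gg \, + \int_0^t \ll \partial_a f, Z_s\gg \, ds + \int_{[0,t]\times\mathcal{E}} \mathds{1}_{k\le N_{s^-}}\ll f,\tilde R(k,s,Z,\theta)\gg Q(ds,dk,d\theta),
\end{equation*}
where $\tilde R$ is the instantaneous jump obtained from $R$ in \eqref{Z_t_Equation} by setting $t=s$. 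Composing with $F\in\mathcal{C}^1(\mathbb{R}_+,\mathbb{R}_+)$ (chain rule on the absolutely continuous aging part, finite increments on the jump part) produces the aging term $\ll F'[\ll f,Z\gg]\partial_a f,Z\gg\,ds$ plus a $Q$--integral; splitting $Q=\bar Q+ds\,\#dk\,d\theta$ and integrating the intensity part over $\theta$, which turns the indicators into the division--rate increments $m_1$, $m_2-m_1$, $m_3-m_2$, identifies the compensator with $\int_0^t\mathcal{G}F[\ll f,Z_s\gg]\,ds$ for the generator $\mathcal{G}$ of Theorem \ref{MultiCouche_Gene_Theo}; the remaining $\bar Q$--integral is $M^{F,f}$.

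For the explicit statements take $F=\mathrm{id}$. Splitting the raw $Q$--integral above along the indicators $\mathds{1}_{0\le\theta\le m_1}$, $\mathds{1}_{m_1\le\theta\le m_2}$, $\mathds{1}_{m_2\le\theta\le m_3}$ reproduces verbatim the last three lines of \eqref{Martingale_Ecriture}. Its compensator is, cell by cell, $b_j(a)\big[p^{(j)}_{2,0}\big(2f(j,0)-f(j,a)\big)+p^{(j)}_{1,1}\big(f(j,0)+f(j+1,0)-f(j,a)\big)+p^{(j)}_{0,2}\big(2f(j+1,0)-f(j,a)\big)\big]$, which by $2p^{(j)}_S=2p^{(j)}_{2,0}+p^{(j)}_{1,1}$ and $2p^{(j)}_L=2p^{(j)}_{0,2}+p^{(j)}_{1,1}$ equals $b_j(a)\big[2p^{(j)}_S f(j,0)+2p^{(j)}_L f(j+1,0)-f(j,a)\big]=\big(K(a)^Tf(0)-B(a)f(a)\big)^{(j)}$; adding the aging term we recognize $\ll\mathcal{L}^D f,Z_s\gg$, hence \eqref{PbmDeMartingale_Equ}, and therefore $M^f_t=\ll f,Z_t\gg-\ll f,Z_0\gg-\int_0^t\ll\mathcal{L}^D f,Z_s\gg\,ds$ coincides with the drift $\int_0^t\ll B(\cdot)f(\cdot)-K(\cdot)^Tf(0),Z_s\gg\,ds$ plus the three $Q$--integrals, i.e.\ \eqref{Martingale_Ecriture}. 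Being a compensated Poisson integral, $M^f$ has predictable quadratic variation equal to its integrand squared against the intensity of $Q$; integrating over $\theta$ yields \eqref{PbmDeMartingale_Crochet}.

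The one genuine obstacle is integrability: a priori $\int_0^t\mathcal{G}F[\ll f,Z_s\gg]\,ds$ need not be integrable and the $Q$--integrals are only local martingales. I would localize by $\tau_n:=\inf\{t:N_t\ge n\}$, on which all terms are bounded so Dynkin's identity holds and $M^{F,f}_{\cdot\wedge\tau_n}$ is a true martingale, and then pass to $n\to\infty$ using the moment bounds $\mathbb{E}[\sup_{t\le T}N_t]<\infty$ and $\mathbb{E}[\sup_{t\le T}\ll a,Z_t\gg]<\infty$ from Theorem \ref{MultiCouche_Gene_Theo}. For $F=\mathrm{id}$ this is clean: boundedness of $f$ gives $|\ll f,\tilde R\gg|\le 3\|f\|_\infty$ and $b_j\le\max_k\overline b_k$, so from \eqref{PbmDeMartingale_Crochet}
\begin{equation*}
\mathbb{E}\big[\langle M^f,M^f\rangle_{t\wedge\tau_n}\big]\ \le\ 9\|f\|_\infty^2\,\big(\max_k\overline b_k\big)\,\mathbb{E}\Big[\int_0^t N_s\,ds\Big]\ \le\ 9\|f\|_\infty^2\,\big(\max_k\overline b_k\big)\,t\,\mathbb{E}\big[\sup_{s\le t}N_s\big]<\infty
\end{equation*}
uniformly in $n$; Doob's $\mathbf{L}^2$ inequality and Fatou then upgrade $M^f$ to a genuine square--integrable martingale with bracket \eqref{PbmDeMartingale_Crochet}, while dominated convergence transfers \eqref{PbmDeMartingale_Equ} from $t\wedge\tau_n$ to $t$. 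Everything else is bookkeeping with $p^{(j)}_S$, $p^{(j)}_L$ and the three daughter--cell outcomes.
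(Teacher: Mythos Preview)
Your proposal is correct and follows essentially the same line as the paper: pair the SDE \eqref{Z_t_Equation} with $f$ (this is the content of Lemma \ref{MultiCouche_lemme} in the paper), compensate the Poisson integral, and identify the drift with $\mathcal{G}$ (resp.\ $\mathcal{L}^D$ when $F=\mathrm{id}$); your localization by $\tau_n$ and use of the moment bound \eqref{SDE_N_tandZ_t} match the paper's use of the stopping times $\xi_N$.

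The one place where you diverge is the computation of $\langle M^f,M^f\rangle$. You invoke directly the standard formula for the predictable bracket of a compensated Poisson integral (integrand squared against the intensity), which is the shortest route. The paper instead follows the approach of \cite{champagnat_individual_2008}: apply the Dynkin identity once with $F(x)=x^2$ (localized since $x\mapsto x^2$ is unbounded) to obtain one semimartingale decomposition of $\ll f,Z_t\gg^2$, then apply It\^o's formula to $\ll f,Z_t\gg^2$ using the $F=\mathrm{id}$ decomposition to obtain another, and read off $\langle M^f,M^f\rangle_t$ by comparing the two via Doob--Meyer uniqueness. Both arguments are standard; yours is more economical, while the paper's has the pedagogical virtue of deriving \eqref{PbmDeMartingale_Crochet} without appealing to the abstract bracket formula for Poisson integrals.
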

			The proofs of theorem \ref{MultiCouche_Gene_Theo} and lemma \ref{Multi_Dynkin} are classical and provided in \ref{Supplemental_Stochastic} for reader convenience. 
			We now have all the elements to prove theorem \ref{Multi_Conver_L2}.
			\begin{proof}[Proof of theorem \ref{Multi_Conver_L2}]
				We apply the Dynkin formula (\ref{PbmDeMartingale_Equ}) with the dual test function $\dualfunction$ and obtain $	\displaystyle \ll \dualfunction,Z_t\gg \,  = \ll \dualfunction, Z_0\gg \,  + \Malthus \int_{0}^{t}\ll\dualfunction, Z_s\gg \, ds + M^\dualfunction_t. $
				As $\dualfunction$ is bounded, $\ll\dualfunction,Z_t\gg \,  $ has finite expectation for all time $t$ according to \eqref{SDE_N_tandZ_t}. Thus,
				\begin{equation}\label{SDE_eqA}
				\mathbb{E}\big[ \ll \dualfunction,Z_t\gg \,  \big] = \mathbb{E}\big[ \ll \dualfunction,Z_0\gg \,  \big] + \Malthus \mathbb{E}\big[ \int_{0}^{t}\ll \dualfunction, Z_s\gg \, ds \big].
				\end{equation}
				Using Fubini theorem and solving equation \eqref{SDE_eqA}, we obtain:
				\begin{equation*}
				 \mathbb{E}\big[ \ll\dualfunction,Z_t\gg \,  \big] = e^{\Malthus t}\mathbb{E}\big[ \ll\dualfunction,Z_0\gg \,  \big] \, \Rightarrow \, \mathbb{E}\big[ e^{-\Malthus t} \ll \dualfunction, Z_t \gg \big] = \mathbb{E}\big[ \ll\dualfunction,Z_0\gg \,  \big].
				\end{equation*}
				 Hence, $W^\dualfunction_t = e^{-\Malthus t} \ll \dualfunction, Z_t \gg $ is a martingale. According to martingale convergence theorems (see Theorem 7.11 in \cite{klebaner_introduction_2012}), $W^\dualfunction_t$ converges to an integrable random variable $ W^\dualfunction_\infty \geq 0$, $\mathbb{P}-$p.s. when $t$ goes to infinity. 
				 To prove that $W^\dualfunction_\infty$ is non-degenerated, we will show that the convergence holds in $\mathbf{L}^2$. Indeed, from the $\mathbf{L}^2$ and almost sure convergence, we deduce the $\mathbf{L}^1 $ convergence. Then, applying the dominated convergence theorem, we have: 
				 \vspace{-0.2cm}
				 \begin{equation*}
				 	\mathbb{E}[ W^\dualfunction_\infty] :=\mathbb{E}[ \lim\limits_{t \rightarrow \infty} W^\dualfunction_t]= 	\lim\limits_{t \rightarrow \infty} \mathbb{E}[W^\phi_t] = \mathbb{E}[ W^\dualfunction_0] > 0.
				 \end{equation*}
				 Consequently, $W^\dualfunction_\infty $ is non-degenerated. To show the $\mathbf{L}^2$ convergence, we compute the quadratic variation of $W^\dualfunction$. Applying Ito formula (see \cite{protter_stochastic_2004} p. 78-81) with $F(t,\ll\dualfunction,Z_t\gg \, ) = e^{-\Malthus t}\ll\dualfunction,Z_t\gg \, $, we deduce:
				\begin{equation*}
				\footnotesize
				\begin{aligned} 
				& \displaystyle	W^\dualfunction_t =  \ll\dualfunction,Z_0\gg \,  +  \int_{0}^{t}\big[ \int_{\mathcal{E}} e^{-\Malthus s}( \partial_a \dualfunction^{(j)}(a) -\Malthus \dualfunction^{(j)}(a))  Z_s(dj,da)\big] ds \\
				&		\displaystyle	 + \int \int_{[0,t] \times \mathcal{E}} \mathds{1}_{k \leq N_{s^-}} e^{-\Malthus s}\ll \phi,2 \delta_{I_{s^-}^{(k)},0} - \delta_{I_{s^-}^{(k)},A_{s^-}^{(k)}} \gg \,  \mathds{1}_{0 \leq \theta \leq m_1(s,k,Z)}Q(ds,dk,d\theta) \\
				&		\displaystyle	 + \int \int_{[0,t] \times \mathcal{E}} \mathds{1}_{k \leq N_{s^-}} e^{-\Malthus s}\ll \phi, \delta_{I_{s^-}^{(k)},0} + \delta_{I_{s^-}^{(k)} + 1,0} - \delta_{I_{s^-}^{(k)},A_{s^-}^{(k)}} \gg \,  \mathds{1}_{m_1(s,k,Z)  \leq \theta \leq m_2(s,k,Z)}Q(ds,dk,d\theta) \\
				&		\displaystyle	 + \int \int_{[0,t] \times \mathcal{E}} \mathds{1}_{k \leq N_{s^-}} e^{-\Malthus s} \ll \phi,2 \delta_{I_{s^-}^{(k)} + 1,0} - \delta_{I_{s^-}^{(k)},A_{s^-}^{(k)}} \gg \, \mathds{1}_{m_2(s,k,Z) \leq \theta \leq m_3(s,k,Z)}Q(ds,dk,d\theta) 	\, .
				\end{aligned}
				\phantom{\hspace{6cm}}
				\end{equation*}
				
				As $\mathcal{L}^D \dualfunction = \Malthus \dualfunction  $, we have 
				\begin{align*} \int_{\mathcal{E}} (\partial_a \dualfunction^{(j)}(a) -\Malthus \dualfunction^{(j)}(a))Z_s(dj,da) = \, \ll B(\cdot)\dualfunction(\cdot) - K^T(\cdot)\dualfunction(0),Z_s\gg \, . \end{align*}
				Consequently, from (\ref{Martingale_Ecriture}), we deduce 
				\begin{equation}\label{Martingale_forme_for_phi}
				\displaystyle	W^\dualfunction_t =  \ll\dualfunction,Z_0\gg \,  +  \int_{0}^{t}e^{-\Malthus s}dM^\dualfunction_s  \, .
				\end{equation}
				where $dM^\dualfunction_s$ is defined as $ \displaystyle M^\dualfunction_t = \int_{0}^{t}dM^\dualfunction_s $. According to \eqref{PbmDeMartingale_Crochet} and \eqref{Martingale_forme_for_phi}, we get
				\begin{multline*}
				\displaystyle \left \langle W^\dualfunction_\cdot,W^\dualfunction_\cdot\right \rangle_t = \int_{0}^{t} e^{- 2\Malthus s} d \left \langle M^{\dualfunction},M^{\dualfunction}\right \rangle_s ds \\
							 = 	\displaystyle \int_{0}^{t} e^{- 2\Malthus s} \left [  \int_{\mathcal{E}} \left (p^{(j)}_{2,0}[\ll\dualfunction,2 \delta_{j,0} - \delta_{j,a}\gg \, ]^2 + p^{(j)}_{1,1} [\ll\dualfunction, \delta_{j,0} + \delta_{j+1,0} - \delta_{j,a}\gg \,  ]^2  \right.  \right.\\
							\left.  \left.+ p^{(j)}_{0,2}[ \ll\dualfunction,2\delta_{j+1,0} - \delta_{j,a}\gg \, ]^2 \right ) b_j(a)Z_s(dj,da)\right]ds \, .
				\end{multline*}
				Since, $\dualfunction $ and $b$ are bounded, there exists a constant $K > 0$ such that  
				\begin{equation*}
						\displaystyle \left \langle W^\dualfunction,W^\dualfunction \right \rangle_t  \leq K \int_{0}^{t} e^{- 2 \Malthus s} \left [  \int_{\mathcal{E}}Z_s(dj,da)\right ]ds  \, .
				\end{equation*}
				Taking the expectation and using moment estimate \eqref{SDE_N_tandZ_t}, we get $	\mathbb{E}[\langle W^\dualfunction,W^\dualfunction\rangle_t] < \infty $. Thanks to the Burkholder-Davis-Gundy inequality (see Theorem 48, \cite{protter_stochastic_2004}), we deduce that $
					\mathbb{E} [ \sup_{t \leq T}  \left ( W^\dualfunction_t \right )^2] < \infty$,
				and thus the $\mathbf{L}^2$ convergence of $W^\dualfunction $.
			\end{proof}
			
			\subsection{Asymptotic study of the renewal equations}
				We now turn to the study of renewal equations associated with the branching process $Z$. Following \cite{harris_theory_1963} (Chap. VI), we introduce generating functions that determine the cell moments.
				In all this subsection, we consider $a \in \mathbb{R}_+ \cup \{+ \infty \}$.
				We recall that $Y^{(j,a)}_t= \, \langle Z_t,\mathds{1}_{j}\mathds{1}_{\leq a}  \rangle$ and $ Y_t^a =  (Y^{(j,a)}_t)_{j \in \llbracket 1, J \rrbracket}$. 
				For $\mathbf{s} = (s_{1}, ..., s_{J}) \in \mathbb{R}^{J}$ and $\mathbf{j} = (j_{1},...,j_{J}) \in \mathbb{N}^{J}$, we use classical vector notation $ \mathbf{s}^{\mathbf{j}} = \prod_{i=1}^{J}s_{i}^{j_{i}}$. 
				
				\begin{definition}\label{DefinitionFonctionGeneratrice}
					We define $ F^a[\mathbf{s};t] = (F^{(i,a)}[\mathbf{s};t])_{i \in \llbracket 1, J \rrbracket}$ where $F^{(i,a)} $ is the generating function associated with $Y^{a}_t $ starting with $Z_0 = \delta_{i,0} $:
					$$F^{(i,a)}[\mathbf{s};t] := \mathbb{E}[\mathbf{s}^{Y_t^a} \rvert Z_0 = \delta_{i,0}] \, .$$
				\end{definition}
				We obtain a system of renewal equations for $F$ and \\
				$M^a(t) := ( \mathbb{E}[Y^{(j,a)}_t|Z_0 = \delta_{i,0}])_{i,j \in \llbracket 1, J \rrbracket}$.
				
				\begin{lemma}[Renewal equations for $F$]\label{FoncGenForme2}For $i \in \llbracket 1, J \rrbracket$, $F^{(i,a)} $ satisfies:
					\begin{equation}\label{FonctionGeneratriceEquation}
						\forall i \in \llbracket 1,J \rrbracket, 
						\quad F^{(i,a)}[\mathbf{s};t]= (s_{i} \mathds{1}_{t \leq a} + \mathds{1}_{t > a} )(1- \mathcal{B}_{i}(t)) + f^{(i)}(F^a [\mathbf{s},.] )\ast d\mathcal{B}_{i}(t)
					\end{equation}
					where $f^{(i)}$ is given by $	f^{(i)}(\mathbf{s}) := p_{2,0}^{(i)}s_{i}^{2} + p_{1,1}^{(i)}s_{i}s_{i+1} + p_{0,2}^{(i)}s_{i+1}^{2}  $.
					\end{lemma}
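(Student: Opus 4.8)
The plan is to establish \eqref{FonctionGeneratriceEquation} by a first-step (renewal) argument, conditioning on the division time $\tau^i$ of the single ancestor, which is born in layer $i$ with age $0$. Throughout I would work with $\mathbf{s} \in [0,1]^J$, so that $|\mathbf{s}^{Y^a_t}| \le 1$ and $F^{(i,a)}[\mathbf{s};t]$ is well-defined and bounded; this range already determines the cell number moments and extends to a neighbourhood once finiteness is known.

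First I would isolate the event $\{\tau^i > t\}$, which has probability $1 - \mathcal{B}_i(t)$ by Definition \ref{Definition_B_j_and_dB_j}. On this event the ancestor has not divided up to time $t$, so $Z_t = \delta_{i,t}$, whence $Y^{(i,a)}_t = \mathds{1}_{t \le a}$ and $Y^{(k,a)}_t = 0$ for $k \neq i$, and therefore $\mathbf{s}^{Y^a_t} = s_i \mathds{1}_{t \le a} + \mathds{1}_{t > a}$. This contributes exactly the term $(s_i \mathds{1}_{t \le a} + \mathds{1}_{t > a})(1 - \mathcal{B}_i(t))$.

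Next I would condition on $\{\tau^i = u\}$ for $u \in [0,t]$, a time whose law has density $d\mathcal{B}_i$. At time $u$ the ancestor divides, and by the transition rules encoded in \eqref{Z_t_Equation} the post-division configuration is: two age-$0$ cells in layer $i$ with probability $p^{(i)}_{2,0}$; one age-$0$ cell in layer $i$ and one in layer $i+1$ with probability $p^{(i)}_{1,1}$; two age-$0$ cells in layer $i+1$ with probability $p^{(i)}_{0,2}$. By the branching property — the strong Markov property of $Z$ at the stopping time $\tau^i$ together with the independence of the sub-populations issued from distinct cells, which is built into the Poisson-driven SDE \eqref{Z_t_Equation} and its generator (Theorem \ref{MultiCouche_Gene_Theo}) — the vector $Y^a_t$ decomposes as an independent sum of two copies of $Y^a$ observed after a time lapse $t - u$, each started from the corresponding daughter cell. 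Hence $\mathbb{E}[\mathbf{s}^{Y^a_t}\mid \tau^i = u]$ equals $(F^{(i,a)}[\mathbf{s};t-u])^2$, $F^{(i,a)}[\mathbf{s};t-u]\,F^{(i+1,a)}[\mathbf{s};t-u]$, or $(F^{(i+1,a)}[\mathbf{s};t-u])^2$ on the three events, weighted by the corresponding probabilities; summing these yields precisely $f^{(i)}\big(F^a[\mathbf{s};t-u]\big)$ with $f^{(i)}$ as in the statement. Integrating against $d\mathcal{B}_i(u)\,du$ over $[0,t]$ — with Fubini justified by the bound $\le 1$ — produces the convolution term $f^{(i)}(F^a[\mathbf{s},\cdot]) \ast d\mathcal{B}_i(t)$. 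For $i = J$ one has $p^{(J)}_{2,0} = 1$, so only the first case occurs, $f^{(J)}(\mathbf{s}) = s_J^2$, and the formal symbol $s_{J+1}$ enters with coefficient $0$.

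The crux is making the branching decomposition rigorous. I would realise $\tau^i$ as the first atom of the driving Poisson point measure $Q$ acting on the ancestor, and use the restriction and shift properties of $Q$ beyond that atom to exhibit two independent Poisson measures driving the two daughter sub-processes, each solving \eqref{Z_t_Equation} from its respective Dirac initial datum; combined with the strong Markov property this gives the claimed independence and the factorisation of the conditional generating function. The remaining ingredients — measurability of $t \mapsto F^{(i,a)}[\mathbf{s};t]$, the application of Fubini, and the dominated convergence needed to move expectations through the conditioning — are routine because every quantity involved is bounded by $1$ on $[0,1]^J$.
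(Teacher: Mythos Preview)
Your proposal is correct and follows essentially the same first-step renewal argument as the paper: condition on the first division time $\tau^i$, separate the no-division contribution from the division contribution, and invoke the branching property. The only cosmetic difference is that the paper first writes the backward equation for the transition probabilities $P^a_{e_i,\mathbf{k}}(t)$ and then sums against $\mathbf{s}^{\mathbf{k}}$ (recovering the product structure via a Cauchy-product computation), whereas you work directly at the generating-function level using multiplicativity under independence; both routes yield \eqref{FonctionGeneratriceEquation} in the same way.
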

				
				\begin{lemma}[Renewal equations for $M$]\label{MomentDOrdre1}
				 	For $(i,j) \in \llbracket 1, J\rrbracket ^{2}$, $M^a_{i,j}$ satisfies:
					 \begin{equation}
						 \label{MomentDOrdre1Equ}
						 M^a_{i,j}(t) = \delta_{i,j}(1- \mathcal{B}_{i}(t)) \mathds{1}_{t \leq a}  + 2p^{(i)}_{S} M^a_{i,j}\ast d\mathcal{B}_{i}(t) + 2p^{(i)}_{L} M^a_{i+1,j}\ast d\mathcal{B}_{i}(t)  \, .
					 \end{equation}
				\end{lemma}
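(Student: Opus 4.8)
The plan is to derive \eqref{MomentDOrdre1Equ} from the renewal system \eqref{FonctionGeneratriceEquation} for the generating functions of Lemma \ref{FoncGenForme2} by differentiating once in $s_j$ and sending $\mathbf{s}\to\mathbf{1}$. The facts I would invoke up front are: for $\mathbf{s}\in[0,1]^J$ the series defining $F^{(i,a)}[\mathbf{s};t]$ converges, $F^{(i,a)}[\mathbf{1};t]=1$ (no explosion, by the moment bound \eqref{SDE_N_tandZ_t} of Theorem \ref{MultiCouche_Gene_Theo}), and $M^a_{i,j}(t)\le \mathbb{E}[N_t\mid Z_0=\delta_{i,0}]<\infty$; hence the left partial derivative of $F^{(i,a)}$ at $\mathbf{s}=\mathbf{1}$ exists, is finite, and equals $\partial_{s_j}F^{(i,a)}[\mathbf{1};t]=M^a_{i,j}(t)$ (monotone convergence on the difference quotients of a power series with nonnegative coefficients, i.e. an Abelian argument). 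With these in hand, the whole proof reduces to differentiating the right-hand side of \eqref{FonctionGeneratriceEquation}.

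The first summand $(s_i\mathds{1}_{t\le a}+\mathds{1}_{t>a})(1-\mathcal{B}_i(t))$ is affine in $\mathbf{s}$, so $\partial_{s_j}$ of it is $\delta_{i,j}\mathds{1}_{t\le a}(1-\mathcal{B}_i(t))$, matching the first term of \eqref{MomentDOrdre1Equ}. For the convolution term, the chain rule gives $\partial_{s_j}\bigl(f^{(i)}(F^a[\mathbf{s},\cdot])\bigr)=\sum_{k=1}^J(\partial_{s_k}f^{(i)})(F^a[\mathbf{s},\cdot])\,\partial_{s_j}F^{(k,a)}[\mathbf{s},\cdot]$, and since $f^{(i)}(\mathbf{s})=p^{(i)}_{2,0}s_i^2+p^{(i)}_{1,1}s_is_{i+1}+p^{(i)}_{0,2}s_{i+1}^2$ depends only on $s_i$ and $s_{i+1}$, only the $k=i$ and $k=i+1$ terms survive (for $i=J$ the $s_{i+1}$ part is absent since $p^{(J)}_{2,0}=1$). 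Evaluating at $\mathbf{s}=\mathbf{1}$, where $F^{(k,a)}[\mathbf{1},\cdot]=1$, one gets $\partial_{s_i}f^{(i)}(\mathbf{1})=2p^{(i)}_{2,0}+p^{(i)}_{1,1}=2p^{(i)}_S$ and $\partial_{s_{i+1}}f^{(i)}(\mathbf{1})=p^{(i)}_{1,1}+2p^{(i)}_{0,2}=2p^{(i)}_L$, so $\partial_{s_j}\bigl(f^{(i)}(F^a[\mathbf{s},\cdot])\bigr)\big|_{\mathbf{1}}=2p^{(i)}_SM^a_{i,j}+2p^{(i)}_LM^a_{i+1,j}$ (with $p^{(J)}_L=0$ absorbing the undefined $M^a_{J+1,j}$). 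Convolution against the fixed finite measure $d\mathcal{B}_i$ is linear and, by dominated convergence using local boundedness of $t\mapsto M^a_{\cdot,j}(t)$ on $[0,T]$, commutes with $\partial_{s_j}$; differentiating the convolution term therefore yields $2p^{(i)}_SM^a_{i,j}\ast d\mathcal{B}_i(t)+2p^{(i)}_LM^a_{i+1,j}\ast d\mathcal{B}_i(t)$, and adding the two contributions gives \eqref{MomentDOrdre1Equ}.

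The only delicate point is the legitimacy of interchanging $\partial_{s_j}$ with the infinite sum defining $F^{(i,a)}$ and with the convolution; the way I would settle it is to work first with $\mathbf{s}\in[0,1)^J$, where each $F^{(i,a)}[\cdot;t]$ is a genuine power series with positive radius of convergence and term-by-term differentiation is unproblematic, establish the differentiated identity there, and then pass to the limit $\mathbf{s}\uparrow\mathbf{1}$ using monotonicity in $\mathbf{s}$ of every quantity involved together with the uniform-in-$t\le T$ moment estimate \eqref{SDE_N_tandZ_t}, whose finiteness guarantees the limiting derivatives are finite and equal $M^a_{i,j}(t)$. As an alternative route that bypasses generating functions entirely, one can prove \eqref{MomentDOrdre1Equ} directly by a first-step (renewal) decomposition on the division time $\tau^i$ of the ancestor: on $\{\tau^i>t\}$, which has probability $1-\mathcal{B}_i(t)$, the lone ancestor of age $t$ in layer $i$ contributes $\delta_{i,j}\mathds{1}_{t\le a}$; on $\{\tau^i=u\le t\}$, the branching property and averaging over the three division outcomes give an expected contribution $2p^{(i)}_SM^a_{i,j}(t-u)+2p^{(i)}_LM^a_{i+1,j}(t-u)$, and integrating against $d\mathcal{B}_i(u)$ reproduces exactly the convolution terms. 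I expect the $[0,1)^J$-to-$\mathbf{1}$ limiting argument to be the main (though routine) obstacle; everything else is bookkeeping.
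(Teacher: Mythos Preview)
Your proposal is correct and follows essentially the same route as the paper: differentiate the generating-function identity \eqref{FonctionGeneratriceEquation} in $s_j$, use the chain rule on $f^{(i)}$, and evaluate at $\mathbf{s}=\mathbf{1}$ (where $F^{(k,a)}[\mathbf{1};\cdot]\equiv 1$) to recover the coefficients $2p_S^{(i)}$ and $2p_L^{(i)}$. The paper's proof is slightly more formal and does not linger on the interchange-of-limit points you carefully flag; your alternative first-step renewal decomposition on $\tau^i$ is also valid but is not the argument the paper uses.
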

				The proofs of lemma \ref{FoncGenForme2} and \ref{MomentDOrdre1} are given in \ref{Supplemental_Stochastic}. 
			
				\begin{theorem}\label{MoyenneTempsLong_Interm}
					Under hypotheses \ref{Hypothesis_Probability}, \ref{Hypothesis_DivisionRate}, \ref{Malthus_parameter_def}, \ref{Hypothesis_MalthusPerLayerDistinct} and \ref{Hypothesis_lambdaj}, 
					\begin{equation}\label{M_mean_age}
					\forall i \in \llbracket 1, J \rrbracket, \quad \forall k \in \llbracket 0, J - i \rrbracket, \quad M^a_{i,i+k}(t) \sim \widetilde{M}_{i,i+k}(a) e^{\MalthusPerLayer_{i,i+k} t}, \quad t \rightarrow \infty
					\end{equation}
					where $\MalthusPerLayer_{i,i+k} = \underset{j \in \llbracket i, i+k \rrbracket}{\max} \MalthusPerLayer_j$, 
					\begin{equation}\label{M_tilde_ik_def_age_2}
					\widetilde{M}_{i,i}(a) = 
					\displaystyle	\frac{ \int_0^a (1 - \mathcal{B}_i (t))e^{-\MalthusPerLayer_{i}t}dt}{2p_S^{(i)} \int_0^\infty t d\mathcal{B}_i(t) e^{- \MalthusPerLayer_{i} t } dt } 
					\end{equation}
					and, for $k \in \llbracket 1, J-i \rrbracket $
					\begin{equation}\label{M_tilde_ik_def_age_1}
					\widetilde{M}_{i,i+k}(a) = 
					\left\{
					\begin{array}{ll}
						\displaystyle	\frac{2p_L^{(i)}d \mathcal{B}_i^*(\MalthusPerLayer_{i,i+k})}{1 -2p_S^{(i)}d \mathcal{B}_i^*(\MalthusPerLayer_{i,i+k}) }\widetilde{M}_{i+1,i+k}(a), & \text{ if } \MalthusPerLayer_{i,i+k} \neq  \MalthusPerLayer_i  \, (i) \\[0.5cm]
						\displaystyle	\frac{2p_L^{(i)}d \mathcal{B}_i^*(\MalthusPerLayer_{i})}{2p_S^{(i)} \int_0^\infty t d\mathcal{B}_i(t) e^{- \MalthusPerLayer_{i} t } dt } \int_{0}^{\infty}M^a_{i+1,i+k}(t)e^{-\MalthusPerLayer_{i}t}dt, & \!\! \text{ if } \MalthusPerLayer_{i,i+k} =  \MalthusPerLayer_i   (ii) .\\[0.5cm]
					\end{array}  
					\right. 
					\end{equation}
				\end{theorem}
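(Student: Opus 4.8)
The plan is to proceed by induction on the span $k$, exploiting that the renewal system \eqref{MomentDOrdre1Equ} is triangular: unidirectional motion forces $M^a_{i',j}\equiv 0$ whenever $j<i'$, so for $k\geq 1$ the equation for $M^a_{i,i+k}$ reduces to the scalar renewal equation
\begin{equation}\label{renewal_Mik_prop}
M^a_{i,i+k}(t) = 2p^{(i)}_{S}\,M^a_{i,i+k}\ast d\mathcal{B}_{i}(t) + g(t),\qquad g(t):=2p^{(i)}_{L}\,M^a_{i+1,i+k}\ast d\mathcal{B}_{i}(t),
\end{equation}
driven by a forcing $g$ that depends only on the layer-$(i+1)$, span-$(k-1)$ functions, for which the asymptotics is known by the induction hypothesis. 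All the $M^a$ are locally bounded (dominated by $\mathbb{E}[N_t|Z_0=\delta_{i,0}]<\infty$, see \eqref{SDE_N_tandZ_t}) and $g$ is continuous. The recurring structural fact is the normalization $2p^{(i)}_{S}d\mathcal{B}_i^*(\MalthusPerLayer_i)=1$ of Definition \ref{Malthus_parameter_per_layer_def}, which makes $2p^{(i)}_{S}e^{-\MalthusPerLayer_i s}d\mathcal{B}_i(s)\,ds$ a probability measure; Hypothesis \ref{Hypothesis_lambdaj} and Corollary \ref{Moment_Distribution_dB} ensure that $e^{-\MalthusPerLayer_i s}d\mathcal{B}_i(s)$ and $e^{-\MalthusPerLayer_i s}(1-\mathcal{B}_i(s))$ decay exponentially, hence are directly Riemann integrable with finite moments, and that $d\mathcal{B}_i^*(\lambda)<\infty$ for every $\lambda\geq\MalthusPerLayer_i$.

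The base case $k=0$ and the induction case $\MalthusPerLayer_{i,i+k}=\MalthusPerLayer_i$ (case (ii) of \eqref{M_tilde_ik_def_age_1}) are handled by the same device: rescaling $\widehat{M}(t):=e^{-\MalthusPerLayer_i t}M^a_{i,i+k}(t)$ turns the renewal equation into a \emph{proper} one, $\widehat{M}=z+\widehat{M}\ast F$, with probability kernel $F(ds)=2p^{(i)}_{S}e^{-\MalthusPerLayer_i s}d\mathcal{B}_i(s)\,ds$ of mean $2p^{(i)}_{S}\int_0^\infty s\,e^{-\MalthusPerLayer_i s}d\mathcal{B}_i(s)\,ds<\infty$. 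For $k=0$, $z(t)=e^{-\MalthusPerLayer_i t}(1-\mathcal{B}_i(t))\mathds{1}_{t\leq a}$ is bounded and directly Riemann integrable (continuous except possibly at $t=a$, and exponentially decaying when $a=+\infty$ by Hypothesis \ref{Hypothesis_lambdaj}); the key renewal theorem (\cite{harris_theory_1963}, Chap.\ VI) gives $\widehat{M}(t)\to\big(\int_0^\infty z\big)/\big(\int_0^\infty s\,F(ds)\big)=\widetilde{M}_{i,i}(a)$, which is \eqref{M_tilde_ik_def_age_2}. For the induction step in case (ii) one has $\MalthusPerLayer_i>\MalthusPerLayer_{i+1,i+k}$, so by the induction hypothesis $z(t)=e^{-\MalthusPerLayer_i t}g(t)$ decays exponentially (from $M^a_{i+1,i+k}(t)\sim\widetilde{M}_{i+1,i+k}(a)e^{\MalthusPerLayer_{i+1,i+k}t}$), hence is directly Riemann integrable; the key renewal theorem, combined with the Laplace identity $\int_0^\infty e^{-\MalthusPerLayer_i t}(f\ast d\mathcal{B}_i)(t)\,dt=\big(\int_0^\infty e^{-\MalthusPerLayer_i t}f(t)\,dt\big)\,d\mathcal{B}_i^*(\MalthusPerLayer_i)$, identifies the limit with \eqref{M_tilde_ik_def_age_1}(ii).

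The remaining induction case is $\MalthusPerLayer_i<\MalthusPerLayer_{i+1,i+k}$, so $\MalthusPerLayer_{i,i+k}=\MalthusPerLayer_{i+1,i+k}$ (case (i) of \eqref{M_tilde_ik_def_age_1}); the two cases are exhaustive because $\MalthusPerLayer_{i+1,i+k}$ coincides with some $\MalthusPerLayer_j$, $j\in\llbracket i+1,i+k\rrbracket$, which Hypothesis \ref{Hypothesis_MalthusPerLayerDistinct} forces to differ from $\MalthusPerLayer_i$. Here I would rescale $\widehat{M}(t):=e^{-\MalthusPerLayer_{i,i+k}t}M^a_{i,i+k}(t)$, obtaining $\widehat{M}=\widehat{g}+\widehat{M}\ast F_{i,i+k}$ with $F_{i,i+k}(ds)=2p^{(i)}_{S}e^{-\MalthusPerLayer_{i,i+k}s}d\mathcal{B}_i(s)\,ds$ now \emph{defective}, of mass $q:=2p^{(i)}_{S}d\mathcal{B}_i^*(\MalthusPerLayer_{i,i+k})<1$ since $d\mathcal{B}_i^*$ is strictly decreasing and $\MalthusPerLayer_{i,i+k}>\MalthusPerLayer_i$. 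Then $\widehat{M}=\sum_{n\geq 0}\widehat{g}\ast F_{i,i+k}^{*n}$ (the unique locally bounded solution, globally bounded because $\widehat{g}$ is), while the induction hypothesis gives that $\widehat{g}(t)=2p^{(i)}_{L}e^{-\MalthusPerLayer_{i,i+k}t}(M^a_{i+1,i+k}\ast d\mathcal{B}_i)(t)$ is bounded with $\widehat{g}(t)\to 2p^{(i)}_{L}\,\widetilde{M}_{i+1,i+k}(a)\,d\mathcal{B}_i^*(\MalthusPerLayer_{i,i+k})$ as $t\to\infty$; dominated convergence in the series then yields $\widehat{M}(t)\to\big(\lim_{t\to\infty}\widehat{g}(t)\big)/(1-q)=\widetilde{M}_{i,i+k}(a)$, which is \eqref{M_tilde_ik_def_age_1}(i). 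Positivity of each limit for $a>0$ follows from $p^{(i)}_S,p^{(i)}_L\in(0,1)$ and $d\mathcal{B}_i^*>0$, so that the conclusion is a genuine equivalence.

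I expect the main obstacle to be the auxiliary fact used throughout (and implicitly in the limit of $\widehat{g}$ above): if $u(t)\sim c\,e^{\lambda t}$ as $t\to\infty$ and $\mu\geq\lambda$, then $e^{-\mu t}(u\ast d\mathcal{B}_i)(t)=\int_0^t\big(e^{-\mu(t-s)}u(t-s)\big)\,e^{-\mu s}d\mathcal{B}_i(s)\,ds$ converges, to $c\,d\mathcal{B}_i^*(\mu)$ if $\mu=\lambda$ and to $0$ if $\mu>\lambda$; one proves it by cutting the integral at $s=t-T$, so that on $\{s\leq t-T\}$ the bracket is uniformly close to its limit for $T$ large while on $\{s>t-T\}$ the tail $\int_{t-T}^t e^{-\mu s}d\mathcal{B}_i(s)\,ds$ vanishes as $t\to\infty$ and the bracket stays bounded --- this is exactly where finiteness of $d\mathcal{B}_i^*(\mu)$ (Corollary \ref{Moment_Distribution_dB}) is needed. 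The remaining points --- direct Riemann integrability of the forcing terms when $a=+\infty$, i.e.\ exponential decay of $e^{-\MalthusPerLayer_i s}(1-\mathcal{B}_i(s))$, which rests on Hypothesis \ref{Hypothesis_lambdaj}, and the exhaustiveness of the case split via Hypothesis \ref{Hypothesis_MalthusPerLayerDistinct} --- are routine but load-bearing. The degenerate case $\MalthusPerLayer_i=\MalthusPerLayer_{i+1,i+k}$, excluded here, is precisely what would introduce polynomial-in-$t$ corrections, as in Theorem \ref{Exponential_decay}.
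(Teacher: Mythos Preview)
Your proposal is correct and follows essentially the same approach as the paper: induction on the span $k$, with the base case $k=0$ and the case $\MalthusPerLayer_{i,i+k}=\MalthusPerLayer_i$ handled by rescaling with $e^{-\MalthusPerLayer_i t}$ and applying the key renewal theorem (the paper's Lemma \ref{HarrisLemma_2}), and the case $\MalthusPerLayer_{i,i+k}\neq\MalthusPerLayer_i$ handled by rescaling with $e^{-\MalthusPerLayer_{i,i+k}t}$ to obtain a defective kernel of mass $2p_S^{(i)}d\mathcal{B}_i^*(\MalthusPerLayer_{i,i+k})<1$ and invoking the defective renewal result (the paper's Lemma \ref{HarrisLemma_4}). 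Your auxiliary fact about $e^{-\mu t}(u\ast d\mathcal{B}_i)(t)$ is exactly what the paper verifies inline via dominated convergence and a Fubini/$L^1$ bound, and your unification of $k=0$ with case (ii) is a cosmetic improvement over the paper's separate treatment.
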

				\begin{proof}
					Let the mother cell index $i \in \llbracket 1, J \rrbracket $.
					As no daughter cell can move upstream to its mother layer,  the mean number of cells on layer $j < i $ is null (for all $t \geq 0$ and for  $j <  i $, $M^a_{i,j}(t) = 0 $).
					 We consider the layers downstream the mother one ($j \geq i$) and proceed by recurrence: 
					\begin{equation*}
					\mathcal{H}^{k} : \quad \forall i \in \llbracket 1, J-k \rrbracket, \begin{array}{l}
					M^a_{i,i+k}(t) \sim \widetilde{M}_{i,i+k}(a) e^{\MalthusPerLayer_{i,i+k} t}, \text{  as  } t\rightarrow \infty \, .
					\end{array}
					\end{equation*}
					We first deal with $\mathcal{H}^{0} $. We consider the solution of \eqref{MomentDOrdre1Equ} for $j = i$:
					\begin{equation}
					\forall t \in \mathbb{R}_+, \quad	M^a_{i,i}(t) = \left (1- \mathcal{B}_{i}(t) \right )\mathds{1}_{t \leq a}   + 2p^{(i)}_{S} M^a_{i,i}\ast d\mathcal{B}_{i}(t) \, .
					\end{equation}
					We recognize a renewal equation as presented in \cite{harris_theory_1963}(p.161, eq.$(1)$) for $M_{i,i} $, which is similar to a single type age-dependent process. The main results on renewal equations are recalled in \ref{Supplemental_Branching}. Here, the mean number of children is $m= 2p^{(i)}_{S} > 0$ and the life time distribution is $\mathcal{B}_{i}$. 
					From hypothesis \ref{Hypothesis_DivisionRate}, we have
					\begin{equation*}
					\int_{0}^{\infty} \left ( 1- \mathcal{B}_{i}(t) \right )\mathds{1}_{t \leq a}    e^{-\MalthusPerLayer_i t}dt \leq \frac{1}{\bar{b}_i} \int_{0}^{\infty} \mathds{1}_{t \leq a}    d\mathcal{B}_{i}(t) e^{-\MalthusPerLayer_i t}dt  \leq    \frac{1}{\bar{b}_i} \int_{0}^{\infty}d\mathcal{B}_{i}(t) e^{-\MalthusPerLayer_i t}dt < \infty
					\end{equation*} 
					according to hypothesis \ref{Malthus_parameter_per_layer_def}. Thus, $ t\mapsto \mathds{1}_{t \leq a}  \left ( 1- \mathcal{B}_{i}(t) \right )e^{-\MalthusPerLayer_i t}$ is in $ \mathbf{L}^1(\mathbb{R}_+) $. Using hypotheses \ref{Malthus_parameter_per_layer_def} and \ref{Hypothesis_lambdaj}, we apply corollary \ref{Moment_Distribution_dB} and lemma \ref{HarrisLemma_2} (see lemma $2$ of \cite{harris_theory_1963},p.161) and obtain:
					\begin{equation*}
					\begin{array}{l}
					M^a_{i,i}(t) \sim \widetilde{M}_{i,i}(a) e^{\MalthusPerLayer_i t}, \text{  as  } t\rightarrow \infty,  \text{  where  } 
					\displaystyle \widetilde{M}_{i,i}(a) = \frac{  \int_{0}^{a}(1 - \mathcal{B}_{i}(t))  e^{-\MalthusPerLayer_i t }dt }{2p^{(i)}_{S}\int_0^\infty t d\mathcal{B}_i(t) e^{- \MalthusPerLayer_i t } dt } \, .
					\end{array}
					\end{equation*}
					Hence, $ \mathcal{H}^{0} $ is verified. We then suppose that $ \mathcal{H}^{k-1} $ is true for a given rank $k-1 \geq 0$ and consider the next rank $k$. According to \eqref{MomentDOrdre1Equ}, $M^a_{i, i+k} $ is a solution of the equation:
					\begin{equation}\label{M_ikplusone}
					M^a_{i,i+k}(t) = 2p^{(i)}_{S} M^a_{i,i+k}\ast d\mathcal{B}_{i}(t) + 2p^{(i)}_{L} M^a_{i+1,i+k} \ast d\mathcal{B}_{i}(t)  \, .
					\end{equation}
					We distinguish two cases :  $\MalthusPerLayer_{i,i+k} \neq  \MalthusPerLayer_i$ and $\MalthusPerLayer_{i,i+k} =  \MalthusPerLayer_i$. We first consider $\MalthusPerLayer_{i,i+k} =  \MalthusPerLayer_i$ and show that $ f(t) = M^a_{i+1,i+k} \ast d\mathcal{B}_{i}(t)e^{- \MalthusPerLayer_{i}  t} $ belongs to $\mathbf{L}^1(\mathbb{R}_+)$. Let $R > 0$. Using Fubini theorem, we deduce that:
					\begin{equation*}
					\int_{0}^{R}  f(t) dt
					=\int_{0}^{R}  \left [ \int_{u}^{R}e^{- \MalthusPerLayer_{i}(t-u)} M^a_{i+1,i+k} (t-u)dt \right ] e^{- \MalthusPerLayer_{i} u}d\mathcal{B}_{i}(u)du \, .
					\end{equation*}
					Applying a change of variable and using that $M^a_{i+1,i+k}(t) \geq 0 $ for all $t \geq 0$, we have:
					\begin{equation*}
					\int_{u}^{R}e^{- \MalthusPerLayer_{i} (t-u)} M^a_{i+1,i+k} (t-u)dt
					\leq  \int_{0}^{R}e^{-\MalthusPerLayer_{i}t} M^a_{i+1,i+k} (t)dt \, .
					\end{equation*}
					According to $\mathcal{H}^{k}$, we know that $ M^a_{i+1,i+k}(t) \sim \widetilde{M}_{i+1,i+k}(a) e^{\MalthusPerLayer_{i+1,i+k} t}$ as $t \rightarrow \infty$. Then,
					\begin{multline*}
					\int_{0}^{R}e^{- \MalthusPerLayer_{i}  t} M^a_{i+1,i+k} (t)dt = \int_{0}^{R}e^{- \MalthusPerLayer_{i+1,i+k}  t} M^a_{i+1,i+k} (t) e^{- (\MalthusPerLayer_{i} - \MalthusPerLayer_{i+1,i+k} ) t}dt \\
					\leq K \int_{0}^{R} e^{- (\MalthusPerLayer_{i} - \MalthusPerLayer_{i+1,i+k} ) t}dt \quad < \infty
					\end{multline*}
					when $R \rightarrow \infty$, as $ \MalthusPerLayer_i = \MalthusPerLayer_{i,i+k} > \MalthusPerLayer_{i+1,i+k}$. 
					Moreover, $\int_{0}^{R} e^{- \MalthusPerLayer_{i}  u}d\mathcal{B}_{i}(u)du \leq d\mathcal{B}_{i}^*(\MalthusPerLayer_i ) < \infty$ according to hypothesis \ref{Malthus_parameter_def}. Finally, we obtain an estimate for $ \int_{0}^{R}f(t) dt$ that does not depend on $R$. So, $f$  is integrable. We can apply lemma \ref{HarrisLemma_2} and deduce $M^a_{i,i+k}(t) \sim \widetilde{M}_{i,i+k}(a) e^{\MalthusPerLayer_{i,i+k} t}, \text{  as  } t\rightarrow \infty$, with $\widetilde{M}_{i,i+k}(a)$ given in \eqref{M_tilde_ik_def_age_1}(ii).\\
					We now consider the case $\MalthusPerLayer_{i,i+k} \neq  \MalthusPerLayer_i$ and introduce the following notations :
					\begin{equation*}
					\widehat{M}^a_{i ,i+k}(t) = M_{i,i+k}^a(t)e^{-\MalthusPerLayer_{ i ,i+k}t}, \quad \widehat{d\mathcal{B}_{i}}(t) = \frac{d\mathcal{B}_{i}(t)}{d\mathcal{B}^*_{i}(\MalthusPerLayer_{i,i+k})}e^{-\MalthusPerLayer_{i,i+k}t} \, .
					\end{equation*}
					In this case, $ \MalthusPerLayer_{i,i+k} > \MalthusPerLayer_i$, so that $ 2p^{(i)}_{S}d\mathcal{B}_{i}^{*}(\MalthusPerLayer_{i,i+k}) <  2p^{(i)}_{S}d\mathcal{B}_{i}^{*}(\MalthusPerLayer_i) = 1   $. We want to apply lemma \ref{HarrisLemma_4} (see lemma $4$ of \cite{harris_theory_1963}, p.163).
					We rescale \eqref{M_ikplusone} by $e^{-\MalthusPerLayer_{i,i+k}t}$ and obtain the following renewal equation for $ \widehat{M}^a_{i,i+1}$:
					\begin{equation*}
					\widehat{M}^a_{i,i+k}(t) =  2p^{(i)}_{S} d\mathcal{B}^*_{i}(\MalthusPerLayer_{i,i+k})\widehat{M}^a_{i,i+k}\ast \widehat{d\mathcal{B}_{i}}(t) + 2p^{(i)}_{L}  M^a_{i+1,i+k} \ast d\mathcal{B}_{i}(t) e^{-\MalthusPerLayer_{i,i+k}t} \, .
					\end{equation*}
					We compute the limit of $f(t)= M^a_{i+1,i+k} \ast d\mathcal{B}_{i}(t) e^{-\MalthusPerLayer_{i,i+k}t}$:
					\begin{equation*}
					f(t)  =  \int_{0}^{\infty} \mathds{1}_{[0,t]}(u) M^a_{i+1, i+k}(t-u)e^{-\MalthusPerLayer_{i,i+k}(t-u)}  e^{-\MalthusPerLayer_{i,i+k}u}  d\mathcal{B}_{i}(u)du  \, .
					\end{equation*}
					According to $\mathcal{H}^{k-1}$, $M^a_{i+1, i+k}(t) \sim e^{-\MalthusPerLayer_{i+1,i+k}t}\widetilde{M}_{i+1, i+k}(a)$. As $\MalthusPerLayer_{i,i+k} \neq \MalthusPerLayer_{i} $, we have $\MalthusPerLayer_{i,i+k} = \MalthusPerLayer_{i+1,i+k} $. Hence, $M^a_{i+1, i+k}(t)e^{-\MalthusPerLayer_{i,i+k}t} $ is dominated by a constant $K$ such that $  \int_{0}^{\infty} K e^{-\MalthusPerLayer_{i,i+k}u}d\mathcal{B}_{i}(u)du < \infty$. We apply the Lebesgue dominated convergence theorem and obtain $	\lim\limits_{t \rightarrow \infty}f(t) = \widetilde{M}_{i+1, i+k}(a)d\mathcal{B}_{i}^{*}(\MalthusPerLayer_{i,i+k})$.
					Applying lemma \ref{HarrisLemma_4}, we obtain that:
					\begin{equation*}
					\lim\limits_{t \rightarrow \infty} \widehat{M}^a_{i,i+k}(t) =  \frac{2p^{(i)}_{L}\widetilde{M}_{i+1, i+k}(a)d\mathcal{B}_{i}^{*}(\MalthusPerLayer_{i,i+k}) }{1 - 2p^{(i)}_{S} d\mathcal{B}^*_{i}(\MalthusPerLayer_{i,i+k})} = \widetilde{M}_{i,i+k}(a),
					\end{equation*}
					and the recurrence is proved.
				\end{proof}
				We have now all the elements to prove theorem \ref{MoyenneTempsLong}.
				
					\begin{proof}[Proof of theorem \ref{MoyenneTempsLong}]
						According to theorem \ref{MoyenneTempsLong_Interm}, we have: 
						\begin{equation}\label{MoyenneTempsLong_eq1}
						\forall j \in \llbracket 1, J \rrbracket, \quad m^a_j(t) \sim  \widetilde{M}_{1,j}(a) e^{ \MalthusPerLayer_{1,j}  t}, \quad \text{as } t \rightarrow \infty.
						\end{equation}
						When $j < \MalthusIndex$, we deduce directly from \eqref{MoyenneTempsLong_eq1} that $\widetilde{m}_j(a)= 0 $. We then consider the leading layer $j = \MalthusIndex$. For $k \in \llbracket 1, c-1 \rrbracket$, $\MalthusPerLayer_{k, \MalthusIndex} \neq \MalthusPerLayer_{k} $ so, $\widetilde{M}_{k,\MalthusIndex}(a)$ is related to $\widetilde{M}_{k+1,\MalthusIndex}(a)$ by \eqref{M_tilde_ik_def_age_1}(i).  Thus, we obtain:
						\begin{equation}\label{m_tilde_j_e1}
						\widetilde{m}_{\MalthusIndex}(a) = \prod_{m = 1 }^{\MalthusIndex - 1} \frac{2p_L^{(m)}d \mathcal{B}_m^*(\Malthus)}{1-2p_S^{(m)}(d \mathcal{B}_m^*)(\Malthus) } \widetilde{M}_{\MalthusIndex,\MalthusIndex}(a) \, .
						\end{equation}
						$\widetilde{M}_{\MalthusIndex,\MalthusIndex}(a)$ is given by \eqref{M_tilde_ik_def_age_2} and we deduce $\widetilde{m}_{\MalthusIndex}(a)$. We turn to the layers $j >\MalthusIndex$. For $k \in \llbracket 1, c-1 \rrbracket$, we have $\Malthus = \MalthusPerLayer_{k,j} \neq \MalthusPerLayer_{k} $. We obtain from \eqref{M_tilde_ik_def_age_1}(i)
					
						\begin{equation}\label{M_k_E1}
						\widetilde{m}_{j}(a) = \prod_{m = 1 }^{ \MalthusIndex
							 - 1} \frac{2p_L^{(m)}d \mathcal{B}_m^*(\Malthus)}{1-2p_S^{(m)}(d \mathcal{B}_m^*)(\Malthus) } 	\widetilde{M}_{\MalthusIndex,j}(a).
						\end{equation}
						Then, as $\Malthus = \MalthusPerLayer_{\MalthusIndex, j} $, we use \eqref{M_tilde_ik_def_age_1}(ii) and obtain:
						\begin{equation}\label{M_k_E2}
						\widetilde{M}_{\MalthusIndex,j }(a) = \frac{2p_L^{(\MalthusIndex)}d \mathcal{B}_\MalthusIndex^*(\Malthus)}{2p_S^{(\MalthusIndex)}\int_{0}^{\infty} t e^{- \Malthus t} d \mathcal{B}_{\MalthusIndex}(t)dt } \int_{0}^{\infty}M^a_{\MalthusIndex+1,j}(t)e^{-\Malthus t}dt.
						\end{equation}
						Then,  we apply the Laplace transform to \eqref{MomentDOrdre1Equ} for $\alpha = \Malthus$. Theorem \ref{MoyenneTempsLong_Interm} and the fact that $ \MalthusPerLayer_\MalthusIndex =  \MalthusPerLayer_{\MalthusIndex,j} $ guarantee that we can apply the Laplace transform to \eqref{MomentDOrdre1Equ} (see details in \ref{AdditDetails}).  We obtain:
						
						\begin{equation}\label{M_k_E3}
						\int_{0}^{\infty}M^a_{\MalthusIndex+1,j}(t)e^{-\Malthus t}dt =  \prod_{k=\MalthusIndex + 1}^{j-1} \frac{2p^{(k)}_{L} d\mathcal{B}_{k}^*(\Malthus)}{1 - 2p^{(k)}_{S}  d\mathcal{B}_{k}^*(\Malthus)} \times \frac{ \int_{0}^{a} \eigenfunction^{(j)}(s)ds }{ (1 - 2p^{(j)}_{S}  d\mathcal{B}_{j}^*(\Malthus)) \times \eigenfunction^{(j)}(0)}.
						\end{equation}
						Combining \eqref{M_k_E1}, \eqref{M_k_E2} and \eqref{M_k_E3} and the value of $\eigenfunction^{(j)}(0)$ given in \eqref{Eigenproblem_rho0_value}, we obtain $\widetilde{m}_j(a)$.
					\end{proof}
				
					We also study the asymptotic behavior of the second moment in \ref{Supplemental_Branching} (see theorem \ref{MomentDOrdre2Long}).
					\begin{remarks}
							These results can be extended in a case when the mother cell is not necessary of age $0$ (for the one layer case, see \cite{harris_theory_1963}, p.153). 
					\end{remarks}
					\begin{remarks}
					Using the same procedure as in theorem \ref{MoyenneTempsLong_Interm}, we can obtain a better estimate for the convergence of the deterministic solution $\rho$ than that in theorem \ref{Exponential_decay}. Indeed, we can consider the study of $h(t,x) = e^{-\MalthusPerLayer_{1,j} t}\densityfunction(t,x) - \eta\hat{\rho}_{1,j}(x)$ where $\hat{\rho}_{1,j} $ is the eigenvector of the sub-system composed of the $j$-th first layer, and find the proper function $\dualfunction_{1,j} $.
					\end{remarks}
			\subsection{ Numerical illustration}
				We perform a numerical illustration with age independent division rates (which satisfy hypothesis \ref{Hypothesis_DivisionRate}).
				Figure \ref{FormulasIllustration} illustrates the exponential growth of the number of cells, either for the original solution of the model \eqref{Z_t_Equation} (left panel) or the renormalized solution (right panel), checking the results given in theorems \ref{MoyenneTempsLong} and \ref{MomentDOrdre2Long}.
				Figure \ref{IllustrationMalthus} instantiates the effect of the parameters $b_1$ and $p_S^{(1)}$ on the leading layer (left panel) and the asymptotic proportion of cells (right panel). Note that the layer with the highest number of cells is not necessary the leading one. 
				As can be seen in Figure \ref{fig:AgeProfil}, the renormalized solutions of the SDE (\ref{Z_t_Equation}) and PDE \eqref{EDP_equation} match the stable age distribution $\eigenfunction$ (see theorems \ref{Eigenproblem} and \ref{MoyenneTempsLong}). Asymptotically, the age distribution decreases with age, which corresponds to a proliferating pool of young cells, and is consistent with the fact that $\eigenfunction^{(j)}$ is proportional to $e^{- \Malthus a} \mathbb{P}[\tau^{(j)} > a] $. The convergence speeds differ between layers (here, the leading layer is the first one and the stable state of each layer is reached sequentially), corroborating the inequality given in theorem \ref{Exponential_decay}. 
					 
				\begin{figure}[h]
					\centering
					\subfloat[Exponential growth and asymptotic behavior]{\includegraphics[height=4.cm]{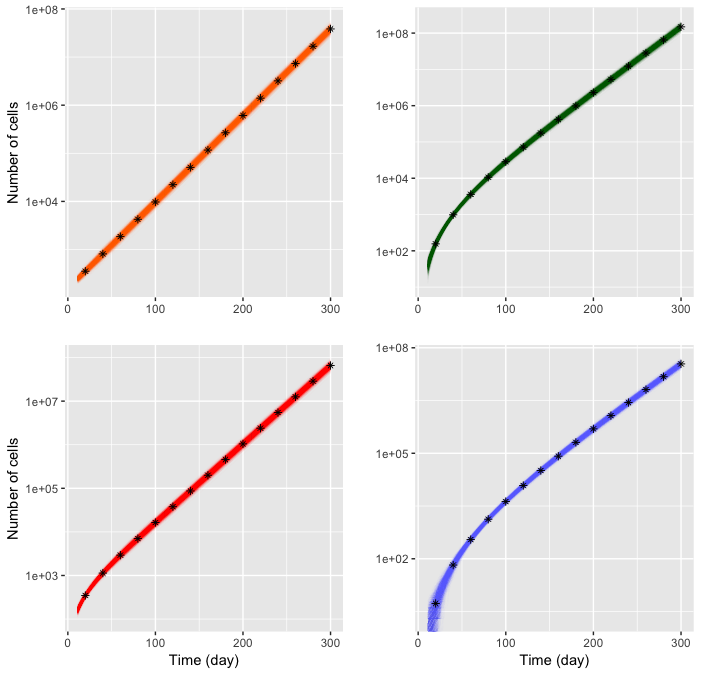}
						\includegraphics[height=4.cm]{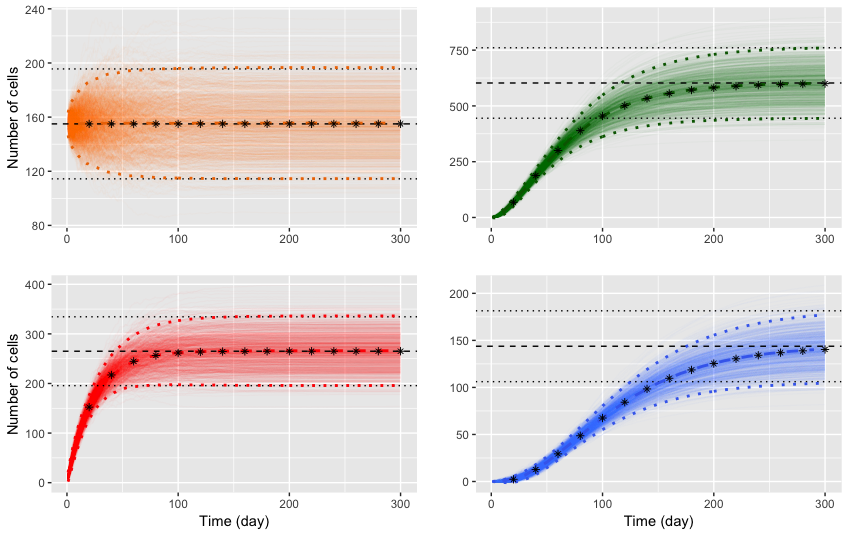} \label{FormulasIllustration}} \\
					\subfloat[Leading layer index and asymptotic proportion of cells ]{\includegraphics[height=2.5cm]{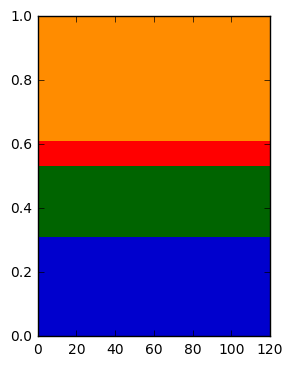}
						\includegraphics[height=2.5cm]{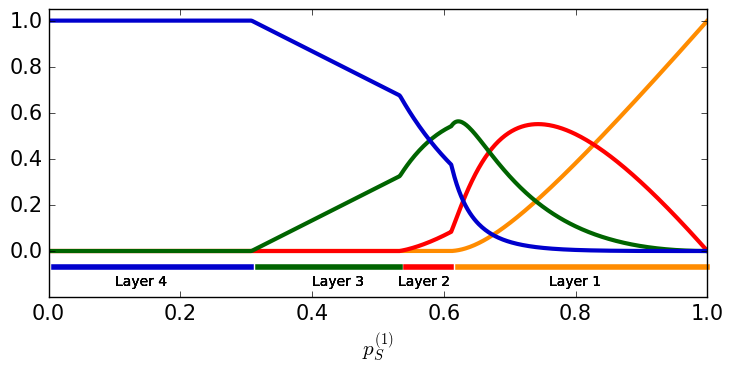}\label{IllustrationMalthus}}
					
					\caption{\textbf{Exponential growth and asymptotic moments}.  \textbf{Figure \ref{FormulasIllustration}}: Outputs of 1000 simulations of the SDE (\ref{Z_t_Equation}) according to the algorithm \ref{alg:SimulationStochasticProcess} with $p_S^{(j)}$, $b_j$ given in Figure \ref{Table_ParametersValues}, $p_{1,1}^{(j)} = 0$ and $Z_0 = 155 \delta_{1,0}$. \textbf{Left panel:} the solid color lines correspond to the outputs of the stochastic simulations while the black stars correspond to the numerical solutions of the ODE \eqref{ODE_equ} with the initial number of cells on the first layer $N = 155$ (orange: Layer 1, red: Layer 2, green: Layer 3, blue: Layer 4). \textbf{Right panel:} the color solid lines correspond to the renormalization of the outputs of the stochastic simulations by $e^{- \Malthus t}$. The black stars are the numerical solutions of the ODE \eqref{ODE_equ}. The color and black dashed lines correspond to the empirical means of the simulations and the analytical asymptotic means ($155\widetilde{m}_{j}(\infty)$, theorem \ref{MoyenneTempsLong}), respectively. The color and black dotted lines represent the empirical and analytical asymptotic $95 \% $ confidence intervals ($1.96\sqrt{v_j(\infty)}$, corollary \ref{Corollary_Vari}), respectively. \textbf{Figure \ref{IllustrationMalthus}}: Leading layer index as a function of $b_1$ and $p_S^{(1)}$ (left panel) and proportion of cells per layer in asymptotic regime with respect to $p_S^{(1)}$ (right panel). In both panels, $b$ satisfies (\ref{Rec_b}) and $p_S^{(j)} = -15*p_L^{(1)} * (j-1)^{2} - 110*p_L^{(1)} * (j-1)  + p_S^{(1)}$.}
					\label{fig:NumericalIllustration_MP}
				\end{figure}

						\begin{figure}[h]
							\centering
							\includegraphics[height=4.cm]{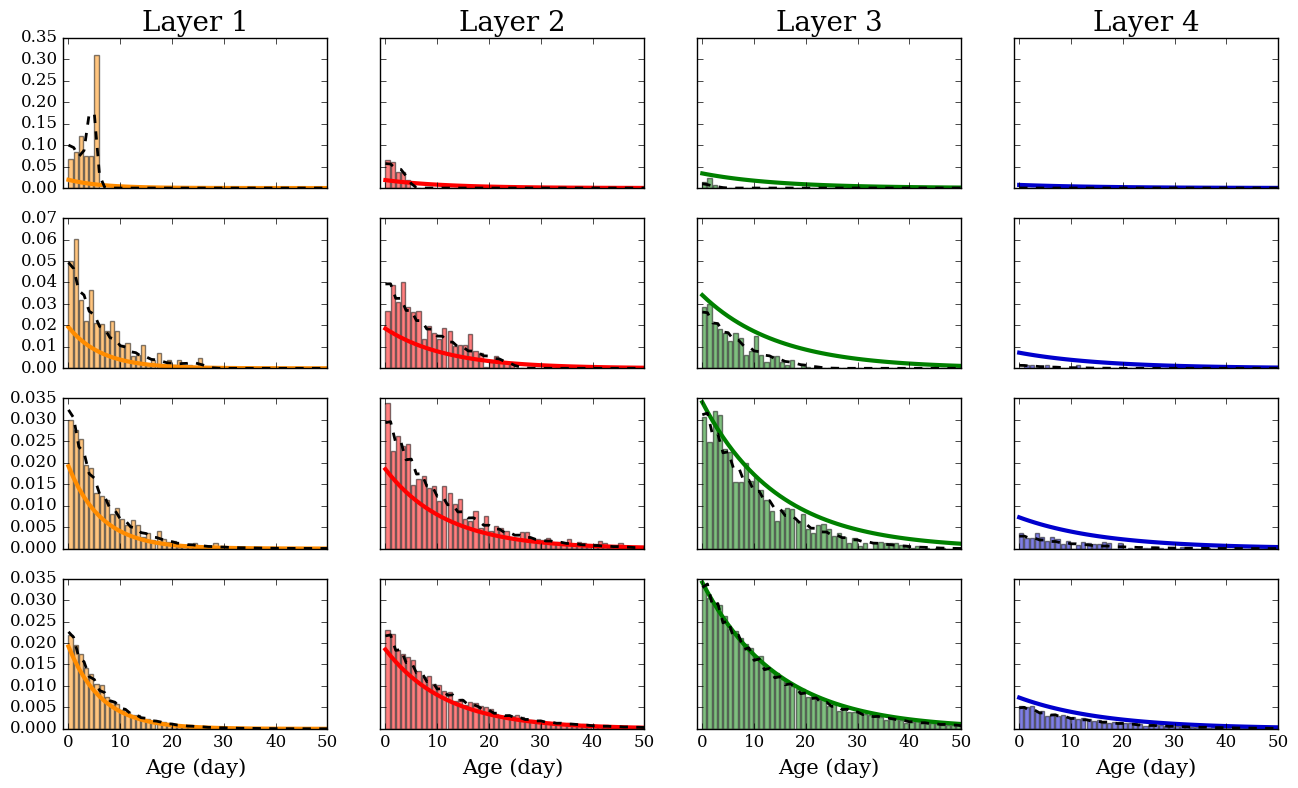}
							\caption{ \textbf{Stable age distribution per layer.} Age distribution at different times of one simulation of the SDE (\ref{Z_t_Equation}) and of the PDE \eqref{EDP_equation} using the algorithms described in respectively \ref{alg:SimulationStochasticProcess} and \ref{SimulationProtocole_PDE}. We use the same parameters as in Figure \ref{fig:NumericalIllustration_MP}. From top to bottom: t = 5, 25, 50 and 100 days. The color bars represent the normalized stochastic distributions. The black dashed lines correspond to the normalized PDE distributions, the color solid lines to the stable age distributions $\eigenfunction^{(j)} $, $j \in \llbracket 1, 4 \rrbracket $. The details of the normalization of each lines are provided in \ref{Supp_ProtocolForFigure4}.}
							\label{fig:AgeProfil}
						\end{figure}

\section{Parameter calibration}
Throughout this part, we will work under hypotheses \ref{Hypothesis_Probability}, \ref{Hypothesis_ageIndep} and \ref{Hypothesis_Start}. As a consequence, the intrinsic growth rate per layer can be computed easily: 
	\begin{equation}\label{Malthus_MP}
		\MalthusPerLayer_j = (2p_S^{(j)} - 1)b_j \in ]-b_j, b_j[, \text{when } j < J \, .
	\end{equation}

	\subsection{Structural identifiability}
		We prove here the structural identifiability of our system following \cite{perasso_identifiability_2016}.  We start by a technical lemma. 
		\begin{lemma} \label{Ident_Lemme_Ker}
		Let $M $ be the solution of \eqref{ODE_equ}. For any linear application $U: \mathbb{R}^J \rightarrow \mathbb{R}^J$, we have $	\left [ \forall t,	M(t) \in \ker(U)  \right ] \Rightarrow   \left[ U  = 0 \right]  $.
	\end{lemma}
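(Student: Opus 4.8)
The plan is to reduce the statement to a cyclicity (controllability) property of the pair $(A,M(0))$ by passing to the explicit solution of the linear system. Under Hypothesis~\ref{Hypothesis_ageIndep} the matrix $A$ in~\eqref{ODE_equ} has constant coefficients, so the unique solution of~\eqref{ODE_equ} is $M(t)=e^{tA}M(0)$ with $M(0)=Ne_1$, where $e_1:=(1,0,\dots,0)^{T}$ is the first canonical basis vector of $\mathbb{R}^{J}$ and $N\neq 0$ by Hypothesis~\ref{Hypothesis_Start}. The assumption that $M(t)\in\ker(U)$ for every $t$ then reads $Ue^{tA}e_1=0$ for all $t\ge 0$. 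Writing $e^{tA}e_1=\sum_{k\ge 0}\frac{t^{k}}{k!}A^{k}e_1$, this is an entire $\mathbb{R}^{J}$-valued function of $t$ vanishing identically; differentiating $k$ times at $t=0$ (equivalently, matching power-series coefficients) yields $UA^{k}e_1=0$ for every $k\in\mathbb{N}$.

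It then suffices to show that the Krylov subspace $\mathcal{K}:=\operatorname{span}\{A^{k}e_1:k\ge 0\}$ equals $\mathbb{R}^{J}$: indeed, $U$ would then vanish on a spanning family and hence be the zero map, which is the claim. This is the only step using the specific structure of the problem. The matrix $A$ is bidiagonal, with diagonal entries $(2p_S^{(j)}-1)b_j$ and off-diagonal entries $2p_L^{(j-1)}b_{j-1}$, $j\in\llbracket 2,J\rrbracket$, the latter being strictly positive by Hypotheses~\ref{Hypothesis_Probability} ($p_L^{(j-1)}>0$) and~\ref{Hypothesis_DivisionRate} ($b_{j-1}>0$). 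Since the initial datum $e_1$ is concentrated on the first layer, an elementary induction on $k$ shows that, for $0\le k\le J-1$, the vector $A^{k}e_1$ belongs to $\operatorname{span}\{e_1,\dots,e_{k+1}\}$ and has component $\prod_{\ell=1}^{k}2p_L^{(\ell)}b_\ell\neq 0$ along $e_{k+1}$. Hence the coordinate representations of $e_1,Ae_1,\dots,A^{J-1}e_1$ form a triangular family with non-vanishing pivots, so these vectors are linearly independent and span $\mathbb{R}^{J}$; a fortiori $\mathcal{K}=\mathbb{R}^{J}$, and therefore $U=0$.

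I expect the only mildly delicate point to be the cyclicity claim $\mathcal{K}=\mathbb{R}^{J}$, but it is genuinely elementary here: the unidirectional, bidiagonal structure of $A$ makes each application of $A$ reach exactly one further downstream layer, and the relevant off-diagonal coefficients never vanish, so the Krylov iterates of $e_1$ exhaust $\mathbb{R}^{J}$ within $J$ steps. Everything else (analyticity of $t\mapsto e^{tA}e_1$ and linearity of $U$) is formal.
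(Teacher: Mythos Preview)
Your proof is correct and follows essentially the same route as the paper's. Both arguments boil down to showing that the vectors $M(0),\,M'(0),\,\dots,\,M^{(J-1)}(0)$---equivalently $e_1,\,Ae_1,\,\dots,\,A^{J-1}e_1$---span $\mathbb{R}^J$, using the lower-bidiagonal structure of $A$ with non-vanishing subdiagonal entries $2p_L^{(\ell)}b_\ell$. You phrase this via the exponential solution and the Krylov/controllability language, while the paper works \emph{ad absurdum} with a non-zero row vector $u$ of $U$, differentiates the relation $u^{T}M(t)=0$ successively, and evaluates at $t=0$ to kill the coordinates $u_1,u_2,\dots$ one at a time; the two are the same computation dressed differently.
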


	\begin{proof}
		\textit{Ad absurdum}, if $U \neq 0$ and $M(t) \in \ker(U)$, for all $t$, then there exists a non-zero vector $u := (u_1, ...,u_J)$ such that for all $t$, $ u^TM(t) = 0 $. This last relation, evaluated at $t= 0$ and thanks to the initial condition of \eqref{ODE_equ}, implies $u_1  = 0$. Then, derivating $M$, solution of \eqref{ODE_equ}, we obtain:
 		\begin{align*}
			\frac{d}{dt}\sum_{j = 2}^{J}u_jM^{(j)}(t) = 0 & \Rightarrow & 
			\sum_{j = 2}^{J}u_j [(b_{j-1} - \MalthusPerLayer_{j-1} )M^{(j-1)}(t)+ \MalthusPerLayer_j M^{(j)}(t)] = 0 \, .
		\end{align*}
		Again, at $t = 0$, we obtain $u_2(b_1 - \MalthusPerLayer_1 ) = 0$. Because $ \MalthusPerLayer_1 \neq b_1$, $u_2 = 0$. Iteratively, 
		\begin{align*}
		\forall j \in \llbracket 2,J \rrbracket, \quad	u_j \prod_{k = 1}^{j-1}(b_{k-1} - \lambda_{k-1} )= 0 \quad \Rightarrow u_j = 0 \, .
		\end{align*}
		We obtain a contradiction. 
	\end{proof}
	We can now prove theorem \ref{Iden_Them_All_Layer}. 
	\begin{proof}[Proof of theorem \ref{Iden_Them_All_Layer}]
		According to \cite{perasso_identifiability_2016}, the system \eqref{ODE_equ} is $\mathbf{P}$-identifiable if, for two sets of parameters $\mathbf{P}$ and  $\widetilde{\mathbf{P}}$, $M(t;\mathbf{P}) = M(t;\widetilde{\mathbf{P}})$ implies that $\mathbf{P} = \widetilde{\mathbf{P}}$. 
		\begin{align*}
			\forall t \geq 0 ,	M(t;\mathbf{P}) = M(t;\widetilde{\mathbf{P}}) & \Rightarrow & \frac{d}{dt}	M(t;\mathbf{P}) =  \frac{d}{dt}M(t;\widetilde{\mathbf{P}}) \\ 
			& \Rightarrow & A_\mathbf{P}M(t;\mathbf{P}) =  A_{\widetilde{\mathbf{P}}}M(t;\widetilde{\mathbf{P}}) = A_{\widetilde{\mathbf{P}}}M(t;\mathbf{P})  \\
			& \Rightarrow & (A_\mathbf{P}- A_{\widetilde{\mathbf{P}}})M(t;\mathbf{P}) = 0 
		\end{align*}
		So, $M(t;\mathbf{P}) \in \ker(A_\mathbf{P}- A_{\widetilde{\mathbf{P}}})$ and, from lemma \ref{Ident_Lemme_Ker}, we deduce that $A_\mathbf{P} = A_{\widetilde{\mathbf{P}}} $. Thus,
		\begin{equation*}
			 \left\{ \begin{array}{ll}
			(2p_S^{(j)} - 1)b_j= (2\widetilde{p}_S^{(j)} - 1)\widetilde{b}_j,& \forall j \in\llbracket 1, J \rrbracket, \\
			2p_L^{(j)}b_j = 2\widetilde{p}_L^{(j)}\widetilde{b}_j, &\forall j \in\llbracket 1, J-1 \rrbracket.
			\end{array}\right. 
		\end{equation*}
		Using that $ p_L^{(j)} = 1 - p_S^{(j)}$ and hypothesis \ref{Hypothesis_Probability}, we deduce $\mathbf{P} = \widetilde{\mathbf{P}} $.
	\end{proof}

	\subsection{Biological application} We now consider the application to the development of ovarian follicles.

	\subsubsection{Biological background}
		    The ovarian follicles are the basic anatomical and functional units of the ovaries. Structurally, an ovarian follicle is composed of a germ cell, named oocyte, surrounded by somatic cells (see Figure \ref{fig:follicle}).  In the first stages of their development, ovarian follicles grow in a compact way, due to the proliferation of somatic cells and their organization into successive concentric layers starting from one layer at growth initiation up to four layers.
		    \begin{figure}[h]
		    	\centering
		    	\includegraphics[height=2.cm]{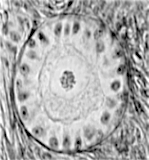}
		    	\includegraphics[height=2cm]{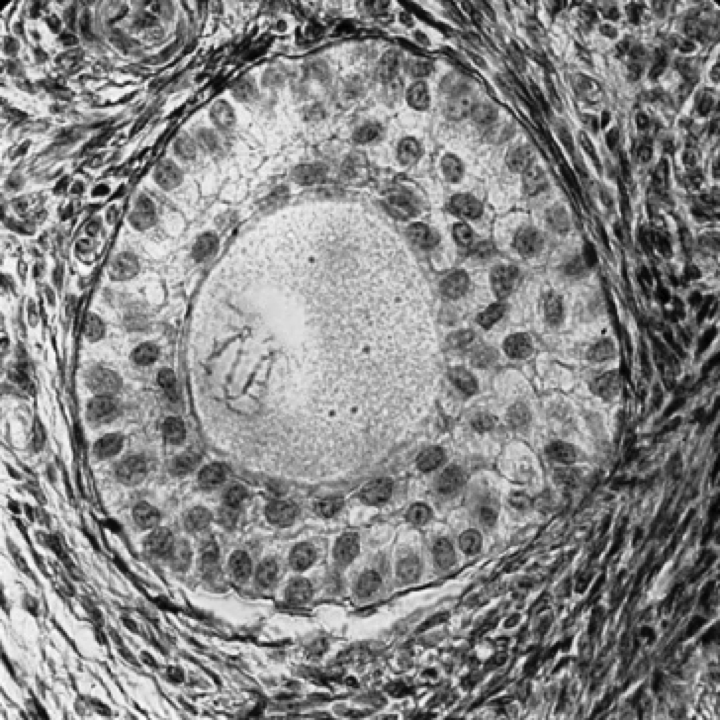}
		    	\includegraphics[height=2cm]{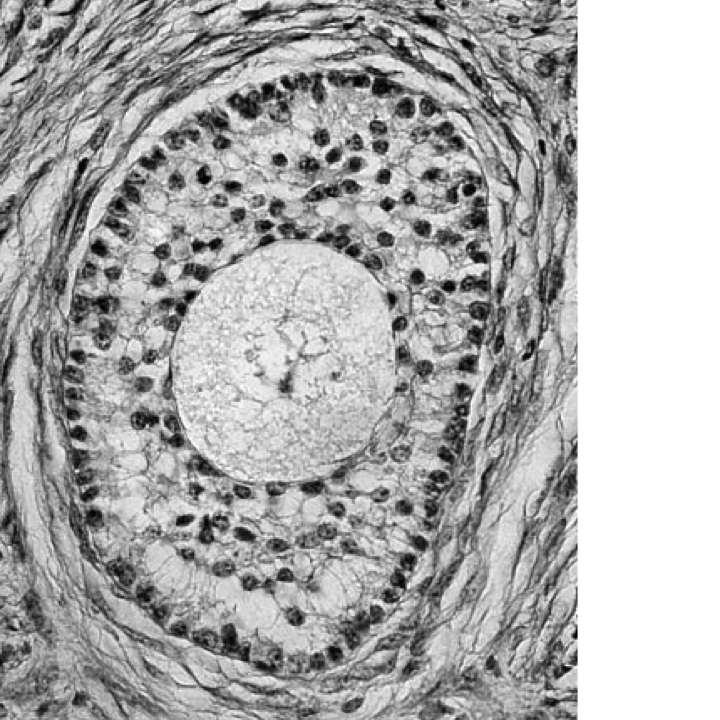}
		    	\caption{\textbf{Histological sections of ovarian follicles in the compact growth phase.} Left panel: one-layer follicle, center panel: three-layer follicle, right panel: four-layer follicle. Courtesy of Danielle Monniaux.}
		    	\label{fig:follicle}
		    \end{figure}	
	\subsubsection{Dataset description}
			We dispose of a dataset providing us with morphological information at different development stages (oocyte and follicle diameter, total number of cells), and acquired from \textit{ex vivo} measurements in sheep fetus \cite{lundy_populations_1999}. In addition, from  \cite{smith_effects_1993,smith_ontogeny_1994}, we can infer the transit times between these stages: it takes $15$ days to go from one to three layers and $10$ days from three to four layers. Hence (see Table \ref{Table_SummarizeOfDataset}), the dataset consists of the total numbers of somatic cells at three time points. 
		
			\begin{table}[h]
				\centering
				\subfloat[\textbf{Summary of the dataset}]{	\resizebox{0.55\textwidth}{!}{  	\begin{tabular}{|p{0.2\textwidth}||p{0.12\textwidth}|p{0.12\textwidth}|p{0.12\textwidth}|} 
							\hline 
							& $t = 0 $& $t = 20 $  & $t = 35 $  \\ 
							\hline 
							Data points	(62)& 34   &  10 & 18   \\ 
							\hline 
							Total cell number	& 113.89 $\pm$ 57.76  & 885.75 $\pm$ 380.89 & 2241.75 $\pm$ 786.26 \\
							\hline 
							Oocyte diameter ($\mu m$)& 49.31 $\pm$ 8.15 &75.94 $\pm$ 10.89  & 88.08 $\pm$ 7.43 \\ 
							\hline 
							Follicle diameter ($\mu m$)	& 71.68 $\pm$ 13.36 &141.59 $\pm$ 17.11  & 195.36 $\pm$ 23.95 \\ 
							\hline 
							\hline 
					\end{tabular}}
					\label{Table_SummarizeOfDataset}} 
					\subfloat[\textbf{Estimated values of the parameters.}]{	\resizebox{0.4\textwidth}{!}{  	
						\begin{tabular}{|p{0.1\textwidth}||p{0.08\textwidth}|p{0.08\textwidth}|p{0.09\textwidth}|}
							\hline 
							Layer $j$ &  $p_S^{(j)}$ & $b_j$ & $\MalthusPerLayer_j$ \\ 
							\hline 
							1 &  0.6806 & 0.1146 & \textcolor{orange}{0.0414}  \\ 
							\hline 
							2 &  0.4837 & \textcolor{blue}{0.0435} &  \textcolor{orange}{-0.0014}  \\ 
							\hline 
							3 &  0.9025 &\textcolor{blue}{0.0354 } &  \textcolor{orange}{0.0285}  \\ 
							\hline 
							4 &  1 &\textcolor{blue}{0.0324 }   &  \textcolor{orange}{0.0324 }\\ 
							\hline 
							\hline 
					\end{tabular}}
						\label{Table_ParametersValues}}
					\caption{\textbf{Experimental dataset and estimated values of the parameters.} Table \ref{Table_ParametersValues}. The estimated value of $\alpha$ and the initial number of cells are respectively $\alpha = 1.633$ and $N \approx 155 \, $. For $j \geq 2$, the $b_j$ parameter values (in blue) were computed using formula (\ref{Rec_b}).  The $\MalthusPerLayer_j$ values were computed using formula (\ref{Malthus_MP}). The $95\%$-confidence intervals are $b_1 \in [0.0760;0.1528] $, $\alpha \in [0.0231;5.685] $, $N \in [126.4;185.4]$, $p_S^{(1)} \in [0.6394;0.7643] $, $p_S^{(2)} \in [0; 0.7914[$ and $p_S^{(3)} \in [0.6675;0.9739]$. }
			\end{table}	
			\vspace{-0.5cm}
			We next take advantage of the spheroidal geometry and compact structure of ovarian follicles to obtain the number of somatic cells in each layer. Spherical cells are distributed around a spherical oocyte by filling identical width layers one after another, starting from the closest layer to the oocyte. Knowing the  oocyte and somatic cell diameter (respectively $d_O$ and $d_s$) and, the total number of cells $N^{exp}$, we compute the number of cells on the $j$th layer according to the ratio between its volume $V^j$ and the volume of a somatic cell $V^s$:  
			\begin{itemize}
				\item[] \textsc{Initialization}:   $j \leftarrow 1 , V^s \leftarrow  \frac{\pi d_s^3}{6}, N \leftarrow N^{exp}$
				\item[] While $ N > 0$ :
				\begin{itemize}
					\item[] $V^j \leftarrow  \frac{\pi}{6} \big[(d_O + 2*j* d_s)^3 - (d_O + 2*(j-1)* d_s)^3  \big]$ 
					\item[] $N_j  \leftarrow  \min (\frac{V^j}{V^s}, N), N \leftarrow N - N_j, j \leftarrow  j +1$
				\end{itemize}
				$J \leftarrow j - 1$
			\end{itemize}
		
			The corresponding dataset is shown on the four panels of Figure \ref{fig:ParametersEstim_Fit}.
	 
    \subsubsection{Parameter estimation} 
    Before performing parameter estimation, we take into account additional biological specifications on the division rates. The oocyte produces growth factors whose diffusion leads to a decreasing gradient of proliferating chemical signals along the concentric layers, which results to the recurrence law (\ref{Rec_b}) similar as that initially proposed in \cite{clement_coupled_2013}. 
    Considering a regression model with an additive gaussian noise, we estimate the model parameters to fit the changes in cell numbers in each layer (see \ref{SM_ParmeterEstimaProcedure} for details). The estimated parameters are provided in Table \ref{Table_ParametersValues} and the fitting curves are shown in Figure \ref{fig:ParametersEstim_Fit}. We compute the profil likelihood estimates \cite{raue_structural_2009} and observe that all parameters are practically identifiable except $p_S^{(2)}$ (Figure \ref{fig:pledatamn} ). In contrast, when we perform the same estimation procedure on the total cell numbers, most of the parameters are not practicality identifiable (dataset in Table \ref{Table_SummarizeOfDataset}, see detailed explanations in \ref{SM_ParmeterEstimaProcedure}).  
		
\section{Conclusion}
	In this work, we have analyzed a multi-type age-dependent model for cell populations subject to unidirectional motion, in both a stochastic and deterministic framework. Despite the non-applicability of either the Perron-Frobenius or Krein-Rutman theorem, we have taken advantage of the asymmetric transitions between different types to characterize long time behavior as an exponential Malthus growth, and obtain explicit analytical formulas for the asymptotic cell number moments and stable age distribution. We have illustrated our results numerically, and studied the influence of the parameters on the asymptotic proportion of cells, Malthus parameter and stable age distribution.
	We have applied our results to a morphodynamic process occurring during the development of ovarian follicles. The fitting of the model outputs to biological experimental data has enabled us to represent the compact phase of follicle growth. Thanks to the flexibility allowed by the expression of morphodynamic laws in the model, we intend to consider other non-compact growth stages. 
	
	
	\section{Acknowledgments} 
		We thank Ken McNatty for sharing for the experimental dataset and Danielle Monniaux for helpful discussions.

\section{Supplemental proofs}
		
		\subsection{Deterministic model} \label{Supplemental_Deterministic}
					\begin{proof}[Proof of corollary \ref{Moment_Distribution_dB}]
						According to hypothesis \ref{Hypothesis_lambdaj}, 
						\begin{equation}\label{EQ1}
							\exists A > 0, \, \epsilon > 0 \text{ such that } \forall a \geq A, \, b_j(a) + \MalthusPerLayer_j > \epsilon.
						\end{equation}
						Let $k \in \mathbb{N}$. Using hypothesis \ref{Hypothesis_DivisionRate}, for all $t \geq A$, we have:
						\begin{equation*}
							0 \leq t^k b_j(t)  e^{- \displaystyle \int_{0}^{t} \left [ b_j(s) + \MalthusPerLayer_j \right ] ds} \leq  \overline{b}_j t^k e^{- \displaystyle \int_{0}^{A} \left [ b_j(s) + \MalthusPerLayer_j \right ] ds} e^{- \displaystyle \int_{A}^{t} \left [ b_j(s) + \MalthusPerLayer_j \right ] ds} \, .
						\end{equation*}
						Then, using \eqref{EQ1} we obtain:
						\begin{equation*}
							0 \leq t^k b_j(t)  e^{- \displaystyle \int_{0}^{t} \left [ b_j(s) + \MalthusPerLayer_j \right ] ds} \leq  t^k K_{A,\epsilon} e^{- \epsilon t},  
						\end{equation*}
						where $K_{A,\epsilon}$ is a constant given by $K_{A,\epsilon}:= \overline{b}_j  e^{- \displaystyle \int_{0}^{A} \left [ b_j(s) + \MalthusPerLayer_j  - \epsilon \right ] ds} $. As $\epsilon > 0$, 
						the function $t \mapsto t^k e^{- \epsilon t} $ is integrable on $\mathbb{R}_+$, and we deduce that $\int_{A}^{+ \infty}t^ke^{-\MalthusPerLayer_j t}d\mathcal{B}_j(t)dt < \infty$. Using the continuity of $b_j$ (hypothesis \ref{Hypothesis_DivisionRate}), we conclude that $t \mapsto e^{-\MalthusPerLayer_j t}d\mathcal{B}_j(t)dt $ is integrable on $\mathbb{R}_+$.
					\end{proof} 
				
					\begin{proof}[Proof of corollary \ref{Bornesd}]
					According to \eqref{Eigenproblem_phi}, we obtain: 
					\begin{equation}
						\displaystyle	 \forall j \in \llbracket 1, \Malthus \rrbracket, \,	\frac{\dualfunction^{(j)}(a)}{2 [p_S^{(j)}\dualfunction^{(j)}(0) + p_L^{(j)}\dualfunction^{(j+1)}(0) ]} =  \displaystyle \int_{a}^{+ \infty}b_{j}(s)e^{-\int_{a}^{s}\Malthus + b_{j}(u)du}ds   \, .
					\end{equation}
					According to remark \ref{remarks_dB*} and  hypothesis \ref{Malthus_parameter_def}, we deduce that $\Malthus > -  \overline{b}_j $, $\forall j \in \llbracket 1, J \rrbracket$. Hence, using also hypothesis \ref{Hypothesis_DivisionRate}, we have:
					\begin{equation*}
					\frac{\dualfunction^{(j)}(a)}{2 [p_S^{(j)}\dualfunction^{(j)}(0) + p_L^{(j)}\dualfunction^{(j+1)}(0) ]}	\geq	 \underline{b}_j \int_{a}^{+ \infty}e^{-(\Malthus + \overline{b}_j)(s-a)}ds =  \frac{\underline{b}_j}{\Malthus + \overline{b}_j}, 
					\end{equation*}
					and reminding that $\Malthus > 0$ (see remark \ref{remarks_dB*}), we also obtain the right-side of \eqref{Coro_ineq}: 
					\begin{align*}
					\displaystyle	\frac{\dualfunction^{(j)}(a)}{2 [p_S^{(j)}\dualfunction^{(j)}(0) + p_L^{(j)}\dualfunction^{(j+1)}(0) ]}
						\leq	\quad \int_{a}^{+ \infty}b_{j}(s)e^{-\int_{a}^{s}b_{j}(u)du}ds   \\ = [-e^{-\int_{a}^{s}b_j(u)du}]_a^{+ \infty} = 1  \, .
					\end{align*}
				\end{proof}
				
						\begin{proof}[Proof of lemma \ref{BornedHat}]
					For $j \in \llbracket 1, J \rrbracket$, any solution of \eqref{phihat_equ} in $\mathbf{L}^1(\mathbb{R}_+)$ is given by:
					\begin{equation*}
					\dualfunctionperlayer^{(j)}(a) =  \dualfunctionperlayer^{(j)}(0)e^{\int_{0}^{a} [\MalthusPerLayer_j + b_j(s)]ds}[1 -2p_{S}^{(j)}\int_{0}^{a}b_{j}(s)e^{-\int_{0}^{s} [\MalthusPerLayer_j + b_j(u)]du}ds] \, .
					\end{equation*}
					According to hypothesis \ref{Malthus_parameter_per_layer_def}, $1 = 2p_{S}^{(j)}\int_{0}^{+\infty}b_{j}(s)e^{-\int_{0}^{s}[\MalthusPerLayer_j + b_j(u)]du}ds$, thus
					\begin{equation*}
					\dualfunctionperlayer^{(j)}(a) =  2p_S^{(j)} \dualfunctionperlayer^{(j)}(0)\int_{a}^{+\infty}b_{j}(s)e^{-\int_{a}^{s} [\MalthusPerLayer_j + b_j(u)]du}ds \, .
					\end{equation*}	
					Finally, according to remark \ref{remarks_dB*}, $\MalthusPerLayer_j > -\overline{b}_j$ and we obtain, using hypothesis \ref{Hypothesis_DivisionRate},
					\begin{multline*}
					\frac{\dualfunctionperlayer^{(j)}(a)}{\dualfunctionperlayer^{(j)}(0)} =  2p_S^{(j)}\int_{a}^{+\infty}b_{j}(s)e^{-\int_{a}^{s} [\MalthusPerLayer_j + b_j(u)]du}ds \\
					\geq \quad   2p_S^{(j)}\underline{b}_j\int_{a}^{+\infty}e^{-(\MalthusPerLayer_j + \overline{b}_j)(s-a)}ds =  2p_S^{(j)}\frac{\underline{b}_j}{\MalthusPerLayer_j + \overline{b}_j}  \, .
					\end{multline*}
					Then, we want to show that $\dualfunctionperlayer^{(j)}(a) < \infty $ for all $a \in \mathbb{R}_+ \cup \{ \infty\}$. Let 
					\begin{equation*}
					I(a):=	\int_{a}^{+\infty}b_{j}(s)e^{-\int_{0}^{s} [\MalthusPerLayer_j + b_j(u)]du}ds  \, .
					\end{equation*} 
					Applying an integration by part to $I(a)$, we obtain that, for all $a \geq 0$,
					\begin{equation*}
					\displaystyle	I(a) = \displaystyle \left [e^{-\int_{0}^{s} [\MalthusPerLayer_j + b_j(u)]du} \right]_a^\infty - \MalthusPerLayer_j \int_{a}^{\infty} e^{-\int_{0}^{s} [\MalthusPerLayer_j + b_j(u)]du}ds \, .
					\end{equation*}
					Hypotheses \ref{Malthus_parameter_per_layer_def} and \ref{Hypothesis_DivisionRate} imply that, for all $a \geq 0$, $ \int_{a}^{\infty}e^{-\int_{0}^{s} [\MalthusPerLayer_j + b_j(s)]ds} < \infty $ and so, \\ $ \lim\limits_{s \rightarrow 0} e^{-\int_{0}^{s} [\MalthusPerLayer_j + b_j(u)]du} = 0$. Thus, we have:
					\begin{equation}\label{I_int_value}
						I(a) = e^{- \int_{0}^{a}  [\MalthusPerLayer_j + b_j(u)]du } - \MalthusPerLayer_j \int_{a}^{\infty} e^{-\int_{0}^{s} [\MalthusPerLayer_j + b_j(u)]du}ds \,.
					\end{equation}
					Multiplying \eqref{I_int_value} by $e^{\int_{0}^{a}  [\MalthusPerLayer_j + b_j(u)]du } $, we deduce:
					\begin{equation}\label{phi_hat_val}
						\frac{\dualfunctionperlayer^{(j)}(a)}{2p_S^{(j)}\dualfunctionperlayer^{(j)}(0)} = 1 -   \MalthusPerLayer_j \int_{a}^{\infty} e^{-\int_{a}^{s} [\MalthusPerLayer_j + b_j(u)]du}ds \, .
					\end{equation}
					If $\MalthusPerLayer_j \geq  0 $, we deduce directly from \eqref{phi_hat_val} that, for all $a \in \mathbb{R}_+ \cup \{ \infty\}$, $\frac{\dualfunctionperlayer^{(j)}(a)}{2p_S^{(j)}\dualfunctionperlayer^{(j)}(0)} \leq 1$. We assume that $\MalthusPerLayer_j < 0$. Using hypothesis \ref{Hypothesis_lambdaj}, we deduce that there exists constants $A > 0$ and $\epsilon> 0$ such that 
					\begin{equation*}
						\forall a \geq A, \quad \MalthusPerLayer_j + b_j(a) > \epsilon > 0.
					\end{equation*}
					Hence, with $C = \frac{-\MalthusPerLayer_j }{\epsilon} > 0$, we have:
					\begin{equation*}
					\forall a \geq A, \quad  	-\MalthusPerLayer_j  \leq  C (\MalthusPerLayer_j + b_j(a)) \, .
					\end{equation*}
					Applying this inequality to \eqref{phi_hat_val}, we obtain: 
					\begin{equation*}
					\forall a \geq A, \quad 	\frac{\dualfunctionperlayer^{(j)}(a)}{2p_S^{(j)}\dualfunctionperlayer^{(j)}(0)} \leq 1 + C \int_{a}^{\infty}[ \MalthusPerLayer_j+ b_j(s)] e^{-\int_{0}^{s} [\MalthusPerLayer_j+ b_j(u)]du}ds \times e^{\int_{0}^{a} [\MalthusPerLayer_j + b_j(s)]ds} \, .
					\end{equation*}
					Again, using hypotheses \ref{Malthus_parameter_per_layer_def} and \ref{Hypothesis_DivisionRate}, we obtain:
					\begin{equation*}
						\int_{a}^{\infty}[ \MalthusPerLayer_j+ b_j(s)] e^{-\int_{0}^{s} [ \MalthusPerLayer_j+ b_j(u)]du}ds = \left  [-e^{-\int_{0}^{s} [ \MalthusPerLayer_j+ b_j(u)]du} \right ]_{a}^{\infty} = e^{-\int_{0}^{a} [ \MalthusPerLayer_j+ b_j(u)]du} \, .
					\end{equation*}
					We deduce 
					\begin{equation*}
					\forall a \geq A, \quad 	\frac{\dualfunctionperlayer^{(j)}(a)}{2p_S^{(j)}\dualfunctionperlayer^{(j)}(0)} \leq 1 + C \, .
					\end{equation*}
					As $\dualfunctionperlayer^{(j)}$ is continuous, we conclude that
						\begin{equation*}
						\forall a \in \mathbb{R}_+ \cup \{ + \infty
						 \}, \quad 	\frac{\dualfunctionperlayer^{(j)}(a)}{2p_S^{(j)}\dualfunctionperlayer^{(j)}(0)} < \infty \, .
					\end{equation*}
					\vspace{0.1cm}
				\end{proof}

					\begin{proof}[Proof of lemma \ref{PrincipeDeConservation}]
					Deriving $ \ll e^{-\Malthus t} \densityfunction(t,\cdot), \dualfunction \gg$ with respect to $t$, we obtain
					\begin{equation*}
					\frac{d}{dt} \ll e^{-\Malthus t} \densityfunction(t,\cdot), \dualfunction \gg  \,  = - e^{-\Malthus t}  \ll (\Malthus \mathds{1} + \mathcal{B} + \partial_a)  \densityfunction(t,\cdot), \dualfunction \gg  \, .
					\end{equation*}
					By integration by part and using that $\densityfunction \in \mathbf{L}^1(\mathbb{R}_+)^J $ and $\dualfunction \in \mathcal{C}^1_b(\mathbb{R}_+)^J $, we have
					\begin{equation*}
					\ll  \partial_{a}   \densityfunction(t,\cdot), \dualfunction\gg  =  -   \densityfunction(t,0)^T\dualfunction(0) \, -  \ll  \densityfunction(t,\cdot), \partial_a \dualfunction\gg \, , 
					\end{equation*}
					and we deduce 
					\begin{multline*}
					\frac{d}{dt} \ll e^{-\Malthus t} \densityfunction(t,\cdot), \dualfunction \gg \, = e^{-\Malthus t} \left[ \densityfunction(t,0)^T\dualfunction(0) +  \ll  \densityfunction(t,\cdot), \partial_a \dualfunction\gg 	   \right. \\
			     	\left.  -  \ll (\Malthus \mathds{1} + \mathcal{B}) \densityfunction(t,\cdot), \dualfunction\gg \right] \, .
					\end{multline*}
					As we have $(\Malthus \mathds{1} + \mathcal{B})^T = (\Malthus \mathds{1} + \mathcal{B})$, it comes $  \ll  \densityfunction(t,\cdot), \partial_a \dualfunction\gg   - \ll (\Malthus \mathds{1} + \mathcal{B}) \densityfunction(t,\cdot), \dualfunction\gg   \,= \,  \ll  \densityfunction(t,\cdot), \partial_a \dualfunction- (\Malthus \mathds{1} + \mathcal{B}) \dualfunction\gg$. Then, using that $\mathcal{L}^D \dualfunction = \Malthus \dualfunction $, we deduce $ ( \partial_a  - \Malthus \mathds{1} - \mathcal{B} )\dualfunction  = -K(\cdot)^T\dualfunction(0)$. Thus, 
					\begin{equation*}
					\frac{d}{dt} \ll e^{-\Malthus t} \densityfunction(t,\cdot), \dualfunction \gg \, = e^{-\Malthus t} \left[  \densityfunction(t,0)^T\dualfunction(0) -  \ll \densityfunction(t,\cdot), K(\cdot)^T\dualfunction(0)\gg \right]  \, .
					\end{equation*}
					Note that $\ll \densityfunction(t,\cdot), K(\cdot)^T\dualfunction(0)\gg \, = \, \ll K(\cdot)\densityfunction(t,\cdot), \dualfunction(0)\gg  \, = \densityfunction(t,0)^T\dualfunction(0) $. Consequently, 
					$$	\frac{d}{dt} \ll e^{-\Malthus t} \densityfunction(t,\cdot), \dualfunction \gg  \, = 0 \, .$$ 
					Hence, 
					\begin{equation*}
					\forall t, \quad \ll e^{-\Malthus t} \densityfunction(t,\cdot), \dualfunction\gg  \quad = \quad \ll \densityfunction_0(\cdot), \dualfunction\gg  \quad =  \eta \, .
					\end{equation*}
					Thanks to the renormalization $\ll  \eigenfunction,\dualfunction\gg  = 1 $, we obtain the conservation principle:
					
					\begin{equation*}
					\ll  e^{-\Malthus t} \densityfunction(t,\cdot)- \eta \eigenfunction,\dualfunction\gg  =  \ll  e^{-\Malthus t} \densityfunction(t,\cdot),\dualfunction\gg - \eta \ll  \eigenfunction,\dualfunction\gg  = 0 \, .
					\end{equation*}
					\vspace{0.15cm}
				\end{proof}

				\begin{proof}[Proof of lemma \ref{ValeurAbsolue}]
					From the linearity of the system, it can be easily shown that $h $ is solution of
					\begin{equation*}
					\left\{
					\begin{array}{l}
					\partial_{t}h(t,a) + \partial_{a}h(t,a) + [\Malthus + B(a)] h(t,a) = 0, \quad t \geq 0, \quad a \geq 0, \\
					h(t,a = 0) = \int_0^\infty K(a)h(t,a)da. 
					\end{array} 
					\right.
					\end{equation*}
					Let f be a derivable function. Applying the chain rules, it comes, for $j \in \llbracket 1, J \rrbracket$,
					\begin{equation*}
						\begin{array}{rl}
								\partial_{t}f[h^{(j)}(t,a)] + \partial_{a}f[h^{(j)}(t,a)] = &  f'(h^{(j)}(t,a))[\partial_{t}h^{(j)}(t,a) + \partial_{a}h^{(j)}(t,a)] \\
							= & - [\Malthus + b_j(a)] \times h^{(j)}(t,a)f'(h^{(j)}(t,a))  \, .
						\end{array}
					\end{equation*}
					For $f(x) = |x|$, $f'(x) = \frac{|x|}{x}$, we deduce
					
					\begin{equation*}
						 \partial_{t}|h^{(j)}(t,a)| + \partial_{a}|h^{(j)}(t,a)| = - [\Malthus + b_j(a)] |h^{(j)}(t,a)| \, .
					\end{equation*}
					\vspace{0.15cm}
				\end{proof}

			\begin{lemma}\label{LemmeGronwallForUs}[Modified Gr\"{o}nwall lemma]
					Let $N \in \mathbb{N}^*$. Suppose that $\forall i \in \llbracket 1, N \rrbracket $, there exist $\kappa_i \in \mathbb{R}_+^* $, $\gamma \in \mathbb{R}_+^* $ and  $P_i$ polynomials of degree $\alpha_i \in \mathbb{N}$ such that 
				\begin{equation*}
				\displaystyle	F'(t) \leq \sum_{i =1}^{N} P_i(t) e^{-\kappa_i t} - \gamma F(t) .
				\end{equation*}
			   Then,
				\begin{equation*}
					F(t) \leq Ke^{-\gamma  t} +  \sum_{i =1}^{N} \widetilde{P}_i(t)e^{-\kappa_i t},
				\end{equation*}
			where $K$ is a constant and for all $i \in \llbracket 1, N \rrbracket $, $\widetilde{P}_i$ is a polynomial of degree $\widetilde{\alpha}_i \leq \alpha_i  + 1$.
			\end{lemma}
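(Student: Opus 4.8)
The plan is to apply the classical integrating-factor argument and then reduce the problem to elementary antiderivatives of polynomial-times-exponential functions. Assuming $F$ is absolutely continuous (as is the case in the application, where $F(t)=\langle|h^{(j)}(t,\cdot)|,\dualfunctionperlayer^{(j)}\rangle$), I would first multiply the differential inequality by the integrating factor $e^{\gamma t}$ and observe that
\begin{equation*}
\frac{d}{dt}\bigl(F(t)e^{\gamma t}\bigr) = \bigl(F'(t) + \gamma F(t)\bigr)e^{\gamma t} \;\leq\; \sum_{i=1}^{N} P_i(t)\,e^{(\gamma-\kappa_i)t}.
\end{equation*}
Integrating this between $0$ and $t$ yields $F(t)e^{\gamma t} \leq F(0) + \sum_{i=1}^{N} \int_{0}^{t} P_i(s)\,e^{(\gamma-\kappa_i)s}\,ds$.

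Next I would evaluate each integral, relying on the following elementary fact (proved by induction on $\deg P_i$ via integration by parts): for $c\neq 0$, an antiderivative of $P_i(s)e^{cs}$ has the form $R_i(s)e^{cs}$ with $R_i$ a polynomial of the same degree $\alpha_i$ as $P_i$ (its leading coefficient being that of $P_i$ divided by $c$); for $c=0$, an antiderivative of $P_i(s)$ is a polynomial $R_i(s)$ of degree $\alpha_i+1$. Applying this with $c=\gamma-\kappa_i$, the integral equals $R_i(t)e^{(\gamma-\kappa_i)t} - R_i(0)$ when $\kappa_i\neq\gamma$, and $R_i(t)$ with $R_i(0)=0$ when $\kappa_i=\gamma$. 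Substituting back into the bound for $F(t)e^{\gamma t}$ and multiplying through by $e^{-\gamma t}$ gives
\begin{equation*}
F(t) \;\leq\; \Bigl(F(0) - \!\!\sum_{i:\,\kappa_i\neq\gamma}\!\! R_i(0)\Bigr)e^{-\gamma t} + \sum_{i=1}^{N} R_i(t)\,e^{-\kappa_i t},
\end{equation*}
which is the announced form with $K := F(0) - \sum_{i:\,\kappa_i\neq\gamma} R_i(0)$ and $\widetilde{P}_i := R_i$.

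Finally I would record the degree bound: $\deg\widetilde{P}_i=\alpha_i$ in the non-resonant case $\kappa_i\neq\gamma$ and $\deg\widetilde{P}_i=\alpha_i+1$ in the resonant case $\kappa_i=\gamma$, so $\widetilde{\alpha}_i\leq\alpha_i+1$ in every case. There is no genuine obstacle here; the only point worth flagging is the resonance $\kappa_i=\gamma$, which is precisely where the polynomial prefactor gains one degree — this is exactly the mechanism producing the degree increments of the $\beta^{(j)}_k$ in the proof of Theorem \ref{Exponential_decay} when $\mu_k=\mu_j$.
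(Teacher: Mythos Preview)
Your proof is correct and follows essentially the same approach as the paper: both use the integrating factor $e^{\gamma t}$, integrate the resulting inequality over $[0,t]$, and identify the constant and the new polynomials from the antiderivatives of $P_i(s)e^{(\gamma-\kappa_i)s}$. Your version is simply more explicit about the form of those antiderivatives and the value of $K$, and you correctly isolate the resonance $\kappa_i=\gamma$ as the only case where the degree increases by one, exactly as the paper remarks.
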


			\begin{proof}
				Note that $\frac{d}{dt}(e^{\gamma t}F(t)) =  (F'(t) + \gamma F(t)  ) \times e^{\gamma t}$. Hence, 
			\begin{equation*}
					\displaystyle	\frac{d}{dt}(e^{\gamma t}F(t)) \leq  \sum_{i =1}^{N}P_i(t) e^{ (-\kappa_i +\gamma) t} .
			\end{equation*}
		Then, integrating on the interval $[0,t]$, we obtain:
			\begin{align*}
					e^{\gamma t}F(t) - F(0) \leq \sum_{i =1}^{N} \widetilde{P}_i(t) e^{ (\gamma - \kappa_i) t} + K \\
				 F(t)  \leq (F(0) + K)e^{ -\gamma t} +  \sum_{i =1}^{N} \widetilde{P}_i(t) e^{ -\kappa_i t},
			\end{align*} 
			where $K$ is a constant and for all $i \in \llbracket 1, N \rrbracket $, $\widetilde{P}_i$ a polynomial of degree $\widetilde{\alpha}_i \leq \alpha_i  + 1$ (the degree increases when $\gamma = \kappa_i$).
			\end{proof}
		
		\subsection{Stochastic model} \label{Supplemental_Stochastic}
				For any $ f:(t,a) \mapsto( f^{(j)}_t(a))_{j \in \llbracket 1, J \rrbracket }\in \mathcal{B}_b^1(\mathbb{R}_+ \times \mathbb{R}_+, \mathbb{R}) ^{J}$ (the space product of the set of bounded functions with bounded derivatives), we note $\partial_1$ and $\partial_2$ respectively its derivative with respect to time ($t$) and age ($a$).
				\begin{lemma}\label{MultiCouche_lemme}
				Let $F \in \mathcal{C}^1(\mathbb{R}, \mathbb{R})$, $ f \in \mathcal{B}_b^1(\mathbb{R}_+ \times \mathbb{R}_+, \mathbb{R}) ^{J}$.
				\begin{equation*}
				\begin{array}{r}
				 F[\ll Z_t,f_t\gg ]   =  \displaystyle F[\ll f_0,Z_0\gg ] + \int_{0}^{t} \ll \partial_1 f_s+ \partial_2  f_s,Z_s\gg F'[\ll f_s,Z_s\gg ]ds \\[0.3cm]
								   + \displaystyle  \int_{[0,t] \times \mathcal{E}} \left [  \mathds{1}_{k \leq N_{s^-}} \left ( F[\ll f_s ,2\delta_{I^{(k)}_{s^-},0} - \delta_{I^{(k)}_{s^-},A^{(k)}_{s^-}} +Z_{s^-}\gg ] \right. \right. \\[0.3cm] 
								 \left.  - F[\ll f_s,Z_{s^-}\gg ]  \right ) \mathds{1}_{0 \leq \theta \leq m_1(s,k,Z)}  \\[0.4cm]
								 + \displaystyle \left ( F[ \ll f_s  , \delta_{I^{(k)}_{s-}+1,0} + \delta_{I^{(k)}_s,0} - \delta_{I^{(k)}_{s-},A^{(k)}_{s-}} +Z_{s^-}\gg ]  \right. \\[0.3cm]
				 \left.  - F[\ll f_s,Z_{s^-}\gg ]  \right ) \mathds{1}_{m_1(s,k,Z)  \leq \theta \leq m_2(s,k,Z)} \\[0.4cm]
								 + \left ( F[\ll f_s ,2\delta_{I^{(k)}_{s^-}+1,0} - \delta_{I^{(k)}_{s-},A^{(k)}_{s^-}} +Z_{s^-}\gg ] \right. \\[0.3cm]
							\left.	\left.  - F[\ll f_s,Z_{s^-}\gg ] \right )\mathds{1}_{m_2(s,k,Z)  \leq \theta \leq m_3(s,k,Z)} \right ]Q(ds,dk,d\theta).
				\end{array}
				\end{equation*} 
			\end{lemma}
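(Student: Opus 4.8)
The plan is to read the identity as the change-of-variables formula for the real-valued process $U_t := \ll Z_t, f_t\gg$, once its semimartingale decomposition is identified. First I would argue that $U$ is piecewise-deterministic: by the moment bound $\mathbb{E}[\sup_{t\le T} N_t] < \infty$ together with Hypothesis \ref{Hypothesis_DivisionRate} (each clock fires at finite rate), almost surely only finitely many division events occur on $[0,T]$, which legitimates a path-by-path analysis. Between two consecutive jump times the point measure $Z_s$ is a finite sum of Dirac masses whose layer indices are frozen and whose ages increase at unit speed, so $s \mapsto U_s$ is $\mathcal{C}^1$ on each such interval with derivative $\ll \partial_1 f_s + \partial_2 f_s, Z_s\gg$, the term $\partial_2 f_s$ encoding the age transport. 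At a jump time $s$ attached to an atom $(s,k,\theta)$ of $Q$ with $k\le N_{s^-}$, $U$ jumps by $\ll \Delta_{s,k,\theta}Z, f_s\gg$, where $\Delta_{s,k,\theta}Z$ is exactly the increment $R(k,s,Z,\theta)$ of \eqref{Z_t_Equation} evaluated at the jump time, namely $2\delta_{I_{s^-}^{(k)},0} - \delta_{I_{s^-}^{(k)},A_{s^-}^{(k)}}$ on $\{0\le\theta\le m_1(s,k,Z)\}$, $\delta_{I_{s^-}^{(k)},0} + \delta_{I_{s^-}^{(k)}+1,0} - \delta_{I_{s^-}^{(k)},A_{s^-}^{(k)}}$ on $\{m_1\le\theta\le m_2\}$, and $2\delta_{I_{s^-}^{(k)}+1,0} - \delta_{I_{s^-}^{(k)},A_{s^-}^{(k)}}$ on $\{m_2\le\theta\le m_3\}$. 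Summing over the inter-jump intervals and writing the jump part as an integral against $Q$ gives
\[
U_t = U_0 + \int_0^t \ll \partial_1 f_s + \partial_2 f_s, Z_s \gg ds + \int_{[0,t]\times\mathcal{E}} \mathds{1}_{k\le N_{s^-}} \ll \Delta_{s,k,\theta}Z, f_s\gg\, Q(ds,dk,d\theta).
\]
Equivalently, one may pair the SDE \eqref{Z_t_Equation} directly with $f_t$ and differentiate under the integral sign, the transport term appearing through the age-dependence of the atoms created by $R$.

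Next I would apply the change-of-variables formula for a process of the form ``continuous part of finite variation plus pure-jump part of locally finite variation'', to $F(U_t)$. The continuous part $C_t := \int_0^t \ll \partial_1 f_s + \partial_2 f_s, Z_s\gg ds$ contributes $\int_0^t F'(U_s)\,dC_s = \int_0^t F'(\ll Z_s,f_s\gg)\,\ll \partial_1 f_s + \partial_2 f_s, Z_s\gg ds$ and no bracket term (no Brownian part, finite variation), while the jump part contributes $\sum_{s\le t}\big(F(U_s) - F(U_{s^-})\big)$. Using left-continuity of $s\mapsto f_s$ one has $U_{s^-} = \ll Z_{s^-}, f_s\gg$ and $U_s = \ll Z_{s^-} + \Delta_{s,k,\theta}Z, f_s\gg$, so on $\{0\le\theta\le m_1\}$ the jump term is $F\big(\ll 2\delta_{I_{s^-}^{(k)},0} - \delta_{I_{s^-}^{(k)},A_{s^-}^{(k)}} + Z_{s^-}, f_s\gg\big) - F\big(\ll Z_{s^-}, f_s\gg\big)$, and analogously on the two other $\theta$-regimes. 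Re-expressing this sum as an integral against $Q$ with the thresholds $m_1, m_2, m_3$ exactly as they occur in $R$ yields the announced formula.

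I expect the only genuine subtlety to be the rigorous justification of the transport term $\partial_2 f_s$ in the continuous part: informally it is ``ages flow at unit speed'', but formally one must either carry out the path-by-path argument above (legitimate because jumps do not accumulate on $[0,T]$, by $\mathbb{E}[\sup_{t\le T}N_t]<\infty$), or invoke the generalized It\^o formula for measure-valued jump processes of \cite{tran_large_2008}, here applied to a time- and age-dependent test function $f$. The remaining points are routine: since $F\in\mathcal{C}^1$, $f$ and its derivatives are bounded and $\mathbb{E}[\sup_{t\le T}N_t]<\infty$, all integrands above are dominated by integrable quantities, the $ds$-integrals are finite, and the $Q$-integral splits into an absolutely continuous compensator plus a true martingale. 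This is precisely what is needed afterwards to identify the generator $\mathcal{G}$ in Theorem \ref{MultiCouche_Gene_Theo} (take $f$ time-independent) and to derive Lemma \ref{Multi_Dynkin}.
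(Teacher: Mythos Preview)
Your proposal is correct and follows essentially the same two-step strategy as the paper: first establish the semimartingale decomposition of $U_t=\ll Z_t,f_t\gg$ (the case $F(x)=x$), then apply It\^o's change-of-variables formula for finite-variation jump processes to get the general $F$. The only cosmetic difference is in step one: the paper pairs the SDE \eqref{Z_t_Equation} with $f_t$, expands each $f_t^{(j)}(a+t-s)$ as $f_s^{(j)}(a)+\int_s^t[\partial_1 f_u+\partial_2 f_u]\,du$, and uses Fubini (justified by boundedness of $\partial_1 f,\partial_2 f$) to recombine the transport terms into $\int_0^t\ll\partial_1 f_s+\partial_2 f_s,Z_s\gg ds$; you obtain the same decomposition via a path-by-path argument on the piecewise-deterministic structure, which you correctly note is legitimated by the a.s.\ finiteness of jumps on $[0,T]$. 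You even mention the paper's route as an equivalent alternative, so the two presentations are interchangeable.
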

			\begin{proof}
				We integrate $f_t$ against the measure $Z_t$
				\begin{equation*}
					\begin{array}{l}
					\ll Z_t,f_t\gg   =   \displaystyle \sum_{k = 1}^{N_0} f_t^{(I^{(k)}_0)}(A^{(k)}_{0} + t)    \\
				+ \displaystyle  \int_{[0,t] \times \mathcal{E}} \left [  \mathds{1}_{k \leq N_{s^-}} \left (2 f_t^{(I^{(k)}_{s^-})}(t-s) - f_t^{(I^{(k)}_{s^-})}(A^{(k)}_{s^-} + t-s) \right ) \mathds{1}_{0 \leq \theta \leq m_1(s,k,Z)}  \right. \\[0.3cm]
					+  \left ( f_t^{(I^{(k)}_{s^-})}(t-s) + f_t^{(I^{(k)}_{s^-} +1)}(t-s) -  f_t^{(I^{(k)}_{s^-})}(A^{(k)}_{s^-} + t-s) \right ) \mathds{1}_{m_1(s,k,Z)  \leq \theta \leq m_2(s,k,Z)} \\[0.3cm]
					\left.	 +  \left (2f_t^{(I^{(k)}_{s^-} + 1)}(t-s)  -  f_t^{(I^{(k)}_{s^-})}(A^{(k)}_{s^-} + t-s) \right ) \mathds{1}_{m_2(s,k,Z) \leq \theta \leq m_3(s,k,Z)}\right ] Q(ds,dk,d\theta)	.
				\end{array}
				\end{equation*}
			Derivating $f_t^{(j)} [a + t - s]$, we obtain 
				\begin{equation*}
				\begin{array}{l}
					\frac{d}{dt}\left [f_t^{(j)} [a + t - s]\right ] = \partial_1 f_t^{(j)} [a + t - s]  + 	\partial_2 f_t^{(j)} [a + t - s]  \\
				\Rightarrow \quad \int_{s}^{t}\frac{d}{du} \left [f_u^{(j)} [a + u - s]\right ]du  = \int_{s}^{t} \left [\partial_1 f_u^{(j)} [a + u -s] + \partial_2 f_u^{(j)} [a + u -s]   \right ]du \nonumber \\[0.3cm]
				\Rightarrow 	f_t^{(j)} [a + t - s] = f_s^{(j)} [a] + \int_{s}^{t} \left [\partial_1 f_u^{(j)} [a + u -s] + \partial_2 f_u^{(j)} [a + u -s]   \right ]du .
				\end{array}
				\end{equation*}
				Then, replacing $j$ by the index $I^{(k)}_{s^-} $ and $a$ by $A^{(k)}_{s^-} $ or $0$, it comes 
				\begin{equation*}
					\begin{array}{l}
					\!\!\!\!\!\!\!\!\!\!\!\!\!\!\!\!\!\!	\ll Z_t,f_t\gg  =  \sum_{k = 1}^{N_0} f_0^{(I^{(k)}_0)}(A^{(k)}_{0}) + T_0+ T_1 + T_2 + T_3  \\[0.5cm]
					+ \int_{[0,t] \times \mathcal{E}}\mathds{1}_{k \leq N_{s^-}} \left [ \left ( 2 f_s^{(I^{(k)}_{s^-})}(0) -  f_s^{(I^{(k)}_{s^-})}(A^{(k)}_{s^-}) \right )  \mathds{1}_{0 \leq \theta \leq m_1(s,k,Z)} \right .\\[0.5cm]
					+  \left (f_s^{(I^{(k)}_{s^-})}(0) + f_s^{(I^{(k)}_{s^-} +1)}(0)  -  f_s^{(I^{(k)}_{s^-})}(A^{(k)}_{s^-})  \right ) \mathds{1}_{m_1(s,k,Z) \leq \theta \leq m_2(s,k,Z)} \\[0.5cm]
					\left. 	+ \left  (2 f_s^{(I^{(k)}_{s^-} + 1)}(0)  
					-   f_s^{(I^{(k)}_{s^-})}(A^{(k)}_{s^-}) \right ) \mathds{1}_{m_2(s,k,Z) \leq \theta \leq m_3(s,k,Z)} \right ] Q(ds,dk,d\theta),
					\end{array}
				\end{equation*}
			
				where
				\begin{equation*}
				\begin{array}{ll} 
					T_0 = & \displaystyle \sum_{k = 1}^{N_0} \int_{0}^{t} \left  [\partial_1 f_u^{(I^{(k)}_0)}(A^{(k)}_{0} + u) + \partial_2 f_u^{(I^{(k)}_0)}(A^{(k)}_{0} + u) \right  ]du, \\[0.7cm]
					T_1 = & \displaystyle  \int_{[0,t] \times \mathcal{E}} \mathds{1}_{k \leq N_{s^-}} \int_{s}^{t} \left [ 2 \partial_1 f_u^{(I^{(k)}_{s^-} )}(u-s)  \right. \\[0.3cm] 
					& + 2\partial_2  f_u^{(I^{(k)}_{s^-} )}(u-s) - \partial_1 f_u^{(I^{(k)}_{s^-})}(A^{(k)}_{s^-} + u-s)  \\[0.3cm] 
					& \left. \displaystyle - \partial_2 f_u^{(I^{(k)}_{s^-})}(A^{(k)}_{s^-}  + u-s)  \mathds{1}_{0 \leq \theta \leq m_1(s,k,Z)} \right ]du Q(ds,dk,d\theta),  \\[0.7cm]
					T_2 =&\displaystyle \int_{[0,t] \times \mathcal{E}} \mathds{1}_{k \leq N_{s^-}} \int_{s}^{t} \left [ \partial_1 f_u^{(I^{(k)}_{s^-})}(u-s) + \partial_1 f_u^{(I^{(k)}_{s^-}  + 1)}(u-s)  + \partial_2  f_u^{(I^{(k)}_{s^-} )}(u-s) \right.  \\[0.3cm]
					& + \partial_2  f_u^{(I^{(k)}_{s^-} + 1)}(u-s)- \partial_1 f_u^{(I^{(k)}_{s^-})}(A^{(k)}_{s^-} + u-s) \\[0.3cm]
					& \left.  \displaystyle - \partial_2 f_u^{(I^{(k)}_{s^-})}(A^{(k)}_{s^-}  + u-s)   \mathds{1}_{m_1(s,k,Z) \leq \theta \leq m_2(s,k,Z)} \right ] du Q(ds,dk,d\theta),  \\[0.7cm]
					T_3 =& \displaystyle \int_{[0,t] \times \mathcal{E}} \mathds{1}_{k \leq N_{s^-}}  \int_{s}^{t} \left [2 \partial_1 f_u^{(I^{(k)}_{s^-} + 1)}(u-s)  \right. \\[0.3cm] 
			     	&	+ 2\partial_2  f_u^{(I^{(k)}_{s^-}  + 1)}(u-s)  - \partial_1 f_u^{(I^{(k)}_{s^-} )}(A^{(k)}_{s^-}  + u-s)  \\[0.3cm] 
					& \left. \displaystyle - \partial_2 f_u^{(I^{(k)}_{s^-})}(A^{(k)}_{s^-}  + u-s)   \mathds{1}_{m_2(s,k,Z) \leq \theta \leq m_3(s,k,Z)} \right ] du Q(ds,dk,d\theta) \, .
				\end{array}	
				\end{equation*}
				As the partial differential of each $f^{(j)}$ are uniformly bounded, we can apply Fubini theorem on $T_0 $, $T_1$, $T_2$ and $T_3$:
			
				\begin{equation*}
					\begin{array}{ll}
					T_0 = &\displaystyle	\int_{0}^{t}   \ll \partial_1 f_u + \partial_2 f_u, \sum_{k = 1}^{N_0}\delta_{I^{(k)}_{0},A^{(k)}_{0} + u} \gg du,\\[0.5cm]
				T_1 = &\displaystyle \int_0^t \left [    \ll  \partial_1 f_u + \partial_2 f_u,  \int_0^u \int_{\mathcal{E}} \mathds{1}_{k \leq N_{s^-}}\left (2\delta_{I^{(k)}_{s^-},u-s } \right. \right.  \\ 
				& \left. \left.  - \delta_{I^{(k)}_{s^-}, A^{(k)}_{s^-} + u-s }  \right )\mathds{1}_{0 \leq \theta \leq m_1(s,k,Z)}Q(ds,dk,d\theta) \gg  \right ]du, \\[0.5cm]
				T_2 = & \displaystyle \int_0^t \left [    \ll  \partial_1 f_u + \partial_2 f_u, \int_0^u \int_{\mathcal{E}} \mathds{1}_{k \leq N_{s^-}}  \left ( \delta_{I^{(k)}_{s^-},u-s } +  \delta_{I^{(k)}_{s^-} + 1 ,u-s } \right. \right.  \\
				&	 \left. \left.- \delta_{I^{(k)}_{s^-}, A^{(k)}_{s^-} + u-s } \right )\mathds{1}_{ m_1(s,k,Z) \leq \theta \leq m_2(s,k,Z)}Q(ds,dk,d\theta) \gg  \right ]du, \\[0.5cm]
				T_3 = &\displaystyle \int_0^t \left [    \ll \partial_1 f_u + \partial_2 f_u,  \int_0^u \int_{\mathcal{E}} \mathds{1}_{k \leq N_{s^-}} \left ( 2\delta_{I^{(k)+1}_{s^-},u-s } \right. \right. \\ 
				& \left. \left. - \delta_{I^{(k)}_{s^-}, A^{(k)}_{s^-} + u-s } \right )\mathds{1}_{ m_2(s,k,Z) \leq \theta \leq m_3(s,k,Z) }Q(ds,dk,d\theta) \gg   \right ]du .
					\end{array}
				\end{equation*}
				
				Finally, using the stochastic differential equation (\ref{Z_t_Equation})
					\begin{equation*}
					T_0 + 	T_1 + T_2 + T_3 = 
					\int_{0}^{t}\ll  \partial_1 f_u+ \partial_2  f_u, Z_u\gg  du .
				\end{equation*}
				Consequently, we obtain
				\begin{equation*}\label{Formule_ITO_perso)}
				\begin{array}{l}
			\displaystyle \!\!\!\!\!\!\!\!\!\!\!\!\!\!\!\!\!\! \ll f_t,Z_t\gg  =  \ll f_0,Z_0\gg   + \int_{0}^{t}\ll  \partial_1 f_s+ \partial_2  f_s, Z_s\gg  ds \\[0.5cm]
			+ 	\displaystyle \int_{[0,t] \times \mathcal{E}} \mathds{1}_{k \leq N_{s^-}}  \left [  \ll f_s, 2\delta_{I^{(k)}_{s^-},0} - \delta_{I^{(k)}_{s^-},A^{(k)}_{s^-}}  \gg \mathds{1}_{0 \leq \theta \leq m_1(s,k,Z)}  \right.\\[0.5cm]
			+	\displaystyle  \ll f_s, \delta_{I^{(k)}_{s^-},0} + \delta_{I^{(k)}_{s^-} + 1,0} - \delta_{I^{(k)}_{s^-},A^{(k)}_{s^-}}  \gg \mathds{1}_{m_1(s,k,Z)  \leq \theta \leq m_2(s,k,Z)} \\[0.5cm]
		\left. 	\displaystyle	+  \ll f_s, 2\delta_{I^{(k)}_{s^-} + 1,0} - \delta_{I^{(k)}_{s^-},A^{(k)}_{s^-}}  \gg \mathds{1}_{m_2(s,k,Z) \leq \theta \leq m_3(s,k,Z)}  \right ] Q(ds,dk,d\theta),
				\end{array}
			\end{equation*}
				which gives us lemma \ref{MultiCouche_lemme} for $F(x) = x$. We conclude by applying the Ito's formula (see \cite{protter_stochastic_2004}, p.68-70).
			\end{proof}
		
				We introduce the sequence of stopping times $\xi_N$. 
			\begin{definition}\label{Temps_darret_xi}
				Let $\xi_N$ a sequence of stopping times defined as
				\begin{eqnarray*}
					\xi_N  = \sup \big( t: N_t < N, \quad  \ll a ,Z_t \gg \, < N\big). 
				\end{eqnarray*}
			\end{definition} 
				
				\begin{proof}[Proof of theorem \ref{MultiCouche_Gene_Theo}] We first start by showing \eqref{SDE_N_tandZ_t}.
					\begin{equation*}
					N_t = N_0 + \int_{[0,t] \times \mathcal{E}} \mathds{1}_{k \leq N_{s^-}} \mathds{1}_{0 \leq \theta \leq b_{I^{(k)}_{s^-}}(A^{(k)}_{s^-})} Q(ds,dk,d\theta).
					\end{equation*} 
					Thus, 
						\begin{equation*}
						\sup_{s \leq t \wedge \xi_N}  N_s \leq N_0 + \int_{0}^{t \wedge \xi_N} \int_{\mathcal{E}} \mathds{1}_{k \leq N_{s^-}} \mathds{1}_{0 \leq \theta \leq \bar{b}} Q(ds,dk,d\theta),
					\end{equation*} 
					where  $\bar{b} := \underset{j \in \llbracket 0, J \rrbracket}{\sup} \bar{b}_j$. Taking the expectation and using Poisson measure properties, we obtain 
						\begin{equation*}
					\mathbb{E}[\sup_{s \leq t \wedge \xi_N}  N_s] \leq \mathbb{ E}[N_0] + \bar{b} \mathbb{E}\left [\int_{0}^{t \wedge \xi_N} N_s ds \right ] \leq \mathbb{ E}[N_0] + \bar{b} \mathbb{E}\left [\int_{0}^{t \wedge \xi_N} \sup_{u \leq s \wedge \xi_N} N_u ds \right ] .
					\end{equation*} 
					Hence, 
					\begin{equation*}
						\mathbb{E}[\sup_{s \leq t \wedge \xi_N}  N_s] \leq \mathbb{ E}[N_0] + \bar{b} \mathbb{E}\left [\int_{0}^{t} \sup_{u \leq s \wedge \xi_N} N_u ds \right ] .
					\end{equation*} 
					By Fubini theorem, we deduce that 
						\begin{equation*}
						\mathbb{E}[\sup_{s \leq t \wedge \xi_N}  N_s] \leq \mathbb{ E}[N_0] + \bar{b} \int_{0}^{t} \mathbb{E}\left [\sup_{u \leq s \wedge \xi_N} N_u \right ]  ds. 
					\end{equation*} 
					
					Applying Gr\"{o}nwall lemma, we deduce for all $t \leq T$ that 
						\begin{equation*}
							\mathbb{E}[\sup_{s \leq t \wedge \xi_N}  N_s] \leq \mathbb{ E}[N_0] e^{\bar{b} t } .
						\end{equation*} 
				    Hence, 
					\begin{equation*}
						\mathbb{E}[\sup_{t \leq T \wedge \xi_N}  N_t] \leq \mathbb{ E}[N_0] e^{\bar{b} T } \quad < \infty \, .
					\end{equation*} 
					Using the same method, we also deduce that $	\mathbb{E} \big[ \sup_{t \leq T}  \ll a, Z_t \gg  \big] < \infty $.
					
					Then, we use the same approach as \cite{tran_modeles_2006} (Theorem 2.2.8) to compute the infinitesimal generator of $Z_t $, denoted by $\mathcal{G} $. By construction, $(Z_t)_{t \in \mathbb{R}_+}$  is a markovian process of $\mathbb{D}([0,T],  \mathcal{M}_P(\llbracket 1, J \rrbracket  \times \mathbb{R}_+ )) $.
					 Let $f \in \mathbf{C}^1_b(\mathcal{E}, \mathbb{R})$, by definition, $\mathcal{G}F :=  \lim\limits_{ t \rightarrow 0}\frac{d}{dt} \mathbb{E}\big[  F[\ll f,Z_t\gg  ]\big] $. 
					Taking the expectation of the expression of $\ll f,Z_{t \wedge \xi_N}\gg  $ given in lemma  \ref{MultiCouche_lemme}, we obtain 
				\begin{equation*}
					\begin{array}{l}
					\displaystyle \!\!\!\!\!\!\!\!\!\!\!\!\!\!\!\!\!\!	\mathbb{E}\left [F [\ll f,Z_{t \wedge \xi_N}\gg ] \right ]= \mathbb{E} \left [ F [\ll f,Z_0\gg ] \right] \\
					\displaystyle	 + \mathbb{E} \left [\int_{0}^{t \wedge \xi_N} \ll \partial_a  f,Z_s\gg F'[\ll f,Z_s\gg ]ds \right ] + \mathbb{E} \left [ \chi^{f,F}(t \wedge \xi_N, Z) \right ] ,
					\end{array}
				\end{equation*}
				where 
				\begin{equation*}
				\begin{array}{rl}	
				\displaystyle	\chi^{f,F}(t, Z) := & \displaystyle   \int_{0}^{t}\int_{\mathcal{E}} \left [  \left ( F[ \ll f ,2\delta_{j,0} - \delta_{j,a} +Z_s \gg ] - F[<f,Z_s>]  \right ) p^{(j)}_{2,0}  \right. \\[0.3cm]
						&+  \left ( F[ <f  , \delta_{j+1,0} + \delta_{j,0} - \delta_{j,a} +Z_s>] - F[\ll f,Z_s\gg ]  \right ) p^{(j)}_{1,1} \\[0.3cm]
						& + \left ( F[<f ,2\delta_{j+1,0} - \delta_{j,a} +Z_s>] \right.  \\[0.3cm]
						& \left. \left. - F[<f,Z_s>]  \right ) p^{(j)}_{0,2} \right ]b_j(a)Z_s(dj,da)ds .
					\end{array}	
				\end{equation*}
				We have the following estimates, 
				\begin{equation*}
					\begin{array}{l}
					\mathbb{E}\left [\chi^{f,F}(t \wedge \xi_N, Z) \right ] \leq 2 	\mathbb{E} \left [ \sup_{t \leq T}  N_t \right ] T \lVert F \lVert_{\infty}   \bar{b}, \\[0.5cm]
					\mathbb{E}[\int_{0}^{t \wedge \xi_N} \ll  \partial_a  f_s,Z_s\gg F'[\ll f,Z_s\gg ]ds] 	 \leq 	\mathbb{E} \big[ \sup_{t \leq T}  N_t \big] T \times \lVert \partial_a f \lVert_{\infty} \times \lVert F'  \lVert_{\infty} .
					\end{array}
				\end{equation*}
				Those bounds are independent of $N$ thanks to \eqref{SDE_N_tandZ_t}, so that we may let $N$ goes to infinity. Moreover, 
				 $$ \frac{d}{dt }\int_{0}^{t} \ll \partial_a  f,Z_s\gg F'[\ll f,Z_s\gg ]ds =  \ll \partial_a  f,Z_t\gg F'[\ll f,Z_t\gg ]$$ 
				 which is also dominated by $	\mathbb{E} \big[ \sup_{t \leq T}  N_t \big]  \times \lVert \partial_a f \lVert_{\infty} \times \lVert F'  \lVert_{\infty}  $. Also, 
				\begin{equation*}
				\begin{array}{l}\label{MultiCouche_Derive}
				 \!\!\!\!\!\!\!\!\!\!\!\!\!\!\! \frac{\partial}{\partial t} \chi(t, Z_t)  =  \int_{\mathcal{E}} \left [ \left ( F[<f ,2\delta_{j,0} - \delta_{j,a} +Z_t>] - F[<f,Z_t>]  \right) p^{(j)}_{2,0} \right. \\[0.3cm]
				 + \left (F[ \ll f  , \delta_{j+1,0} + \delta_{j,0} - \delta_{j,a} +Z_t \gg ] - F[\ll f,Z_t\gg ]  \right ) p^{(j)}_{1,1}\\[0.3cm]
				  \left. +  \left ( F[ \ll f ,2\delta_{j+1,0} - \delta_{j,a} +Z_t \gg] - F[ \ll f,Z_t \gg ]  \right ) p^{(j)}_{0,2} \right ] b_j(a)Z_t(dj,da).
					\end{array}
			\end{equation*}
				 $ \rvert \frac{\partial}{\partial t} \chi(t, Z_t) \rvert$ is dominated $\mathbb{P}$-p.s  by $	2 	\mathbb{E} \big[ \sup_{t \leq T}  N_t \big]  \lVert F \lVert_{\infty}   \bar{b} $.
				We can thus apply the differentiating theorem under the integral sign $\mathbb{E}$ and conclude.
			\end{proof}

			\begin{proof}[Proof of lemma \ref{Multi_Dynkin}]
				Introducing the compensated Poisson measure  $\tilde Q$, \\
				$\tilde Q(ds,dk,d\theta) := Q(ds,dk,d\theta) - dsdkd\theta$, we define the process:
				\begin{equation*}
					\begin{array}{ll}
						M^{F,f}_t := & 
						  \displaystyle \int \int_{[0,t \wedge \xi_N] \times \mathcal{E}} \mathds{1}_{k < N_{s-}} \left [ \left ( F[\ll f ,2\delta_{I_{s^-}^k,0} - \delta_{I_{s^-}^k,A_{s^-}^k} +Z_{s-}\gg ]  \right. \right. \\[0.5cm] 
						& \left. - F[\ll f,Z_{s-}\gg ]  \right ) \mathds{1}_{0 \leq \theta \leq m_1(s,k,Z)} \\[0.5cm]
						& +\left ( F[ \ll f  , \delta_{I_{s^-}^k+1,0} + \delta_{I_{s^-}^k,0} - \delta_{I_{s^-}^k,A_{s^-}^k} +Z_{s-}\gg ]  \right.  \\[0.5cm] 
					 & \left.	- F[\ll f,Z_{s-}\gg ]  \right ) \mathds{1}_{m_1(s,k,Z)  \leq \theta \leq m_2(s,k,Z)}\\[0.5cm]
						&   + \left (F[\ll f ,2\delta_{I_{s^-}^k+1,0} - \delta_{I_{s^-}^k,A_{s^-}^k} +Z_{s-}\gg ]  \right.  \\[0.5cm] 
						& \left. \left. - F[\ll f,Z_{s-}\gg ]  \right) \mathds{1}_{m_2(s,k,Z)  \leq \theta \leq m_3(s,k,Z)} \right ] \tilde Q(ds,dk,d\theta) .
						\end{array}
				\end{equation*}
				
				We can verify that $M^{F,f}_t$ is a martingale as an integral against a compensated Poisson measure. 
				Then, applying lemma \ref{MultiCouche_lemme} and the definition of the generator given in theorem  \ref{MultiCouche_Gene_Theo}, we show that
				\begin{equation}\label{theore_Stoch_equ1}
					M_t^{F,f} =  	F[\ll f,Z_t\gg ] - F[\ll f,Z_0\gg ] - \displaystyle \int_{0}^{t} \mathcal{G}F[\ll f,Z_s\gg ]ds \, .
				\end{equation}
				We turn now to the computation of the quadratic variation and use the same approach as in \cite{champagnat_individual_2008}. We apply \eqref{theore_Stoch_equ1} for $F(x) = x^2$. Note that we cannot use directly this result as $x \mapsto x^2$ is not bounded and we need to first use  a localizing sequence (see \cite{klebaner_introduction_2012} p. 382, theorem 13.14). We obtain that
				\begin{equation}\label{Stoch_martin_e1}
				\begin{array}{ll}
				& \ll f,Z_t\gg^2 - 	\ll f,Z_0\gg^2 -	\displaystyle\int_{0}^{t} 2  \ll f,Z_s\gg  \times \ll \partial_af,Z_s\gg ds   \\
				& - \displaystyle\int_{0}^{t}\sum_{j = 1}^{J} \int_{\mathbb{R}_+} \left [ \left ( \ll f, 2 \delta_{j,0} - \delta_{j,a} + Z_s\gg ^2 - \ll f,Z_s\gg ^2  \right)b_{j}(a)p^{(j)}_{2,0}  \right. \\[0.3cm]
					& -   \left ( \ll  f , \delta_{j,0} + \delta_{j+1,0} - \delta_{j,a} + Z_s\gg ^2 - \ll f,Z_s\gg ^2 \right) b_{j}(a)p^{(j)}_{1,1}Z_s(dj,da) \\[0.3cm]
					& \left. -  \left ( \ll  f ,2\delta_{j+1,0} - \delta_{j,a} +Z_s\gg ^2 - \ll f,Z_s\gg ^2  \right ) b_{j}(a)p^{(j)}_{0,2}Z_s(dj,da)ds \right ]	
					\end{array}
				\end{equation}
				is a martingale. Then, applying  \eqref{theore_Stoch_equ1} for $F(x) = x$ (using a localizing sequence again), we get that 
				\begin{equation*}
			\begin{array}{ll}
			& \ll f,Z_t\gg \quad = \quad 	\ll f,Z_0\gg  + \displaystyle \int_{0}^{t} \ll \partial_af,Z_s\gg ds \\[0.3cm]
			& + \displaystyle \int_{0}^{t} \big[ \sum_{j = 1}^{J} \int_{\mathbb{R}_+}\ll f,2 \delta_{j,0} - \delta_{j,a} \gg  b_{j}(a)p^{(j)}_{2,0}Z_s(dj,da) \big]ds\\[0.3cm]
			& + \displaystyle \int_{0}^{t} \big[ \sum_{j = 1}^{J} \int_{\mathbb{R}_+} \ll f,   \delta_{j,0} + \delta_{j+1,0} - \delta_{j,a} \gg  b_{j}(a)p^{(j)}_{1,1}Z_s(dj,da) \big]ds \\[0.3cm]
			& + \displaystyle \int_{0}^{t} \big[ \sum_{j = 1}^{J} \int_{\mathbb{R}_+} \ll f,2\delta_{j+1,0} - \delta_{j,a} \gg  b_{j}(a)p^{(j)}_{0,2}Z_s(dj,da)  \big]ds  + M^f_t
			\end{array}
			\end{equation*}
			is a semi-martingale. Applying the Ito formula (see \cite{protter_stochastic_2004}, p. 78-79), we obtain:
				\begin{equation}\label{Stoch_martin_e2}
			\begin{array}{l}
			 \ll f,Z_t\gg^2  \quad - \quad 	\ll f,Z_0\gg^2  - \displaystyle \int_{0}^{t}2 \ll f,Z_s\gg  \times \ll \partial_af,Z_s\gg ds \\
			 \displaystyle  + \sum_{j = 1}^{J} \int_{\mathbb{R}_+}\left (  2 \ll f,Z_s\gg \times \ll f, 2 \delta_{j,0} - \delta_{j,a} \gg   \right ) b_{j}(a)p^{(j)}_{2,0}Z_s(dj,da) \\[0.3cm]
			  \displaystyle + \sum_{j = 1}^{J} \int_{\mathbb{R}_+}  \left (  2  \ll f,Z_s\gg \times  \ll  f , \delta_{j,0} + \delta_{j+1,0} - \delta_{j,a} \gg  \right ) b_{j}(a)p^{(j)}_{1,1}Z_s(dj,da) \\[0.3cm]
			 \displaystyle + \sum_{j = 1}^{J} \int_{\mathbb{R}_+} \left (  2  \ll f,Z_s\gg \times  \ll  f ,2\delta_{j+1,0} - \delta_{j,a} \gg   \right ) b_{j}(a)p^{(j)}_{0,2}Z_s(dj,da) \big]ds 	\\
			  - \left \langle M^{f},M^{f}\right \rangle_t
			\end{array}
			\end{equation}
			is a martingale. We consider the jump corresponding to the case when the two daughter cells remain on their mother layer. Note that 
			\begin{equation*}
				\begin{array}{l}
			\!\!\!\!\!\!\!\!\!\!\!\!\!\!\!\!	\ll f, 2 \delta_{j,0} - \delta_{j,a} + Z_s\gg ^2 - \ll f,Z_s\gg ^2 =  \\ 
				2\ll f, Z_s\gg\times \ll f, 2 \delta_{j,0} - \delta_{j,a}\gg + \ll f, 2 \delta_{j,0} - \delta_{j,a}\gg^2.  
				\end{array}
			\end{equation*}
			
			We proceed similarly for the two other jumps. Applying the Doob-Meyer theorem (\cite{protter_stochastic_2004}, p. 106), we deduce the quadratic variation $ \left \langle M^{f},M^{f}\right \rangle_t $ comparing \eqref{Stoch_martin_e1} and \eqref{Stoch_martin_e2}. 
			\end{proof}

			\subsection{Moment study} \label{Supplemental_Branching}
			\paragraph{Generating functions}
			\begin{proof}[Proof of lemma \ref{FoncGenForme2}]
				Let $a \geq 0$. 
				Remind that the generating function is given by
				$$F^{(i,a)}[\mathbf{s};t] =  \underset{\mathbf{k} \in \mathbb{N}^J}{\sum} \mathbf{s}^\mathbf{k} \mathbb{P}\big[Y^a_t = \mathbf{k} |Z_0 = \delta_{i,0} \big] .$$
				Let $i \in \llbracket 1, J \rrbracket $ and $\mathbf{j},\mathbf{k} \in \mathbb{N}^J$. We note $P^a_{\mathbf{j}, \mathbf{k}}(t) := \mathbb{P}\big[Y^a_t = \mathbf{k} |Z_0 = \sum_{i = 1}^{J} j_i\delta_{i,0} \big]$. We write the backward equation for the probability  $P^a_{e_i, \mathbf{k}}(t) := \mathbb{P}\big[Y_t^a = \mathbf{k} |Z_0 = \delta_{i,0} \big]$. Starting from a single mother cell of age $0$ and layer $i$, there are three possibilities at time $t$: (i) the cell has not divided and $t \leq a$, (ii) the cell has not divided and $t  >  a$, and (iii) the cell has divided. Thus, 
				\begin{multline}\label{Trans_gene_e1}
				P^a_{e_i, \mathbf{k}}(t)=  \left (\delta_{e_i,\mathbf{k}}\mathds{1}_{t \leq a} + \delta_{\mathbf{0},\mathbf{k}}\mathds{1}_{t > a} \right )\mathbb{P}[\tau^{(i)}(a_0 = 0)\geq t]  \\ + 	\int_{0}^{t}[p_{2,0}^{(i)} P^a_{2e_{i},\mathbf{k}}(t-y) + p_{1,1}^{(i)}P^a_{e_{i}+e_{i+1},\mathbf{k}}(t-y) + p_{0,2}^{(i)}P^a_{2e_{i+1},\mathbf{k}}(t-y)]d\mathcal{B}_{i}(y)dy 
				\end{multline}
				where $\mathbb{P}[\tau^{(i)}(a_0 = 0)\geq t ] =  e^{-\int_{0}^{t}b_{i}(s)ds}\mathds{1}_{t \geq 0} = 1 - \mathcal{B}_{i}(t)$.
				
				Applying the branching property, we have for all $y \in [0,t]$, for all $ i \in \llbracket 1, J \rrbracket$ 
				$$P^a_{2e_{i},\mathbf{k}}(y) = \sum_{\mathbf{k}_{1},\mathbf{k}_{2} / \mathbf{k}_{1} + \mathbf{k}_{2} = \mathbf{k}}[P^a_{e_{i},\mathbf{k}_{1}}(y)P^a_{e_{i},\mathbf{k}_{2}}(y)], $$ and also, for all $ i \in \llbracket 1, J -1 \rrbracket$,  $$P^a_{e_{i}+e_{i+1},\mathbf{k}}(y) = \sum_{\mathbf{k}_{1},\mathbf{k}_{2} / \mathbf{k}_{1} + \mathbf{k}_{2} = \mathbf{k}}P^a_{e_{i+1},\mathbf{k}_{1}}(y)P^a_{e_{i},\mathbf{k}_{2}}(y) \, .$$ 
				
				Hence, we can rewrite the expression of $$ \displaystyle A_t :=	\int_{0}^{t}[p_{2,0}^{(i)} P^a_{2e_{i},\mathbf{k}}(t-y) + p_{1,1}^{(i)}P^a_{e_{i}+e_{i+1},\mathbf{k}}(t-y) + p_{0,2}^{(i)}P^a_{2e_{i+1},\mathbf{k}}(t-y)]d\mathcal{B}_{i}(y)dy   $$ as
				\begin{multline*}
				A_t =  p_{2,0}^{(i)}\int_{0}^{t} \sum_{\mathbf{k}_{1},\mathbf{k}_{2} / \mathbf{k}_{1} + \mathbf{k}_{2} = \mathbf{k}}P^a_{e_{i},\mathbf{k}_{1}}(t-y)P^a_{e_{i},\mathbf{k}_{2}}(t-y) d\mathcal{B}_{i}(y)dy \\
				+ p_{1,1}^{(i)}\int_{0}^{t}\sum_{\mathbf{k}_{1},\mathbf{k}_{2} / \mathbf{k}_{1} + \mathbf{k}_{2} = \mathbf{k}}P^a_{e_{i+1},\mathbf{k}_{1}}(t-y)P^a_{e_{i},\mathbf{k}_{2}}(t-y)d\mathcal{B}_{i}(y)dy \\
				+ p_{0,2}^{(i)}\int_{0}^{t}\sum_{\mathbf{k}_{1},\mathbf{k}_{2} / \mathbf{k}_{1} + \mathbf{k}_{2} = \mathbf{k}}P^a_{e_{i+1},\mathbf{k}_{1}}(t-y)P^a_{e_{i+1},\mathbf{k}_{2}}(t-y)d\mathcal{B}_{i}(y)dy .
				\end{multline*}

				Note that
				\begin{multline*}
				\sum_{\mathbf{k} \in \mathbb{N}^{J}}\mathbf{s}^{\mathbf{k}}P^a_{2e_{i},\mathbf{k}}(t-y)  = \sum_{\mathbf{k} \in \mathbb{N}^{J}}\mathbf{s}^{\mathbf{k}} \sum_{\mathbf{k}_{1},\mathbf{k}_{2} / \mathbf{k}_{1} + \mathbf{k}_{2} = \mathbf{k}}P^a_{e_{i},\mathbf{k}_{1}}(t-y)P^a_{e_{i},\mathbf{k}_{2}}(t-y)\\
				= \sum_{\mathbf{k}\in \mathbb{N}^{J}} \sum_{\mathbf{k}_{1}= 0}^{\mathbf{k}}\mathbf{s}^{\mathbf{k}_{1}}P^a_{e_{i},\mathbf{k}_{1}}(t-y) \mathbf{s}^{\mathbf{k}-\mathbf{k}_{1}}P^a_{e_{i},\mathbf{k}-\mathbf{k}_{1}}(t-y) .
				\end{multline*}
				We note $\sum_{\mathbf{k}_{1}= 0}^{\mathbf{k}} $ the sum of all the $\mathbf{k}_1 \in \mathbb{N}^J$ vectors such that $\mathbf{k}_1 \leq \mathbf{k}$ component by component. We have
				\begin{multline*}
				\sum_{\mathbf{k} \in \mathbb{N}^{J}}\mathbf{s}^{\mathbf{k}}P^a_{2e_{i},\mathbf{k}}(t-y)  = \sum_{\mathbf{k}\in \mathbb{N}^{J}} \sum_{\mathbf{k}_{1}= 0}^{\mathbf{k}}\mathbf{s}^{\mathbf{k}_{1}}P^a_{e_{i},\mathbf{k}_{1}}(t-y) \mathbf{s}^{\mathbf{k}-\mathbf{k}_{1}}P^a_{e_{i},\mathbf{k}-\mathbf{k}_{1}}(t-y) \\
				= \sum_{\mathbf{k_1}\in \mathbb{N}^{J}} \mathbf{s}^{\mathbf{k}_{1}}P^a_{e_{i},\mathbf{k}_{1}}(t-y) \sum_{ \mathbf{k}\geq \mathbf{k}_{1}} \mathbf{s}^{\mathbf{k}-\mathbf{k}_{1}}P^a_{e_{i},\mathbf{k}-\mathbf{k}_{1}}(t-y) \\
				= (\sum_{\mathbf{k}_{1} \in \mathbb{N}^{J}}\mathbf{s}^{\mathbf{k}_{1}}P^a_{e_{i},\mathbf{k}_{1}}(t-y))(\sum_{\mathbf{k}_{2} \in \mathbb{N}^{J}}\mathbf{s}^{\mathbf{k}_{2}}P^a_{e_{i},\mathbf{k}_{2}}(t-y)).
				\end{multline*}
				Hence, 
				\begin{equation}\label{Trans_gen_e2}
				\sum_{\mathbf{k} \in \mathbb{N}^{J}}\mathbf{s}^{\mathbf{k}}P^a_{2e_{i},\mathbf{k}}(t-y)  =(F^{(i,a)}[\mathbf{s};t-y])^{2}.
				\end{equation}
				In the same way, we also obtain
				\begin{equation}\label{Trans_gen_e3}
				\sum_{\mathbf{k} \in \mathbb{N}^{J}}\mathbf{s}^{k}P^a_{e_{i}+e_{i+1},k}(t-y) = F^{(i,a)}[\mathbf{s};t-y]F^{(i+1,a)}[\mathbf{s};t-y] 
				\end{equation}
				and
				\begin{equation}\label{Trans_gen_e4}
				\sum_{\mathbf{k} \in \mathbb{N}^{J}}\mathbf{s}^{\mathbf{k}}P^a_{2e_{i+1},\mathbf{k}}(t-y) = (F^{(i+1,a)}[\mathbf{s};t-y])^{2} .
				\end{equation}
				Finally, multiplying \eqref{Trans_gene_e1} by $\mathbf{s}^{\mathbf{k}}$, summing on $\mathbf{k} \in \mathbb{N}^J$ and applying \eqref{Trans_gen_e2}-\eqref{Trans_gen_e4}, we obtain:
				
				\begin{equation*}
				\forall i \in \llbracket 1, J \rrbracket,	F^{(i,a)}[\mathbf{s};t]  = (s_{i}\mathds{1}_{t \leq a} + \mathds{1}_{t > a})(1 - \mathcal{B}_{i}(t)) + \int_{0}^{t} f^{(i)}(F[\mathbf{s};t-y])  d\mathcal{B}_{i}(y)dy .
				\end{equation*}
				\vspace{-0.15cm}
			\end{proof} 
			
			\paragraph{First moments}
			\begin{proof}[Proof of lemma \ref{MomentDOrdre1}]
				By classical property, $M^a_{i,j}(t) = \frac{\partial}{\partial s_{j}}F^{(i,a)}[s;t]|_{\mathbf{s} = 1} $. From (\ref{FonctionGeneratriceEquation}) it comes that
				\begin{equation}\label{Transitory_Gene_Deriv_Equ}
				\frac{\partial}{\partial s_{j}}F^{(i,a)}\big[s;t\big]  =  \delta_{i,j}(1 - \mathcal{B}_{i,i}(t))\mathds{1}_{ a \geq t }  + \int_{0}^{t}\frac{\partial}{\partial s_{j}}f^{(i)}\big[F^a(s,y)\big]d\mathcal{B}_{i}(t-y) dy
				\end{equation}
				where
				\begin{multline*}
				\frac{\partial}{\partial s_{j}}f^{(i)}[F^a(s,t)] = 2p^{(i)}_{2,0}F^{(i,a)}[s;t]\frac{\partial}{\partial s_{j}}F^{(i,a)}[s;t] +  2p^{(i)}_{0,2}F^{(i+1,a)}[s;t]\frac{\partial}{\partial s_{j}}F^{(i+1,a)}[s;t]\\
				+ p^{(i)}_{1,1}\big[F^{(i+1,a)}[s;t]\frac{\partial}{\partial s_{j}}F^{(i,a)}[s;t]  + F^{(i,a)}[s;t]\frac{\partial}{\partial s_{j}}F^{(i+1,a)}[s;t] \big] .
				\end{multline*}
				For $\mathbf{s} = 1$, knowing that $ F^{(i,a)}(1,t) = 1 $, we get
				\begin{multline*}
				M^a_{i,j}(t)=  \delta_{i,j}(1 - \mathcal{B}_{i}(t))\mathds{1}_{ t \leq a }   \\
				+ \int_{0}^{t}\big[2p^{(i)}_{2,0} M^a_{i,j}(y) + p^{(i)}_{1,1} [M^a_{i,j}(y) + M^a_{i+1,j}(y)] + 2p^{(i)}_{0,2}M^a_{i+1,j}(y)\big]d\mathcal{B}_{i}(t-y)) dy
				\end{multline*}
				which can be rewritten as
				
				\begin{equation*}
				M^a_{i,j}(t)= \delta_{i,j}(1-\mathcal{B}_{i}(t) )\mathds{1}_{ t \leq a } + \big[2p^{(i)}_{S} M^a_{i,j} + 2p^{(i)}_{L}M^a_{i+1,j}\big]\ast d \mathcal{B}_{i}(t)  .
				\end{equation*}
				\vspace{-0.15cm}
			\end{proof}
		
			\paragraph{Harris lemmas}
			We recall some results on the renewal theory presented in \cite{harris_theory_1963}, p.161-163. \\
			Let $G$ be a distribution function on $(0, \infty)$ with the additional assumption $G(0 +) = 0 $. We consider the renewal equation 
			\begin{equation}\label{Supp_RenewalEqu}
				K(t) = f(t) + m \int_{0}^{t} K(t-u)dG(u) =  f(t) + m K \ast G(t)
			\end{equation}
			where $m$ is a positive constant representing the mean number of children, $f$ is a continuous function representing a source term and $G$ is the life time distribution. In addition, we suppose that $G$ is not lattice. 
			
			\begin{lemma}[Harris's lemma 2, p.161]\label{HarrisLemma_2}
				Suppose that there exists a Malthus parameter $\alpha$ such that $m\int_{0}^{\infty}e^{- \alpha t} dG(t) = 1 $, and that the following conditions also hold:
				\begin{itemize}
					\item[(a)] $f(t)e^{- \alpha t}$ is a continuous function such that $f(t)e^{- \alpha t} \in \mathbf{L}^1(\mathbb{R}_+) $.
					\item[(b)] $\displaystyle \int_{0}^{\infty} t^2 dG(t) < \infty $.
				\end{itemize}
			Then, $ K(t) \sim n_f e^{\alpha t}$, where 
			\begin{equation*}
				n_f = \displaystyle \frac{\int_{0}^{\infty} f(t) e^{- \alpha t} dt }{m \int_{0}^{\infty} t e^{- \alpha t} dG(t)}.
			\end{equation*}
			\end{lemma}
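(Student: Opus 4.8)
The plan is to reduce the renewal equation \eqref{Supp_RenewalEqu} to a \emph{proper} (probabilistic) renewal equation by an exponential change of scale, and then appeal to the classical key renewal theorem of Smith and Feller. Concretely, set $\hat{K}(t) := e^{-\alpha t}K(t)$, $\hat{f}(t) := e^{-\alpha t}f(t)$, and let $\hat{G}$ be the measure defined by $d\hat{G}(u) := m\,e^{-\alpha u}\,dG(u)$. By the defining relation $m\int_0^\infty e^{-\alpha u}\,dG(u)=1$, $\hat{G}$ is a probability distribution on $(0,\infty)$ with $\hat{G}(0+)=0$ (because $G(0+)=0$), and it is non-lattice since $G$ is. Multiplying \eqref{Supp_RenewalEqu} by $e^{-\alpha t}$ and writing $e^{-\alpha t}=e^{-\alpha(t-u)}e^{-\alpha u}$ inside the convolution turns it into the proper renewal equation
\begin{equation*}
\hat{K}(t) = \hat{f}(t) + \hat{K}\ast\hat{G}(t).
\end{equation*}

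First I would solve this tilted equation. Iterating gives $\hat{K} = \sum_{k=0}^{n}\hat{f}\ast\hat{G}^{\ast k} + \hat{K}\ast\hat{G}^{\ast(n+1)}$; since $\hat{G}(0+)=0$, the renewal measure $U:=\sum_{k\ge 0}\hat{G}^{\ast k}$ is locally finite and $\hat{G}^{\ast(n+1)}([0,T])\to 0$ for every $T$, so on compact intervals the remainder $\hat{K}\ast\hat{G}^{\ast(n+1)}$ vanishes and $\hat{K}=\hat{f}\ast U$, the same estimate applied to the difference of two locally bounded solutions giving uniqueness. The key renewal theorem then asserts that if $\hat{f}$ is directly Riemann integrable, $\hat{K}(t)=\hat{f}\ast U(t)$ tends, as $t\to\infty$, to $\tfrac{1}{\mu}\int_0^\infty\hat{f}(s)\,ds$ with $\mu:=\int_0^\infty u\,d\hat{G}(u)=m\int_0^\infty u\,e^{-\alpha u}\,dG(u)$; since $\int_0^\infty\hat{f}(s)\,ds=\int_0^\infty f(s)e^{-\alpha s}\,ds$, this limit is precisely $n_f$, i.e. $K(t)\sim n_f e^{\alpha t}$. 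Here assumption (a) provides the continuity and $\mathbf{L}^1$-membership of $\hat{f}$, assumption (b) guarantees (together with $u\mapsto u e^{-\alpha u}$ being bounded in the regime covered by the statement) that $\mu\in(0,\infty)$ so that $n_f$ is well defined, and the non-latticeness of $G$ is what licenses the renewal theorem.

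The hard part will be upgrading ``$\hat{f}$ continuous and integrable'' from (a) to ``$\hat{f}$ directly Riemann integrable'', which is what the renewal theorem really requires and which continuity plus $\mathbf{L}^1$ do not imply in general; the standard fix is to dominate $\hat{f}$ by a nonincreasing integrable function. In every application of this lemma the point is transparent: the source term is either compactly supported (through the factor $\mathds{1}_{t\le a}$ when $a<\infty$), or of the form $e^{-\alpha t}(1-\mathcal{B}_i(t))$, whose integrability against every polynomial weight is exactly corollary \ref{Moment_Distribution_dB} under hypothesis \ref{Hypothesis_lambdaj}, or — in the inductive step where the lemma is invoked — a convolution of such a function with $d\mathcal{B}_i$, which inherits the same exponential decay. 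For completeness, the statement is Lemma~2, p.~161 of \cite{harris_theory_1963}, whose original proof follows exactly this exponential tilting, the renewal asymptotics there being obtained through Laplace transforms and a Tauberian theorem, with the second-moment hypothesis (b) entering to control the singularity of $(1-\hat{G}^{*}(\zeta))^{-1}$ at $\zeta=0$.
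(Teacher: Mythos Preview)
The paper does not prove this lemma: it is stated without proof under the heading ``Harris lemmas'' with the sentence ``We recall some results on the renewal theory presented in \cite{harris_theory_1963}, p.161--163,'' so there is no in-paper argument to compare your proposal against. Your sketch via exponential tilting $\hat{K}(t)=e^{-\alpha t}K(t)$ and the key renewal theorem is the standard route and is essentially how Harris proceeds; you are right to flag that continuity plus $\mathbf{L}^1$ alone do not imply direct Riemann integrability, and your remark that in every invocation of the lemma within the paper the source term has an explicit exponentially decaying or compactly supported form is exactly the practical justification needed here.
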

			
				\begin{lemma}[Harris's lemma 4, p.163]\label{HarrisLemma_4}
				Suppose that $m< 1$ and $\lim\limits_{t \rightarrow \infty} f(t) = c$. then $K(t) \rightarrow \frac{c}{1 - m}$.
			\end{lemma}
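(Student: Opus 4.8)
The plan is to solve the renewal equation \eqref{Supp_RenewalEqu} explicitly by a renewal-type (Neumann) series and then pass to the limit by dominated convergence. The point is that, in contrast with Harris' Lemma~\ref{HarrisLemma_2}, the hypothesis $m<1$ makes the renewal measure $\sum_{n\ge 0} m^n G^{\ast n}$ a \emph{finite} measure, so no Malthusian rescaling and no non-latticeness of $G$ are needed. I would first record two elementary facts. Since $f$ is continuous on $\mathbb{R}_+$ and admits the finite limit $c$ at infinity, $f$ is bounded: $\|f\|_\infty := \sup_{t\ge 0}|f(t)| < \infty$. Secondly, $G$ being a distribution function on $(0,\infty)$ with $G(0+)=0$, every iterated convolution $G^{\ast n}$ is a probability measure on $(0,\infty)$, hence $\int_{[0,t]} dG^{\ast n}(u)\uparrow 1$ as $t\to\infty$.

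Next I would establish the series representation and the boundedness of the (locally bounded) solution $K$. Iterating \eqref{Supp_RenewalEqu} $N$ times yields $K = \sum_{n=0}^{N-1} m^n\, f\ast G^{\ast n} + m^N K\ast G^{\ast N}$. On any interval $[0,T]$ the remainder is bounded in absolute value by $m^N \sup_{[0,T]}|K|\to 0$, while $|m^n f\ast G^{\ast n}(t)|\le m^n\|f\|_\infty$; letting $N\to\infty$ gives both $\|K\|_\infty\le \|f\|_\infty/(1-m)$ and the representation $K(t) = \sum_{n\ge 0} m^n\, f\ast G^{\ast n}(t)$, the series converging uniformly in $t$ by the Weierstrass test (its terms are dominated by the summable sequence $m^n\|f\|_\infty$).

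Then I would pass to the limit term by term. For fixed $n$, write $m^n f\ast G^{\ast n}(t) = m^n\int_{[0,\infty)} f(t-u)\,\mathds{1}_{u\le t}\, dG^{\ast n}(u)$; as $t\to\infty$ the integrand converges pointwise in $u$ to $c$ and is dominated by $\|f\|_\infty$, which is $dG^{\ast n}$-integrable, so dominated convergence gives $m^n f\ast G^{\ast n}(t)\to m^n c$. Because the series converges uniformly in $t$, $\lim_{t\to\infty}$ and $\sum_{n\ge 0}$ may be exchanged, whence $K(t)\to \sum_{n\ge 0} m^n c = \frac{c}{1-m}$.

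The argument is essentially routine; the only points requiring care are the uniformity of the series (which alone licenses swapping $\lim_t$ with $\sum_n$) and a one-line uniqueness remark ensuring that the $K$ of the statement is indeed the series solution: any two locally bounded solutions differ by $D$ satisfying $D = mD\ast G = m^n D\ast G^{\ast n}$, so $\sup_{[0,T]}|D|\le m^n\sup_{[0,T]}|D|\to 0$ for every $T$, forcing $D\equiv 0$. I do not expect a genuine obstacle here beyond this bookkeeping.
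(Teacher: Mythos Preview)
Your argument is correct and complete. The Neumann-series representation, the Weierstrass $M$-test giving uniform convergence, the termwise limit via dominated convergence, and the final uniqueness remark are all sound; the key observation that $m<1$ makes the renewal measure finite (so no non-lattice hypothesis or Malthusian rescaling is needed) is exactly the right one.

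There is nothing to compare against in the paper itself: the lemma is not proved there but merely \emph{recalled} from Harris' book (\cite{harris_theory_1963}, p.~163), as announced in the sentence introducing the ``Harris lemmas'' paragraph. Your write-up therefore supplies strictly more than the paper does on this point. If you want it to stand alone, you might state explicitly at the outset that $K$ denotes the (necessarily unique) locally bounded solution of \eqref{Supp_RenewalEqu}, since your iteration bound $|m^N K\ast G^{\ast N}(t)|\le m^N\sup_{[0,T]}|K|$ uses local boundedness before you have derived the global bound $\|K\|_\infty\le \|f\|_\infty/(1-m)$; but this is cosmetic and your closing uniqueness paragraph already handles it.
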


				\paragraph{Additional computation details for the proof of theorem \ref{MoyenneTempsLong}} \label{AdditDetails}
				
				We detail how to obtain formula \eqref{M_k_E3}. We first take the Laplace transform of \eqref{MomentDOrdre1Equ} for $\alpha = \Malthus$ for $i = \MalthusIndex + 1$ and $j \in \llbracket \MalthusIndex+1, J \rrbracket $. We distinguish the case $i = j$ from the others. If $j = \MalthusIndex + 1$, we obtain 
				\begin{multline*}
				\int_{0}^{\infty}M^a_{j,j}(t)e^{-\Malthus t}dt = \\ \frac{1}{\eigenfunction ^{(j)}(0)}	\int_{0}^{a} \eigenfunction ^{(j)}(t)dt + 2p_S^{(j)} 	\int_{0}^{\infty} \left [ \int_{0}^{t}d \mathcal{B}_j(t -u)M^a_{j,j}(u) du \right ] e^{-\Malthus t}dt.
				\end{multline*}
				By the Laplace transform property for the convolution, we deduce that 
				\begin{equation*}
				\int_{0}^{\infty} \left [ \int_{0}^{t}d \mathcal{B}_j(t -u)M^a_{j,j}(u) du \right ] e^{-\Malthus t}dt =  d \mathcal{B}^*_j(\Malthus)\int_{0}^{\infty}M^a_{j,j}(t)e^{-\Malthus t}dt, 
				\end{equation*}
				hence
				\begin{multline*}
				\int_{0}^{\infty}M^a_{j,j}(t)e^{-\Malthus t}dt =  \frac{1}{\eigenfunction ^{(j)}(0)}	\int_{0}^{a} \eigenfunction ^{(j)}(t)dt + 2p_S^{(j)} d \mathcal{B}^*_j(\Malthus)\int_{0}^{\infty}M^a_{j,j}(t)e^{-\Malthus t}dt \\
				=  \frac{1}{\eigenfunction ^{(j)}(0) \times (1 - 2p_S^{(j)} d \mathcal{B}^*_j(\Malthus))}	\int_{0}^{a} \eigenfunction ^{(j)}(t)dt .
				\end{multline*}
				
				When $j > \MalthusIndex+ 1$, we have:
				\begin{multline*}
				\int_{0}^{\infty}M^a_{\MalthusIndex + 1,j}(t)e^{-\Malthus t}dt = \\
				2p_S^{(\MalthusIndex + 1)} d \mathcal{B}^*_{\MalthusIndex + 1}(\Malthus)\int_{0}^{\infty}M^a_{\MalthusIndex + 1,j}(t)e^{-\Malthus t}dt +  2p_L^{(\MalthusIndex + 1)} d \mathcal{B}^*_j(\Malthus)\int_{0}^{\infty}M^a_{\MalthusIndex + 2,j}(t)e^{-\Malthus t}dt .
				\end{multline*}
				
				Hence, 
				\begin{equation*}
				\int_{0}^{\infty}M^a_{\MalthusIndex + 1,j}(t)e^{-\Malthus t}dt = \frac{ 2p_L^{(\MalthusIndex + 1)} }{1 - 2p_S^{(\MalthusIndex + 1)} d \mathcal{B}^*_{\MalthusIndex + 1}(\Malthus)} \int_{0}^{\infty}M^a_{\MalthusIndex + 2,j}(t)e^{-\Malthus t}dt .
				\end{equation*}
				Here, we obtain a recurrence formula between $\displaystyle	\int_{0}^{\infty}M^a_{\MalthusIndex + 1,j}(t)e^{-\Malthus t}dt $ and \\ $ \displaystyle \int_{0}^{\infty}M^a_{\MalthusIndex + 2,j}(t)e^{-\Malthus t}dt  $, and we obtain \eqref{M_k_E3}.

			\paragraph{Second moments}
				\begin{definition}\label{Definition_M2}
					Let $a\geq 0$. We define the second moment $$L^a(t) := (\mathbb{E}[ (Y_t^{(a,j)})^2 \lvert Z_0 = \delta_{i,0}])_{i,j \in \llbracket 1, J \rrbracket } \, .$$
				\end{definition}
					\begin{lemma}\label{MomentDOrdre2}
					$L^a(t)$ is solution of the renewal equation: $	\forall (i,j) \in \llbracket 1, J\rrbracket ^{2},$
					\begin{equation}\label{MomentDOrdre2Equ}
				 \begin{array}{lr}
						L^a_{i,j}(t) = & \delta_{i,j}(1- \mathcal{B}_{i}(t))\mathds{1}_{  t \leq a }   + [2p^{(i)}_{S}  L^a_{i,j} + 2p^{(i)}_{L} L^a_{i+1,j} ]\ast d\mathcal{B}_{i}(t)  \\
						& +  [2p^{(i)}_{2,0} (M^a_{i,j})^2 + 2 p^{(i)}_{1,1}M^a_{i,j}M^a_{i+1,j}+ 2p^{(i)}_{0,2} (M^a_{i+1,j})^2 ]\ast d\mathcal{B}_{i}(t).
						\end{array}
					\end{equation}
				\end{lemma}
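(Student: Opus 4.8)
The plan is to follow verbatim the strategy used for Lemma \ref{MomentDOrdre1}, working with the generating functions $F^{(i,a)}$ of Definition \ref{DefinitionFonctionGeneratrice} and the renewal identity \eqref{FonctionGeneratriceEquation}. First I would record the elementary link between second moments and second factorial moments: for the $\mathbb{N}^J$-valued vector $Y^a_t$ one has $\partial_{s_j}^2 F^{(i,a)}[\mathbf{s};t]\big|_{\mathbf{s}=\mathbf{1}} = \mathbb{E}\big[Y^{(j,a)}_t(Y^{(j,a)}_t-1)\,\big|\,Z_0=\delta_{i,0}\big] = L^a_{i,j}(t)-M^a_{i,j}(t)$, so that, setting $N^a_{i,j}(t):=L^a_{i,j}(t)-M^a_{i,j}(t)$, it suffices to obtain a renewal equation for $N^a$ and then add back $M^a$. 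The finiteness of $L^a_{i,j}(t)$ (needed to legitimate differentiating the generating function twice at $\mathbf{s}=\mathbf{1}$, on the boundary of its domain) follows from a standard moment bound on the branching process, e.g. a Gr\"onwall argument on $\mathbb{E}[\sup_{t\le T}N_t^2]$ analogous to the one carried out for $\mathbb{E}[\sup_{t\le T}N_t]$ in the proof of Theorem \ref{MultiCouche_Gene_Theo}.

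Next I would differentiate \eqref{FonctionGeneratriceEquation} twice with respect to $s_j$. The boundary term $(s_i\mathds{1}_{t\le a}+\mathds{1}_{t>a})(1-\mathcal{B}_i(t))$ is affine in $s_i$, hence contributes nothing to the second $s_j$-derivative. For the convolution term, applying the chain rule to $f^{(i)}\big(F^a[\mathbf{s},\cdot]\big)$ (with $f^{(i)}(\mathbf{u})=p_{2,0}^{(i)}u_i^2+p_{1,1}^{(i)}u_iu_{i+1}+p_{0,2}^{(i)}u_{i+1}^2$) and evaluating at $\mathbf{s}=\mathbf{1}$, where $F^{(k,a)}[\mathbf{1};\cdot]=1$, $\partial_{s_j}F^{(k,a)}[\mathbf{1};\cdot]=M^a_{k,j}$ and $\partial_{s_j}^2F^{(k,a)}[\mathbf{1};\cdot]=N^a_{k,j}$, the cross terms $\partial_{s_j}F\,\partial_{s_j}F$ produce the squared-mean contributions and the terms $F\,\partial_{s_j}^2F$ produce the $N^a$ contributions:
\begin{equation*}
\partial_{s_j}^2\big(f^{(i)}(F^a[\cdot])\big)\big|_{\mathbf{s}=\mathbf{1}} = (2p_{2,0}^{(i)}+p_{1,1}^{(i)})N^a_{i,j} + (2p_{0,2}^{(i)}+p_{1,1}^{(i)})N^a_{i+1,j} + 2p_{2,0}^{(i)}(M^a_{i,j})^2 + 2p_{1,1}^{(i)}M^a_{i,j}M^a_{i+1,j} + 2p_{0,2}^{(i)}(M^a_{i+1,j})^2.
\end{equation*}
Using the definitions $p_S^{(i)}=\tfrac12p_{1,1}^{(i)}+p_{2,0}^{(i)}$ and $p_L^{(i)}=\tfrac12p_{1,1}^{(i)}+p_{0,2}^{(i)}$, the first two coefficients become $2p_S^{(i)}$ and $2p_L^{(i)}$, so that
\begin{equation*}
N^a_{i,j}(t) = \big[2p_S^{(i)}N^a_{i,j}+2p_L^{(i)}N^a_{i+1,j}\big]\ast d\mathcal{B}_i(t) + \big[2p_{2,0}^{(i)}(M^a_{i,j})^2 + 2p_{1,1}^{(i)}M^a_{i,j}M^a_{i+1,j}+2p_{0,2}^{(i)}(M^a_{i+1,j})^2\big]\ast d\mathcal{B}_i(t).
\end{equation*}

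Finally I would add $M^a_{i,j}(t)$ to both sides and substitute the renewal equation \eqref{MomentDOrdre1Equ} for $M^a_{i,j}$: by linearity of the convolution, $2p_S^{(i)}(N^a_{i,j}+M^a_{i,j})+2p_L^{(i)}(N^a_{i+1,j}+M^a_{i+1,j})=2p_S^{(i)}L^a_{i,j}+2p_L^{(i)}L^a_{i+1,j}$ in the convolution term, while the source term of \eqref{MomentDOrdre1Equ} supplies the $\delta_{i,j}(1-\mathcal{B}_i(t))\mathds{1}_{t\le a}$ contribution (consistent with the second-moment interpretation, since $\delta_{i,j}^2=\delta_{i,j}$ and $(\mathds{1}_{t\le a})^2=\mathds{1}_{t\le a}$). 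This yields exactly \eqref{MomentDOrdre2Equ}. I do not expect a real obstacle: the argument is a one-step extension of the proof of Lemma \ref{MomentDOrdre1}, and the only point deserving care is the justification, noted above, of differentiating the generating function twice at $\mathbf{s}=\mathbf{1}$ (i.e. finiteness of second moments and differentiation under the convolution integral). I note that the same identity can be derived purely probabilistically by conditioning on the first division time of the ancestor cell and using the branching property together with $\mathbb{E}[(X+Y)^2]=\mathbb{E}[X^2]+\mathbb{E}[Y^2]+2\mathbb{E}[X]\mathbb{E}[Y]$ for the two independent daughter subpopulations; I would present the generating-function version for consistency with Lemma \ref{MomentDOrdre1}.
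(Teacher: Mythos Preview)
Your proposal is correct and follows essentially the same approach as the paper: differentiate the generating-function renewal equation \eqref{FonctionGeneratriceEquation} twice in $s_j$, evaluate at $\mathbf{s}=\mathbf{1}$ to obtain a renewal equation for the second factorial moment $N^a_{i,j}=L^a_{i,j}-M^a_{i,j}$, and then add back \eqref{MomentDOrdre1Equ} to recover \eqref{MomentDOrdre2Equ}. The paper proceeds identically (without your explicit remarks on finiteness of second moments or the alternative probabilistic derivation), so there is nothing to add.
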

				\begin{proof}[Proof of lemma (\ref{MomentDOrdre2})]
					Note that  $\frac{\partial^2}{\partial s_{j}}F^{(i,a)}[s;t]|_{\mathbf{s} = 1} = L^a_{i,j}(t) - M^a_{i,j}(t) $ . We derive (\ref{Transitory_Gene_Deriv_Equ}) with respect to $s_j$ and obtain:
					\begin{equation*}
						\frac{\partial^2}{\partial s_{j}^2}F^{(i,a)}\big[\mathbf{s};t\big]  =  \int_{0}^{t}\frac{\partial}{\partial s_{j}^2}f^{(i)}\big(F^a[\mathbf{s},u]\big)d\mathcal{B}_{i}(t-u) du
					\end{equation*}
					where 
					\begin{multline*}
						\frac{\partial^2}{\partial s_{j}^2}f^{(i)}\big(F^a[\mathbf{s},t]\big) =  2p^{(i)}_{2,0}\left (F^{(i,a)}[\mathbf{s};t]\frac{\partial^2}{\partial s_{j}}F^{(i,a)}[\mathbf{s};t] + (\frac{\partial}{\partial s_{j}}F^{(i,a)}[\mathbf{s};t] )^2 \right )  \\
						 +  2p^{(i)}_{0,2} \left (F^{(i+1,a)}[\mathbf{s};t]\frac{\partial^2}{\partial s_{j}}F^{(i+1,a)}[\mathbf{s};t] + (\frac{\partial}{\partial s_{j}}F^{(i+1,a)}[\mathbf{s};t] )^2 \right ) \\
						 + p^{(i)}_{1,1} \left ( F^{(i+1,a)}[\mathbf{s};t]\frac{\partial^2}{\partial s_{j}}F^{(i,a)}[\mathbf{s};t]  + 2\frac{\partial}{\partial s_{j}}F^{(i,a)}[\mathbf{s};t]\frac{\partial}{\partial s_{j}}F^{(i+1,a)}[\mathbf{s};t]  \right.  \\
						 \left. + F^{(i,a)}[\mathbf{s};t]\frac{\partial^2}{\partial s_{j}}F^{(i+1,a)}[\mathbf{s};t] \right ) .
					\end{multline*}
					
					When $\mathbf{s} = 1$, we get
					\begin{multline*}
					L^a_{i,j}(t) - M^a_{i,j}(t)= 2p^{(i)}_{2,0} \left (	L^a_{i,j} - M^a_{i,j} +  (M^a_{i,j})^{2} \right )\ast d\mathcal{B}_{i}(t)  \\
						 +  2p^{(i)}_{0,2}\left (L^a_{i+1,j} - M^a_{i+1,j}+  (M^a_{i+1,j})^{2} \right )\ast d\mathcal{B}_{i}(t) \\
						 + p^{(i)}_{1,1}\left (L^a_{i,j} - M^a_{i,j}+ 2M^a_{i,j}M^a_{i+1,j}  +  L^a_{i+1,j}- M^a_{i+1,j} \right ) \ast d\mathcal{B}_{i}(t).
					\end{multline*}
					Using the system of equations (\ref{MomentDOrdre1Equ}), we deduce (\ref{MomentDOrdre2Equ}).
				\end{proof}

				\begin{theorem}\label{MomentDOrdre2Long}
						Under the same hypotheses as in theorem \ref{MoyenneTempsLong}, and supposing that for all $i \in \llbracket 1, J \rrbracket, \MalthusPerLayer_i > 0$, we have, for all $a \geq 0$:
					\begin{equation*}
						\forall i \in \llbracket 1, J \rrbracket, \quad  \forall k \in \llbracket 0, J-i \rrbracket \quad 
						L^a_{i,i+k}(t) \sim \widetilde{L}_{i,i+k}(a) e^{2\MalthusPerLayer_{i,i+k} t}, \text{  as  } t\rightarrow \infty
					\end{equation*}
					such that 
					\begin{equation*}
						\widetilde{L}_{i,i}(a) = \frac{2p^{(i)}_{2,0} d\mathcal{B}^*_{i}(2\MalthusPerLayer_{i})( \widetilde{ M}^a_{i,i})^2}{1- 2p^{(i)}_{S}d\mathcal{B}^*_{i}(2\MalthusPerLayer_{i} )},
					\end{equation*}
					and for $k \in \llbracket 1, J - i \rrbracket$,
					\begin{equation}\label{M_tilde_ik_def_moment2}
					\widetilde{L}_{i,i+k}(a) = 
					\left\{
					\begin{array}{ll}
					\frac{2p^{(i)}_{2,0}(\widetilde{M}^a_{i,i+k})^2 d\mathcal{B}^*_{i}(2\MalthusPerLayer_{i,i +k}) }{1 -2p_S^{(i)}d \mathcal{B}_i^*(2\MalthusPerLayer_{i,i+k}) } + l_{i,i+k}(a), & \text{ if } \MalthusPerLayer_{i,i+k} \neq  \MalthusPerLayer_i   \\[0.3cm]
					\frac{2p^{(i)}_{2,0}(\widetilde{M}^a_{i,i+k})^2 d\mathcal{B}^*_{i}(2\MalthusPerLayer_{i,i +k}) }{1 -2p_S^{(i)}d \mathcal{B}_i^*(2\MalthusPerLayer_{i,i+k}) }, & \text{ if } \MalthusPerLayer_{i,i+k} =  \MalthusPerLayer_i
					\end{array}
					\right. 
					\end{equation}
					where $$l_{i,i+k}(a) = 	\frac{ \left [  \widetilde{L}_{i+1,i+k}(a) + 2 p^{(i)}_{1,1}\widetilde{M}^a_{i,i+k}\widetilde{M}^a_{i+1,i+k}+ 2p^{(i)}_{0,2} (\widetilde{M}^a_{i+1,i+k})^2  \right ]d\mathcal{B}^*_{i}(2\MalthusPerLayer_{i,i +k})}{1-2p_S^{(i)}d \mathcal{B}_i^*(2\MalthusPerLayer_{i,i+k}) }.$$
				\end{theorem}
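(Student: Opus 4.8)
The plan is to mirror the recurrence used in the proof of Theorem~\ref{MoyenneTempsLong_Interm}, now applied to the renewal system~\eqref{MomentDOrdre2Equ} for the second moments $L^a$. I would argue by induction on the downstream distance $k=j-i$, and at each step rescale~\eqref{MomentDOrdre2Equ} by $e^{-2\MalthusPerLayer_{i,i+k}t}$ and invoke Harris's Lemma~\ref{HarrisLemma_4}. The structural point — and exactly what the extra assumption $\MalthusPerLayer_i>0$ for all $i$ buys — is that then $2\MalthusPerLayer_{i,i+k}\ge 2\MalthusPerLayer_i>\MalthusPerLayer_i$ for every $i,k$, so, by the strict monotonicity of $d\mathcal{B}_i^*$ (Remark~\ref{remarks_dB*}) and the defining relation $d\mathcal{B}_i^*(\MalthusPerLayer_i)=\frac{1}{2p_S^{(i)}}$,
\[
2p_S^{(i)}d\mathcal{B}_i^*(2\MalthusPerLayer_{i,i+k})\;<\;1,
\]
and moreover $d\mathcal{B}_i^*(2\MalthusPerLayer_{i,i+k})<\infty$ since $2\MalthusPerLayer_{i,i+k}>0>-\overline{b}_i$. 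Thus each rescaled renewal equation is subcritical, so Lemma~\ref{HarrisLemma_4} alone suffices; unlike the first-moment case, no analogue of the finite-second-moment condition of Harris's Lemma~\ref{HarrisLemma_2} / Corollary~\ref{Moment_Distribution_dB} is needed.

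For the base case $k=0$: from~\eqref{MomentDOrdre2Equ} with $j=i$, $L^a_{i,i}$ solves a renewal equation with kernel $2p_S^{(i)}d\mathcal{B}_i$, source $(1-\mathcal{B}_i)\mathds{1}_{t\le a}$ and forcing $2p_{2,0}^{(i)}(M^a_{i,i})^2\ast d\mathcal{B}_i$. Rescaling by $e^{-2\MalthusPerLayer_i t}$ produces a renewal equation with offspring mean $m=2p_S^{(i)}d\mathcal{B}_i^*(2\MalthusPerLayer_i)<1$ and, using $M^a_{i,i}(t)\sim\widetilde{M}_{i,i}(a)e^{\MalthusPerLayer_i t}$ (Theorem~\ref{MoyenneTempsLong_Interm}), the continuity and local boundedness of $M^a_{i,i}$, the finiteness of $d\mathcal{B}_i^*(2\MalthusPerLayer_i)$ and dominated convergence exactly as in the proof of Theorem~\ref{MoyenneTempsLong_Interm}, a source term tending to $c:=2p_{2,0}^{(i)}\widetilde{M}_{i,i}(a)^2\,d\mathcal{B}_i^*(2\MalthusPerLayer_i)$. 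Lemma~\ref{HarrisLemma_4} then yields $L^a_{i,i}(t)e^{-2\MalthusPerLayer_i t}\to c/(1-m)=\widetilde{L}_{i,i}(a)$.

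For the induction step, assume the statement at rank $k-1$ for every mother layer, fix $i$, and write~\eqref{MomentDOrdre2Equ} with $j=i+k>i$; the diagonal term vanishes and $L^a_{i,i+k}$ solves a renewal equation with kernel $2p_S^{(i)}d\mathcal{B}_i$, a downstream term $2p_L^{(i)}L^a_{i+1,i+k}\ast d\mathcal{B}_i$, and a forcing made of $2p_{2,0}^{(i)}(M^a_{i,i+k})^2+2p_{1,1}^{(i)}M^a_{i,i+k}M^a_{i+1,i+k}+2p_{0,2}^{(i)}(M^a_{i+1,i+k})^2$ convolved with $d\mathcal{B}_i$. If $\MalthusPerLayer_{i,i+k}\neq\MalthusPerLayer_i$, then $\MalthusPerLayer_{i,i+k}=\MalthusPerLayer_{i+1,i+k}>\MalthusPerLayer_i$; rescaling by $e^{-2\MalthusPerLayer_{i,i+k}t}$, the induction hypothesis on $L^a_{i+1,i+k}$ and the $M$-asymptotics of Theorem~\ref{MoyenneTempsLong_Interm} make every forcing piece converge (each carries rate exactly $2\MalthusPerLayer_{i,i+k}$), so the rescaled source tends to
\[
c=\big[2p_L^{(i)}\widetilde{L}_{i+1,i+k}(a)+2p_{2,0}^{(i)}\widetilde{M}_{i,i+k}(a)^2+2p_{1,1}^{(i)}\widetilde{M}_{i,i+k}(a)\widetilde{M}_{i+1,i+k}(a)+2p_{0,2}^{(i)}\widetilde{M}_{i+1,i+k}(a)^2\big]d\mathcal{B}_i^*(2\MalthusPerLayer_{i,i+k}),
\]
and Lemma~\ref{HarrisLemma_4} with $m=2p_S^{(i)}d\mathcal{B}_i^*(2\MalthusPerLayer_{i,i+k})<1$ gives $c/(1-m)$, which is the first branch of~\eqref{M_tilde_ik_def_moment2}. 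If $\MalthusPerLayer_{i,i+k}=\MalthusPerLayer_i$, then $\MalthusPerLayer_{i+1,i+k}<\MalthusPerLayer_i$ by Hypothesis~\ref{Hypothesis_MalthusPerLayerDistinct}, so after rescaling by $e^{-2\MalthusPerLayer_i t}$ the $L^a_{i+1,i+k}$ term, the mixed term and the $(M^a_{i+1,i+k})^2$ term all vanish in the limit (strictly smaller rates), only $2p_{2,0}^{(i)}(M^a_{i,i+k})^2$ survives, and Lemma~\ref{HarrisLemma_4} produces the second branch of~\eqref{M_tilde_ik_def_moment2}. This closes the recurrence.

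The only genuinely delicate step is justifying the limits of the rescaled convolution sources: this requires the uniform bounds $e^{-2\MalthusPerLayer t}\,(M^a_{\cdot,\cdot}(t))^2\le K$ on $\mathbb{R}_+$ (from the asymptotics of Theorem~\ref{MoyenneTempsLong_Interm} together with continuity/local boundedness) and the integrability $\int_0^{\infty}e^{-2\MalthusPerLayer_{i,i+k}u}\,d\mathcal{B}_i(u)=d\mathcal{B}_i^*(2\MalthusPerLayer_{i,i+k})<\infty$, both of which rest on $\MalthusPerLayer_i>0$; granted these, dominated convergence applies verbatim as for the term $M^a_{i+1,i+k}\ast d\mathcal{B}_i(t)e^{-\MalthusPerLayer_{i,i+k}t}$ in the proof of Theorem~\ref{MoyenneTempsLong_Interm}. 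Identifying $c/(1-m)$ with the stated $\widetilde{L}_{i,i+k}(a)$ is then a routine substitution, the $l_{i,i+k}(a)$ term in~\eqref{M_tilde_ik_def_moment2} being precisely the contribution of the downstream and mixed pieces of $c$.
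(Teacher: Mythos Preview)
Your proposal is correct and follows essentially the same route as the paper: induction on $k$, rescaling \eqref{MomentDOrdre2Equ} by $e^{-2\MalthusPerLayer_{i,i+k}t}$, the observation that $\MalthusPerLayer_i>0$ forces $2p_S^{(i)}d\mathcal{B}_i^*(2\MalthusPerLayer_{i,i+k})<1$ so that only Harris's Lemma~\ref{HarrisLemma_4} is needed, the same two-case split according to whether $\MalthusPerLayer_{i,i+k}=\MalthusPerLayer_i$, and the same dominated-convergence argument for the convolution sources. The only visible discrepancy is bookkeeping: your limit $c$ carries a factor $2p_L^{(i)}$ in front of $\widetilde{L}_{i+1,i+k}(a)$ (which is what \eqref{MomentDOrdre2Equ} actually gives), whereas the stated $l_{i,i+k}(a)$ in the theorem has a bare $\widetilde{L}_{i+1,i+k}(a)$; this is a typo in the statement, not a flaw in your argument.
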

				\begin{proof}[Proof] Let $a \geq 0$. We introduce the following notations 
					\begin{equation*}
					\widehat{L}^a_{i,i+k}(t) = L^a_{i,i+k}(t)e^{- 2\MalthusPerLayer_{i,i+k}t}, 	\quad \widehat{d\mathcal{B}_{i}}(t) = \frac{d\mathcal{B}_{i}(t)}{d\mathcal{B}^*_{i}(2\MalthusPerLayer_{i,i+k})}e^{-2\MalthusPerLayer_{i,i+k}t}. 
					\end{equation*}
					We use the same approach as that performed for the proof of theorem \ref{MoyenneTempsLong}, and proceed by recurrence:
				\begin{equation*}
					\mathcal{H}^{k} : \quad \forall i \in \llbracket 1, J - k \rrbracket, \begin{array}{l}
				L^a_{i,i+k}(t) \sim \widetilde{L}^{a}_{i,i+k} e^{2\MalthusPerLayer_{i,i+k} t}, \text{  as  } t\rightarrow \infty.
				\end{array}
				\end{equation*}
				When $ k = 0$, according to \eqref{MomentDOrdre2Equ} $L^a_{i,i} $ is solution of the renewal equation:
				\begin{equation}\label{MomentOrderTwo_eq_diag}
				L^a_{i,i}(t) =  \left ( 1- \mathcal{B}_{i}(t) \right ) \mathds{1}_{t \leq a} + 2p^{(i)}_{2,0} (M^a_{i,i})^2 \ast d\mathcal{B}_{i}(t) + 2p^{(i)}_{S} L^a_{i,i} \ast d\mathcal{B}_{i}(t)  .
				\end{equation}
				We rescale \eqref{MomentOrderTwo_eq_diag} by $ e^{- 2\MalthusPerLayer_{i} t}$ and obtain:
				\begin{equation*}
					\widehat{L}^a_{i,i}(t) =  e^{- 2\MalthusPerLayer_{i} t}\left [\left ( 1- \mathcal{B}_{i}(t) \right ) \mathds{1}_{t \leq a} + 2p^{(i)}_{2,0} (M^a_{i,i})^2 \ast d\mathcal{B}_{i}(t) \right ] + 2p^{(i)}_{S}d\mathcal{B}^*_{i}(2\MalthusPerLayer_{i} )  \widehat{L}^a_{i,i} \ast \widehat{d\mathcal{B}}_{i}(t)  .
				\end{equation*}
				Note that as $ 2\MalthusPerLayer_{i} > \MalthusPerLayer_{i} > 0$, we have $2p^{(i)}_{S}d\mathcal{B}^*_{i}(2\MalthusPerLayer_{i} )  < 1$, so that we can use lemma \ref{HarrisLemma_4}. We compute the limit of the source term :
				\begin{equation*}
					\lim\limits_{t \rightarrow \infty} e^{- 2\MalthusPerLayer_{i} t} \left [\left ( 1- \mathcal{B}_{i}(t) \right ) \mathds{1}_{t \leq a}  + 2p^{(i)}_{2,0} (M^a_{i,i})^2 \ast d\mathcal{B}_{i}(t) \right ]. 
				\end{equation*}
				From hypothesis \ref{Hypothesis_DivisionRate}, we have:
				\begin{equation*}
				\int_{0}^{\infty} \left ( 1- \mathcal{B}_{i}(t) \right )\mathds{1}_{t \leq a}    e^{-\MalthusPerLayer_i t}dt \leq \frac{1}{\bar{b}_i} \int_{0}^{\infty}d\mathcal{B}_{i}(t) e^{-\MalthusPerLayer_i t}dt < \infty.
				\end{equation*} 
				Thus, $ \left ( 1- \mathcal{B}_{i}(t) \right )\mathds{1}_{t \leq a}e^{-\MalthusPerLayer_i t} \in \mathbf{L}^1(\mathbb{R}_+) $ and, $\lim\limits_{t \rightarrow \infty} e^{- \MalthusPerLayer_{i} t} \left [1- \mathcal{B}_{i}(t) \right ]  = 0 $. Using the hypothesis $\MalthusPerLayer_{i} > 0 $, we obtain that $	\lim\limits_{t \rightarrow \infty} e^{- 2\MalthusPerLayer_{i} t} \left [1- \mathcal{B}_{i}(t) \right ]  = 0 $. Then, 
				\begin{equation*}
					e^{- 2\MalthusPerLayer_{i} t}(M^a_{i,i})^2 \ast d\mathcal{B}_{i}(t) = \int_{0}^{\infty} \mathds{1}_{[0,t]}(M^a_{i,i}(t-u)e^{- \MalthusPerLayer_{i} (t-u)})^2 d\mathcal{B}_{i}(u)e^{-2 \MalthusPerLayer_{i} u}du.
				\end{equation*}
				Using theorem \ref{MoyenneTempsLong}, we have $M^a_{i,i}(t) \sim e^{\MalthusPerLayer_{i} t} \widetilde{ M}_{i,i}(a)$, as $t \rightarrow \infty$. Applying Lebesgue dominated convergence theorem, we obtain 
				\begin{equation*}
					\lim\limits_{t \rightarrow \infty} e^{- 2\MalthusPerLayer_{i} t}(M^a_{i,i})^2 \ast d\mathcal{B}_{i}(t)  =( \widetilde{ M}_{i,i}(a))^2 d\mathcal{B}^*_{i}(2\MalthusPerLayer_{i}).
				\end{equation*}
				Then, applying lemma \ref{HarrisLemma_4}, we deduce:
				\begin{equation*}
					\begin{array}{l}
						L^a_{i,i}(t) \sim \widetilde{L}_{i,i}(a) e^{ 2\MalthusPerLayer_{i} t}, \text{  as  } t\rightarrow \infty,  \text{  where  } 
						\widetilde{L}_{i,i}(a) =   \frac{2p^{(i)}_{2,0} d\mathcal{B}^*_{i}(2\MalthusPerLayer_{i})( \widetilde{ M}_{i,i}(a))^2}{1- 2p^{(i)}_{S}d\mathcal{B}^*_{i}(2\MalthusPerLayer_{i} )}.
					\end{array}
				\end{equation*}
				Hence, $\mathcal{H}^{0}$ is true.
				Then, we suppose that $\mathcal{H}^{k - 1}$ holds and we show $\mathcal{H}^{k}$. According to \eqref{MomentDOrdre2Equ}, we write the equation for $L^a_{i,i+k} $ and rescale it by $e^{-2\MalthusPerLayer_{i,i +k}t}$:
				\begin{multline*}
					\widehat{L}^a_{i,i+k}(t) =  
					2p^{(i)}_{S} d\mathcal{B}^*_{i}(2\MalthusPerLayer_{i,i +k} ) \widehat{L}^a_{i,i+k} \ast \widehat{d\mathcal{B}}_{i}(t) +  e^{-2\MalthusPerLayer_{i,i +k}t}2p^{(i)}_{L}L^a_{i+1,i+k}\ast d\mathcal{B}_{i}(t)   \\
					+  e^{-2\MalthusPerLayer_{i,i +k}t}\left [2p^{(i)}_{2,0} (M^a_{i,i+k})^2 + 2 p^{(i)}_{1,1}M^a_{i,i+k}M^a_{i+1,i+k}+ 2p^{(i)}_{0,2} (M^a_{i+1,i+k})^2  \right  ]\ast d\mathcal{B}_{i}(t).
				\end{multline*}
				Here, $m = 2p^{(i)}_{S} d\mathcal{B}^*_{i}(2\MalthusPerLayer_{i,i +k} )< 1$, so that we can use lemma \ref{HarrisLemma_4}. We first compute the limit of $e^{-2\MalthusPerLayer_{i,i +k}t}L^a_{i+1,i+k}\ast d\mathcal{B}_{i}(t) $ when t goes to infinity when either $\MalthusPerLayer_{i,i +k} =  \MalthusPerLayer_{i} $ or $\MalthusPerLayer_{i,i +k} \neq  \MalthusPerLayer_{i}$. We start with the case $\MalthusPerLayer_{i,i +k} \neq  \MalthusPerLayer_{i}$ (so, $\MalthusPerLayer_{i,i +k} = \MalthusPerLayer_{i +1 ,i +k}$). For all $t \geq 0$, we have:
				\begin{multline*}
				e^{-2\MalthusPerLayer_{i,i +k}t}L^a_{i+1,i+k}\ast d\mathcal{B}_{i}(t)  = \\ d\mathcal{B}^*_{i}(2\lambda_{i,i+k})\int_{0}^{\infty} \mathds{1}_{[0,t]} e^{-2\lambda_{i,i+k}(t-u)}L^a_{i+1,i+k}(t-u)  \widehat{d\mathcal{B}_i}(u)du.
				\end{multline*}
				According to $\mathcal{H}^{k - 1} $, we know that  $L^a_{i + 1,i+k }(t)  \sim \widetilde{L}_{i + 1,i+k }(a)e^{2\MalthusPerLayer_{i,i +k }t}$. We deduce with a Lebesgue dominated  convergence theorem that:
				\begin{equation*}
					\lim\limits_{t \rightarrow \infty}e^{-2\MalthusPerLayer_{i,i +k}t}L^a_{i+1,i+k}\ast d\mathcal{B}_{i}(t) = d\mathcal{B}^*_{i}(2\lambda_{i,i+k}) \widetilde{L}_{i + 1,i+k }(a).
				\end{equation*}
				We apply the same method as above for the other terms of the source term. Theorem \ref{MoyenneTempsLong} gives us that $M^a_{i,i+k} \sim e^{\MalthusPerLayer_{i,i +k}t} \widetilde{M}_{i,i+k}(a)$ and $M^a_{i+1,i+k} \sim e^{\MalthusPerLayer_{i,i +k}t} \widetilde{M}_{i+1,i+k}(a)$.  Using Lebesgue dominated convergence theorem, we obtain:
				\begin{multline*}
					\lim\limits_{t \rightarrow \infty} e^{-2\MalthusPerLayer_{i,i +k}t}\left [2p^{(i)}_{2,0} (M^a_{i,i+k})^2 + 2 p^{(i)}_{1,1}M^a_{i,i+k}M^a_{i+1,i+k}+ 2p^{(i)}_{0,2} (M^a_{i+1,i+k})^2  \right  ]\ast d\mathcal{B}_{i}(t)  \\
					= \left [ 2p^{(i)}_{2,0}(\widetilde{M}_{i,i+k}(a))^2 + 2 p^{(i)}_{1,1}\widetilde{M}_{i,i+k}(a)\widetilde{M}_{i+1,i+k}(a) \right. \\
					\left.	+ 2p^{(i)}_{0,2} (\widetilde{M}_{i+1,i+k}(a))^2  \right ]d\mathcal{B}^*_{i}(2\MalthusPerLayer_{i,i +k}).
				\end{multline*}
				
				We then consider the case $\MalthusPerLayer_{i,i +k} =  \MalthusPerLayer_{i} > \MalthusPerLayer_{i + 1,i +k} $ and start by computing the limit of $e^{-2\MalthusPerLayer_{i,i +k}t}L^a_{i+1,i+k}\ast d\mathcal{B}_{i}(t) $. 
				\begin{multline*}
					e^{-2\MalthusPerLayer_{i,i +k}t}L^a_{i+1,i+k}\ast d\mathcal{B}_{i}(t) = \\ d\mathcal{B}^*_{i}(2\lambda_{i,i+k}) e^{-2(\lambda_{i,i+k} - \lambda_{i + 1,i+k} )t} \int_{0}^{\infty} \mathds{1}_{[0,t]} e^{-2\lambda_{i + 1,i+k}(t-u)}L^a_{i+1,i+k}(t-u)  \widehat{d\mathcal{B}_i}(u)du.
				\end{multline*}
				Using $\mathcal{H}^{k - 1} $ and Lebesgue dominated convergence theorem, we first obtain that 
				\begin{multline*}
				\lim\limits_{t \rightarrow \infty}  \int_{0}^{\infty} \mathds{1}_{[0,t]} e^{-2\lambda_{i + 1,i+k}(t-u)}L^a_{i+1,i+k}(t-u)  \widehat{d\mathcal{B}_i}(u)du \\
				 =  d\mathcal{B}^*_{i}(2\lambda_{i+1,i+k}) \widetilde{L}_{i+1,i+k}(a) < \infty,
				\end{multline*}
			    hence,
				\begin{equation*}
					\lim\limits_{t \rightarrow \infty} e^{-2\MalthusPerLayer_{i,i +k}t}L^a_{i+1,i+k}\ast d\mathcal{B}_{i}(t) = 0.
				\end{equation*}
				Then, theorem \ref{MoyenneTempsLong} give us that $M^a_{i,i+k} \sim e^{\MalthusPerLayer_{i,i +k}t} \widetilde{M}_{i,i+k}(a)$ and \\
				$M^a_{i+1,i+k} \sim e^{\MalthusPerLayer_{i+1,i +k}t} \widetilde{M}_{i+1,i+k}(a)$. Using similar methods, we obtain:
				\begin{multline*}
					\lim\limits_{t \rightarrow \infty} e^{-2\MalthusPerLayer_{i,i +k}t}\left [2p^{(i)}_{2,0} (M_{i,i+k})^2 + 2 p^{(i)}_{1,1}M_{i,i+k}M_{i+1,i+k}+ 2p^{(i)}_{0,2} (M_{i+1,i+k})^2  \right  ]\ast d\mathcal{B}_{i}(t) \\
					=  2p^{(i)}_{2,0}(\widetilde{M}_{i,i+k})^2 d\mathcal{B}^*_{i}(2\MalthusPerLayer_{i,i +k}).
				\end{multline*}
				We conclude by applying lemma \ref{HarrisLemma_4} that $\mathcal{H}^k $ holds.
				\end{proof}
			
				\paragraph{Variance}
				\begin{definition}
					We write $ v_j^a(t)$, the variance of $Y_t^{(j,a)}$ starting from a mother cell on the first layer such that:
					\begin{equation}
						v_j^a(t) = \mathbb{E}[(Y_t^{(j,a)})^2 \lvert Z_0 = \delta_{1,0}] - \mathbb{E}[Y_t^{(j,a)} \lvert Z_0 = \delta_{1,0}]^2. 
					\end{equation}
				\end{definition}
				We study the asymptotic behavior of the variance $ v_j^a(t)$ when the first layer is the leading one.
				\begin{corollary}\label{Corollary_Vari}
					Let $a \geq 0$. Under the same hypotheses as in theorem \ref{MomentDOrdre2Long} and supposing that $\MalthusIndex = 1$, we have
					\begin{equation*}
						\forall k \in \llbracket 1, J \rrbracket \quad 
						v^a_{j}(t) \sim \widetilde{v}_{j}(a) e^{2\Malthus t}, \text{  as  } t\rightarrow \infty
					\end{equation*}
					where 	
					\begin{equation*}
					\widetilde{v}_{j}(a) =  \widetilde{L}_{1,j}(a) - (m_j(a))^2 = \left[\frac{2p^{(1)}_{2,0} d\mathcal{B}^*_{1}(2\MalthusPerLayer_{1})}{1- 2p^{(1)}_{S}d\mathcal{B}^*_{1}(2\MalthusPerLayer_{1} )} - 1  \right]  (m_j(a))^2.
					\end{equation*}
				\end{corollary}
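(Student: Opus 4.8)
The plan is to read off the two asymptotics entering $v^a_j(t)$ from the preceding theorems and subtract them, the only genuine work being to check that the leading terms do not cancel.

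\textbf{Reduction.} By Definition \ref{Definition_Moment_M} one has $m^a_j(t)=M^a_{1,j}(t)$, so directly from the definition of the variance, $v^a_j(t)=L^a_{1,j}(t)-\big(M^a_{1,j}(t)\big)^2$. It therefore suffices to control $L^a_{1,j}$ and $M^a_{1,j}$ as $t\to\infty$ and recombine.

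\textbf{Asymptotics when $\MalthusIndex=1$.} Since layer $1$ is the leading layer, $\MalthusPerLayer_1=\max_{1\le i\le J}\MalthusPerLayer_i=\Malthus$, and hence for every $j\in\llbracket 1,J\rrbracket$ one has $\MalthusPerLayer_{1,j}=\max_{1\le i\le j}\MalthusPerLayer_i=\MalthusPerLayer_1=\Malthus$. Theorem \ref{MoyenneTempsLong} then gives $M^a_{1,j}(t)e^{-\Malthus t}\to\widetilde m_j(a)$, whence $\big(M^a_{1,j}(t)\big)^2=(\widetilde m_j(a))^2e^{2\Malthus t}+o(e^{2\Malthus t})$; and, the present hypotheses including $\MalthusPerLayer_i>0$ for all $i$, Theorem \ref{MomentDOrdre2Long} gives $L^a_{1,j}(t)=\widetilde L_{1,j}(a)e^{2\Malthus t}+o(e^{2\Malthus t})$. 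Because $\MalthusPerLayer_{1,j}=\MalthusPerLayer_1$, the relevant branch of \eqref{M_tilde_ik_def_moment2} (for $j\ge 2$), and the diagonal formula of Theorem \ref{MomentDOrdre2Long} (for $j=1$), both read $\widetilde L_{1,j}(a)=2p^{(1)}_{2,0}\,(\widetilde M_{1,j}(a))^2\,d\mathcal{B}_1^*(2\Malthus)\big/\big(1-2p^{(1)}_S d\mathcal{B}_1^*(2\Malthus)\big)$, and since $\MalthusIndex=1$ forces $\widetilde M_{1,j}(a)=\widetilde m_j(a)$ (the product factors in Theorem \ref{MoyenneTempsLong} being empty), subtracting yields $v^a_j(t)=\widetilde v_j(a)e^{2\Malthus t}+o(e^{2\Malthus t})$ with $\widetilde v_j(a)=\widetilde L_{1,j}(a)-(\widetilde m_j(a))^2=\big[2p^{(1)}_{2,0}d\mathcal{B}_1^*(2\Malthus)\big/\big(1-2p^{(1)}_S d\mathcal{B}_1^*(2\Malthus)\big)-1\big](\widetilde m_j(a))^2$, which is the announced constant.

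\textbf{Non-degeneracy (the delicate point).} Since $v^a_j(t)$ is a difference of two quantities of the common order $e^{2\Malthus t}$, the equivalence $v^a_j(t)\sim\widetilde v_j(a)e^{2\Malthus t}$ is only informative once $\widetilde v_j(a)\neq 0$; being a limit of nonnegative variances $\widetilde v_j(a)\ge 0$, so the point is strict positivity. For $a=0$ the claim is vacuous ($Y^{(j,0)}_t=0$ a.s.\ for $t>0$), so I take $a>0$, where $\widetilde m_j(a)>0$ by the explicit formula of Theorem \ref{MoyenneTempsLong}. Then $\widetilde v_j(a)>0$ is equivalent to $2\big(p^{(1)}_{2,0}+p^{(1)}_S\big)d\mathcal{B}_1^*(2\Malthus)>1$, which I would obtain from: (i) log-convexity of the Laplace transform $d\mathcal{B}_1^*$ together with $d\mathcal{B}_1^*(0)=1$, giving $d\mathcal{B}_1^*(2\Malthus)\ge\big(d\mathcal{B}_1^*(\Malthus)\big)^2=(2p^{(1)}_S)^{-2}$ by Hypothesis/Definition \ref{Malthus_parameter_per_layer_def}; (ii) $p^{(1)}_{2,0}\ge 2p^{(1)}_S-1$, a consequence of $p^{(1)}_{1,1}\le 2(1-p^{(1)}_S)$; and (iii) the elementary inequality $(2p^{(1)}_S-1)(p^{(1)}_S-1)<0$, valid because $\MalthusPerLayer_1>0$ forces $p^{(1)}_S>\tfrac12$ (Remark \ref{remarks_dB*}) while Hypothesis \ref{Hypothesis_Probability} gives $p^{(1)}_S<1$. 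Chaining (i)--(iii), $2\big(p^{(1)}_{2,0}+p^{(1)}_S\big)d\mathcal{B}_1^*(2\Malthus)\ge\big(3p^{(1)}_S-1\big)\big/\big(2(p^{(1)}_S)^2\big)>1$.

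\textbf{Main obstacle.} The reduction and the two asymptotic inputs are bookkeeping on top of Theorems \ref{MoyenneTempsLong} and \ref{MomentDOrdre2Long}; the substantive point is the non-degeneracy of the last step, i.e.\ showing that the subtraction $\widetilde L_{1,j}(a)-(\widetilde m_j(a))^2$ does not annihilate the leading order, so that the variance genuinely grows like $e^{2\Malthus t}$ rather than at some a priori possible strictly slower rate.
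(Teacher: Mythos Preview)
Your argument is correct and follows exactly the same route as the paper: invoke Theorem \ref{MomentDOrdre2Long} (with $\MalthusIndex=1$ forcing the case $\MalthusPerLayer_{1,j}=\MalthusPerLayer_1$ in \eqref{M_tilde_ik_def_moment2}) together with Theorem \ref{MoyenneTempsLong}, identify $\widetilde{M}_{1,j}(a)=\widetilde m_j(a)$, and subtract.

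The paper's proof stops there; your treatment of non-degeneracy (log-convexity of $d\mathcal{B}_1^*$, the bound $p^{(1)}_{2,0}\ge 2p^{(1)}_S-1$, and $p^{(1)}_S\in(\tfrac12,1)$) is an addition beyond what the paper provides, and it is sound. One very minor remark: your justification ``the product factors in Theorem \ref{MoyenneTempsLong} being empty'' is the right idea, but the cleanest way to see $\widetilde M_{1,j}(a)=\widetilde m_j(a)$ is simply that $m^a_j=M^a_{1,j}$ by definition and both Theorems \ref{MoyenneTempsLong_Interm} and \ref{MoyenneTempsLong} give equivalents with the same exponential rate $e^{\Malthus t}$ when $\MalthusIndex=1$.
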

				\begin{proof}
					Let $a \geq 0$. According to theorem \ref{MomentDOrdre2Long} and using that $\MalthusIndex = 1$, we have:
					\begin{equation*}
						\forall j \in \llbracket 1, J \rrbracket, \quad L^a_{1,j}(t) \sim \widetilde{L}_{1,j}(a)e^{2 \MalthusPerLayer_1 t}, \quad \text{ as } t \rightarrow \infty \, .
					\end{equation*}
					Using theorem \ref{MoyenneTempsLong} and \ref{MomentDOrdre2Long}, we deduce for all $j \in \llbracket 1, J \rrbracket$:
					
					\begin{equation*}
						\widetilde{v}_{j}(a) =  \widetilde{L}_{1,j}(a) - (m_j(a))^2 = \left[\frac{2p^{(1)}_{2,0} d\mathcal{B}^*_{1}(2\MalthusPerLayer_{1})}{1- 2p^{(1)}_{S}d\mathcal{B}^*_{1}(2\MalthusPerLayer_{1} )} - 1  \right]  (m_j(a))^2.
					\end{equation*}
					\vspace{-0.15cm}
				\end{proof}

\section{Numerical simulation procedures}

		\subsubsection{Stochastic simulation procedures}
			\paragraph{Markov case}
			Considering a markovian case, we simulate the process $Z_t$ solution of the SDE \eqref{Z_t_Equation} with the Gillespie algorithm. We use the package StochSS \cite{drawert_stochastic_2016}. \\
			We consider that for each layer $j \in \llbracket 1, 3 \rrbracket$, $p_{1,1}^{(j)} = 0$. Hence, $p_{2,0}^{(j)}  = p_S^{(j)}$ and $p_{0,2}^{(j)}  = 1 - p_S^{(j)}$. Considering a system with 4 layers, our system is ruled by the $7$-th reactions below:
			\begin{equation*}
				\begin{array}{ccc}
					\mathcal{R}_1: & N_1 \rightarrow  N_1  + N_1  & \text{ with rate } b_1 p_S^{(1)},\\
					\mathcal{R}_2: & N_1 \rightarrow  N_2  + N_2  & \text{ with rate } b_1 (1 - p_S^{(1)} ),  \\
					\mathcal{R}_3: & N_2 \rightarrow  N_2  + N_2  & \text{ with rate } b_2 p_S^{(2)}, \\
					\mathcal{R}_4: & N_2 \rightarrow  N_3 + N_3  & \text{ with rate } b_2 (1 - p_S^{(2)} ),  \\
					\mathcal{R}_5: & N_3 \rightarrow  N_3  + N_3  & \text{ with rate } b_3 p_S^{(3)}, \\
					\mathcal{R}_6: & N_3 \rightarrow  N_4 + N_4  & \text{ with rate } b_3 (1 - p_S^{(3)} ),  \\
					\mathcal{R}_7: & N_4 \rightarrow  N_4 + N_4  & \text{ with rate } b_4. \\
				\end{array}
			\end{equation*}

			\paragraph{General case}
			We simulate our process using the algorithm \ref{alg:SimulationStochasticProcess}, on a predefine time horizon $T_{\max}$.
			\begin{figure}[!htb]		
				\caption{Simulation stochastic process}
				\label{alg:SimulationStochasticProcess}
				\begin{itemize}
					\item[] Define a sequence $\mathcal{S}$ of cells of a given age and layer.
					\item Simulate the time of division of each cell in $\mathcal{S}$
					\item[While] $t < T_{\max}$
						\begin{enumerate}
							\item Select the next cell $m$ that will divide. $l^m$ is its layer index and $l^m$ is the age at division.
							\item Randomly draw the layer of its daughters cell $l^{d_1 }$ and $l^{d_2}$ according to the probabilities $p^{(l^m)}_{2,0}$,$p^{(l^m)}_{1,1}$ and $p^{(l^m)}_{2,0}$.
							\item Randomly draw the next time of division of daughter cell $d_1$ according to its layer index $l^{d_1 } $.
							\item Randomly draw the next time of division of daughter cell $d_2$ according to its layer index $l^{d_2} $. 
							\item Add $d_1$ and $d_2$ into the sequence $\mathcal{S}$
							\item $t \leftarrow t + t^m$
						\end{enumerate}
				\end{itemize}
			\end{figure}
		
		\subsubsection{Deterministic simulation protocol}\label{SimulationProtocole_PDE}
			To solve numerically the pro\-blem \eqref{EDP_equation}, we design a dedicated finite volume scheme adapted to the non-conservative form with proper boundary conditions. We define the time step $\Delta t$ and the age step $\Delta a$. The time discretization is defined by 
			\begin{equation*}
				t_0 = 0, \quad t_{n + 1} = t_n + \Delta t, \quad \text{ for } n = 0, ..., N_t
			\end{equation*}
			where $(N_t + 1)\Delta t$ is the time horizon of the simulation. Similarly, $N_a$ is the number of cells\footnote{The cell is here the standard name used for each elementary volume in the framework of finite volume methods.} in the domain. The cells $ \mathcal{C}_i$ are indexed by a rational number i ($\frac{1}{2} $,$\frac{3}{2} $, etc.) with $i \in \llbracket \frac{1}{2}, N_a - \frac{1}{2} \rrbracket $. The edges of each cell are located at $a_{i - \frac{1}{2}} = ( i - \frac{1}{2}) \Delta a $ and $a_{i + \frac{1}{2}} = ( i + \frac{1}{2}) \Delta a$ (remark that $\Delta a =  a_{i + \frac{1}{2}}  - a_{i - \frac{1}{2}} $ and $a_0 = 0$). As age and time evolve at the same speed, we chose $N_a$ such that $t_{N_t} - a^0_{\max}< N_a \Delta a $ where $a^0_{\max}$ is the maximal age of the initial distribution. \\

			Let $j \in \llbracket 1, J \rrbracket$. We define $P^j_{n,i}$ as the mean value of the density $\rho^{(j)}$ in cell $\mathcal{C}_i $ at time $t_n$:
			\begin{equation*}
				P^j_{n,i} := \displaystyle \frac{1}{\Delta a}  \int_{a_{i - \frac{1}{2}}}^{a_{i + \frac{1}{2}}} \rho^{(j)}(t_n,a)\,da \, .
			\end{equation*}
			We integrate the equation $\partial_t \rho^{(j)} + \partial_a \rho^{(j)} = -b_j \rho^{(j)} $ with respect to age in cell $ \mathcal{C}_i$ and obtain:
			\begin{equation*}
				\frac{d}{dt} \int_{a_{i - \frac{1}{2}}}^{a_{i + \frac{1}{2}}} \rho^{(j)}(t,a) \, da = -  \rho^{(j)}(t,a_{i + \frac{1}{2}}) + \rho^{(j)}(t,a_{i - \frac{1}{2}}) - \int_{a_{i - \frac{1}{2}}}^{a_{i + \frac{1}{2}}} b_j(a) \rho^{(j)}(t,a) \, da.
			\end{equation*}
			Then, we suppose that all $b_j$s functions are regular enough so that we can approximate $b_j$, for all $j \in \llbracket 1, J \rrbracket$ on each cell $ \mathcal{C}_i$ by their mean value $\bar{b}^i_j$. We obtain:
			\begin{equation*}
				\frac{d}{dt} \int_{a_{i - \frac{1}{2}}}^{a_{i + \frac{1}{2}}} \rho^{(j)}(t,a) \, da = -  \rho^{(j)}(t,a_{i + \frac{1}{2}}) + \rho^{(j)}(t,a_{i - \frac{1}{2}}) - \bar{b}^i_j \int_{a_{i - \frac{1}{2}}}^{a_{i + \frac{1}{2}}}\rho^{(j)}(t,a) \, da .
			\end{equation*}
			We approximate the derivative in time with a finite difference scheme:
			\begin{equation*}
				\frac{d}{dt} \int_{a_{i - \frac{1}{2}}}^{a_{i + \frac{1}{2}}} \rho^{(j)}(t_n,a) \, da = \frac{1}{\Delta t} \left [ \int_{a_{i - \frac{1}{2}}}^{a_{i + \frac{1}{2}}} \rho^{(j)}(t_{n+1},a) da  -  \int_{a_{i - \frac{1}{2}}}^{a_{i + \frac{1}{2}}} \rho^{(j)}(t_{n},a) da  \right ] + \mathcal{O}(\Delta t)
			\end{equation*}
			and we deduce:
			\begin{equation*}
				\frac{\Delta a}{\Delta t} \left [ P^{j}_{n+1,i} - P^{j}_{n,i}  \right ] = -  \rho^{(j)}(t,a_{i + \frac{1}{2}}) + \rho^{(j)}(t,a_{i - \frac{1}{2}}) - \bar{b}^i_j  \Delta a P^{j}_{n,i}  .
			\end{equation*}
			The edge terms $\rho^{(j)}(t,a_{i + \frac{1}{2}}) $ and $\rho^{(j)}(t,a_{i - \frac{1}{2}})$ correspond to the fluxes that cross the boundaries of cell $\mathcal{C}_i$.  When $i = \frac{1}{2} $, the boundary condition of equation \eqref{EDP_equation} gives us the value of this term:
			\begin{multline*}
						 \rho^{(j)}(t_n,a_{0})  =  2p_S^{(j)} \int_{0}^{\infty}b_j(a)\rho^{(j)}(t_{n},a) da +  2(1 - p_S^{(j - 1)}) \int_{0}^{\infty}b_{j-1}(a)\rho^{(j-1)}(t_{n},a) da   \\[0.3cm]
						 = 2p_S^{(j)} \sum_{i}\int_{\mathcal{C}_i}b_j(a)\rho^{(j)}(t_{n},a) da +  2(1 - p_S^{(j - 1)}) \sum_{i}\int_{\mathcal{C}_i}b_{j-1}(a)\rho^{(j-1)}(t_{n},a) da   \\[0.3cm]
						 =2p_S^{(j)} \Delta t \sum_{i} \bar{c}_i P^j_{n,i} +  2(1 - p_S^{(j - 1)})\Delta t \sum_{i} \bar{c}^{j-1}_i P^{j-1}_{n,i} \, .  \\
			\end{multline*}
			When $i \neq \frac{1}{2}$, we approximate each term $\rho^{(j)}(t_n,a_{i + \frac{1}{2}}) $ by 
			\begin{equation*}
				\rho^{(j)}(t_n,a_{i + \frac{1}{2}})  = P^j_{n,i + \frac{1}{2}} + \mathcal{O}(\Delta a).
			\end{equation*}
			Hence, we obtain the following numerical scheme:
			\begin{equation*}
				P^{j}_{n+1,i}  = 	 \left [ 1 -\bar{b}^j_{i}  \Delta t - 	\frac{\Delta t}{\Delta x} \right ]  P^{j}_{n,i} + \frac{\Delta t}{\Delta x}P^j_{n,i-1} 
			\end{equation*}
			\begin{equation*}
				P^{j}_{n+1,\frac{1}{2}}  = 	 \left [ 1 -\bar{c}^j_{\frac{1}{2}}  \Delta t - 	\frac{\Delta t}{\Delta x} \right ]  P^{j}_{n,\frac{1}{2}} + 2s_j\Delta t \sum_{i} \bar{c}_i P^j_{n,i} +  2(1 - s_{j-1})\Delta t \sum_{i} \bar{c}^{j-1}_i P^{j-1}_{n,i}  .
			\end{equation*}

\subsection{Construction of figure \ref{fig:AgeProfil}\label{Supp_ProtocolForFigure4}}
	In this part, we give some details about the construction of figure \ref{fig:AgeProfil}. 
	We simulate the SDE \eqref{Z_t_Equation} using the algorithm \ref{alg:SimulationStochasticProcess} and the PDE \eqref{EDP_equation} using the algorithm described in the subsection below (see \ref{SimulationProtocole_PDE}) taking $\Delta a = 9.5 \times 10e-3$ and  $\Delta t = 10e-4$.
	
	We discretized the age according to a sequence of integers $k \in \llbracket 1, 50 \rrbracket$. Let $j \in \llbracket 1, J \rrbracket$ be a layer index. 
	The color bar associated with age $k$ for the $j$-th layer corresponds to the total number of cells on the $j$-th layer of age $a \in  [k , k + 1 [ $ renormalized by the total number of cells:
		\begin{equation*}
			\frac{<< Z_t, \mathds{1}_{j,  k \leq a < k + 1 }>>}{<< Z_t, \mathds{1}>>} \, .
		\end{equation*}
	 The dashed black line with the age $k$ for the $j$-th layer corresponds to:
		\begin{equation*}
			\frac{\int_{k}^{k+1}\rho^{(j)}(t,a)da}{\sum_{j = 1}^{4}\int_{0}^{+ \infty}\rho^{(j)}(t,a)da}  \sim \frac{ \displaystyle \sum_{i = \lfloor \frac{k}{\Delta x}\rfloor}^{\lfloor \frac{k + 1}{\Delta x}\rfloor - 1} P^j_{n,i}}{ \displaystyle \sum_{j =1}^{4}\sum_{i} P^j_{n,i}} \, .
		\end{equation*}	
		
		The color solid lines which represent the stable distribution $\eigenfunction$ and compute their value at each age point $k$ by 
		\begin{equation*}
			\frac{\int_{k}^{k+1} \eigenfunction^{(j)}(a)da}{\sum_{j =1}^{4}\int_{0}^{+ \infty} \eigenfunction^{(j)}(a)da } \, .
		\end{equation*}
		
\subsection{Parameter estimation procedure}\label{SM_ParmeterEstimaProcedure}
	Using the software D2D \cite{raue_data2dynamics:_2015}, we estimate the parameters of our model, using an additive Gaussian noise statistical model (standard least squares likelihood). The standard deviation and the initial number $N$ of cells on the first layer  are also estimated. To investigate the practical identifiability, we compute the profile likelihood estimate (PLE) \cite{raue_structural_2009}. We observe that all the parameters are practically identifiable except the probability of staying on the second layer $p_S^{(2)} $ (see Figure \ref{fig:pledatamn}). In contrast, most of the parameters are not practically identifiable when we consider the total number of cells as the observable function ($\sigma(t;p) = \sum_{j=1}^{J}M^{(j)}(t;p)$, Figure \ref{fig:pledatamntotalcell}).
	
\begin{figure}[!htb]
	\centering
	\subfloat[Practical Identifiability for $\sigma(t;p) = (M^{(j)}(t;p))_{j \in \llbracket 1, J \rrbracket }$]{\includegraphics[width=1.0\linewidth]{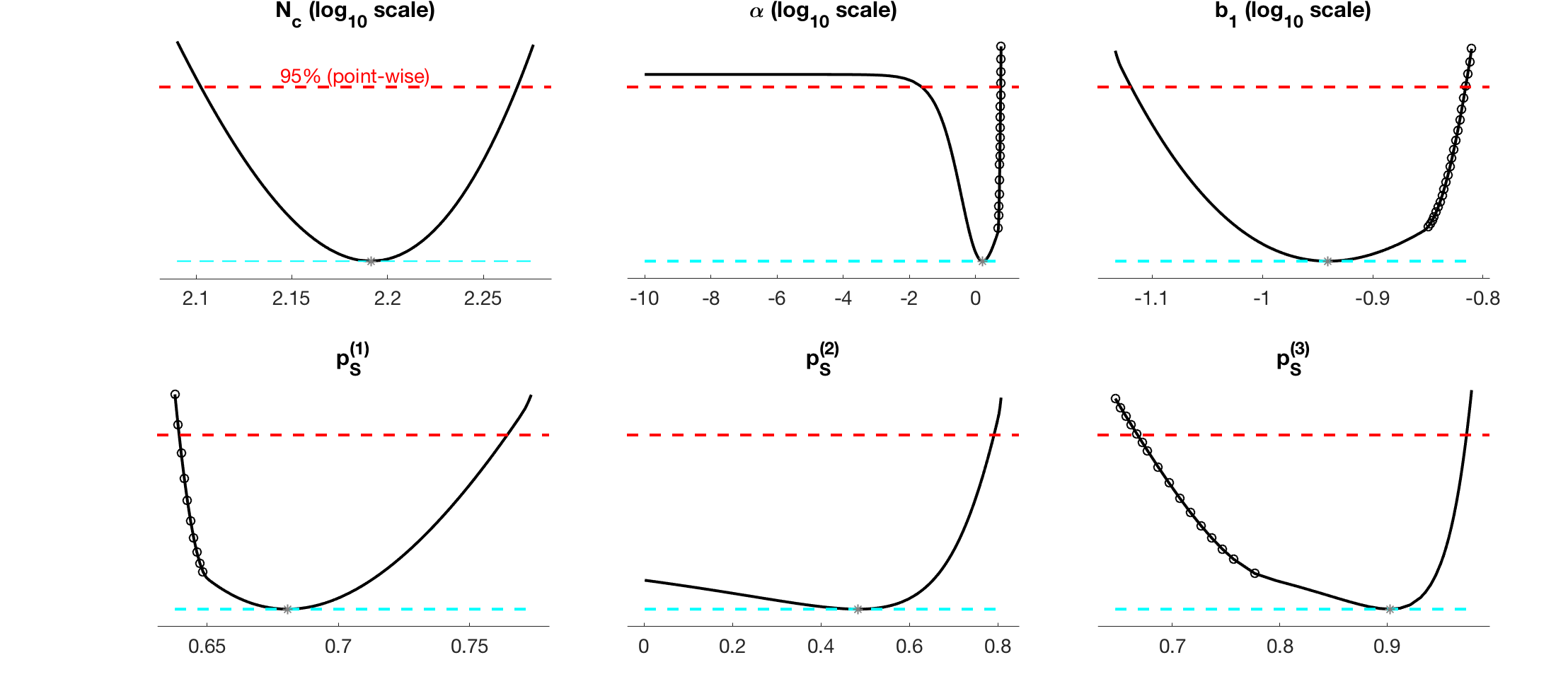}\label{fig:pledatamn}} \\
	\subfloat[Practical Identifiability for $\sigma(t;p) = \sum_{j=1}^{J}M^{(j)}(t;p)$]{\includegraphics[width=0.35\linewidth]{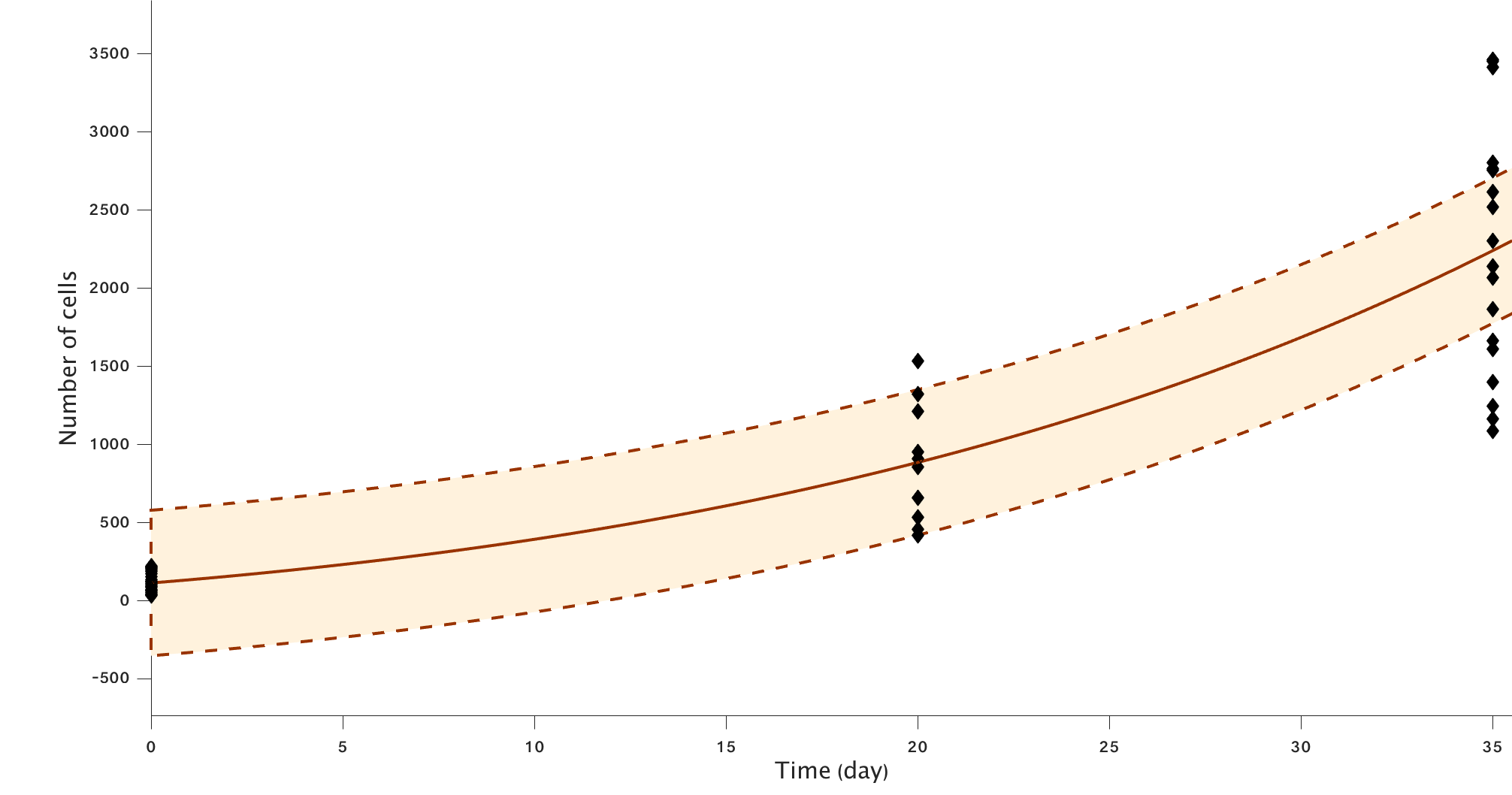}\label{fig:fitdatamntotalcell} \includegraphics[width=0.55\linewidth]{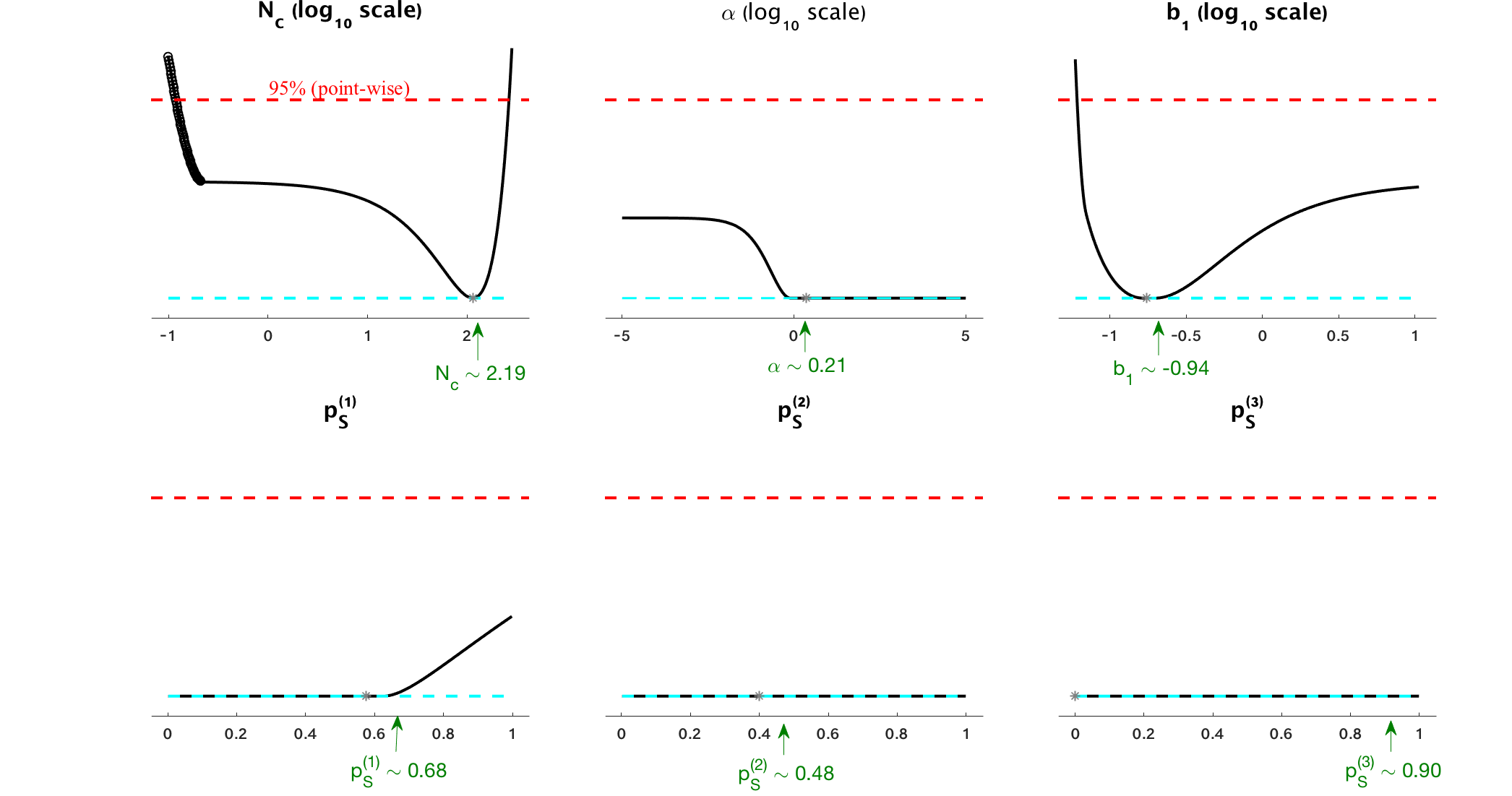}\label{fig:pledatamntotalcell}}
	\caption{\textbf{Practical Identifiability}. \textbf{Figure \ref{fig:pledatamn}} Profile likelihood estimate (PLE) for each parameter in the set $\mathbf{P}_{exp} = \{N,b_1,\alpha, p_S^{(1)}, p_S^{(2)}, p_S^{(3)}\}$ when the observation function is $\sigma(t;p) =  (M^{(j)}(t;p))_{j \in \llbracket 1, J \rrbracket }$.The red dashed lines correspond to the $95\%$-statistical threshold while the blue dashed lines correspond to the optimum value of the likelihood.  \textbf{Figure \ref{fig:pledatamntotalcell}} Parameter estimation results for $\sigma(t;p) = \sum_{j=1}^{J}M^{(j)}(t;p)$. \textbf{Left panel}: Data fitting model with model (\ref{ODE_equ}). The black diamonds represent the experimental data (total number of cells), the solid line is the best fit solution of (\ref{ODE_equ}) and the dashed lines are drawn from the estimated variance. \textbf{Left panel}: Profile likelihood estimates of each parameter in the set $P_{exp}$.}
\end{figure}

\bibliographystyle{plain}
\bibliography{article_2017_ark}
\end{document}